\def\ket#1{|#1\rangle}
\def\bra#1{\langle #1|}
\def\Z{\mathbb{Z}}                          
\def\R{\mathbb{R}}
\newtheorem{theorem}{Theorem}[section]
\newtheorem{corollary}[theorem]{Corollary}
\newtheorem{lemma}[theorem]{Lemma}
\newtheorem{proposition}[theorem]{Proposition}
\newtheorem{definition}[theorem]{Definition}
\newtheorem{assumption}[theorem]{Assumption}
\newtheorem{remark}[theorem]{Remark}
\newcommand{\be}[1]{\begin{equation}\label{#1}}
\newcommand{\ee}{\end{equation}}
\numberwithin{equation}{section}
\newcommand{\ba}[1]{\begin{align}\label{#1}}
\newcommand{\ea}{\end{align}}
\numberwithin{equation}{section}
\newcommand{\ben}{\begin{equation*}}
\newcommand{\een}{\end{equation*}}
\numberwithin{equation}{section}
\newenvironment{proof}[1][\relax]
  {\paragraph{Proof\ifx#1\relax\else~of #1\fi}}%
  {~\hfill$\square$\par\bigskip}
\newcommand{\calB}{\mathcal{B}}
\newcommand{\calE}{\mathcal{E}}
\newcommand{\calF}{\mathcal{F}}
\newcommand{\calK}{\mathcal{K}}
\newcommand{\calL}{\mathcal{L}}
\newcommand{\calM}{\mathcal{M}}
\newcommand{\calR}{\mathcal{R}}
\newcommand{\calS}{\mathcal{S}}
\newcommand{\calT}{\mathcal{T}}
\newcommand{\calV}{\mathcal{V}}
\newcommand{\calX}{\mathcal{X}}
\newcommand{\bbA}{\mathbb{A}}
\newcommand{\bbC}{\mathbb{C}}
\newcommand{\bbI}{\mathbb{I}}
\newcommand{\bbR}{\mathbb{R}}
\newcommand{\bbT}{\mathbb{T}}
\newcommand{\bbZ}{\mathbb{Z}}
\def\Z{\mathbb{Z}}                          
\def\be{\begin{equation}}
\def\ee{\end{equation}}
\newcommand{\ep}{\varepsilon}
\newcommand{\n}{\mathbf{n}}
\newcommand{\m}{\mathbf m}
\newcommand{\rk}[1]{\bgroup\color{red}%
  \par\medskip\hrule\smallskip%
  \noindent\textbf{#1}%
  \par\smallskip\hrule\medskip\egroup}
\title{Marginal triviality of the scaling limits \\  of critical 4D  Ising and $\phi_4^4$ models}
\author{Michael Aizenman\thanks{\texttt{aizenman@princeton.edu} Departments of Physics and Mathematics, Princeton University}\ \ and Hugo Duminil-Copin\thanks{
\texttt{duminil@ihes.fr} Institut des Hautes \'Etudes Scientifiques and Universit\'e de Gen\`eve
}}
\date{25 January 2021}
\begin{document}
\maketitle
\begin{abstract}
We prove that the scaling limits of spin fluctuations in four-dimensional Ising-type models with nearest-neighbor ferromagnetic interaction at or near the critical point are Gaussian. 
A similar statement is proven for the  
$\lambda \phi^4$ fields over $\mathbb{R}^4$ with a lattice ultraviolet cutoff, in the limit of infinite volume and vanishing lattice spacing.
 The proofs are enabled by the models' random current representation, in which the correlation functions' deviation from Wick's law is expressed in terms of intersection probabilities of random currents with  sources at distances which are large on the model's  lattice scale.   Guided by the analogy with random walk intersection amplitudes, the analysis focuses on the improvement of the so-called tree diagram bound by a logarithmic correction term, which is derived here through multi-scale analysis.  
 \end{abstract}

\section{Introduction}

The results presented below address questions pertaining to two distinct research agendas: one aims at Constructive Field Theory and the other at the understanding of the critical behavior in Statistical Mechanics.   
While these two goals are somewhat different 
the questions and the answers are related. We start with their brief presentation. 

\subsection{Constructive Quantum Field Theory and Functional Integration}\label{sec:1.2}

Quantum field theories with local interaction  play an important role in the physics discourse, where they appear in  subfields ranging from high energy to condensed matter physics.  
The mathematical challenge of proper formulation of this concept led to programs of Constructive Quantum Field Theory (CQFT).  A path  towards that goal was charted through the  proposal to define quantum fields as operator valued distributions whose essential properties are  formulated as the Wightman axioms~\cite{Wig56}.  Wightman's reconstruction theorem  allows one to recover this structure from the collection of the corresponding correlation functions, defined over 
the Minkowski space-time.  By the Osterwalder-Schrader theorem~\cite{OS73,OS75}, correlation functions 
with the required properties may potentially be obtained through analytic continuation from those of random distributions defined over the corresponding Euclidean space that meet a number of conditions: suitable analyticity, permutation symmetry, Euclidean covariance, and reflection-positivity.

Seeking natural candidates for such \emph{Euclidean fields}, one ends up with the task of constructing  probability averages over random distributions $\Phi(x)$, for which the expectation value of  functionals   $F(\Phi)$ would  have properties fitting the  formal expression  
\be \label{Phi_EV}
\langle F(\Phi) \rangle  \approx \frac1{\rm norm}   \int  F(\Phi) \exp[-H(\Phi)]  \prod_{x\in \R^d} d \Phi(x) ,
\ee 
where  
 $ H(\Phi) $ is the Hamiltonian. In this context, it seems natural to consider expressions of the form
  \be \label{eq:FI} 
H(\Phi):\approx  (\Phi, A  \Phi) + \int_{\R^d}   P(\Phi(x)  )  \, dx
\ee
with  $(\Phi, A  \Phi)$ a positive definite and reflection-positive quadratic form, and 
$P(\Phi(x)  )$ a polynomial (or a more general function) whose terms of  order $\Phi(x)^{2k}$ are interpreted heuristically as representing $k$-particle interactions.   An example of a quadratic form with the above properties  (at $K, b>0$) and also rotation invariance is  
\be \label{free_field}
  (\Phi, A  \Phi):=  \int_{\R^d}  \left( K \vert \nabla \Phi \vert^2(x) + b \vert \Phi(x) \vert^2 \right)   \, dx.
\ee

The functionals $F(\Phi)$ to which \eqref{Phi_EV} is intended to apply include  the smeared averages  
\be\label{eq:functional}
T_f(\Phi) :=   \int_{\R^d} f(x) \Phi(x) dx
\ee
 associated with continuous functions of compact support $f\in C_0(\R^d)$.    
By linearity, the expectation values of products of such variables take the form
\be \label{eq:Schwinger} 
\langle \prod_{j=1}^{n} T_{f_j}(\Phi) \rangle   :=   \int_{(\R^d)^n}  d x_1 \dots d x_n  \, S_n(x_1,\dots, x_n) \,\prod_{j=1}^n  f(x_j), 
\ee 
with  $S_n(x_1,\dots, x_n)$  characterizing the  probability measure on the space of distribution which corresponds to the expectation value $\langle  --  \rangle$.    This is summarized by saying that 
 in a distributional sense 
\be
\langle \prod_{j=1}^{n} \Phi(x_j) \rangle = S_n(x_1,\dots, x_n)    \, , 
\ee
with $S_n$ referred to as the {\em Schwinger functions} of the corresponding euclidean field theory.  

A relatively simple class of Euclidean fields are the Gaussian fields, for which  $H$ contains  only quadratic terms.   Gaussian fields (whether reflection-positive or not) are alternatively characterized by having their structure determined  by just  the two-point function, with  the $2n$-point Schwinger functions computable through  Wick's law:
\be \label{triv}
S_{2n}(x_1,\dots,x_{2n})  =  \sum_{\pi}  \prod_{j=1}^n S_2(x_{\pi(2j-1)},x_{\pi(2j)})\, =: \, \mathcal {G}_n[S_2](x_1,\dots,x_{2n}) \, ,
\ee 
where $\pi$ ranges over pairing permutations of $\{1,\dots,2n\}$.   
The field theoretic interpretation of \eqref{triv} is the absence of interaction.  Due to that, and to their algebraically simple structure, such fields have been referred to as {\em trivial}.  

When interpreting   \eqref{Phi_EV}, one quickly encounters  a  number of problems.    Even in the generally understood case of the Gaussian \emph{free field},  with   $H$ consisting of just the quadratic term \eqref{free_field},  Equation \eqref{Phi_EV} is not to be taken  literally as   the measure is supported by non-differentiable functions for which the integral in the exponential is almost surely divergent.  

A natural step to tackle next seems to be the addition of  the lowest order even term, i.e. $\lambda \Phi^4$.   
However, in dimensions $d>1$, the free field is no longer a random function but a random distribution which even locally is unbounded.  Thus such simple looking proposals lead to additional divergences, whose severity increases with the dimension.  

The heuristic ``renormalization group'' approach to the problem  by K. Wilson~\cite{Wil71} indicates that in low enough dimensions, specifically $d<4$ for $\lambda \Phi^4$,  the problem could be tackled through cutoff-dependent counter-terms.  Partially successful attempts to carry such a project rigorously have been the focus of a substantial body of works.    The means employed have included: counter-terms, which are allowed to depend on regularizing cutoffs, scale decomposition, renormalization group flows, the theory or regularity structures~\cite{Hai14}, etc.  

A natural starting point towards such a construction of  a  $\Phi^4_d$  functional integral \eqref{Phi_EV} is to  regularize it with a pair of cutoffs: at the short distance (\emph{ultraviolet}) scale and the large distance (\emph{infrared}) scale.    A lattice version of that is the  restriction of  $\Phi(\cdot)$ to  the vertices of a finite graph with the vertex set 
\be \label{eq:a_R} 
\mathcal V_{a,R} = (a \Z)^d \cap \Lambda_R,  \qquad \Lambda_R :=[-R,R]^d\,.
\ee 
For the corresponding finite collection of variables $\{\Phi(x)\}_{x\in \mathcal V_{a,R}}$ the Hamiltonian \eqref{eq:FI} is initially interpreted in terms of the  Riemann-sum style discrete analog of the integral expressions.  Moments of $\Phi(x)$ are to be accompanied by lower order counter-terms.  In particular,  the fourth power addition takes the form
\be
 P(\Phi(x)) =\lambda \Phi^4\,-c(\lambda,a,R)\Phi^2 \,, 
\ee
The cutoffs are removed, through the limit   $R\nearrow \infty$ 
followed by $a\searrow 0$.  Parameter such as  $c(\lambda,a,R)$ are allowed to be adjusted in the process, so as to stabilize the Schwinger functions $S_n(x_1,.., x_n)$ on the continuum limit scale. 

The constructive field theory   
program  has yielded non-trivial scalar field theories over $\R^2$ and $\R^3$~\cite{BryFroSpe82,GliJaf73,GRS75,OS75}.   (Here we do not discuss here gauge field theories, 
cf.~\cite{JW}). 
However, the  progression of constructive results was halted when it was proved  that for dimensions $d>4$ the attempt to construct  $\Phi^4_d$  with 
\be\label{Phi_even}
\lim_{|x-y| \to \infty} S_2(x,y)  \ =\  0 
\ee 
by the method outlined above (in essence: taking the scaling limit of the lattice models at $\beta \leq \beta_c$) 
yields only  Gaussian fields~\cite{Aiz82,Fro82}.

Various partial results have indicated that the same may hold true for the critical dimension $d=4$  
(cf.~\cite{AizGra83, ACF83, BauBrySla14,GawKup85, HarTas87}),
 however a sweeping statement such as proved for $d>4$ has remained  open.   In this work we address this case.   

For clarity let us note that, like the no-go statements of~\cite{Aiz82,Fro82},  the results  presented here do not involve explicit computations of the counterterms along the above scheme.  Instead, they are based on dimension-dependent relations among the Schwinger functions which may emerge in any such limit.

\subsection{Statement of the main result}

The probability measures which correspond to \eqref{Phi_EV} with the lattice and finite volume cutoffs \eqref{eq:a_R} take the form of a statistical-mechanics Gibbs equilibrium state average   
\be \label{Phi_SM}
\langle F(\phi) \rangle  = \frac 1 {\rm norm}   \int  F(\phi) \exp{[-H(\phi)]}  \prod_{x\in \Lambda_R} \rho( d \phi_x) ,
\ee 
with a Hamiltonian  $H(\phi)$ and an a-priori measure $\rho(d \phi)$   of the form
\be \label{H}
H(\phi) = -\sum_{\{x,y\}\subset \Lambda_R}J_{x,y}\,\phi_x \phi_y    \, , \qquad \rho( d \phi_x) =  e^{- \lambda \phi_x^4 + b \phi^2_x}  d \phi_x \,,  
\ee 
where $d\phi_x$ is the Lebesgue measure on $\R$ and $J_{x,y}$ is zero for non-nearest neighbour vertices, and $J\ge0$ otherwise.   To keep the notation simple, the basic variables are written here as they appear from the perspective of the lattice but our attention is focused on the correlations  at distances of the order of $L$, with 
\be 
1 \ll L \ll R \, .
\ee 
In terms of the scaling limit discussed above, $a$ is equal to $1/L$.

A point of fundamental importance is that since the interaction through which the field variables are correlated is local (nearest neighbor on the lattice scale), for  the field correlations functions to exhibit non-singular variation on the scales $L\gg1$, the system's parameters $(J, \lambda, b)$ need to be very close to the critical  manifold, along which the correlation length of the lattice system diverges\footnote{The scaling limit of a correlation function with exponential decay which on the lattice scale is of a fixed correlation length results in a white noise distribution in the limit.}. 

Quantities whose joint distribution we track  in the scaling limit are based on the collections of random variables of the form
\be \label{def_Tf_scaled}
T_{f,L}(\phi) :=  \frac{1}{\sqrt{\Sigma_L}}\sum_{x\in \Z^d} f(x/L ) \, \phi_x   \,,
\ee
where $f$ ranges over compactly supported continuous functions, whose collection is denoted $C_0(\R^d)$, and $\Sigma_L$ denotes the variance of the sum of spins over the box of size $L$, i.e. 
\be
\Sigma_L:=\big\langle\big(\sum_{x\in\Lambda_L}\phi_x\big)^2\big\rangle.
\ee

\begin{definition}  A discrete system as described  above, parametrized by $(J,\lambda,b, R, L)$, {\em converges in distribution},  in the double limit $\lim_{L \to \infty} \lim_{R/L\to \infty}$ (with a possible  restriction to a subsequence along which also the other parameters are  allowed to vary) if 
for any finite collection of test functions $f\in C_0(\R^d)$ the joint distributions of the random variables $\{T_{f,L}(\phi)\}$ converge.
\end{definition} 

Through a standard probabilistic construction,  the limit can be presented as a random field $\Phi$, to whose weighted averages $T_f(\Phi)$ the above variables converge in distribution.   We omit here the detailed discussion of this point\footnote{By the Kolmogorov extension theorem, one may start by selecting  sequences of the parameter values so as to establish convergence in distribution for a countable collection of test functions $f$, which is dense in $C_0(\R^d)$, and then use the uniform local integrability of the rescaled correlation function and of the limiting Schwinger functions, to extend the statement by continuity arguments to all $f\in C_0(\R^d)$.  One may then  recast the limiting variables as associated with a single random $\Phi$, as in \eqref{eq:functional}.}, 
but remark that for the models considered here the construction  is simplified by i) the exclusion of  delta functions $\delta (x)$ and their derivatives from the family of considered test functions, and ii) the uniform local integrability of the  rescaled correlation functions (before and at the limit).  This important  condition is implied in the present case by the \emph{infrared bound}, which is presented below in Section~\ref{sec:3.2}.

Our main result concerning the euclidean field theory is the following.   
   \begin{theorem}[Gaussianity of $\Phi^4_4$]\label{thm:gaussian phi4}
For  dimension $d=4$,  any random field reachable by the above constructions, and satisfying \eqref{Phi_even}, is a generalized Gaussian process. \end{theorem}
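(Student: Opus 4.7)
The plan is to reduce the statement to a quantitative bound on the truncated four-point function of the associated lattice model, and derive that bound through the random current representation of the Ising model combined with a multi-scale argument. First, I would invoke the Griffiths--Simon embedding to represent each single-site $\phi^4$ measure as a limit of block-spin Ising models: this puts the lattice $\phi^4_4$ model in the Griffiths--Simon class, so its Schwinger functions are pointwise limits of Schwinger functions of nearest-neighbor ferromagnetic Ising systems on enlarged graphs. Since Gaussianity of a reflection-positive Euclidean field satisfying \eqref{Phi_even} is equivalent to the vanishing of its connected $2n$-point functions for $n\ge 2$, and since by the Lebowitz / Aizenman--Fr\"ohlich--Spencer inequalities every higher truncated correlation function in such a model is bounded, up to a combinatorial factor, by sums of products involving the connected four-point function $u_4$, it suffices to show that in the scaling limit the rescaled $u_4$ becomes negligible compared with the product $\mathcal{G}_2[S_2]$ of two-point functions.

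For the Ising systems one has the explicit identity, via Griffiths' switching lemma, expressing $-u_4(x_1,x_2,x_3,x_4)$ as a weighted probability that two independent random currents (with prescribed sources at pairs among the $x_i$) share a common connected cluster. The classical tree-diagram bound then controls this probability by $2\sum_{y}\prod_{i=1}^{4} S_2(y,x_i)$, and after scaling by $\Sigma_L^{-2}$ and integrating against test functions $f_1,\dots,f_4$, one obtains a dimensional estimate proportional to the lattice bubble diagram $B_L := \sum_{y} S_2(0,y)^2$. In $d>4$ the bubble is uniformly bounded, which is enough to conclude triviality along the lines of \cite{Aiz82,Fro82}; but in the critical dimension $d=4$, $B_L$ diverges logarithmically in $L$, so the straightforward tree bound is off by exactly a $\log L$ factor and fails to yield the vanishing of the limit. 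The whole content of the theorem is to close this logarithmic gap.

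The key step, which I expect to be the main obstacle, is to derive an improvement of the tree-diagram bound of the form
\begin{equation*}
|u_4(x_1,\dots,x_4)| \;\le\; \frac{C}{\log L}\, \sum_{y}\prod_{i=1}^{4}S_2(y,x_i)
\end{equation*}
(or a slightly weaker variant with the $\log$ replaced by a positive power thereof), uniformly over the near-critical parameter regime and over $|x_i-x_j|$ of order $L$. To obtain this, I would perform a multi-scale decomposition of the random-current intersection event: at each dyadic scale $2^k$ between the lattice spacing and $L$, the probability that the two currents meet in an annulus of that scale is shown, via the switching lemma and the infrared / Gaussian domination bound on $S_2$, to contribute a factor strictly smaller than one, and the contributions across scales compose multiplicatively to produce the desired logarithmic gain (there are $\sim \log L$ independent scales). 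Combining this improved bound with the infrared estimate $\Sigma_L \sim L^{d+2}\cdot (\text{slow factor})$ and the uniform local integrability it provides, the rescaled $u_4$ tested against bounded continuous $f_1,\dots,f_4$ is seen to tend to zero.

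Finally, Gaussianity follows: the vanishing of the scaling limit of $u_4$ together with the Lebowitz-type bounds on higher truncated functions forces all connected Schwinger functions $S_{2n}^{T}$ with $n\ge 2$ to vanish, so that $S_{2n}=\mathcal{G}_n[S_2]$ as in \eqref{triv}, while odd truncated functions vanish by the $\phi\mapsto-\phi$ symmetry preserved under \eqref{H}. Since any limiting law of $T_{f,L}$ is a probability measure whose moments coincide with those of a centered Gaussian with covariance $S_2$, and since $S_2$ in the scaling limit is pointwise bounded by the infrared bound and therefore the Gaussian moment problem is determinate, the limit is indeed a generalized Gaussian process, proving Theorem~\ref{thm:gaussian phi4}.
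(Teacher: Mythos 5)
Your overall architecture matches the paper's: reduce to the Griffiths--Simon class, express $U_4$ via the switching lemma as an intersection probability of two independent sourced current clusters, improve the tree diagram bound by a power of $\log L$, and then propagate the vanishing of $U_4$ to all higher truncated functions via the switching-lemma inequality of \cite[Prop.~12.1]{Aiz82} together with a moment argument. The gap is in the step you yourself flag as the main obstacle. The mechanism you propose --- that at each dyadic scale the probability that the two clusters meet ``contributes a factor strictly smaller than one'' and that these $\sim\log L$ factors ``compose multiplicatively'' --- does not yield the improved tree bound, and taken literally would give a false power-law decay of the intersection probability (a product of $\log L$ factors $1-c$ is $L^{-c'}$, whereas already for two independent random walks in $d=4$ started at distance $L$ the intersection probability decays only like $1/\log L$). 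The actual argument runs the multi-scale estimate \emph{conditionally}: one shows that, conditioned on a given site $u$ lying in the intersection set $\calT$ (equivalently, with sources moved to $u$ by the switching lemma), $\calT$ meets a positive fraction of $K$ concentric annuli around $u$ whose radii grow like $\ell_{k+1}=\ell_k^{\alpha}$, hence $|\calT|\ge 2^{\delta K}$ with high conditional probability (Proposition~\ref{prop:non isolated}); combined with the \emph{unimproved} first-moment tree bound and Markov's inequality, this converts the bounded expectation of $|\calT|$ into a small probability of $\{\calT\neq\emptyset\}$. Dyadic annuli are too thin for this: the per-annulus intersection probability is bounded below uniformly only for annuli of super-polynomially growing aspect ratio, which is why only $K\sim\log\log L$ scales (and a $(\log L)^{-c}$, rather than $1/\log L$, gain) are available.

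Two further ingredients that your sketch omits are essential and are where most of the work lies. First, the scales are not independent: random current clusters conditioned to join distant sources have no renewal or Markov structure, so the multiplicative composition across annuli must be justified by a quantitative mixing property for sourced random currents (Theorem~\ref{thm:mixing multi}), which the authors identify as the crux of the whole argument; asserting that the scales are ``roughly independent'' begs precisely this question. Second, the per-annulus lower bound on the intersection probability is a second-moment computation requiring matching upper \emph{and lower} pointwise control of $\langle\sigma_0\sigma_x\rangle$, not just the one-sided infrared bound you invoke; unconditionally such control is only available on an abundant family of ``regular scales'' (Theorem~\ref{thm:regular scales}), whose construction via the sliding-scale infrared bound, MMS monotonicity and a gradient estimate occupies Section~\ref{sec:3}. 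Without these two inputs the multi-scale step does not close.
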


Let us mention that the precise asymptotic behaviour of scaling limits of lattice models which start from sufficiently small perturbations of the Gaussian free field, i.e.~small enough $\lambda$,  have been obtained through rigorous renormalization techniques   \cite{BauBrySla14,FMRS87,GawKup85,HarTas87}.  In comparison, our result also covers  arbitrarily ``hard'' $\phi^4$ fields.  However, we do not currently provide comparable analysis of the convergence in terms of the exact scale of the logarithmic corrections, and the exact expression for the covariance of the limiting Gaussian field.

Let us also note that what from the perspective of constructive field theory may be regarded as disappointment is a positive and constructive result  from the perspective of statistical mechanics.   The theoreticians' goal there is to understand the critical behavior in models which lie beyond the reach of exact solutions.   The proven gaussianity of the limit  is therefore also a constructive result.

\subsection{The statistical mechanics perspective} 

Statistical mechanics provides a general approach for studying the behaviour of extensive systems of a divergent number of degrees of freedom.  Among the theoretically gratifying observations in this field  has been the discovery of ``universality''.  The term means that some of the key features of phase diagrams, and critical behavior (including the critical exponents),  appear to be the same across broad classes of systems of rather different microscopic structure.   This has accorded 
 relevance to studies of the phase transitions in drastically streamlined mathematical models. 
The  ferromagnetic Ising spin model 
 to which we turn next are among  the earliest, and most studied  such systems.  

An intuitive explanation of universality is that the large scale behavior of models of rich short scale structure  is described by statistical field theories for which there are far fewer  options.    A heuristic perspective on this phenomenon is provided by the renormalization group theory, c.g.  ~\cite{Wil71}.   In particular, the mechanism underlying the simplicity of the scaling limit is related to simplicity of the critical exponents, which means that for $d\geq 4$ they assume their mean field values. Rigorous results for the latter (though still partial, in terms of logarithmic corrections) were presented in \cite{Sok79, AizFer86}.  

The Ising  spin model on $\Lambda \subset \Z^d$ has as its basic variables a collection of $\pm 1$ valued variables $\{\sigma_x\}_{x\in \Lambda}$, and a Hamiltonian (the energy function)  of the form
\be \label{Ising_H}
H_{\Lambda,J,h}(\sigma):=-\sum_{\{x,y\}\subset \Lambda}J_{x,y}\,\sigma_x\sigma_y \  - \  \sum_{x\in \Lambda} h \sigma_x \, .   
\ee 

The model's finite volume Gibbs equilibrium state $\langle \cdot\rangle_{\Lambda,J,h,\beta}$ at {\em inverse temperature} $\beta\ge0$ is the probability measure under which the expectation value of any function  $F:\{\pm1\}^\Lambda\rightarrow \R$ is  given by
\begin{equation}
\label{eq:Gibbs}
\langle F\rangle_{\Lambda,J,h,\beta}:=\frac{1}{Z(\Lambda,J,h,\beta)}\sum_{\sigma\in\{\pm1\}^\Lambda}F(\sigma)\exp[-\beta H_{\Lambda,J,h}(\sigma)],  \end{equation}
where the normalizing factor $Z(\Lambda,J,h,\beta)$ is the model's {\em partition function}. 
Infinite volume Gibbs states on $\Z^d$, which we shall denote by $\langle\cdot\rangle_{J,h,\beta}$,  are defined through suitable limits (over  sequences $\Lambda_n \nearrow \Z^d$) of the above.

We focus here on the {\em  nearest neighbor ferromagnetic} interaction (n.n.f.) 
\be
J_{x,y}   = \begin{cases}   J & \|x-y\| =1 \\  
0 & \mbox{otherwise} 
\end{cases}
\ee 
 with $J>0$.  In dimensions $d>1$, this model exhibits a line of first-order phase transitions (in the  plane of the model's thermodynamics parameters $(\beta,h)$) along the line $h=0$, $\beta \in (\beta_c(d),\infty)$.   The line terminates at the critical point 
 $(\beta_c,0)$ at which the model's correlation length diverges.   Our discussion concerns the scaling limits at, or near, this point.   Since the phase transition occurs at zero magnetic field, we restrict the discussion to $h=0$ and will omit $h$ from the notation.


Away from the critical point the model's truncated correlation functions  
decay exponentially fast \cite{AizBarFer87,DumGosRao18}.  This leads to the definition of the {\em correlation length} $\xi(\beta)$ as:
\be 
\xi(\beta):=\lim_{n\rightarrow\infty}-n/\log \langle\sigma_0;\sigma_{n{\bf e}_1}\rangle_{\beta}\quad \mbox{(with ${\bf e}_1 = (1,0,...,0)$)}. 
\ee
The correlation length is proven to be finite for any $\beta < \beta_c$ \cite{AizBarFer87} and divergent in the limit $\beta \nearrow \beta_c$~\cite{Sim80}.    At the critical point $\xi(\beta_c)=+\infty$ 
 as  the decay of the 2-point function slows to a power-law (see \cite{Sim80} and  the  discussion around Corollary~\ref{cor:SL}).   
 
At this point, one may notice the similarity between the Ising model's Gibbs equilibrium distribution \eqref{eq:Gibbs} and  the discretized functional integral \eqref{Phi_SM}.  Furthermore,  in view of the probability measures' relation
\be \label{Phi4_Ising}
 \frac{1}{2} [\delta(\sigma-1) + \delta(\sigma+1)] \, d\sigma  = 2 \lim_{\lambda\to \infty} e^{-\lambda (\phi^2-1)^2} d \phi / \text{Norm}(\lambda)
 \ee   
 the Ising spin's a-priori (binary) distribution can be viewed as the ``hard''  limit of the $\phi^4$ measure. 
Hence included in Theorem~\ref{thm:gaussian phi4} is the statement that for $d=4$ any scaling limit of the critical Ising model is Gaussian.  

 However, our analysis flows in the opposite direction.  In essence, the argument is structured as follows
 \begin{enumerate} 
 \item   deploying methods which take advantage of the Ising systems' structure,  the stated results are first  proven for the n.n.f. Ising model (in four dimensions);
 \item the analysis is  adapted to the model's extension, in which each spin is  replaced by a block average of `elemental' Ising spins with an intrablock ferromagnetic coupling;
 \item through  weak limits the statement is extended to systems of variables whose a-priori single spin distribution belongs to the Griffiths-Simon (G-S) class.   \end{enumerate} 
Included in the G-S  class (defined below) are the $\Phi^4$ measure 
$\rho(d \varphi) $ of \eqref{H}.
\medbreak

To reduce the repetition,  some of the relevant relations are presented below in a form which may not be the simplest for n.n.f. but is suitable  for the model's  generalized version.   However  in the rest of this section we focus on the n.n.f. case. 

%

As it is known, and made explicit in Section~\ref{sec:4.3}, for Ising models a bellwether for Gaussian behaviour at large distances  is the asymptotic validity of Wick's law at the level of the four-point  function~\cite{Aiz82,New75}.   The deviation is expressed in the  {\em Ursell function} 
\begin{equation} 
U_4^{\beta}(x,y,z,t)  := \langle\sigma_x\sigma_y\sigma_z\sigma_t\rangle_\beta - \left[ \langle\sigma_x\sigma_y\rangle_\beta\langle\sigma_z\sigma_t\rangle_\beta+\langle\sigma_x\sigma_z\rangle_\beta\langle\sigma_y\sigma_t\rangle_\beta+\langle\sigma_x\sigma_t\rangle_\beta\langle\sigma_y\sigma_z\rangle_\beta    \right] 
\end{equation}
the relevant  question being whether      
$ U_4(x,y,z,t)/ \langle\sigma_x\sigma_y \sigma_z\sigma_t\rangle_\beta$   vanishes asymptotically for quadruples of sites at large distances, of comparable order between the pairs.

Gaussianity of the scaling limits for $d>4$ was previously established  through the combination of the {\em tree diagram bound} of~\cite{Aiz82}:
 \begin{align} |U_4^\beta(x,y,z,t)|  &\le  2  \sum_{u\in\Z^d} \langle \sigma_u\sigma_x\rangle_\beta  
 \langle \sigma_u\sigma_y\rangle_\beta  
  \langle \sigma_u\sigma_z\rangle_\beta     
   \langle \sigma_u\sigma_t\rangle_\beta \label{tree} 
\end{align} 
and the \emph{Infrared Bound} of~\cite{FroSimSpe76,GliJaf73} 
\be
\langle \sigma_x \sigma_y\rangle_{\beta_c}  \leq  \frac{C}{|x-y|^{d-2}}\,.
\ee  

At the heuristic level,  the triviality of the scaling limit for $d>4$ is indicated by the following dimension counting.  Assume that at $\beta_c$ the two-point function is of comparable values for pairs of sites at similar distances (which is false for $\beta \neq \beta_c$ at distances much larger than $\xi(\beta)$).  Then, for  quadruples of points at mutual distances of order $L$, the sum in the tree diagram bound \eqref{tree}  contributes a factor $L^d$ while the summand has two extra correlation function factors, in comparison to 
$\langle\sigma_x\sigma_y\sigma_z\sigma_t\rangle_\beta
$, each factor dominated by $1/L^{d-2}$.  
This suggests that $U_4(x,y,z,t)$ in comparison to the full correlation functions may be of the order $O(L^{4-d})$, which for $d>4$   vanishes in the limit $L\to \infty$.   Up to numerous technical details this is the essence of the argument presented  in \cite{Aiz82,Fro82}.  
However, the above estimate is clearly  inconclusive for $d=4$.


The key advance presented here is the following improvement of the tree diagram bound.  The multiplicative  factor by which it improves ~\eqref{tree} is derived through a multi scale analysis which is of relevance at the marginal dimension $d=4$.  

\begin{theorem}[Improved tree diagram bound inequality]\label{thm:improved tree bound simple} For the n.n.f.~Ising model in dimension $ d=4$, there exist $c,C>0$ such that for every $\beta\le \beta_c$, every $L\le \xi(\beta)$ and every $x,y,z,t\in\Z^d$ at a distance larger than $L$ of each other,
\be\label{eq:improved tree bound}
|U_4^\beta(x,y,z,t)|\le \frac C{B_L(\beta)^c}\sum_{u\in\Z^4} \langle \sigma_u\sigma_x\rangle_\beta  
 \langle \sigma_u\sigma_y\rangle_\beta  
  \langle \sigma_u\sigma_z\rangle_\beta     
   \langle \sigma_u\sigma_t\rangle_\beta,
\ee
where $B_L(\beta)$ is the {\em bubble diagram} truncated at a distance $L$ defined by the formula
\be \label{B_L}
B_L(\beta):=\sum_{x\in\Lambda_L}\langle\sigma_0\sigma_x\rangle_\beta^2.
\ee
\end{theorem}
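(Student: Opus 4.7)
The plan is to start from the random current representation underlying the classical tree diagram bound and then improve it through multi-scale analysis. Applying the switching lemma to the definition of $U_4^\beta$ recasts the Ursell function, up to sign conventions, as a sum over a pivotal meeting vertex $u \in \Z^4$ of the form
\[
|U_4^\beta(x,y,z,t)| \le 2 \sum_{u} \langle \sigma_u\sigma_x\rangle_\beta \langle \sigma_u\sigma_y\rangle_\beta \langle \sigma_u\sigma_z\rangle_\beta \langle \sigma_u\sigma_t\rangle_\beta \cdot \mathbf{P}_u[\mathcal{I}],
\]
where $\mathbf{P}_u[\mathcal{I}]$ is a probability, under the joint law of two independent currents with sources $\{x,y\}$ and $\{z,t\}$, of a residual intersection-type event $\mathcal{I}$ anchored at $u$. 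The classical tree diagram bound corresponds to the trivial estimate $\mathbf{P}_u[\mathcal{I}] \le 1$, and the goal is to upgrade this to $\mathbf{P}_u[\mathcal{I}] \le C\, B_L(\beta)^{-c}$ uniformly in $u$ and in the admissible range of $\beta$ and $L$.

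The guiding intuition is that in dimension four, two random currents whose sources lie at distance at least $L$ from one another typically share many intersection points rather than just one, so demanding that the intersection pass through a specified $u$ is far rarer than the unconstrained intersection event. To quantify this, I would partition space around $u$ into dyadic annuli $A_k := \Lambda_{2^{k+1}}(u) \setminus \Lambda_{2^k}(u)$ for $0 \le k \le \lfloor \log_2 L \rfloor$ and, on each annulus, try to establish a conditional estimate of the following type: given both currents outside $A_k$, the probability that the residual intersection event survives the random structure inside $A_k$ decreases by a factor $(1-\delta_k)$, with $\delta_k > 0$ controlled from below by a quantity related to the local bubble contribution $\sum_{v \in A_k} \langle \sigma_0\sigma_v\rangle_\beta^2$ on that scale. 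Iterating this estimate and assembling the product $\prod_k (1-\delta_k)$, then reinterpreting the accumulated suppression in terms of the full truncated bubble $B_L(\beta)$, converts a per-scale improvement across $O(\log L)$ scales into the power-type bound $B_L(\beta)^{-c}$.

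The main obstacle will be the per-scale conditional estimate. Implementing it requires a decoupling / regeneration step that isolates the randomness inside $A_k$ from the rest of the current configuration, while preserving enough of the switching-lemma structure to keep the induction rolling; the natural tools are FKG and monotonicity properties of the combined current $\mathbf{n}_1 + \mathbf{n}_2$, together with the \emph{infrared bound} $\langle\sigma_0\sigma_x\rangle_\beta \le C |x|^{-(d-2)}$, which keeps the per-annulus bubble contributions uniformly under control and prevents the constants from degenerating as $\beta \nearrow \beta_c$. A secondary difficulty is to handle the conditioning on the ``backbone'' of currents connecting $u$ to the four sources so that enough independent randomness remains in $A_k$ to drive an intersection estimate; this is expected to be dealt with by a careful re-application of the switching lemma at each scale, converting the remaining event into one about independent currents on a smaller region. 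Once these pieces are in place, a telescoping induction over the $O(\log L)$ scales, followed by reinsertion into the sum-over-$u$ representation, yields the improved inequality~\eqref{eq:improved tree bound}, with uniformity in $\beta \le \beta_c$ and $L \le \xi(\beta)$ inherited from the range of validity of the infrared bound.
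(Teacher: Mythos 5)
Your high-level intuition is the right one (conditioned on intersecting, the two clusters intersect in many well-separated annuli, so the intersection probability is much smaller than the expected number of intersection points), but as written the plan has several gaps that would prevent it from closing. First, the opening reduction is not available in the form you state: the switching lemma gives the \emph{exact identity} $\mathbf{P}^{xy,zt,\emptyset,\emptyset}_\beta[u\in\calT]=\langle\sigma_u\sigma_x\rangle\langle\sigma_u\sigma_y\rangle\langle\sigma_u\sigma_z\rangle\langle\sigma_u\sigma_t\rangle/(\langle\sigma_x\sigma_y\rangle\langle\sigma_z\sigma_t\rangle)$, so there is no residual event $\mathcal I$ with $\mathbf{P}_u[\mathcal I]\le 1$ hiding inside the classical tree bound; the entire improvement must come from the gap between $\mathbf{P}[\calT\ne\emptyset]$ and $\mathbf{E}[|\calT|]$. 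Turning ``$|\calT|$ is large conditionally on being nonempty'' into the theorem requires an additional combinatorial device you do not supply: the paper splits into $\{|\calT|\ge 2^{\delta K/5}\}$ (Markov plus switching) and $\{0<|\calT|<2^{\delta K/5}\}$, and for the latter invokes a deterministic annular covering lemma guaranteeing that a small set contains a point $u$ with few occupied annuli $\mathbf{M}_u(\calT;\calL,K)<\delta K$; only that pointwise event, after switching the sources to $u$, is shown to have probability $2^{-\delta K}$. Second, dyadic annuli cannot carry the argument. The per-annulus intersection lower bound is a second-moment estimate whose quality is roughly $\log(M/m)/\log M$, which degenerates for $M=2m$; worse, the inter-annulus decoupling you need is a mixing statement whose error is only of order $(\log(N/n))^{-1/2}$ and which requires $N\ge n^{\alpha}$ with $\alpha$ large. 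The paper therefore takes $\ell_{k+1}\ge\ell_k^{\alpha}$ (equivalently $B_{\ell_{k+1}}\ge DB_{\ell_k}$), so only $K\asymp\log B_L(\beta)$ scales are available and the gain is $2^{-\delta K}\asymp B_L(\beta)^{-c}$ — not a product of $O(\log L)$ per-dyadic-scale factors.

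The proposed tools for the decoupling step are also not the right ones. Random currents do \emph{not} satisfy FKG (the paper stresses this and proves only bespoke partial monotonicity statements in the appendix); the actual decoupling is the mixing property for sourced currents, which is the paper's main technical innovation and is proven by a completely different mechanism: a ``resolution of identity'' random variable $\mathbf N$ built from connection indicators $\mathbb I[u\leftrightarrow x]/\langle\sigma_x\sigma_u\rangle$, whose concentration around $1$ lets one relocate the sources and factorize inside/outside events. Finally, the infrared \emph{upper} bound alone is far from sufficient to keep the constants from degenerating: the second-moment arguments (both for the per-annulus intersection and for the concentration of $\mathbf N$) require matching lower bounds and regularity of the two-point function — the Simon--Lieb lower bound, the sliding-scale infrared bound, gradient estimates, and the abundance of ``regular scales'' — which occupy an entire section of the paper and are nowhere anticipated in your plan. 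Without them the argument is at best conditional on an assumed power-law behaviour of $\langle\sigma_0\sigma_x\rangle_{\beta}$.
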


For a heuristic insight on the implications of this improvement for $d=4$, one may consider separately the two following scenarios:   the two-point function $\langle\sigma_0\sigma_x\rangle_\beta$ may be  roughly of the order  $L^{2-d}$ (meaning that the Infrared Bound is saturated up to constant), or it may be much smaller.  In the first case (which is conjectured to hold when $d=4$),  $B_L(\beta)$ is of order $\log L$, so that the improved tree diagram bound indicates that  $|U_4|/S_4 = O(\log L)^{-c}$, and thus is asymptotically negligible.  In the second case (which is not the one expected to hold), already the unadulterated  tree diagram bound \eqref{tree} suffices.

We derive \eqref{eq:improved tree bound} making extensive use of  the Ising model's random current representation that is presented in Section~\ref{sec:1.5}.  It enables combinatorial identities 
through which the   
deviations from Wick's law can be expressed in terms of intersection probabilities of the random clusters which link pairwise the specified source points.  

Beyond the four point function, the full statement of the scaling limit's gaussianity  is established here through the following estimate of the characteristic function of smeared averages of spins. 

\begin{proposition}\label{prop:gaussian b}
There exist $c,C>0$ such that for the n.n.f.~Ising model on $\Z^4$, every $\beta\le\beta_c$, every $L\le\xi(\beta)$, and  test function $f \in C_0(\R^4)$,  
\be \label{eq:gauss}
 \Big|\,\big\langle \exp[z\,T_{f,L}(\sigma)-\tfrac{z^2}2\langle T_{f,L}(\sigma)^2\rangle_\beta ]\big\rangle_\beta\,-\,1\,\Big|~\le~ \frac{C\|f\|_\infty^4r_f^{12}}{(\log L)^c}\,z^4,
\ee
with $\| f\|_\infty := \max\{|f(x)|:x\in \R^4\}$ and $r_f$ the diameter of the function's support. 
\end{proposition}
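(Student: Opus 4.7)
The plan is to Taylor-expand the integrand in $z$, identify the leading $z^4$ coefficient with the smeared four-point Ursell function of $T:=T_{f,L}(\sigma)$, and control it using Theorem~\ref{thm:improved tree bound simple}. Setting $M_2:=\langle T^2\rangle_\beta$ and invoking the probabilists' Hermite polynomials $H_m(\,\cdot\,;M_2)$, I would write $\exp[zT-\tfrac{z^2}{2}M_2]=\sum_{m\ge 0}\tfrac{z^m}{m!}H_m(T;M_2)$ and take expectations. The $\mathbb{Z}_2$ spin-flip symmetry (valid at $h=0$) eliminates all odd-$m$ terms, while $\langle H_0\rangle_\beta=1$ and $\langle H_2(T;M_2)\rangle_\beta=0$ by construction of $M_2$, leaving
\[
\big\langle\exp[zT-\tfrac{z^2}{2}M_2]\big\rangle_\beta-1 \,=\, \sum_{k\ge 2}\frac{z^{2k}}{(2k)!}\,\langle H_{2k}(T;M_2)\rangle_\beta,
\]
whose leading ($k=2$) coefficient equals $\langle H_4(T;M_2)\rangle_\beta=\langle T^4\rangle_\beta-3M_2^2=U_4^T$, the four-point Ursell function of the smeared variable.

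For this leading term, I would unfold $U_4^T=\Sigma_L^{-2}\sum_{x_1,\ldots,x_4} f(x_1/L)\cdots f(x_4/L)\,U_4^\beta(x_1,\ldots,x_4)$ and apply Theorem~\ref{thm:improved tree bound simple} on the quadruples with pairwise distances $\ge L$; the short-range diagonal range is controlled by the unimproved tree bound~\eqref{tree} together with the infrared bound, and is no larger. After interchanging orders of summation, the dominant contribution is bounded by
\[
\frac{C}{B_L(\beta)^c\,\Sigma_L^2}\sum_{u\in\mathbb{Z}^4}\bigg(\sum_{x\in\mathbb{Z}^4}f(x/L)\langle\sigma_u\sigma_x\rangle_\beta\bigg)^{\!4}.
\]
The infrared bound $\langle\sigma_0\sigma_x\rangle_\beta\le C|x|^{-2}$, valid in $d=4$, yields $\Sigma_L\asymp L^6$, $\sum_xf(x/L)\langle\sigma_u\sigma_x\rangle_\beta\lesssim\|f\|_\infty(Lr_f)^2$, and a relevant $u$-volume $\lesssim(Lr_f)^4$; combining these gives $|U_4^T|\le C\|f\|_\infty^4 r_f^{12}/B_L(\beta)^c$. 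The same infrared bound forces $B_L(\beta)\gtrsim\log L$ in the marginal dimension $d=4$ and for $L\le\xi(\beta)$, so the $z^4$-coefficient meets the advertised bound $C\|f\|_\infty^4 r_f^{12}/(\log L)^c$.

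For the tail $k\ge 3$, the strategy is to expand $\langle H_{2k}(T;M_2)\rangle_\beta$ as a polynomial in the Ursell functions $U_{2n}^T$ with $2\le n\le k$. Newman's Gaussian moment inequality $\langle T^{2n}\rangle_\beta\le(2n-1)!!\,M_2^n$ for Ising, combined with a uniform bound $M_2\le C(\|f\|_\infty,r_f)$ from the infrared bound, controls the combinatorial growth and yields absolute convergence of the Hermite series on any compact $z$-range. To propagate the $(\log L)^{-c}$ improvement to higher orders, I would rerun the multiscale bubble-gain argument of Theorem~\ref{thm:improved tree bound simple} on the $n$-fold intersection diagrams arising in the random-current representation (Section~\ref{sec:1.5}) of $U_{2n}^\beta$; each $U_{2n}^T$ would then inherit a bound of the form $C_n\|f\|_\infty^{2n}r_f^{\alpha(n)}/(\log L)^c$. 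Substituting these estimates into the Hermite series and summing absorbs the full contribution into $C\|f\|_\infty^4 r_f^{12}\,z^4/(\log L)^c$. The principal obstacle is precisely this last step---extending the bubble-diagram improvement from $U_4$ to all $U_{2n}$ with $n\ge 3$. The mechanism from Theorem~\ref{thm:improved tree bound simple} stays effective, but the combinatorics of $n$-fold intersection diagrams with more intricate source partitions demand heavier multiscale bookkeeping than the four-point case.
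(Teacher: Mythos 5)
There are two genuine gaps. First, your treatment of the leading $z^4$ term silently assumes lower bounds that the Infrared Bound does not provide. The bound \eqref{eq:IB} is an \emph{upper} bound on the two-point function, so it yields $\Sigma_L\lesssim L^6$ and $B_L(\beta)\lesssim\log L$, not "$\Sigma_L\asymp L^6$" and not "$B_L(\beta)\gtrsim\log L$"; matching lower bounds would amount to the unproven saturation of the Infrared Bound, i.e.\ essentially the assumption \eqref{eq:S} with $\eta=0$ that the unconditional proof must avoid. Since $\Sigma_L^{-2}$ sits in the denominator of the normalized Ursell function and $B_L(\beta)^{-c}$ is the gain you are trying to convert into $(\log L)^{-c}$, both missing lower bounds are load-bearing. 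The paper's proof of \eqref{eq:bound S} is devoted precisely to this point: it never lower-bounds $\Sigma_L$ or $B_L$ separately, but instead shows via Cauchy--Schwarz, \eqref{eq:ahaha} and Lemma~\ref{cor:growth} that the combined ratio $\chi_L(\beta)^2/(L^4B_L(\beta))$ is at most $C\log\log L/\log L$ — if $B_L$ fails to grow logarithmically then $\chi_L$ (hence $\Sigma_L$) is correspondingly small, and the two effects compensate.

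Second, your plan for the tail $k\ge3$ requires improved tree-diagram bounds for all higher Ursell functions $U_{2n}$, obtained by rerunning the multiscale analysis on $n$-fold intersection diagrams. You correctly flag this as the principal obstacle, but it is not a minor bookkeeping issue: it is unproven additional material, and Newman's Gaussian moment bound only gives convergence of the Hermite series, not the $(\log L)^{-c}$ smallness of each tail term. The paper sidesteps this entirely with the inequality \eqref{nGaussian} (from \cite[Proposition 12.1]{Aiz82}), which bounds the deviation of every $2n$-point function from Wick's law by a sum of $S_{2n-4}\cdot U_4$ terms. Only the \emph{four}-point Ursell function ever needs the improved tree diagram bound; the resulting recursive moment estimate \eqref{eq:moment} has $\langle T_{f,L}(\sigma)^{2n-4}\rangle_\beta$ on the right, and multiplying by $z^{2n}/(2n)!$ and summing reproduces the factor $\exp[\tfrac{z^2}{2}\langle T_{f,L}(\sigma)^2\rangle_\beta]$, which cancels upon normalization and leaves the clean $z^4$ bound. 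You should replace your Hermite-series tail analysis with this inequality.
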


The claimed gaussianity follows since  (by the Infrared Bound, applied on the left-hand side) for any non-negative continuous function $f \nequiv 0$ with bounded support,
\be
C\,r_f^2 \| f\|^2_\infty   
 \geq  \langle T_{f,L}(\sigma)^2\rangle_\beta  \ge  c_f>0,
\ee 
uniformly in $\beta\le\beta_c$ and $L$, 
 we get that  for $L\gg 1$ the distribution of 
$T_{f,L}(\sigma)$ is approximately   Gaussian  of variance $\langle T_{f,L}(\sigma)^2\rangle_\beta$.

 \paragraph{Organization of the proof:}
 The result proven here is unconditional.   However, to better convey the argument's structure, we  first establish the claimed result for the scaling limits of critical models $(\beta=\beta_c)$  under the auxiliary assumption that the two-point function behaves regularly on all scales, in  a sense defined below.  We then present an unconditional proof for $\beta\le\beta_c$ in which we add to the above analysis the proof that the two-point function is regular on a sufficiently large collection of distance scales, up to the correlation length $\xi(\beta)$.   
 
 \paragraph{Organization of the article:}  In the next section, we present the Griffiths-Simon construction of random variables which can be obtained as local aggregates of ferromagnetically coupled Ising spins.  It yields a  useful  link  between the $\phi^4$ and Ising variables.
 Following that, in Section~\ref{sec:1.5} we present the basics of Ising models' random current representation, and  the intuition based on random walk intersection probabilities.   Section~\ref{sec:2} contains a conditional proof of the improved tree diagram bound at criticality,  derived under a  power-law decay assumption on the two-point function.   Next, as a preparation for the unconditional proof, in Section~\ref{sec:3} we present some relevant properties of Ising model's two-point function. These estimates are stated and proved in the context of systems of real valued variables with the single-spin distribution in the afore mentioned Griffiths-Simon class. Included there are mostly known but also some new results.   Section~\ref{sec:4} contains the unconditional proof of our main results for the Ising model.  Section~\ref{sec:6} is devoted to its extension to the Griffiths-Simon class.   The appendix contains some auxiliary technical statements that are of independent interest.
  
\section{The Griffiths-Simon class of measures} \label{sec:GS}

The discrete approximations of the $\varphi^4$ functional integral and the Gibbs states of an Ising model are not only  analogous, as explained above, but are actually related.    

In one direction one has  \eqref{Phi4_Ising} and the implications mentioned next to it.  However, in this work we shall make use of another relation, which permits us to apply tools which are initially developed for general Ising models to the study of the $\varphi^4$ functional integral.    This relation is based on a construction which was initiated by  Griffiths~\cite{Gri69}, and advanced further by Simon-Griffiths~\cite{SimGri73}.     

\begin{definition} \label{def:rho}
A probability measure on  $\rho(d\varphi) $ on $\R$ is said to belong to the Griffiths-Simon (GS) class if either  of the following conditions is satisfied 
\\  
\indent $1)$   the expectation values with respect to $\rho$ can be presented as 
\be 
\int F(\varphi) \, \rho(d\varphi)  =   \sum_{\underline{\sigma}\in \{-1,1\}^N} F(\alpha \sum_{n=1}^N b_n \sigma_n) e^{\sum_{n,m=1}^N  K_{n,m} \sigma_n \sigma_m} / \text{Norm.}
\ee 
with some $\{b_n\}\subset \R$, and $K_{n,m}  \geq 0$.   \\  
\indent $2) $   $\rho$ can be presented as a (weak) limit of probability measures of  the above type, and is of sub-gaussian growth: 
\be \label{sub_gauss}
\int e^{|\varphi |^\alpha} \rho(d \varphi ) < \infty \quad \mbox{for some $\alpha> 2$.}
\ee 
\end{definition} 
A random variable is said to be of  Griffiths-Simon type if its probability distribution is in the GS class.

The construction  (1) was employed  by 
Griffiths~\cite{Gri69} 
for a proof  that the Ising model's Lee-Yang property as well as the Griffiths correlation inequalities hold also for a broader class of similar models with other notable spin variables.  Subsequently, Simon and Griffiths~\cite{SimGri73} pointed out that upon taking weak limits this can be extended to cover alsothe $\phi^4$ a-priori  measures, spelled in \eqref{H}.  

More specifically, a finite collection of the variables $\{\varphi_x\}_{x\in \Lambda}$ with the a-priori measure $\rho(d\varphi) = e^{-\lambda \varphi^4 + b\varphi^2} d\varphi/\text{norm} $   
can be produced as the $N\to \infty$ limit \emph{(in distribution)}  of the collection of the block averages of elemental Ising spins $\{\sigma_{x,n}\}$ (the dots  in Fig.~\ref{fig_blocks} )
\be
\varphi_x^{(N)} =   \alpha_N(\lambda, b) \sum_{n=1}^N \sigma_{x,n}
\ee 
 under the ``ultra-local'' coupling (which is to be added to the intersite interaction $H$ of \eqref{H})  
\be
H_{\rm inner}  = -  \frac{g_N(\lambda,b)}{N }  \sum_{x\in \Lambda} 
\sum_{n,m} \sigma_{x,n} \sigma_{x,m}
\ee 
with suitably adjusted $(\alpha_N, g_N)$.  Their exact values are not important for our discussion, but let us note that   $H_{\rm inner}$ is a mean field interaction and thus it is easy to see that for each 
$(\lambda,b)$ with $\lambda \neq 0$: 
$g_{N} (\lambda,b)$ tends to $1$ as $N$ tends to infinity, at a $(\lambda,b)$ dependent rate.  

In this representation, any system 
of $\phi^4$ variables  associated with the sites of a graph $\mathcal V$, and coupled through the graph's edges, is presentable as the limit ($N\to \infty$) of a system of constituent Ising spins associated withe the  Cartesian graph product   $\Z^d \times \mathcal {K}_N$, with $\mathcal {K}_N$ denoting the complete graph of $N$ vertices.

\begin{figure}
\begin{center}
\includegraphics[width = 0.40\textwidth]{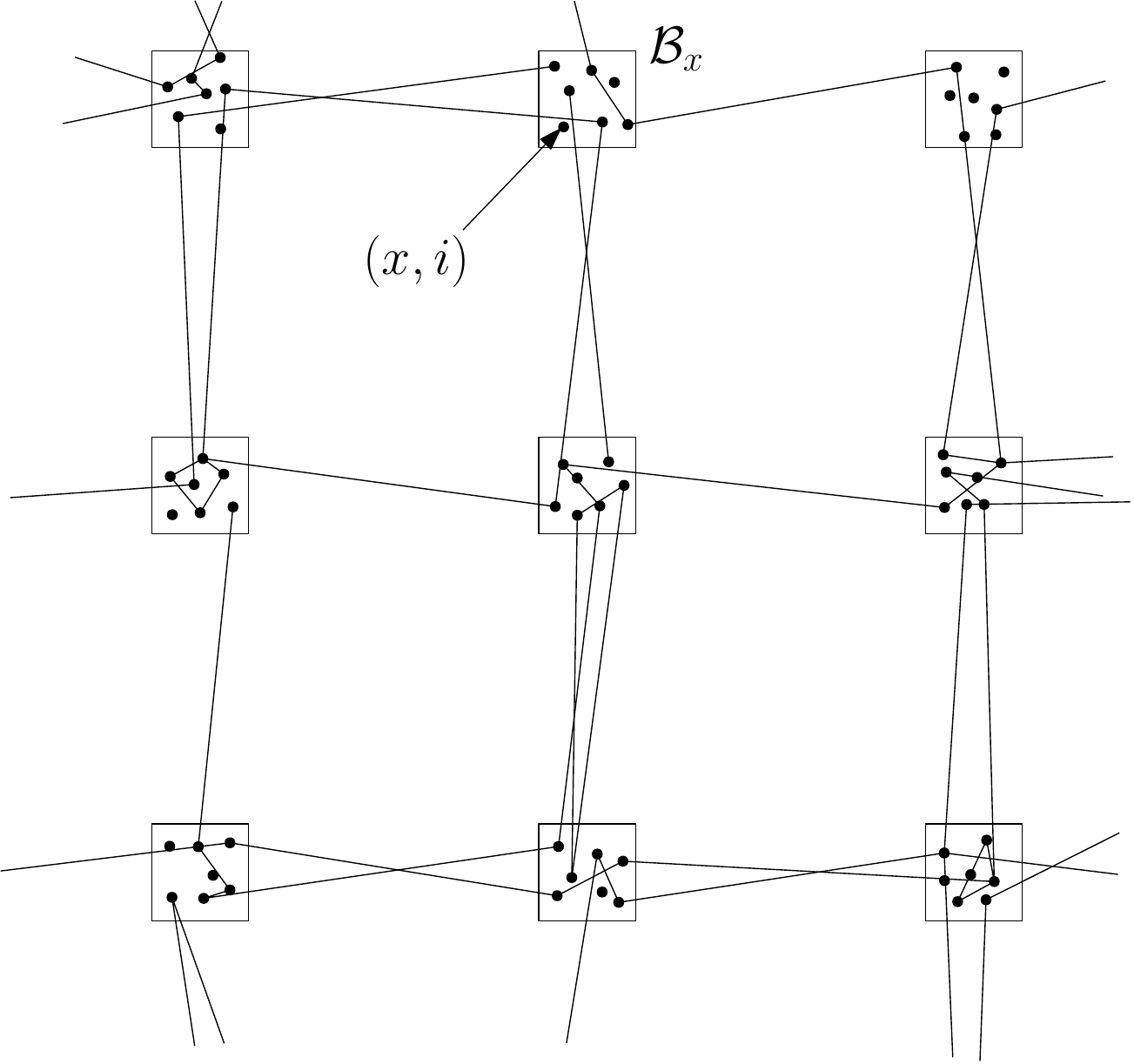}
\caption{The decorated graph, in which the sites $x\in \Lambda$  of a graph of interest are replaced by ``blocks'' 
$\calB_x$ of sites indexed as $(x,n)$.   The Ising ``constituent spins'' $\sigma_{x,n}$ are coupled pairwise through intra-block  couplings $ \delta_{x,y} K_{n,m}$ and inter-block couplings $J_{x,y}$.  The depicted lines indicate a  possible realization of the corresponding random current.}
\label{fig_blocks}
\end{center}
\end{figure}

\section{Random current intersection probabilities}\label{sec:1.5}

\subsection{Definition and switching lemma}

Starting with the Ising model, in this section we briefly introduce its random current representation, which allows to express the model's subtle correlation effects  in more tangible stochastic geometric terms.   The utility of the random current representation is enhanced by the combinatorial symmetry  expressed in its \emph{switching lemma}, which enables to structure some of the essential truncated correlations in terms guided by the analysis of the intersection properties of the traces of random walks.

\begin{definition}A {\em current} configuration $\n$ on $\Lambda$ is an integer-valued function defined over unordered pairs $(x,y)\in \Lambda$.  
The current's set of {\em sources} is defined as the set
\be 
\partial\n \, := \, \{ x\in \Lambda :\, (-1)^{\sum_{y\in \Lambda} \n(x,y)} = -1 \}.
\ee  
For a given Ising model on $\Lambda$, we associate to a current configuration  the {\em weight} 
\be 
w(\n) =w_{\Lambda,J,\beta}(\n) :=\prod_{\{x,y\}\subset V}\frac{\displaystyle(\beta J_{x,y})^{\n(x,y)}}{\n(x,y)!}\,. 
\ee
\end{definition}

Starting from Taylor's expansion
\be
\exp(\beta J_{x,y}\sigma_x\sigma_y)=\sum_{\n(x,y)\ge0} \frac{(\beta J_{x,y}\sigma_x\sigma_y)^{\n(x,y)}}{\n(x,y)!},
\ee
one can see that the Ising model's partition function (defined below~\eqref{eq:Gibbs}) can be expressed in terms of the corresponding random current: 
\begin{equation}\label{eq:8}
Z (\Lambda,\beta)=2^{|\Lambda|}\sum_{\n:\partial\n=\emptyset} w(\n).
\end{equation}
Furthermore, the spin-spin correlation functions can be represented as 
\begin{equation}\label{eq:erg}
\langle\prod_{x\in A}\sigma_x\rangle_{\Lambda,\beta}=\frac{\displaystyle\sum_{\n:\, \partial\n=A}w(\n)}{\displaystyle\sum_{\n:\partial\n=\emptyset}w(\n)}\,.
\end{equation}
At this point, it helps to note that any configuration with $\partial\n=\emptyset$, i.e.~without sources, can be viewed as the edge count of a multigraph which is decomposable into a union of loops.  In contrast, any configuration with $\partial\n=A$, such as the one appearing in the numerator of \eqref{eq:erg}, can be viewed as describing the edge count of a multigraph which is decomposable into a collection of loops and of paths connecting pairwise the sources, i.e. sites of $A$.  
In particular, 
a configuration with 
$\partial\n = \{u,v\}$ can be viewed as giving the ``flux numbers'' of a family  of loops together with a path from $u$ to $v$.  Thus, the random current representation allows to present the spin-spin correlation as the effect on the partition function of a loop system with the addition of a path linking the two sources.   In these terms, the spin-spin correlation 
$\langle  \sigma_{ x_1} \cdots \,\sigma_{ x_{2n}} \rangle_\beta$  represents the sum of the multiplicative effect of the introduction of $n$ paths pairing the sources.

Connectivity properties of currents   play a significant role in our analysis.  To express those we shall employ  the following terminology and notation.  
\begin{definition}
i) We say that $x$ is {\em connected} to $y$ (in $\n$), and denote the event by $x\stackrel{\n}\leftrightarrow y$,  if there exists a path of vertices $x=u_0,u_1,\dots, u_k=y$ with $\n(u_i,u_{i+1})>0$ for every $0\le i<k$.  We say that $x$ is connected to a set $S$ if it is connected to a vertex in $S$.  \\ 
ii) The {\em cluster} of $x$, denoted by ${\bf C}_\n(x)$, is the set of vertices connected to $x$ in $\n$.\\ 
iii) For a set of vertices $B$, we denote by $\calF_B$  the set of $\n$ satisfying that  there exists a sub-current  $\m\le \n$ such that $\partial\m=B$.
\end{definition}
Some of the most powerful properties of the random current representation are best seen when considering pairs of random currents and using the following lemma.  

\begin{lemma}[Switching lemma]\label{lem:switching} For any $A,B\subset \Lambda$ and any function $F$ from the set of currents into $\bbR$, 
\begin{equation}\sum_{\substack{
\n_1:\partial\n_1=A\\
\n_2:\partial\n_2=B}}F(\n_1+\n_2)w(\n_1)w(\n_2)=\sum_{\substack{
\n_1:\partial\n_1=A\Delta B\\
\n_2:\partial\n_2=\emptyset}}F(\n_1+\n_2)w(\n_1)w(\n_2)\bm 1_{{\n_1+\n_2}\in \calF_B}.
\end{equation}
where $A\Delta B $ denotes the symmetric difference of sets, $A\Delta B:=(A\setminus B)\cup(B\setminus A)$. 
\end{lemma}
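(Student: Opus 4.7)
The plan is to prove the switching lemma by re-indexing both sides according to the sum current $\n := \n_1 + \n_2$, reducing each side to a combinatorial count of subgraphs of a suitable multigraph built from $\n$, and then exhibiting an involution between the two families of subgraphs. First I would use the elementary edge-by-edge identity
\[
w(\n_1)\,w(\n_2) \;=\; w(\n)\,\prod_{\{x,y\}}\binom{\n(x,y)}{\n_1(x,y)}, \qquad \n_1+\n_2=\n,
\]
together with the observation that $\partial(\n_1+\n_2) = \partial\n_1 \,\Delta\, \partial\n_2$ (since edge-counts add and parities combine via symmetric difference). Consequently both sides of the lemma are indexed by currents $\n$ satisfying $\partial\n = A\,\Delta\, B$, weighted by $w(\n)F(\n)$ times a combinatorial factor counting the admissible splittings $\n_1 \le \n$ with the prescribed source set.

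Next I would re-interpret these combinatorial factors as subgraph counts. Expand $\n$ into a multigraph $G_\n$ by including $\n(x,y)$ distinct labelled copies of each edge; then $\prod_{\{x,y\}}\binom{\n(x,y)}{\n_1(x,y)}$ is precisely the number of edge-subsets $H\subseteq E(G_\n)$ whose multiplicity profile matches $\n_1$, and the odd-degree vertex set of $H$ in $G_\n$ coincides with the current source set $\partial\n_1$. Hence the LHS combinatorial factor equals $\#\{H\subseteq E(G_\n): \partial H = A\}$, while the RHS factor equals $\bm 1_{\n\in\calF_B}\cdot\#\{H\subseteq E(G_\n): \partial H = A\,\Delta\, B\}$, since $\n\in\calF_B$ is exactly the condition that $G_\n$ admits at least one edge-subset of boundary $B$. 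With this dictionary in place, the lemma reduces to the assertion that the two subgraph counts agree whenever $\n\in\calF_B$. This follows from a one-line bijection: pick any reference $H_B\subseteq E(G_\n)$ with $\partial H_B = B$, and send $H\mapsto H\,\Delta\, H_B$; this is an involution of $2^{E(G_\n)}$ that shifts boundaries by $B$, and therefore matches subgraphs with boundary $A$ bijectively to subgraphs with boundary $A\,\Delta\, B$.

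To close the argument I would handle the remaining case $\n\notin\calF_B$ on the LHS. Suppose some $H\subseteq E(G_\n)$ has $\partial H = A$. Since $\partial E(G_\n) = \partial\n = A\,\Delta\, B$, the complement satisfies $\partial(E(G_\n)\setminus H) = (A\,\Delta\, B)\,\Delta\, A = B$, contradicting $\n\notin\calF_B$. So no such $H$ exists and the LHS also vanishes, matching the RHS indicator. The main obstacle I expect is the bookkeeping in the middle step: carefully setting up the labelled multigraph $G_\n$ so that sub-currents $\n_1\le\n$ are in weighted bijection with edge-subsets of $G_\n$, and verifying that the parity-boundary of an edge-subset of $G_\n$ coincides with the source set of the corresponding current. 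Once this correspondence is pinned down, the proof collapses into the symmetric-difference involution described above.
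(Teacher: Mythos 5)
Your proof is correct, and it is essentially the standard argument for the switching lemma (the one in Griffiths--Hurst--Sherman and \cite{Aiz82} that the paper cites rather than reproves, and the same multigraph/symmetric-difference mechanism the paper itself deploys for the generalized switching step \eqref{eq:swit} and in the ``coarse switching'' lemma of the Appendix). All the delicate points are handled: the weight identity $w(\n_1)w(\n_2)=w(\n)\prod\binom{\n(x,y)}{\n_1(x,y)}$, the fact that once $\partial\n=A\Delta B$ is fixed the constraint $\partial\n_2=B$ is automatic given $\partial\n_1=A$, the involution $H\mapsto H\Delta H_B$, and the vanishing of the left-hand side when $\n\notin\calF_B$.
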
  

The switching lemma appeared as a combinatorial identity in Griffiths-Hurst-Sherman's derivation of the GHS inequality~\cite{GHS70}.  Its greater potential  for the geometrization of the correlation functions was developed in \cite{Aiz82}, and works which followed. In this paper, we employ two generalizations of this useful identity.  In the first,  the two currents $\n_1$ and $\n_2$ need not be defined on the same graph (see \cite[Lemma~2.2]{AizDumSid15} for details).  The second 
 will involve a slightly more general switching statement, which was used in several occasions in the past (cf.  \cite[Lemma 2.1]{AizDumTasWar18}  and reference therein).
 
It should be recognized that other stochastic geometric representations of spin correlations and/or interactions can be found (e.g.~the Symanzik representation of the $\phi^4$ action~\cite{Sym69}, and the BFS random walk representation of the correlation functions \cite{BryFroSpe82}).  It is conceivable that the overall strategy could be applied also through other means.   However we find the random current representation particularly useful for our purpose.

\subsection{Representation of Ursell's four-point function}

The switching lemma enables one to rewrite spin-spin correlation ratios in terms of probabilities of  events expressed in terms of the random currents.  The first of these is the relation  
\begin{align}\label{eq:orgaf}\frac{\langle\sigma_A\rangle_{\Lambda,\beta}\langle\sigma_B\rangle_{\Lambda,\beta}}{\langle\sigma_A\sigma_B\rangle_{\Lambda,\beta}}&:={\bf P}^{A\Delta B,\emptyset}_{\Lambda,\beta}[\n_1+\n_2\in \calF_B],\end{align}
for which we denote by 
$\mathbf P_{\Lambda,\beta}^{A} \left(\n \right)$ the probability distribution on random currents constrained by the source condition $\partial \n = A$, or more explicitly   
\begin{equation} \label{eq:Pr}
\mathbf P_{\Lambda,\beta}^{A} \left(\n \right)  :=   
\frac{2^{|\Lambda|} w(\n)} {\langle \prod_{x\in A} \sigma_x \rangle_{\Lambda,\beta} \, Z(\Lambda,\beta) }  \bbI [\partial \n = A],
\end{equation} 
and by $\mathbf P_{\Lambda,\beta}^{A_1,\dots,A_i} $ we denote the  law of an independent family of currents $(\n_1,\dots,\n_i)$
\begin{equation} \label{eq:2Pr}
\mathbf P_{\Lambda,\beta}^{A_1,\dots,A_i} :=  \mathbf P_{\Lambda,\beta}^{A_1} \otimes\dots\otimes \mathbf P_{\Lambda,\beta}^{A_i}. \end{equation}  
For two-point sets we may write $A=xy$ instead of $\{x,y\}$.  

As we will  also work  with the infinite volume Gibbs measures, let us note that random currents and the switching lemma admit a generalization to infinite volume\footnote{The extension of the switching lemma to $\Z^d$ is straightforward for $\beta\le\beta_c$ since then $\n_1+\n_2$ does not contain infinite paths of positive currents, almost surely under ${\bf P}^{A,B}_\beta$.  For $\beta<\beta_c$ this is implied by the discussion of  \cite{Aiz82} for $\beta<\beta_c$, and for  $\beta=\beta_c$ it follows from the continuity result of \cite{AizDumSid15} for $\beta=\beta_c$.}.  
Existing continuity results \cite{AizDumSid15} permit to extend  \eqref{eq:orgaf} to  the infinite volume, expressed in terms of the weak limits of the random current measures ${\bf P}^A_{\Lambda_n,\beta}$ and ${\bf P}^{A_1,\dots,A_i}_{\Lambda_n,\beta}$,  in the limit $\Lambda_n \nearrow \bbZ^d$.   The limiting statement is similar to \eqref{eq:orgaf} but without   the finite volume subscript $\Lambda$:   
\begin{align}\label{eq:orga}\frac{\langle\sigma_A\rangle_{\beta}\langle\sigma_B\rangle_{\beta}}{\langle\sigma_A\sigma_B\rangle_{\beta}}&={\bf P}^{A\Delta B,\emptyset}_{\beta}[\n_1+\n_2\in \calF_B].\end{align}
Combining \eqref{eq:orga} for the different values of the product of spin-spin correlations leads to
\begin{equation}\label{eq:U4} U_4^{\beta}(x,y,z,t)=-2\langle \sigma_x\sigma_y\rangle_{\beta}\langle \sigma_z\sigma_t\rangle_{\beta}\,{\bf P}_{\beta}^{xy,zt}[{\mathbf C}_{\n_1+\n_2}(x)\cap {\mathbf C}_{\n_1+\n_2}(z)\ne \emptyset]\, .\end{equation}
This equality is of fundamental importance to the question discussed here.  It  was the basis of the analysis of \cite{Aiz82}, and is the starting point for our discussion.

By \eqref{eq:U4},  the relative magnitude of the deviation of the four-point function $\langle\sigma_x\sigma_y\sigma_z\sigma_t\rangle_\beta$ from the Gaussian law (i.e.~the discrepancy in Wick's formula) 
is  bounded in terms of intersection properties of the two clusters that link the indicated sources pairwise:
\be \label{U/S}
\frac{{|U^{\beta}_4(x,y,z,t)|}}{\langle\sigma_x\sigma_y\sigma_z\sigma_t\rangle_\beta}\ \leq \ 
2 {\bf P}_{\beta}^{xy,zt}[{\mathbf C}_{\n_1+\n_2}(x)\cap {\mathbf C}_{\n_1+\n_2}(z)\ne \emptyset]\, .
\ee 

The random sets  ${\mathbf C}_{\n_1+\n_2}(x)$ and $ {\mathbf C}_{\n_1+\n_2}(z)$ are not independently distributed.  However \eqref{U/S} can be further simplified through a  monotonicity property of random currents.  As  proved in \cite{Aiz82},  and recalled here in the Appendix, the probability of an intersection can only increase upon the two sets' replacement by a pair of  independently distributed  clusters defined through the addition of two sourceless currents: 
\be {\bf P}^{xy,zt}_\beta[{\mathbf C}_{\n_1+\n_2}(x)\cap {\mathbf C}_{\n_1+\n_2}(z)\ne \emptyset]\le {\bf P}^{xy,zt,\emptyset,\emptyset}_\beta[{\mathbf C}_{\n_1+\n_3}(x)\cap {\mathbf C}_{\n_2+\n_4}(z)\ne \emptyset]\,.
\ee
This leads to the simpler upper bound in which the two random sets are independent:
 \begin{align}\label{eq:inter} |U_4^\beta(x,y,z,t)|   & \leq    2\langle \sigma_x\sigma_y\rangle_\beta\langle \sigma_z\sigma_t\rangle_\beta{\bf P}^{xy,zt,\emptyset,\emptyset}_\beta[ {\mathbf C}_{\n_1+\n_3}(x)\cap{\mathbf C}_{\n_2+\n_4}(z)\ne \emptyset] \,.  
 \end{align}

Bounding the intersection probability by the expected number of intersection sites and applying the switching lemma leads directly to the tree diagram bound \eqref{tree}.   However, as was explained above, to tackle the marginal dimension $d=4$ one needs to improve on that.  

While ${\mathbf C}_{\n_1+\n_3}(x)$ and ${\mathbf C}_{\n_2+\n_4}(z)$ are bulkier and exhibit less independence  than simple random walks linking the sources $\{x,y\}$ and $\{z,t\}$, the analogy is of help in guiding the intuition towards useful estimate strategies.  
 In particular, it is classical that in dimension  $d=4$    the  probability that the traces of two random walks  starting at  distance 
 $L $ of each other intersect,  tends to $0$ (as $1/\log L$, see \cite[(2.8)]{Aiz85} and \cite{Law13}), but  nevertheless  the expected number of  points of  intersection remains of order $\Omega(1)$.    The discrepancy is explained by the fact that although the  intersections occur rarely, the conditional expectation of the number of intersection sites, conditioned on there being at least one, diverges logarithmically in $L$.  The thrust of our analysis will be to establish  similar behaviour in the system considered here.  More explicitly, we will prove that the conditional expectation of the clusters' intersection size, conditioned on it being non-empty, grows at least as  $(\log L)^c$. 

 The analysis of clusters' intersection properties is  more difficult than that of the paths of simple random walks for at least two reasons:
  \begin{itemize}
 \item Missing information on the two-point function: Most analyses of intersection properties of random walks involve 
 estimates on the Green function.  In our system its role is to some extent taken by the two-point spin-spin correlation function.  However, unlike the former case we do not a priori know the two-point function's exact order of magnitude  (though a good one-sided inequality is provided by the Infrared Bound).  This  raises a difficulty  that we address  by studying the regularity properties of the two-point function in Section~\ref{sec:3}. 
 \item The lack of a simple Markov property:
in one way or another, the analysis of intersections for random walks involves the random walk's Markov property.  Among its other applications, the walk's renewal property facilitates de-correlating the walks' behaviour at different places.   
In comparison, the random current clusters exhibit only a multidimensional domain Markov property.   
 One of the main contributions of this paper will  be to show a mixing property of random currents which will enable us to bypass the difficulty raised by the lack of a renewal property. 
  \end{itemize}
We expect that both the regularity estimates and the mixing properties established here 
are of independent interest, and may  be of help in studies of the model also in three dimensions.

 \section{A conditional improvement of the tree diagram bound for $\beta = \beta_c$}
\label{sec:2}

To better convey the strategy by which the tree diagram bound is improved,  we start with a \emph{conditional proof} of \eqref{eq:improved tree bound} for the Ising model on $\Z^4$ at criticality (i.e.~when $\beta=\beta_c$),  under the following assumption on  the model's two-point function. 
The removal of this assumption will raise substantial problems which are presented in the sections that follow.  
Below, $|\cdot|$ denotes the infinity-norm
\be
|x|:=\max\{|x_i|,1\le i\le d\}.
\ee

\begin{assumption}[Power-law decay] 
There exist $\eta$ and $c,C\in(0,\infty)$ such that for every $x\in \Z^d$,
\begin{align}\label{eq:S}
\frac{c}{|x|^{d-2+\eta}}\le \langle\sigma_0\sigma_x\rangle_{\beta_c}\le \frac{C}{|x|^{d-2+\eta}}.
\end{align}
\end{assumption}
The Infrared Bound \eqref{eq:IB} guarantees that $\eta\geq 0$ in any dimension $d>2$.
Note that if $\eta>0$ for $d=4$, then $B_L(\beta_c)$ is bounded uniformly in $L$ in which case the tree diagram bound implies the improved one.  Thus, under this assumption the case requiring attention is just  $\eta=0$ (which is the generally expected value).

\subsection{Intersection clusters} \label{sec:2.1}  

Our starting point is \eqref{eq:inter} in which $U_4^{\beta_c}$ is bounded by the probability of intersection of two independently distributed clusters ${\bf C}_{\n_1+\n_3}(x)$ and ${\bf C}_{\n_2+\n_4}(z)$, of which $\n_1$ and $\n_2$  include paths linking pairwise widely separated sources,  $\partial \n_1=\{x,y\}$ and $\partial \n_2= \{ z,t\}$.  Introduce the notation 
\be 
\calT := {\bf C}_{\n_1+\n_3}(x) \cap  {\bf C}_{\n_2+\n_4}(z),
\ee  and let $|\calT|$ be the set's cardinality. 
The tree diagram bound corresponds to  the first moment estimate: 
\be 
{\bf P}^{xy,zt,\emptyset,\emptyset}_{\beta_c}[|\calT|>0] \leq {\bf E}^{xy,zt,\emptyset,\emptyset}_{\beta_c}[|\calT|],
\ee  
in which  the intersection probability is bounded  by the intersection set's expected size.  

Although the  set $\calT$  is less tractable than the intersection of a pair of  Markovian random walks, their intuitive example provides a useful guide.  The intersection of the traces of two simple random walks in dimension $d=4$  has a Cantor-set like structure.   Guided by this analogy, and taking advantage of the switching lemma, we show that  conditioned on the event that  $u$ belongs to $\calT$,  the intersection  $|\calT|$ is typically very large.  This is in line with our expectation that the vertices in the intersection set occur in large (disconnected) clusters, causing  
the expected size of  $|\calT|$ to be much larger than the probability of it being non-zero.  

Below and in the rest of this article, we introduce the annulus of sizes $k\le n$ and the boundary of a box as follows:
\be
{\rm Ann}(k,n):=\Lambda_n\setminus\Lambda_{k-1}\quad\text{and}\quad\partial\Lambda_n:={\rm Ann}(n,n)
\ee
(cf. Fig.~\ref{fig:2}).

In the proof, we apply the following deterministic covering lemma, which  links the number of points in a set $\mathcal X\subset\Z^d$  with the number of concentric annuli of the form 
$u+{\rm Ann}(\ell_k,\ell_{k+1})$,  with $u\in \mathcal X$, which it takes to cover $\mathcal X$. 
To state it we denote, for any (possibly finite) increasing sequence of lengths $\mathcal L = (\ell_k)$, every $u\in \Z^d$,  and every integer $K$, 
\be 
{\bf M}_u(\mathcal X;\mathcal L, K) \:= \text{card}\{k\le K\,:\, \mathcal X\cap [u+{\rm Ann}(\ell_k,\ell_{k+1})]\neq \emptyset\}\,   
\ee 
(cf. Fig.~\ref{fig:2}).  
\begin{lemma} (Annular covering) \label{covering} 
In the above notation, for any sequence  $\mathcal L = (\ell_k)$ with  $\ell_1\ge 1$ and 
$ \ell_{k+1} \geq 2 \ell_k$
\be
|\mathcal X|\ge 2^{\min\{\mathbf M_u(\mathcal X;\mathcal L, K)/5\,:\,u\in\mathcal X\}}.
\ee
\end{lemma}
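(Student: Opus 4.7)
I propose to prove the lemma by strong induction on the quantity $M := \min_{u \in \mathcal X}\mathbf M_u(\mathcal X;\mathcal L,K)$. The base cases $M \leq 4$ are immediate: since $\mathcal X$ is nonempty and $2^{M/5} < 2$, the bound $|\mathcal X| \geq 1 \geq 2^{M/5}$ holds trivially. For the inductive step with $M \geq 5$, the aim is to exhibit two disjoint nonempty subsets $\mathcal X_1, \mathcal X_2 \subset \mathcal X$ satisfying $\min_{v \in \mathcal X_i}\mathbf M_v(\mathcal X_i;\mathcal L,K) \geq M-5$ for $i=1,2$. Invoking the inductive hypothesis on each $\mathcal X_i$ yields $|\mathcal X_i| \geq 2^{(M-5)/5}$, so that $|\mathcal X| \geq |\mathcal X_1|+|\mathcal X_2| \geq 2\cdot 2^{(M-5)/5} = 2^{M/5}$, closing the induction.

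To construct the split, I pick any $u_0 \in \mathcal X$ and list its populated shell indices $k_1 < k_2 < \cdots < k_m$ with $m \geq M$. Choosing a middle index $j^*$ (for concreteness, $j^* = \lceil m/2\rceil$) so that at least $\lfloor m/2\rfloor - 1$ populated shells lie on each side, I set $r := \ell_{k_{j^*}}$ and $R := \ell_{k_{j^*}+1}$. The hypothesis $\ell_{k+1} \geq 2\ell_k$ gives the crucial geometric buffer $R \geq 2r$. I then define
\[
\mathcal X_1 := \mathcal X \cap (u_0 + \Lambda_{r-1}), \qquad \mathcal X_2 := \mathcal X \setminus (u_0 + \Lambda_R),
\]
which are disjoint. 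Both are nonempty: $\mathcal X_1 \ni u_0$, while $\mathcal X_2$ contains any $\mathcal X$-point from a populated shell $k_j$ with $j > j^*$ at a scale $\ell_{k_j} > R$ (such a $j$ exists by the choice of $j^*$ and the doubling property).

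To verify the min-shell-count inheritance, observe that for $v \in \mathcal X_1$ any populated shell of $v$ at index $k$ with $\ell_{k+1} + |v-u_0|_\infty \leq r-1$ is automatically populated within $\mathcal X_1$: indeed, by the triangle inequality any $w \in v + {\rm Ann}(\ell_k,\ell_{k+1})$ satisfies $|w-u_0|_\infty \leq r-1$, so $w \in \mathcal X_1$ whenever $w \in \mathcal X$. A symmetric statement handles $\mathcal X_2$, where shells of $v$ of scale small compared to $|v-u_0|_\infty - R$ stay within $\mathcal X_2$. The factor $5$ is the slack needed to absorb the populated shells of $v$ falling in the "critical window" between the safe-scale threshold and the cutoff at scale $R$; using the geometric doubling, one expects that the number of such indices is bounded by an absolute constant, which careful tuning of $j^*$ and the split radii keeps $\leq 5$.

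The principal obstacle lies precisely in the uniform control of the critical window for points $v$ near the boundary of $\mathcal X_1$ or $\mathcal X_2$, where the safe-scale buffer collapses. For $v \in \mathcal X_1$ with $|v-u_0|_\infty$ close to $r-1$, the interval $[\,r-1-|v-u_0|_\infty,\,R\,]$ can itself span many geometric scales, so the doubling argument alone does not cap the number of lost populated shells by an absolute constant. Overcoming this requires a refined selection of the splitting scale — one that exploits not only the populated shells of $u_0$ but also the distribution of populated shells of nearby $v \in \mathcal X$ — or, alternatively, iterating an additional inner split that puts the boundary-type points into a separate inductive piece with a correspondingly small loss in $\mathbf M$. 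This uniform boundary-scale accounting is the genuine technical heart of the lemma; once secured, the induction runs cleanly to deliver $|\mathcal X| \geq 2^{M/5}$ as claimed.
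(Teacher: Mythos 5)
Your proposal is incomplete at exactly the step you flag as "the genuine technical heart," and that gap is fatal to the argument as structured. The inductive statement you chose — that each piece $\mathcal X_i$ satisfies $\min_{v\in\mathcal X_i}\mathbf M_v(\mathcal X_i;\mathcal L,K)\ge M-5$ — is simply too strong: for a vertex $v\in\mathcal X_1$ with $|v-u_0|_\infty$ close to $r-1$, every shell $v+{\rm Ann}(\ell_k,\ell_{k+1})$ pokes outside $u_0+\Lambda_{r-1}$, and an adversarial $\mathcal X$ can place the only occupants of many of $v$'s populated shells just beyond the cut, so that $\mathbf M_v(\mathcal X_1)$ drops by an amount that grows with the number of scales, not by an absolute constant. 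No tuning of the split radius removes this, because the problem is intrinsic to demanding control of \emph{every} vertex of the sub-piece, including boundary ones.

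The paper's proof avoids this by restructuring the induction. It proves the contrapositive, inducting on $r$ in the statement "if $|\mathcal X\cap\Lambda_{\ell_K}|<2^r$ then there exists \emph{one} vertex $u\in\mathcal X\cap\Lambda_{\ell_K}$ with $\mathbf M_u(\mathcal X;\mathcal L,K)<5r$." Two changes matter. First, only a single witness vertex must be tracked, and it is supplied by the inductive hypothesis deep inside one of the two disjoint translates $\Lambda_{\ell_{k-1}}$ and $u+\Lambda_{\ell_{k-1}}$ (one of which has fewer than $2^{r-1}$ points by pigeonhole), so it automatically has a large buffer to the cut. Second, the splitting scale is not a "middle populated shell" but the \emph{maximal} $k\le K-2$ for which ${\rm Ann}(\ell_k,\ell_{k+1})$ meets $\mathcal X$: maximality forces $\mathcal X$ to be empty between $\Lambda_{\ell_{k+1}}$ and the complement of $\Lambda_{\ell_{K-1}}$, and it is precisely this gap that pins the extra populated annuli seen by the witness to the five indices $k,k+1,K-2,K-1,K$. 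Your construction produces no such gap, so even the bound for a well-chosen interior vertex would not follow. To repair your argument you would essentially have to adopt both of these devices, i.e.\ rewrite it as the paper does.
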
 

\begin{proof}  
It suffices to show that if  $|\mathcal X|< 2^r$ for some $r$, then there exists a site $u\in \mathcal X$ for which
${\bf M}_u(\mathcal X;\mathcal L, K) <5r$.

We prove the following stronger statement: For every set $\calX$ containing the origin and every $K$, if 
$|\mathcal X\cap \Lambda_{\ell_{K}}|< 2^r$, then there exists $u\in \calX\cap \Lambda_{\ell_K}$ with $M_u(\mathcal X;\mathcal L,K)<5r$.

The assertion is obviously true for $r=1$ as one can pick $u$ to be the origin.  Next,  consider the case of $r>1$  assuming the statement holds for all smaller values.  If the intersection of $\mathcal X$ and $\Lambda_{\ell_{K-1}}$ is reduced to the origin,  then $M_0(\mathcal X;\mathcal L,K)\le 2$ (only the annuli ${\rm Ann}(\ell_l,\ell_{l+1})$ with $l$ equal to $K-1$ or $K$ can intersect $\mathcal X$) as required so we now assume that this is not the case.
Consider $0\le k\le  K-2$ maximal such that there exists $u \in\mathcal X $ with $\ell_k<|u|\le \ell_{k+1}$.  

Since $\mathcal X\cap\Lambda_{\ell_{k-1}}$ and  $\mathcal X\cap(u+\Lambda_{\ell_{k-1}})$ are disjoint  (we use that $\ell_{k}\ge 2\ell_{k-1}$), one of the two sets has cardinality strictly smaller than $2^{r-1}$. Assume first that it is $\mathcal X\cap\Lambda_{\ell_{k-1}}$. The induction hypothesis implies the existence of $v\in \mathcal X\cap\Lambda_{\ell_{k-1}}$ such that 
\be
{\bf M}_v(\mathcal X;\calL,k-1) <5(r-1).
\ee  
By our choice of $k$, every site in $\mathcal X$ is either in $\Lambda_{\ell_{k+1}}$ or outside of $\Lambda_{\ell_{K-1}}$.  This implies that only the annuli ${\rm Ann}(\ell_l,\ell_{l+1})$ with $l$ equal to $k$, $k+1$, $K-2$, $K-1$ or $K$ can intersect $\mathcal X$, so that
\begin{equation}
{\bf M}_v(\mathcal X;  \mathcal L, K) \le {\bf M}_v(\mathcal X;\mathcal L, k-1) +5 < 5r \,.  
\end{equation}
If it is $\mathcal X\cap(u+\Lambda_{\ell_{k-1}})$ which has small cardinality, simply translate the set by $u$ and apply the same reasoning. The distance between the vertex $v$ obtained by the procedure and $0$ is at most $\ell_{k-1}+\ell_k\le \ell_K$, so that the claim follows in this case as well.
\end{proof} 
 
In  the following conditional statement,  we    denote by  $\mathcal L_\alpha$ a sequence of integers defined recursively so that 
$\ell_{k+1}=\ell_k^{\alpha}$ with a specified $\alpha>1$ and $\ell_0$ a large enough integer. 

\begin{proposition}[Conditional intersection-clustering bound]\label{prop:non isolated} 
Under the assumption that the Ising model on $\Z^4$ satisfies \eqref{eq:S}  with $\eta =0$  and restricting to $\alpha>3^8$: there exist $\ell_0=\ell_0(\alpha)$ and $\delta=\delta(\alpha)>0$ such that for every $K> 2$ and every $u,x,y,z,t\in\Z^4$ with mutual distance between $x,y,z,t$ larger than $2\ell_{K}$,
\be
{\bf P}^{ux,uz,uy,ut}_{\beta_c}[{\bf M}_u(\calT; \mathcal L_\alpha, K)< \delta K]\le 2^{-\delta K}.
\ee 
\end{proposition}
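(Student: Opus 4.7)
The plan is to study the source-conditioned measure $\mathbf{P}^{ux,uz,uy,ut}_{\beta_c}$, under which the source structure of the four currents automatically places $u$ in both clusters ${\bf C}_{\n_1+\n_3}(x)$ and ${\bf C}_{\n_2+\n_4}(z)$, so $u\in\calT$ is built in. Set $A_k := u + {\rm Ann}(\ell_k,\ell_{k+1})$ and $Y_k := \bm{1}[\calT\cap A_k\neq\emptyset]$, so that ${\bf M}_u(\calT;\calL_\alpha,K)=\sum_{k\le K}Y_k$. The target inequality ${\bf P}^{ux,uz,uy,ut}_{\beta_c}[\sum_{k\le K} Y_k < \delta K]\le 2^{-\delta K}$ then becomes an exponential lower-tail bound for a sum of $\{0,1\}$-valued scale indicators.

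First, establish a single-scale Paley--Zygmund lower bound ${\bf P}^{ux,uz,uy,ut}_{\beta_c}[Y_k=1]\ge p_0$ uniform in $k$. Using the switching lemma, rewrite ${\bf P}^{ux,uz,uy,ut}_{\beta_c}[v\in\calT]$ as a ratio of correlation functions whose source sets are enlarged to contain $v$; under \eqref{eq:S} with $\eta=0$, so that $\langle\sigma_a\sigma_b\rangle_{\beta_c}\asymp|a-b|^{-2}$, and using that $x,y,z,t$ lie at distance $\ge 2\ell_K\gg \ell_{k+1}$ from $u$ and from every $v\in A_k$, the far corner factors essentially cancel the normalising correlations, leaving a contribution of order $\langle\sigma_u\sigma_v\rangle_{\beta_c}^2\asymp |v-u|^{-4}$. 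Summing over $v\in A_k$ and using $\ell_{k+1}=\ell_k^\alpha$ yields ${\bf E}^{ux,uz,uy,ut}_{\beta_c}[|\calT\cap A_k|]\gtrsim \log\alpha$. A parallel tree-diagram bound for the two-point probability ${\bf P}[v_1,v_2\in\calT]$, obtained by a further switching through a common junction $w$, controls the second moment by $(\log\alpha)^2$, and Paley--Zygmund delivers $p_0>0$.

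Second, upgrade pointwise positivity to exponential concentration for $\sum_{k\le K}Y_k$. Because $\alpha>3^8$ forces enormous geometric separation between the annuli $A_k$, I would iterate a resampling scheme: process the scales from $k=K$ down to $k=0$, and at step $k$ condition on the restrictions of the four currents to $\Z^4\setminus (u+\Lambda_{\ell_{k+1}/3})$, which contains all information about larger scales as well as the far sources $x,y,z,t$. Show that conditionally on this exterior data, the event $\{Y_k=1\}$ still has probability at least $p_0/2$. Combined with the backward filtration structure this dominates $\sum_{k\le K}Y_k$ stochastically by a sum of independent Bernoulli$(p_0/2)$ variables, at which point a Chernoff bound gives $2^{-\delta K}$ once $\delta$ is chosen below $p_0/4$ and $\ell_0$ (hence every $\ell_k$) is large enough that the single-scale asymptotics are effective.

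The main obstacle is precisely the conditional lower bound at each scale, since random currents do not enjoy a clean Markov property: the currents inside $u+\Lambda_{\ell_{k+1}/3}$ are entangled with the far sources through the paths $u\leftrightarrow x,\dots,u\leftrightarrow t$. I would address it by applying the general form of the switching lemma (Lemma~\ref{lem:switching}) to trade the far-source constraints for boundary sources on $\partial(u+\Lambda_{\ell_{k+1}/3})$, together with the monotonicity of intersection events under addition of sourceless currents (the same tool used in the passage from \eqref{U/S} to \eqref{eq:inter}). This reduces the conditional law to a random-current system in a box of radius of order $\ell_{k+1}$ with boundary sources and the four corner-sources replaced by pinned sources at $u$; assumption \eqref{eq:S} is scale-covariant so the single-scale Paley--Zygmund estimate applies to this restricted system with the same constant, yielding the required $p_0/2$. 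Iterating across scales and invoking the annular covering Lemma~\ref{covering} then turns the tail bound on $\sum Y_k$ into the proposition's conclusion.
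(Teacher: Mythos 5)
Your first step (a uniform single-scale lower bound via switching plus a second-moment/Paley--Zygmund argument) matches the paper's Lemma~\ref{lem:intersection} in spirit, but the heart of your argument --- the passage from single-scale positivity to an exponential lower tail --- contains a genuine gap. You propose to condition on the currents outside $u+\Lambda_{\ell_{k+1}/3}$ and show the conditional probability of $\{Y_k=1\}$ is at least $p_0/2$, then dominate $\sum_k Y_k$ by i.i.d.\ Bernoullis. This is exactly the strong Markov/renewal-type statement that the paper identifies as unavailable for random currents and deliberately avoids. Conditioning a sourced current on its restriction to the exterior of a box does not produce a current measure in the box with boundary sources: the parity constraints and weights couple the two regions, the switching lemma is an identity for unconditioned sums over pairs of currents (not a device for computing conditional laws), and the monotonicity statements actually available (Lemma~\ref{lem:a}, Corollary~\ref{cor:1}) go in fixed directions for fixed events and give no uniform conditional lower bound on an intersection event given arbitrary exterior data. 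What the paper proves instead is the much weaker \emph{mixing} estimate $|{\bf P}[E\cap F]-{\bf P}[E]{\bf P}[F]|\le C/\sqrt{\log(N/n)}$ --- an additive covariance bound, not a conditional one --- and then runs a recursion ${\bf P}[A_S]\le{\bf P}[I_s^c]\,{\bf P}[A_{S\setminus\{s\}}]+C/\sqrt{\log\ell_{s-1}}$ in which the additive errors are tolerable only because $\ell_{k+1}=\ell_k^\alpha$ makes $1/\sqrt{\log\ell_{s-1}}$ geometrically small in $s$. Producing that mixing statement is, in the authors' words, the crux of the paper; your proposal assumes away precisely this difficulty.

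There is a second, related gap: your indicator $Y_k=\bm 1[\calT\cap A_k\neq\emptyset]$ is not measurable with respect to the currents in the annulus (or in any bounded box), since membership of $v$ in ${\bf C}_{\n_1+\n_3}(x)$ depends on the global configuration. Any conditioning or mixing argument needs a \emph{local} surrogate. The paper's event $I_k$ requires, in addition to the intersection, that each of $\n_1+\n_3$ and $\n_2+\n_4$ has a \emph{unique} cluster crossing ${\rm Ann}(\ell_k,\ell_{k+1})$; proving that non-uniqueness is unlikely (the events $F_1,\dots,F_4$, controlled via backbones and sourceless-current bounds) is what makes $I_k$ annulus-measurable, and the link back to actual intersections of the source clusters is made only at the level of unconditioned probabilities through the inclusion $B_S\subset A_S$ together with the source-monotonicity of non-intersection (Corollary~\ref{cor:1}). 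Your proposal omits both the uniqueness step and this final transfer from the two-sourced measure ${\bf P}^{0x,0z,\emptyset,\emptyset}$ (where the estimates are proved) to the four-sourced measure in the statement.
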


Before deriving this estimate, which is proven in the next section, let us show how it leads to the improved tree diagram bound.

\begin{figure}
\begin{center}
\includegraphics[width = 0.50\textwidth]{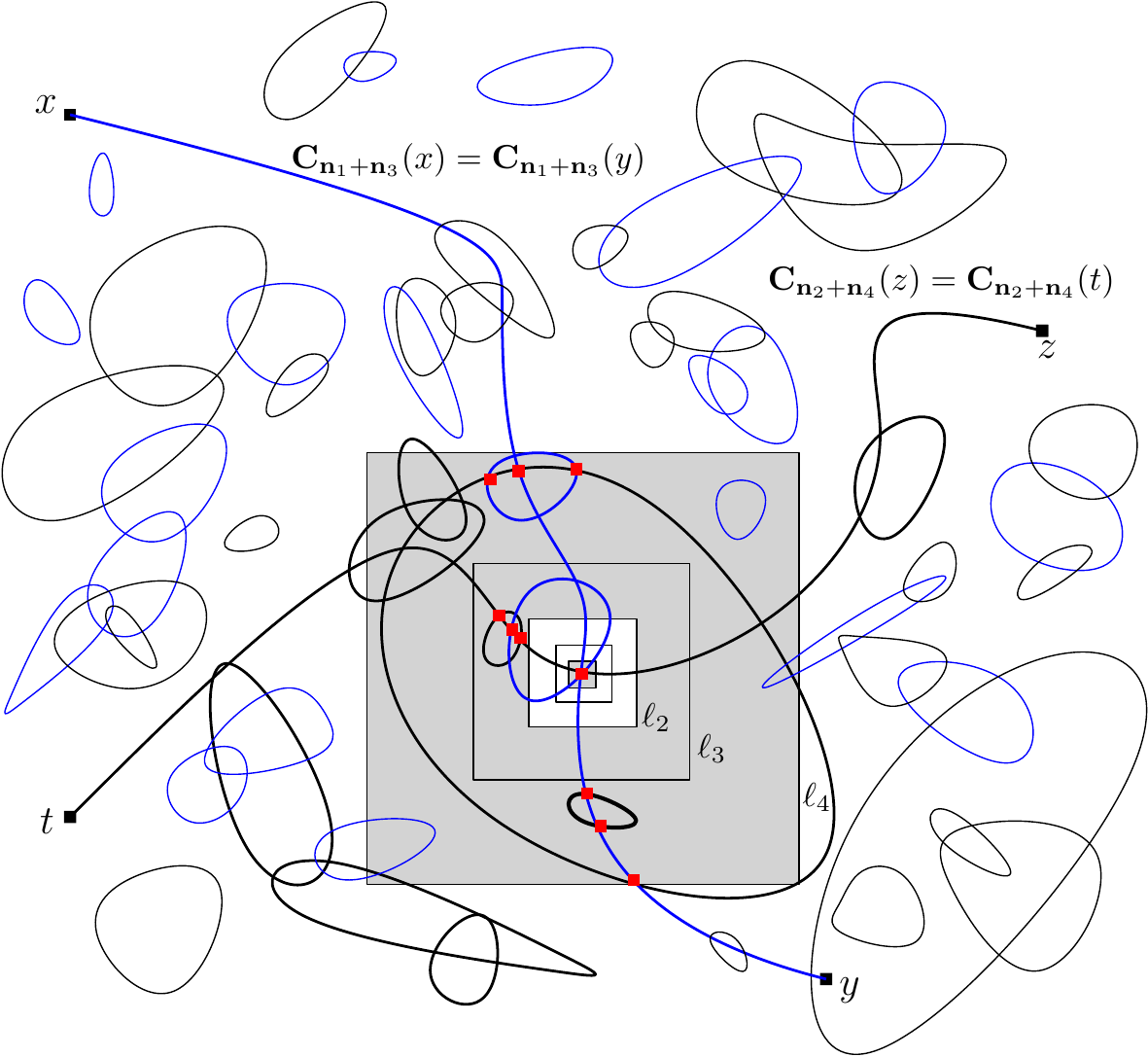}
\caption{The two (duplicated)-currents $\n_1+\n_3$ and $\n_2+\n_4$ in blue and black respectively. The clusters of $x$ (or equivalently $y$) in $\n_1+\n_3$ and $z$ (or equivalently $t$) in $\n_2+\n_4$ are depicted in bold. The red vertices are the elements of the intersection $\calT$. We illustrated the annuli around one element, denoted $u$, of $\calT$ and draw them in gray when an intersection occurs. Here, we therefore have $M_u(\calT;\calL,5)=3$ since three annuli contain an intersection.}
\label{fig:2}
\end{center}
\end{figure}
\bigbreak
\noindent {\bf  Proof of Theorem~\ref{thm:improved tree bound simple} under the assumption \eqref{eq:S}}.   As the discussion is limited here to  $\beta = \beta_c$, we omit it  from the notation.  If $\eta>0$ the bubble diagram is finite and hence  the desired statement is already contained in the tree diagram bound~\eqref{tree}.  Focus then on the case
 $\eta=0$, for which the bubble diagram diverges logarithmically. 
Fix $\alpha>3^8$ and let $\ell_0$ and $\delta$ be given by Proposition~\ref{prop:non isolated}.  Since $x,y,z,t$ are at mutual distances at least $L$, there exists  $c=c(\alpha)>0$ such that one may pick 
\begin{equation}\label{eq:choice K}K=K(L)\ge c\log\log L\end{equation}
in such a way that $L\ge 2\ell_K$.

Using Lemma~\ref{covering}, then the switching lemma, and finally Proposition~\ref{prop:non isolated}, we get 
\begin{align} 
{\bf P}^{xy,zt,\emptyset,\emptyset}[0<|\calT|< 2^{\delta K/5}]
&\le \sum_{u\in \bbZ^4}{\bf P}^{xy,zt,\emptyset,\emptyset}[u\in \calT,{\bf M}_{u}(\calT; \mathcal L_\alpha, K)< \delta K]\nonumber\\
&= \sum_{u\in \bbZ^4}\frac{\langle\sigma_u\sigma_x\rangle\langle\sigma_u\sigma_y\rangle\langle\sigma_u\sigma_z\rangle\langle\sigma_u\sigma_t\rangle}{\langle\sigma_x\sigma_y\rangle\langle\sigma_z\sigma_t\rangle}{\bf P}^{ux,uz,uy,ut}[{\bf M}_{u}(\calT; \mathcal L_\alpha, K)< \delta K]\nonumber \\
&\le 2^{-\delta K} \sum_{u\in \bbZ^4}\frac{\langle\sigma_u\sigma_x\rangle\langle\sigma_u\sigma_y\rangle\langle\sigma_u\sigma_z\rangle\langle\sigma_u\sigma_t\rangle}{\langle\sigma_x\sigma_y\rangle\langle\sigma_z\sigma_t\rangle}.  \label{eq:p2}
\end{align}
For the larger values of $|\calT|$, the Markov inequality and the switching lemma give
\begin{align} {\bf P}^{xy,zt,\emptyset,\emptyset}[|\calT|\ge2^{\delta K/5}]
&~\leq~ 2^{-\delta K/5}{\bf E}^{xy,zt,\emptyset,\emptyset}[|\calT|] \notag \\ 
&~=~2^{-\delta K/5}\sum_{u\in \bbZ^4}\frac{\langle\sigma_u\sigma_x\rangle\langle\sigma_u\sigma_y\rangle\langle\sigma_u\sigma_z\rangle\langle\sigma_u\sigma_t\rangle}{\langle\sigma_x\sigma_y\rangle\langle\sigma_z\sigma_t\rangle}.\label{eq:p1}
\end{align} 
Adding \eqref{eq:p2} and \eqref{eq:p1} gives an improved tree diagram bound which, in view of \eqref{eq:choice K} and of the logarithmic divergence of $B_L(\beta_c)$ implied by $\eta=0$, yields \eqref{eq:improved tree bound}.  
\mbox{} \hfill\ensuremath{\square}

\subsection{Derivation of the conditional intersection-clustering bound (Proposition~\ref{prop:non isolated})} \label{sec:2.2}

The intuition underlying  the conditional intersection-clustering bound and the choice of $\ell_k$ is guided by the aforementioned  example of simple random walks. In dimension 4, the traces of two independent random walks starting at the origin intersect in an annulus of the form ${\rm Ann}(n,n^{\alpha})$ with probability at least $c(\alpha)>0$ uniformly in $n$.   Since the paths traced by these random walks within different annuli are roughly independent, one may expect the number of annuli among the $K$ first ones in which the paths intersect to be,  with large probability, of the order of $\delta K$.   

However, in the case considered  here, the clusters 
of $u$ in $\n_1+\n_3$ and $\n_2+\n_4$ do not have the renewal structure of  Markovian random walks.   We shall compensate for that in two steps:
\begin{itemize}[noitemsep,nolistsep]
\item[(i)] \emph{reformulate the intersection property},   
\item[(ii)] \emph{derive an asymptotic mixing statement.}  
\end{itemize}

For the first step, let $I_k$ 
be the event  (with $I$ standing for intersection)  that there exist unique clusters of ${\rm Ann}(\ell_{k},\ell_{k+1})$ in $\n_1+\n_3$ and $\n_2+\n_4$ crossing the annulus from the inner boundary to the outer boundary and that these two clusters are intersecting.   
Lemma~\ref{lem:intersection} presents the statement that   the probability that the event occurs and that these clusters intersect, is bounded away from 0 uniformly in $k$. 
 
 Note that  the annuli ${\rm Ann}(\ell_{k},\ell_{k+1})$ are wide enough so that  sourceless currents will typically have no radial crossing, and when such crossings are forced by the placement of sources (for instance when one source, is at the common center  of a family of nested annuli and the other at a distant site outside), in each annulus there will most likely be only one crossing cluster.   It then follows that all the crossing clusters of $\n_1+\n_3$  belong to the $\n_1+\n_3$ cluster of the sources, and a similar property holds for  the crossing clusters of 
 $\n_2+\n_4$.   
 
For the second step, we prove that 
events observed within sufficiently separated annuli are roughly independent.  The exact assertion  is presented below in  
Proposition~\ref{prop:mixing multi simplify} and will be the crux of the whole paper. 

Following is the first of these two statements. 

\begin{lemma}[Conditional intersection-clustering property]\label{lem:intersection}
Assume \eqref{eq:S} holds for the Ising model on $\Z^4$ with $\eta =0$. For $\alpha>3^4$, there exist $\ell_0=\ell_0(\alpha)$ and $c=c(\alpha,\ell_0)>0$ such that for every $x,z\notin \Lambda_{2\ell_{k+1}}$, 
\be
{\bf P}^{0x,0z,\emptyset,\emptyset}_{\beta_c}[I_k]\ge c.
\ee
\end{lemma}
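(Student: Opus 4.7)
The plan is to combine a Paley--Zygmund argument for an intersection count inside the annulus with a bubble-based bound ruling out multiple annulus crossings. Set $A:=\mathrm{Ann}(\ell_k,\ell_{k+1})$, introduce
$$N:=\#\bigl\{u\in A:u\in \mathbf{C}_{\n_1+\n_3}(0)\cap \mathbf{C}_{\n_2+\n_4}(0)\bigr\},$$
and let $U$ be the event that each of $\n_1+\n_3$ and $\n_2+\n_4$ has exactly one cluster crossing $A$ from $\partial\Lambda_{\ell_k}$ to $\partial\Lambda_{\ell_{k+1}}$. Because $0\in\Lambda_{\ell_k}$ while $x,z\notin\Lambda_{2\ell_{k+1}}$, the source-carrying ``trunks'' always cross $A$, so $\{N\ge 1\}\cap U\subseteq I_k$. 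It thus suffices to prove that $\mathbf{P}^{0x,0z,\emptyset,\emptyset}_{\beta_c}[N\ge 1]\ge 2c_0$ and $\mathbf{P}^{0x,0z,\emptyset,\emptyset}_{\beta_c}[U^c]\le c_0$, uniformly in $k$.

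For $\mathbf{E}[N]$, applying the switching lemma at $u\in A$ (splitting $\{0,x\}$ as $\{0,u\}\triangle\{u,x\}$, and similarly for $\{0,z\}$) and using the independence of the two current pairs yields
$$\mathbf{P}^{0x,0z,\emptyset,\emptyset}_{\beta_c}\bigl[u\in \mathbf{C}_{\n_1+\n_3}(0)\cap \mathbf{C}_{\n_2+\n_4}(0)\bigr]=\frac{\langle\sigma_0\sigma_u\rangle^2\,\langle\sigma_u\sigma_x\rangle\,\langle\sigma_u\sigma_z\rangle}{\langle\sigma_0\sigma_x\rangle\,\langle\sigma_0\sigma_z\rangle}.$$
Under~\eqref{eq:S} with $d=4$, $\eta=0$ one has $\langle\sigma_0\sigma_v\rangle\asymp |v|^{-2}$, and the placement of $x,z$ well outside $A$ makes $\langle\sigma_u\sigma_x\rangle\asymp\langle\sigma_0\sigma_x\rangle$ and $\langle\sigma_u\sigma_z\rangle\asymp\langle\sigma_0\sigma_z\rangle$ for $u\in A$, so
$$\mathbf{E}[N]\;\asymp\;\sum_{u\in A}|u|^{-4}\;\asymp\;(\alpha-1)\log\ell_k,$$
which, with $\ell_0$ fixed large, is bounded below by a positive constant depending only on $\alpha$. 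For $\mathbf{E}[N^2]$, I would apply the switching lemma twice to rewrite $\mathbf{P}[u,v\in\cdots]$ as a sum of tree-type products of two-point functions carrying an extra connection from $u$ to $v$ inside a sourceless current; bounding that connection factor by $\langle\sigma_u\sigma_v\rangle$ and organising $A\times A$ by a dyadic sub-shell decomposition should yield $\mathbf{E}[N^2]\le C(\alpha)\bigl(\mathbf{E}[N]+\mathbf{E}[N]^2\bigr)$, so that Paley--Zygmund gives $\mathbf{P}[N\ge 1]\ge c_1(\alpha)>0$.

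For $\mathbf{P}[U^c]$, a second crossing cluster of $\n_1+\n_3$ forces vertices $u\in\partial\Lambda_{\ell_k}$, $v\in\partial\Lambda_{\ell_{k+1}}$ joined in $\n_1+\n_3$ by a current path disjoint from the $\{0,x\}$-trunk; a further switching plus the tree-diagram bound controls this by a sum of two-point-function products supported on $A$, which is parametrically smaller in $\ell_{k+1}/\ell_k=\ell_k^{\alpha-1}$ and can be made less than $c_0$ by taking $\alpha$ and $\ell_0$ large. The main obstacle is the second-moment estimate for $N$: the random-current clusters $\mathbf{C}_{\n_1+\n_3}(0)$ and $\mathbf{C}_{\n_2+\n_4}(0)$ have no genuine Markov/renewal property at intermediate scales, so the joint law of two intersection vertices $u,v\in A$ has to be unpacked via iterated switching into a tree diagram on the five points $\{0,x,z,u,v\}$ and then summed through a nested multi-scale decomposition of $A$. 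The condition $\alpha>3^4$ is precisely what affords enough scale separation for this decomposition to deliver $\mathbf{E}[N^2]=O(\mathbf{E}[N]+\mathbf{E}[N]^2)$ with a constant independent of $k$.
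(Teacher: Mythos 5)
Your overall architecture --- a Paley--Zygmund bound for an intersection count in the annulus combined with a separate bound on non-uniqueness of crossing clusters --- matches the paper's. But the reduction $\{N\ge1\}\cap U\subseteq I_k$ has a genuine gap, and it is precisely the point that forces the paper to introduce intermediate scales $\ell_k^{3+\ep}\le n\le m\le M\le N$ with $\ell_{k+1}\ge N^{3+\ep}$ and the buffer events $F_1,\dots,F_4$. The event $I_k$ requires the two unique crossing clusters \emph{of the restriction of the currents to} ${\rm Ann}(\ell_k,\ell_{k+1})$ to intersect, whereas your $N$ counts points $u$ of ${\bf C}_{\n_1+\n_3}(0)\cap{\bf C}_{\n_2+\n_4}(0)$ with the clusters taken in the full lattice. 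A point $u$ just outside $\partial\Lambda_{\ell_k}$ may belong to ${\bf C}_{\n_1+\n_3}(0)$ only through a loop attached to the trunk inside $\Lambda_{\ell_k-1}$; its component in the annulus-restriction is then not the crossing cluster, so $I_k$ can fail even on $\{N\ge 1\}\cap U$. The paper's fix is to count intersections only in a sub-annulus ${\rm Ann}(m,M)$ separated from both boundaries by several polynomially related scales, and to enlarge the "bad" event so that, on its complement, the backbone crosses ${\rm Ann}(\ell_k,n)$ exactly once and no sourceless cluster crosses ${\rm Ann}(n,m)$ (symmetrically near the outer boundary); only then must every point of ${\bf C}_{\n_1+\n_3}(0)\cap{\rm Ann}(m,M)$ lie in the unique crossing cluster. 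This is also where $\alpha>3^4$ is actually consumed (one factor of $3$ per nested buffer), not in the second moment: with the full annulus both $\mathbf E[N]$ and $B_{2\ell_{k+1}}$ are of order $\log\ell_k$, so $\mathbf E[N^2]\le C\,\mathbf E[N]\,B_{2\ell_{k+1}}\le C'\,\mathbf E[N]^2$ with no multi-scale decomposition needed.

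A second, smaller issue: "iterated switching'' does not by itself yield $\mathbf P^{0x,\emptyset}[u,v\stackrel{\n_1+\n_2}{\longleftrightarrow}0]\le$ (two tree terms carrying a $\langle\sigma_u\sigma_v\rangle$ factor). Switching alone produces a conditional connection probability under a source-carrying measure, which is not obviously dominated by $\langle\sigma_u\sigma_v\rangle$. The paper proves exactly this bound as a new inequality, \eqref{eq:prop3b} in Proposition~\ref{prop:3}, using the conditioning/monotonicity statement of Lemma~\ref{lem:a}; it is an essential ingredient rather than a routine diagrammatic step (and it does not extend to three or more marked points, which is why the argument is built around a second moment rather than a higher one).
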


The main ingredient in the proof is a second moment method on the number of intersections in ${\rm Ann}(\ell_{k},\ell_{k+1})$ of the clusters of the origin in $\n_1+\n_3$ and $\n_2+\n_4$. A second  part of the proof is devoted to the uniqueness of the clusters crossing the annulus.   This  makes the event under consideration measurable in terms of the currents within just the specified annulus, allowing us to apply the mixing property for the proof of Proposition~\ref{prop:non isolated}, which follows further below. 

\begin{proof}Drop $\beta_c$ from the notation. 
Fix $\alpha>3^4$ and set $\ep>0$ so that $\alpha>(1+\ep)(3+\ep)^4$. The constants $c_i$ below depend on $\ep$ only. Introduce the intermediary integers $n\le m\le M\le N$ satisfying 
\be 
n\ge \ell_{k}^{3+\ep}\ ,\  m\ge n^{3+\ep}\ ,\  M\ge m^{1+\ep}\ ,\  N\ge M^{3+\ep}\ ,\ \ell_{k+1}\ge N^{3+\ep}.
\ee
 We start by proving that 
 $\calM:={\bf C}_{\n_1+\n_3}(0)\cap {\bf C}_{\n_2+\n_4}(0)\cap {\rm Ann}(m,M)$
  is non-empty with positive probability by applying a second-moment method on $|\calM|$. Namely, the switching lemma (more precisely \eqref{eq:prop2b}) and \eqref{eq:S} imply   that 
\begin{align}{\bf E}^{0x,0z,\emptyset,\emptyset}[|\calM|] &=\sum_{v\in {\rm Ann}(m,M)}{\bf P}^{0x,\emptyset}[v\stackrel{\n_1+\n_2}\longleftrightarrow 0]{\bf P}^{0z,\emptyset}[v\stackrel{\n_1+\n_2}\longleftrightarrow 0]\nonumber\\
&=\sum_{v\in {\rm Ann}(m,M)}\frac{\langle\sigma_0\sigma_v\rangle\langle\sigma_v\sigma_x\rangle}{\langle\sigma_0\sigma_x\rangle}\frac{\langle\sigma_0\sigma_v\rangle\langle\sigma_v\sigma_z\rangle}{\langle\sigma_0\sigma_z\rangle}\nonumber\\ 
&\geq  \, c_1 (B_M-B_{m-1}) \, \geq  \, c_2 \log (M/m)\, .\end{align} 
On the other hand, we find that 
\begin{align}
{\bf E}^{0x,0z,\emptyset,\emptyset}[|\calM|^2]&=\sum_{v,w\in {\rm Ann}(m,M)}{\bf P}^{0x,\emptyset}[v,w\stackrel{\n_1+\n_2}\longleftrightarrow 0]{\bf P}^{0z,\emptyset}[v,w\stackrel{\n_1+\n_2}\longleftrightarrow 0].
\end{align}
Now, by a delicate application of the switching lemma and a monotonicity argument   we have  the following inequality (stated and proven as  
 Proposition~\ref{prop:3}  in the Appendix),
 \begin{equation}
{\bf P}^{0x,\emptyset}[v,w\stackrel{\n_1+\n_2}\longleftrightarrow 0]\le \frac{\langle\sigma_0\sigma_v\rangle\langle\sigma_v\sigma_w\rangle\langle\sigma_w\sigma_x\rangle}{\langle\sigma_0\sigma_x\rangle}+\frac{\langle\sigma_0\sigma_w\rangle\langle\sigma_w\sigma_v\rangle\langle\sigma_v\sigma_x\rangle}{\langle\sigma_0\sigma_x\rangle}\,. 
\end{equation}
Together with \eqref{eq:S}, this gives 
\begin{align} 
{\bf E}^{0x,0z,\emptyset,\emptyset}[|\calM|^2] 
\ \leq \, C_3 (B_M-B_{m-1}) \, B_{2M} & \leq \,  C_4 (\log M)^2.\end{align}
The second moment (or Cauchy-Schwarz) inequality, 
and the bound $M\ge m^{1+\ep}$ thus imply  
\begin{equation} \label{eq:M}
{\bf P}^{0x,0z,\emptyset,\emptyset}[\calM\ne \emptyset]\, \geq \, 
\frac{{\bf E}^{0x,0z,\emptyset,\emptyset}[|\calM|]^2}{{\bf E}^{0x,0z,\emptyset,\emptyset}[|\calM|^2] } \, \geq \,  c_5>0.\end{equation}
At this stage, one may feel that the main point of the lemma was established: we showed that with uniformly positive probability the clusters of 0 in $\n_1+\n_3$ and $\n_2+\n_4$ intersect  in ${\rm Ann}(m,M)$. However,  to conclude the argument we need to establish the uniqueness, with large probability,  of the crossing cluster in $\n_1+\n_3$ (the same then holds true for $\n_2+\n_4$).  
This part of the proof is slightly more technical and may be omitted in a first reading.  
It is here that we shall need $\alpha$ to be large enough.

To prove the uniqueness of crossings, we employ the notion of the current's {\em backbone}\footnote{We mentioned that a current $\n$ with sources $x$ and $y$ can be seen as the superposition of one path from $x$ to $y$ and loops. The backbone
$\Gamma(\n)$ is an appropriate choice of such a path induced by an ordering of the edges. Again, we refrain ourselves from providing more details here and refer to the relevant literature for details on this notion.}, on which more can be found in  \cite{Aiz82,AizBarFer87,Dum16,Dum17,DumTas15}. If the event $\{\calM\ne \emptyset\}$ occurs but not $I_k$, then one of the following four events must occur (see e.g.~Fig.~\ref{fig:3}): \begin{itemize}[noitemsep]
\item[$F_1:=$] the backbone $\Gamma(\n_1)$ of $\n_1$ does two successive crossings of ${\rm Ann}(\ell_{k},n)$;
\item[$F_2:=$] $\n_1+\n_2$ contains a cluster crossing ${\rm Ann}(n,m)\setminus\Gamma(\n_1)$;
\item[$F_3:=$] $\n_1+\n_2$ contains a cluster crossing ${\rm Ann}(M,N)\setminus \Gamma(\n_1)$;
\item[$F_4:=$] the backbone $\Gamma(\n_1)$ of $\n_1$  does two successive crossings of ${\rm Ann}(N,\ell_{k+1})$.  
\end{itemize}
We bound the probabilities of these events separately.  
For $F_1$ to occur, the backbone $\Gamma(\n_1)$ must do a zigzag: to go from 0 to a vertex $v\in\partial\Lambda_n$, then to a vertex $w\in\partial\Lambda_{\ell_k}$, and finally to $x$. 
The {\em chain rule} for backbones (see e.g.~\cite{AizBarFer87}) combined with 
the assumed condition \eqref{eq:S}, jointly imply that 
\begin{equation}\label{eq:bound}{\bf P}^{0x,\emptyset}[F_1]\le \sum_{\substack{v\in \partial\Lambda_{n}\\w\in \partial\Lambda_{\ell_{k}}}}\frac{\langle\sigma_0\sigma_v\rangle\langle\sigma_v\sigma_w\rangle\langle\sigma_w\sigma_x\rangle}{\langle\sigma_0\sigma_x\rangle}\le C_6 n^3\ell_{k}^3 n^{-4}\le C_7\ell_{k}^{-\ep}.\end{equation}
To bound the probability of $F_2$, condition on $\Gamma(\n_1)$. The remaining current in $\n_1$ is a sourceless current with depleted coupling constants (see \cite{AizBarFer87,Dum16,Dum17} for details on this type of reasoning). The probability that some $v\in\partial\Lambda_n$ and $w\in \partial\Lambda_m$ are connected in $\Z^4\setminus \Gamma(\n_1)$ to each other can then be bounded by $\langle\sigma_v\sigma_w\rangle\langle\sigma_v\sigma_w\rangle'$ where the $\langle\cdot\rangle'$ denotes an Ising measure with depleted coupling constants (the depletion depends on $\Gamma(\n_1)$ and the switching lemma concerns one current with depletion and one without; we refer to \cite{AizDumSid15} for the statement and proof of the switching lemma in this context, and some applications). At the risk of repeating ourselves, we refer to \cite{AizBarFer87} for an illustration of this line of reasoning. The Griffiths inequality \cite{Gri67} implies that this probability is bounded by $\langle\sigma_v\sigma_w\rangle^2$, which together with \eqref{eq:S}, immediately leads to the following sequence of inequalities: 
\begin{align}\label{eq:bound sourceless}
{\bf P}^{0x,\emptyset}[F_2]&\le\sum_{\substack{v\in \partial\Lambda_n\\w\in \partial\Lambda_m}} \langle\sigma_v\sigma_w\rangle^2\le C_8n^{-\ep}.
\end{align}
The event $F_4$   is bounded  similarly to $F_1$, and $F_3$ similarly to 
$F_2$.
For $\ell_0=\ell_0(\ep)$ large enough  the sum of the four probabilities does not exceed half of the constant $c_5$ in \eqref{eq:M}, and the main statement follows.  
\end{proof}

\begin{figure}
\begin{center}
\includegraphics[width = 0.50\textwidth]{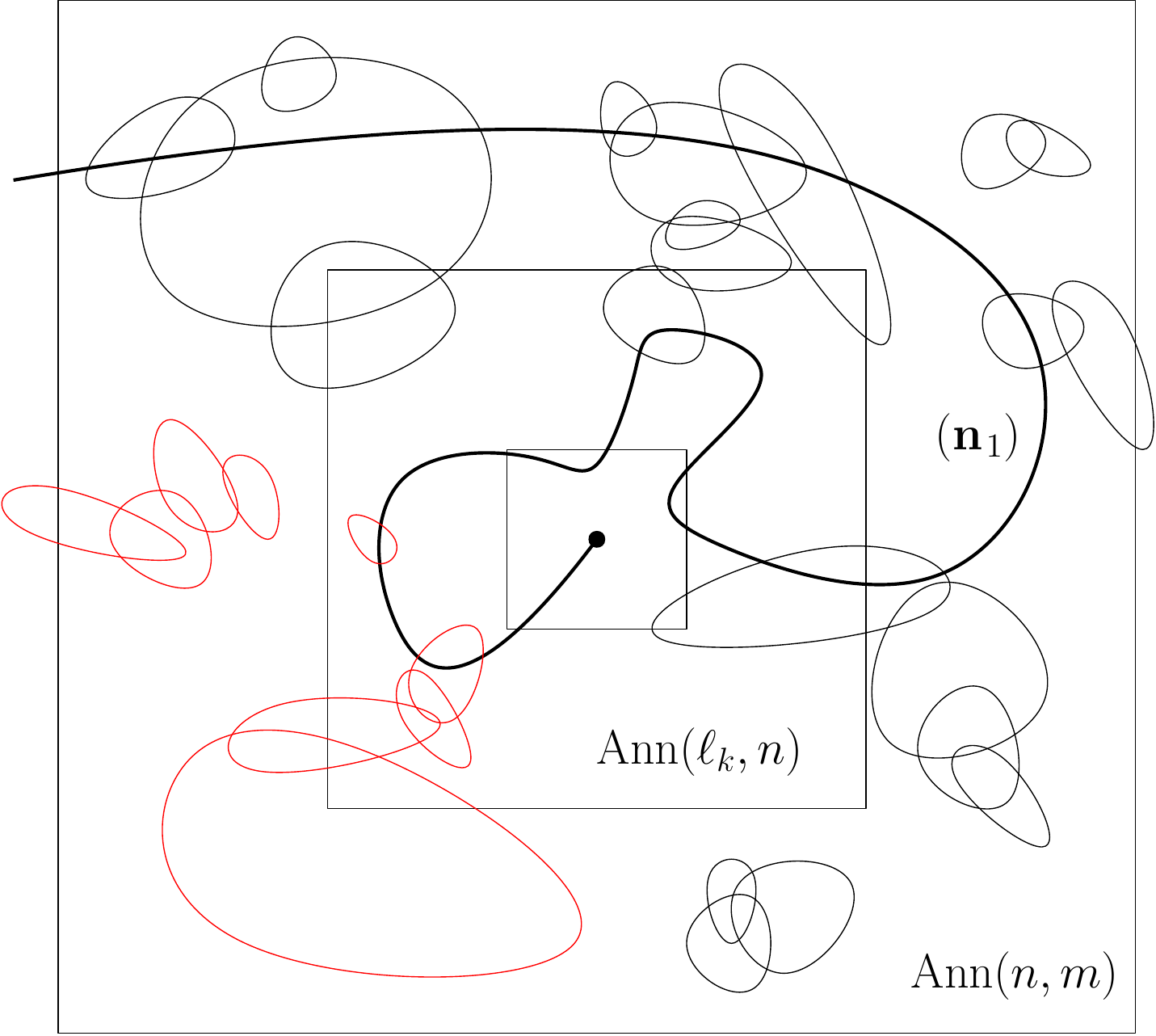}
\caption{In this picture, $\Gamma(\n_1)$ does only one crossing of ${\rm Ann}(\ell_{k},n)$, and $\n_1+\n_2-\Gamma(\n_1)$ does not cross ${\rm Ann}(n,m)\setminus\Gamma(\n_1)$. This prevents the fact that the cluster in red, made of loops in $\n_1+\n_2-\Gamma(\n_1)$ would connect an excursion of $\Gamma(\n_1)$ outside of $\Lambda_{\ell_{k}}$ but not reaching $\partial\Lambda_n$ to  $\partial\Lambda_m$ (which would potentially create an additional cluster crossing ${\rm Ann}(\ell_{k},m)$).}
\label{fig:3}
\end{center}
\end{figure}

\begin{remark}
The condition $\alpha>3^4$ is used in the second part of the proof,
 where we need the exponent connecting the inner and outer radii of annuli to be strictly larger than 3.  We did not try to improve on this exponent.\end{remark}

The second of the above described statements is 
one of the main innovations of this paper.  It concerns a mixing property, which in Section~\ref{sec:4.1} will be 
 stated under a more general form and derived unconditionally  
for every $d\ge4$.
 
\begin{proposition}[Conditional mixing property]\label{prop:mixing multi simplify}
Assume that the complementary pair of  power law bounds  \eqref{eq:S} holds for the Ising model on $\Z^4$ with $\eta =0$, and fix $\alpha>3^8$.  Then there exists $C>0$ such that for every $ n^\alpha\le N$, every $x\notin \Lambda_N$, 
and every pair of events $E$ and $F$ depending on the restriction of $\n$ to edges within $\Lambda_n$ and outside of $\Lambda_N$ respectively, 
\begin{align}|{\bf P}^{0x}_{\beta_c}[E\cap F]-{\bf P}^{0x}_{\beta_c}[E]{\bf P}^{0x}_{\beta_c}[F]|&\le \frac{C}{\sqrt{\log (N/n)}}.
\end{align}
\end{proposition}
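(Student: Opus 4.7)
The plan is to reduce the mixing question to properties of the \emph{backbone} of the current with sources $\{0,x\}$. Writing $\n = \Gamma \oplus \tilde{\n}$, where $\Gamma$ is the self-avoiding path from $0$ to $x$ selected by an edge-ordering and $\tilde{\n}$ is a sourceless current on the remaining edges with depleted couplings, the events $E$ (measurable with respect to $\n$ restricted to $\Lambda_n$) and $F$ (measurable with respect to $\n$ restricted to the complement of $\Lambda_N$) can only ``communicate'' through two channels: (i) the portion $\gamma := \Gamma \cap {\rm Ann}(n,N)$ of the backbone crossing the annulus, and (ii) sourceless loops of $\tilde{\n}$ that straddle both regions. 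I would control these two channels separately.

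For channel (ii), the switching lemma bounds the probability that some sourceless cluster in $\tilde{\n}$ connects $\partial\Lambda_n$ to $\partial\Lambda_N$ by a multiple of $\sum_{v\in\partial\Lambda_n,\, w\in\partial\Lambda_N}\langle\sigma_v\sigma_w\rangle^2$. Under assumption \eqref{eq:S} with $\eta=0$, this sum is of order $n^3/N$, which is polynomially small in $n/N$ for $N\ge n^\alpha$ with $\alpha>3$, and thus is negligible compared to the target bound. Similarly, the chain-rule argument used in the proof of Lemma~\ref{lem:intersection} (via the events $F_1,F_4$) shows that $\Gamma$ makes at most one crossing of each shell of sufficiently large aspect ratio outside an event of probability $O(n^{-\varepsilon})$.

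For channel (i), I would introduce a cascade of intermediate scales $r_0 = n < r_1 < \cdots < r_K = N$ with $r_{k+1}\ge r_k^{\alpha_0}$ for a fixed $\alpha_0>3$ and $K \asymp \log(N/n)$. On the good event just described, $\gamma$ is determined by its sequence of crossing points $(v_k)\in\prod_k\partial\Lambda_{r_k}$, and conditional on $\gamma$ the restrictions of $\tilde{\n}$ to $\Lambda_n$ and to the complement of $\Lambda_N$ become independent by the domain Markov property of sourceless currents. Thus the dependence between $E$ and $F$ is channelled entirely through the joint law of $(v_0,\ldots,v_K)$, and it remains to estimate the total-variation distance between the conditional law of this tuple given $E$ and its unconditional law.

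The hard part will be this quantitative total-variation estimate. The sequence $(v_k)$ behaves as a Markov-like chain, with transitions described by a weighted harmonic measure governed by \eqref{eq:S}; each intermediate shell provides a partial homogenisation of the entry point, but at $d=4$ with $\eta=0$ this homogenisation is only marginal. A geometric contraction would yield power-law decay in $N/n$, which would be stronger than the stated bound, so the $(\log(N/n))^{-1/2}$ rate points instead to a diffusive, second-moment mechanism: I would look for additive functionals $X_k$ of the crossings whose expectations under the conditioned and unconditional laws differ by at most a constant independent of $K$, while their variances grow linearly in $K$, so that Chebyshev across the $K\asymp\log(N/n)$ scales yields the claimed $K^{-1/2}$ rate. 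The principal technical obstacle is establishing the approximate independence across scales of these functionals, since random currents lack a genuine renewal property and the required decoupling must be produced by a further two-current switching argument in which errors accumulate additively rather than multiplicatively in $K$.
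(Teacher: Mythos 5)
Your preliminary reductions are sound and do match steps the paper also takes: the bound on sourceless clusters straddling ${\rm Ann}(n,N)$ via the switching lemma and $\sum_{v,w}\langle\sigma_v\sigma_w\rangle^2$, and the control of multiple backbone crossings via the chain rule, are exactly how the paper handles its event $G(y)$, and these contribute only a polynomially small error. You have also correctly diagnosed that the $(\log(N/n))^{-1/2}$ rate must come from a second-moment mechanism spread over $K\asymp\log(N/n)$ scales rather than from geometric contraction. But the proposal stops precisely where the proof has to begin: you never construct the decoupling mechanism. Saying ``I would look for additive functionals $X_k$ whose conditional and unconditional means differ by $O(1)$ while their variances grow linearly in $K$'' names the shape of the answer without supplying it, and the reduction you propose --- a total-variation estimate between the conditional law of the crossing tuple $(v_0,\dots,v_K)$ given $E$ and its unconditional law --- is both unconstructed and stronger than what a Chebyshev bound on one additive functional can deliver (controlling a single functional to accuracy $K^{-1/2}$ does not bound the total variation of the joint law of a $K$-tuple).

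The paper's actual mechanism is quite specific and is absent from your outline. One defines a random variable ${\bf N}=\tfrac1K\sum_k \alpha_k^{-1}\sum_{y\in{\rm Ann}(n_k,n_{k+1})}\mathbb I[y\stackrel{\n_1+\n_2}{\longleftrightarrow}0]$ over $K$ intermediate annuli; its mean equals $1$ up to $O(1/K)$ by the switching lemma, and --- this is the crux --- its second moment is $1+O(1/K)$ thanks to the new three-point inequality ${\bf P}^{0x,\emptyset}[u,v\leftrightarrow 0]\le \frac{\langle\sigma_0\sigma_u\rangle\langle\sigma_u\sigma_v\rangle\langle\sigma_v\sigma_x\rangle}{\langle\sigma_0\sigma_x\rangle}+\frac{\langle\sigma_0\sigma_v\rangle\langle\sigma_v\sigma_u\rangle\langle\sigma_u\sigma_x\rangle}{\langle\sigma_0\sigma_x\rangle}$ (Proposition~\ref{prop:3}), combined with the gradient regularity \eqref{eq:nabla S}. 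Cauchy--Schwarz then lets one insert ${\bf N}$ into ${\bf P}^{0x}[E\cap F]$ at cost $O(K^{-1/2})$, and for each fixed $y$ a generalized switching (on the event $G(y)$) converts the sources $\{0,x\},\emptyset$ into $\{0,y\},\{y,x\}$, after which $E$ and $F$ depend on genuinely independent currents. Your backbone-conditioning route faces an additional obstruction you gloss over: even conditionally on $\Gamma$, the sourceless remainder does not factor over $\Lambda_n$ and $\Lambda_N^c$ (the parity constraints and loops propagate through the annulus), so ``independence by the domain Markov property'' is not available as stated. Without an explicit analogue of ${\bf N}$ and of the second-moment input \eqref{eq:prop3b}, the argument has a genuine gap at its central step.
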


The heart of the proof will be the use of a (random) resolution of identity ${\bf N}$, meaning a random variable which is concentrated around 1, given by a weighted sum of indicator functions $\mathbb I[y\stackrel{\n_1+\n_2}\longleftrightarrow 0]$ with $y\in \mathbb Z^d$, where $\partial\n_1=\{0,x\}$ and $\partial\n_2=\emptyset$, which will enable us to write
\be
{\bf P}^{0x}[E\cap F]\approx{\bf E}^{0x,\emptyset}[{\bf N}\mathbb I(\n_1\in E\cap F)].
\ee
 Since ${\bf N}$ will be a certain convex combination of the random variables $\mathbb I[y\stackrel{\n_1+\n_2}{\longleftrightarrow}0]/\langle\sigma_0\sigma_y\rangle$, the term on the right will be a convex sum of 
${\bf P}^{0x,\emptyset}$-probabilities of the events $\{y\stackrel{\n_1+\n_2}{\longleftrightarrow}0,\n_1\in E\cap F\}$. For each fixed $y$, we will use the switching principle to transform the sources $\{0,x\}$ and $\emptyset$ of $\n_1$ and $\n_2$ into $\{0,y\}$ and $\{y,x\}$, exchanging at the same time the roles of $\n_1$ and $\n_2$ inside $\Lambda_n$ without changing anything outside $\Lambda_N$. This useful operation has a nice byproduct: the event $\n_1\in F$ becomes $\n_2\in F$ which is independent of $\n_1\in E$. Deducing the mixing from there will be a matter of elementary algebraic manipulations. 

The error term will be (almost entirely) due to how  concentrated around $1$ ${\bf N}$ is. In order to prove this fact, we will implement a refined second moment method in which we estimate the expectation  and the second moment of ${\bf N}$ sharply. The proof will require some regularity assumptions on the gradient of the two-point function: for every $x\in\Z^d$, 
\begin{align}
\label{eq:nabla S}|\nabla_x\langle\sigma_0\sigma_x\rangle|\le \frac{C}{|x|}\langle\sigma_0\sigma_x\rangle,
\end{align}
which  follows from \eqref{eq:S} by an argument that we choose to postpone to  Section~\ref{sec:3.3} (after the required technology has been introduced).

\begin{proof}
Let us recall that we are discussing here $\beta=\beta_c$, omitting  the symbol from the notation.
Fix $\alpha>3^4$ (the power $4$ instead of $8$ suffices at this stage) and choose $\ep>0$ so that $\alpha>(1+\ep)(9+\ep)^2$. Below, the constants $C_i$ are independent of $\beta$ and $n^\alpha= N\le \xi(\beta)$ (we may assume equality between $N$ and $n^\alpha$ without loss of generality). Introduce two intermediary integers $m\le M$ satisfying that 
\be
m\ge n^{9+\ep}\ ,\  M\ge m^{1+\ep}\ ,\ N\ge M^{9+\ep}
\ee
as well as the notation $n_k=2^km$ for $k\ge1$. Set $K$ such that $n_{K+1}\le M<n_{K+2}$.
The key to our proof will be the random variable 
\be
{\bf N}:=\frac1{K}\sum_{k=1}^{K}\frac{1}{\alpha_{k}}\sum_{y\in{\rm Ann}(n_k,n_{k+1})}\mathbb I[y\stackrel{\n_1+\n_2}\longleftrightarrow0]\quad\text{where}\quad\alpha_k:=\sum_{y\in {\rm Ann}(n_k,n_{k+1})}\langle\sigma_0\sigma_y\rangle.
\ee
Combining the regularity assumptions \eqref{eq:nabla S} and \eqref{eq:S} with Proposition~\ref{prop:3} (the precise computation is presented in Section~\ref{sec:4.2}), we find 
\begin{align}{\bf E}^{0x,\emptyset}[{\bf N}]&=\frac1{K}\sum_{k=1}^{K}\frac{1}{\alpha_{k}}\sum_{y\in{\rm Ann}(n_k,n_{k+1})}\frac{\langle\sigma_0\sigma_y\rangle\langle\sigma_y\sigma_x\rangle}{\langle \sigma_0\sigma_x\rangle}\ge 1-\frac{C_1}{K},\\
{\bf E}^{0x,\emptyset}[{\bf N}^2]&\le\frac1{K^2}\sum_{k,\ell=1}^{K}\frac{1}{\alpha_{k}\alpha_\ell}\sum_{\substack{y\in{\rm Ann}(n_k,n_{k+1})\\ z\in {\rm Ann}(n_\ell,n_{\ell+1})}}\frac{\langle\sigma_0\sigma_y\rangle\langle\sigma_y\sigma_z\rangle\langle\sigma_z\sigma_x\rangle+\langle\sigma_0\sigma_z\rangle\langle\sigma_z\sigma_y\rangle\langle\sigma_y\sigma_x\rangle}{\langle \sigma_0\sigma_x\rangle}\le 1+\frac{C_2}{K}.\end{align}
The Cauchy-Schwarz inequality and the fact that ${\bf P}^{0x}[E\cap F]={\bf P}^{0x,\emptyset}[\n_1\in E\cap F]$ thus imply that
\begin{align}\label{eq:hgf}|{\bf P}^{0x,\emptyset}[\n_1\in E\cap F]-{\bf E}^{0x,\emptyset}[{\bf N}\mathbb I_{\n_1\in E\cap F}]|&\le \sqrt{{\bf E}^{0x,\emptyset}[({\bf N}-1)^2]}
\le \frac{C_3}{\sqrt K}.
\end{align}
Now, fix $y\in {\rm Ann}(m,M)$ and let $G(y)$ be the event (depending on $\n_1+\n_2$ only) that there exists ${\bf k}\le \n_1+\n_2$ such that ${\bf k}=0$ on $\Lambda_n$, ${\bf k}=\n_1+\n_2$ outside $\Lambda_N$, and $\partial{\bf k}=\{x,y\}$.
We find that
\begin{align}\label{eq:ggf1}{\bf P}^{0x,\emptyset}[\n_1\in E\cap F,y\stackrel{\n_1+\n_2}\longleftrightarrow 0,G(y)]=\frac{\langle\sigma_0\sigma_y\rangle\langle\sigma_y\sigma_x\rangle}{\langle\sigma_0\sigma_x\rangle}{\bf P}^{0y,yx}[\n_1\in E,\n_2\in F,G(y)],\end{align}
where we use the following reasoning:
for  $\m\in G(y)$, consider the multi-graph $\calM$ obtained by duplicating every edge of the graph into $\m(x,y)$ edges. If $G(y)$ occurs, the existence of $\mathbf k$ guarantees the existence of a subgraph $\calK\subset\calM$ with $\partial\calK=\{x,y\}$ containing no edge with endpoints in $\Lambda_n$ and all those of $\calM$ with endpoints outside $\Lambda_N$, so that the generalized switching principle formulated in \cite[Lemma~2.1]{AizDumTasWar18} implies that 
\begin{align}\sum_{\calT\le\calM:\partial\calT=\{0,x\}}\mathbb I[\calT\in E\cap F]&=\sum_{\calT\le\calM:\partial\calT\Delta\calK=\{0,x\}}\mathbb I[\calT\Delta\calK\in E\cap F]\nonumber\\
&=\sum_{\calT\le\calM:\partial\calT=\{0,y\}}\mathbb I[\calT\in E,\calM\setminus\calT\in F],\label{eq:swit}\end{align}
where we allow ourselves the latitude of calling $E$ and $F$ the events defined for multi-graphs corresponding to the events $E$ and $F$ for currents. One gets \eqref{eq:ggf1} when rephrasing this equality in terms of weighted currents (exactly like in standard proofs of the switching principle, see e.g.~\cite{Aiz82} or \cite{AizDumTasWar18} for a closely related reasoning).
 
Observe now that forgetting about $G(y)$ on the right-hand side of \eqref{eq:ggf1} gives 
\begin{equation}{\bf P}^{0y,yx}[\n_1\in E,\n_2\in F]={\bf P}^{0y}[E]{\bf P}^{yx}[F].\end{equation}
Furthermore, since $x\notin \Lambda_N$ and $y\in\Lambda_{m}$, \eqref{eq:nabla S} implies that 
\begin{equation}
\Big|\frac{\langle\sigma_0\sigma_y\rangle\langle\sigma_y\sigma_x\rangle}{\langle\sigma_0\sigma_x\rangle}-\langle\sigma_0\sigma_y\rangle\Big|\le \frac{C_4m}{N}.
\end{equation} 
Last but not least, we can bound (from below) ${\bf P}^{0x,\emptyset}[G(y)]$ and ${\bf P}^{0y,yx}[G(y)]$ as follows. We only briefly describe the argument since we will present it in full details in Section~\ref{sec:4.2}. The event $G(y)$ clearly contains the event that ${\rm Ann}(M,N)$ is not crossed by a cluster in $\n_1$, and ${\rm Ann}(n,m)$ is not crossed by a cluster in $\n_2$, since in such case ${\bf k}$ can be defined as the sum of $\n_1$ restricted to the clusters intersecting $\Lambda_N^c$ (this current has no sources) and $\n_2$ restricted to the clusters intersecting $\Lambda_m^c$ (this current has sources $x$ and $y$).
Now, we can bound the probability of $\n_1$ crossing ${\rm Ann}(M,N)$ in the same spirit as  we bounded the probabilities for $F_1$ and $F_3$ in the previous proof by splitting ${\rm Ann}(M,N)$ in two annuli ${\rm Ann}(\sqrt{MN},N)$ and ${\rm Ann}(M,\sqrt{MN})$, then estimating the probability that the backbone of $\n_1$ crosses the inner annulus more than once, and then  the probability that the remaining current (which is sourceless) crosses the outer annulus. Doing the same for the probability that a cluster of $\n_2$ crosses ${\rm Ann}(n,m)$, we find that
\begin{equation}
\label{eq:gggff}\frac{\langle\sigma_0\sigma_x\rangle}{\langle\sigma_0\sigma_y\rangle\langle\sigma_y\sigma_x\rangle}{\bf P}^{0x,\emptyset}[G(y),y\stackrel{\n_1+\n_2}{\longleftrightarrow}0]={\bf P}^{0y,yx}[G(y)]\ge 1-\frac{C_5}{n^\ep}\ge 1-C_6\big(\frac nN\big)^{\ep/(\alpha-1)}.
\end{equation}
Note that we use that $M\ge N^{9+\ep}$ in this part of the proof. 

 Overall, the value of $K$ and \eqref{eq:hgf}--\eqref{eq:gggff} put together imply
\begin{align}\label{eq:gf simple}|{\bf P}^{0x}[E\cap F]-\sum_{y\in {\rm Ann}(m,M)}\delta(y){\bf P}^{0y,\emptyset}[E]{\bf P}^{yx,\emptyset}[F]|&
\le \frac{C_7}{\sqrt{\log(N/n)}},
\end{align}
with $\delta(y)=\langle\sigma_0\sigma_y\rangle/(K\alpha_{k(y)})$ where $k(y)$ is such that $y\in{\rm Ann}(n_{k(y)},n_{k(y)+1})$.

The end of the proof is now a matter of elementary algebraic manipulations. Applying this inequality twice (once with $x$ and once with $x'$) for $F$ being the full set, we obtain that for every $x,x'\notin \Lambda_N$ and every event $E$ which is depending on $\Lambda_n$ only,  
\begin{align}\label{eq:roro}
&|{\bf P}^{0x}[E]-{\bf P}^{0x'}[E]|\le \frac{2C_7}{\sqrt {\log(N/n)}}.
\end{align}
Now, assume the stronger assumption that $\alpha>3^8$ and fix $m=\lfloor\sqrt{Nn}\rfloor$. Applying
\begin{itemize}[noitemsep]
\item \eqref{eq:gf simple} for $m$ and $N$, the full event and $F$, 
\item then \eqref{eq:roro} for $n$, $m$ and $E$ (note that $m\ge n^{3^4}$), 
\item and \eqref{eq:gf simple} for $m$ and $N$, $E$ and $F$, 
\end{itemize}
gives that for every $x\notin \Lambda_N$, 
\begin{align}
|{\bf P}^{0x}[E\cap F]-{\bf P}^{0x}[E]{\bf P}^{0x}[F]|&\le |{\bf P}^{0x}[E\cap F]-{\bf P}^{0x}[E]\sum_{y}\delta(y){\bf P}^{yx}[F]|+\frac{C_7}{\sqrt{\log(N/n)}}\nonumber\\
&\le|{\bf P}^{0x}[E\cap F]-\sum_{y}\delta(y){\bf P}^{0y}[E]{\bf P}^{yx}[F]|+\frac{3C_7}{\sqrt{\log(N/n)}}\nonumber\\
&\le \frac{4C_7}{\sqrt {\log(N/n)}}.
\end{align}
\end{proof}

Using Lemma~\ref{lem:intersection} and Proposition~\ref{prop:mixing multi simplify}, we may now establish the clustering of intersections, under  \eqref{eq:S}.

\begin{proof}[Proposition~\ref{prop:non isolated}]  In view of the translation invariance of the claimed statement, we take $u$ to be the origin. Since $x$ and $y$ are at a distance larger than $2\ell_K$ of each other, one of them is at a distance (larger than or equal to) $\ell_K$ of $u$.  Without loss of generality we take that to be $x$, and make a similar assumption about $z$.

Let $\calS_K$ denote the set of subsets of $\{1,\dots,K-2\}$ containing {\em even} integers only and fix  $S\in \calS_K$. Let $A_S$ be the event that no $I_k$ occurs for $k\in S$. If $s$ denotes the maximal element of $S$, the mixing property Proposition~\ref{prop:mixing multi simplify} used with $n=\ell_{s-1}$ and $N=\ell_{s}$ gives
\be
{\bf P}^{0x,0z,\emptyset,\emptyset}[A_S]\le {\bf P}^{0x,0z,\emptyset,\emptyset}[I_s^c]{\bf P}^{0x,0z,\emptyset,\emptyset}[A_{S\setminus\{s\}}]+\frac{C}{\sqrt{\log \ell_{s-1}}}.
\ee
To be precise and honest, we use a multi-current version, with four currents, of the mixing property. We will state and prove this property in Sections~\ref{sec:4.1} and \ref{sec:4.2} and ignore this additional difficulty for now. Also,
it is here that the stronger restriction $\alpha>3^8$ is used, along with the choice of  $\ell_0=\ell_0(\alpha)$,  to enable the mixing.  Note that we used that the event $I_s$ is expressed in terms of just the restriction of the currents $\n_1,\dots,\n_4$ to ${\rm Ann}(\ell_s,\ell_{s+1})$. 

Now,  the intersection property Lemma~\ref{lem:intersection} and an elementary bound on $\ell_{s-1}$ gives the existence of $c_0>0$ small enough that
\begin{align}{\bf P}^{0x,0z,\emptyset,\emptyset}[A_S]
&\le (1-2c_0){\bf P}^{0x,0z,\emptyset,\emptyset}[A_{S\setminus \{s\}}]+c_0(1-c_0)^{|S|-1}.\end{align}
An induction gives immediately that for every $S\in \calS_K$,
\begin{align}\label{eq:bound B}
&{\bf P}^{0x,0z,\emptyset,\emptyset}[A_S]\le (1-c_0)^{|S|}.\end{align}
Let $B_S\subset A_S$ be the event that the clusters of $0$ in $\n_1+\n_3$ and $\n_2+\n_4$ do not intersect in any of the annuli ${\rm Ann}(\ell_{s},\ell_{s+1})$ for $s\in S$.
Thanks to Corollary~\ref{cor:1}, the probability of $B_S$ increases when removing sources, so that
\begin{equation}
{\bf P}^{0x,0z,0y,0t}[B_S]\le {\bf P}^{0x,0z,\emptyset,\emptyset}[B_S]\le{\bf P}^{0x,0z,\emptyset,\emptyset}[A_S]\le (1-c_0)^{|S|}.
\end{equation}
To conclude, observe that if ${\bf M}_0(\calT; \mathcal L_\alpha, K)\le \delta K$, then there must exist a set $S\in \calS_K$ of cardinality at least $(\tfrac12-\delta) K$ such that $B_S$ occurs. We deduce that
\begin{align}{\bf P}^{0x,0z,0y,0t}[{\bf M}_0(\calT; \mathcal L_\alpha, K)< \delta K]
&\le\sum_{S\in\calS_K:
|S|\ge(1/2-\delta)K}{\bf P}^{0x,0z,0y,0t}[B_S]\nonumber \\
&\le \binom{K/2}{\delta K}(1-c_0)^{(1/2-\delta)K},
\end{align}
which implies the claim by appropriately choosing the value of $\delta$.
\end{proof}

\section{Weak regularity of the two-point function} \label{sec:3}

Progressing towards the unconditional proof of Theorem~\ref{thm:improved tree bound simple}  we establish in this section the abundance, below the correlation length, 
of {\em regular scales}  at which the two-point function has properties similar to those it would have under the power-law decay assumption \eqref{eq:S}.   This auxiliary result is stated here as Theorem~\ref{thm:regular scales}. 

Towards this goal we focus here on the two-point function,  and present some old and new observations.  In particular, we discuss the following three properties of the two-point function: 
\begin{enumerate}[label=(\roman*)]
\item   {\em monotonicity} (Section~\ref{sec:3.1})
\item   {\em sliding-scale spatial Infrared Bound} (Section~\ref{sec:3.2}), 
\item   {\em gradient estimate} (Section~\ref{sec:3.3}),
\item {\em a lower bound for the two point function at $\beta_c$}.
\end{enumerate} 
The first three  are based on the reflection-positivity of the n.n.f.~interaction, and apply to systems of real valued variables of arbitrary (but common) distribution, of sub-gaussian growth, i.e.~satisfying 
\eqref{sub_gauss}.   
That includes the Ising and $\varphi^4$ variables which are of particular interest for us.  
The last item is proven for systems with spins in the GS class.   
\\

\noindent {\bf Some unifying notation:} In statements which apply  to both the Ising and $\varphi^4$ systems, we shall  refer to the spin/field variables by the ``neutral'' symbol $\tau$.   Its a-priori distribution is denoted 
$\rho(d\tau)$.   It may be displayed as a subscript, {\em but also will often be omitted}.    

The expectation value functional with respect to the Gibbs measure, or functional integral, for a system in the domain $\Lambda$  is denoted 
$\langle\cdot\rangle_{\Lambda,\rho, \beta}$, with $\langle\cdot\rangle_{\rho, \beta}$ denoting the states' natural infinite volume limit. 

 We also denote by 
$\beta_c(\rho)$  (or just $\beta_c$ where the spins' a-priori distribution $\rho$ is clear from the context) is the  critical inverse temperature and $\xi(\rho,\beta)$ the correlation length.

Throughout this section  $|J|:=\sum_{y}J_{0,y}$ and 
\be
S_{\rho,\beta}(x):=\langle\tau_0\tau_x\rangle_{\rho,\beta}.
\ee 
We refer to points in  $\bbR^d$ as $x=(x_1,\dots,x_d)$
and denote by  ${\bf e}_j$ the unit vector with $x_j=1$.

\subsection{Messager-Miracle-Sole monotonicity for the two-point function}\label{sec:3.1}

The Messager-Miracle-Sol\'e (MMS) inequality \cite{Heg77,MesMir77,Sch77} states that for models with n.n.f.~interactions (and more generally \emph{reflection-positive} interactions)
in a region $\Lambda$ endowed with reflection symmetry,  
the correlation function $\langle \prod_{x\in A}\tau_x\prod_{x\in B}\tau_x\rangle _{\Lambda,\rho,\beta}$ at sets of sites $A$ and $B$ which are on the same side of a reflection plane, can only decrease when $B$ is replaced by its reflected image, $\mathcal{R}(B)$, i.e.
\begin{equation}\label{eq:MMS}
\big\langle \prod_{x\in A}\tau_x\prod_{x\in B}\tau_x\big\rangle_{\Lambda,\rho,\beta}  \ \geq \ \big\langle \prod_{x\in A}\tau_x\prod_{x\in \calR(B)}\tau_x\big\rangle_{\Lambda,\rho,\beta}. 
\end{equation}   In the infinite volume limit on $\Z^d$,  this principle can be invoked for reflections with respect to 
\begin{itemize}[noitemsep,nolistsep]
\item hyperplanes passing through vertices or mid-edges, i.e.~reflections changing only one coordinate $x_i$, which is sent to $L-x_i$ for some fixed $L\in\tfrac12\bbZ$, 
\item ``diagonal'' hyperplanes, i.e.~reflections changing only two coordinates $x_i$ and $x_j$, which are sent to $x_j\pm L$ and $x_i\mp L$ respectively, for some $L\in\bbZ$.
\end{itemize}
In particular, this  implies the following useful comparison principle. 

\begin{proposition}[MMS monotonicity]\label{prop:MMS}  
For the n.n.f.~model 
on $\Z^d$ ($d\ge1$) with real valued spin variables satisfying \eqref{sub_gauss}:\\ 
\indent  i)
along the principal axes 
the two-point function is monotone decreasing in $\|x\|_\infty$ 
\\ 
\indent ii) for any $x=(x_1,\dots,x_d) \in \Z^d$,
\begin{equation}\label{eq:b55}
S_{\rho,\beta}((\|x\|_\infty,0_\perp)) \geq S_{\rho,\beta}(x) \geq  S_{\rho,\beta}((\|x\|_1,0_\perp) ),
\end{equation}
where  $\|x\|_1 := \sum_{j=1}^d |x_j|$, $\|x\|_\infty:= \max_j |x_j|$, and $0_\perp$ is the null vector in  $\Z^{d-1}$.
\end{proposition}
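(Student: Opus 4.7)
The plan is to derive both parts by iterating the reflection-positivity inequality \eqref{eq:MMS} with $A = \{0\}$ and $B = \{x\}$ (or transformed images thereof), selecting at each step a reflection hyperplane across which $A$ and $B$ share a half-space. Applicability in the stated generality follows from the reflection-positivity of the n.n.f.\ interaction together with the sub-gaussian growth \eqref{sub_gauss} of the a-priori measure. Two families of admissible reflections suffice: axis-perpendicular reflections through mid-edge hyperplanes $\{y_k = c + \tfrac12\}$ and diagonal reflections of the form $\{y_i - y_j = c\}$ with $c \in \Z$.

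For part (i), to show $S_{\rho,\beta}(n\mathbf{e}_1) \geq S_{\rho,\beta}((n+1)\mathbf{e}_1)$ for $n \geq 0$, reflect through $\{y_1 = n + \tfrac12\}$: both $0$ and $n\mathbf{e}_1$ lie in $\{y_1 \leq n\}$, and $n\mathbf{e}_1$ is sent to $(n+1)\mathbf{e}_1$. Iterating and invoking the lattice's coordinate-permutation and axis-reflection symmetries yields (i) in full generality. The upper bound in (ii) follows from the same axis-perpendicular family applied coordinate by coordinate: reducing by lattice symmetry to $x$ with $x_j \geq 0$ and $x_1 = \|x\|_\infty$, a chain of MMS applications using the reflection through $\{y_j = k + \tfrac12\}$ at the step from $y_j = k$ to $y_j = k + 1$ (at which stage both $0$ and the evolving image of $B$ lie in $\{y_j \leq k\}$) gives $S_{\rho,\beta}(x_1, 0, \ldots, 0) \geq S_{\rho,\beta}(x_1, x_2, 0, \ldots, 0)$. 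Successively inflating $x_2, x_3, \ldots, x_d$ in this fashion yields $S_{\rho,\beta}(\|x\|_\infty \mathbf{e}_1) \geq S_{\rho,\beta}(x)$.

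For the lower bound of (ii) we employ the diagonal reflections. Whenever the current image of $B$ has two positive coordinates $x_i \geq x_j \geq 1$, reflecting through $\{y_i - y_j = x_i - x_j + 1\}$ fixes all coordinates other than the $i$-th and $j$-th and sends $(x_i, x_j) \mapsto (x_i + 1, x_j - 1)$; both $0$ and $B$ then lie in $\{y_i - y_j \leq x_i - x_j\}$, validating \eqref{eq:MMS}. Iterating this elementary step merges two positive coordinates into a single larger one with total mass conserved, and cascading through successive coordinate pairs eventually reduces $B$ to $\|x\|_1 \mathbf{e}_1$, producing $S_{\rho,\beta}(x) \geq S_{\rho,\beta}(\|x\|_1 \mathbf{e}_1)$. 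The only routine verification at each stage is the same-side hypothesis for \eqref{eq:MMS}; this is automatically met by the hyperplane choices above and constitutes the main bookkeeping aspect of the proof.
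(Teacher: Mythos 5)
Your argument is correct and is exactly the route the paper intends: the paper states Proposition~\ref{prop:MMS} as an immediate consequence of \eqref{eq:MMS} applied with the two listed families of reflections (mid-edge axis-perpendicular and diagonal), and your unit-step iterations with the half-space checks are the standard way to fill in those details (compare the footnote in Remark~\ref{rmk:aa}, which runs the same diagonal-reflection cascade). The only cosmetic omission is that for the lower bound you should state explicitly, as you did for the upper bound, that lattice sign-flip and permutation symmetries reduce to $x_j\ge 0$ for all $j$ before the merging steps begin.
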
 

The above carries the useful implication  that for any $x,y\in \Z^d$ with $\|y\|_\infty \, \ge \,  d\,  \|x\|_\infty $,
\be  \label{eq:MMS3}
 S_{\rho,\beta}(x)\, \ge \, S_{\rho,\beta}(y)
\ee  
since $ \|x\|_1 \, \le \,  d\, \|x\|_\infty   \le   \|y\|_\infty  $, by \eqref{eq:b55} the two quantities are on the correspondingly opposite sides of $S_{\rho,\beta}((\|x\|_1,0_\perp)$).\\ 

We shall  encounter below also monotonicity statements of the Fourier transform.  Both are useful in extracting  point-wise implications from bounds on the corresponding two point function's bulk averages (as in  Corollary~\ref{cor:SL},  below).   

\subsection{The two-point function's Fourier transform}

In view of the model's translation invariance it is natural to consider the system's behavior also through its Fourier \emph{spin-wave modes}.  These are defined as   
\be
\widehat \tau_\beta(p) := \frac{1}{\sqrt{(2L)^d} } \sum_{x\in (-L,L]^d}e^{ip\cdot x} \tau_x \, 
\ee 
with $p$ ranging over $\Lambda_L^*:=[-\pi,\pi)^d\cap\tfrac{\pi}{L}\Z^d$. 

 These variables are especially relevant in case the Hamiltonian is taken with the periodic boundary conditions, under which sites $x,y\in \Lambda_L$ are neighbors if either $\|x-y\|_1=1$ or  $|y_i-x_i |=  2L-1$ for some $i\in \{1,\dots,d\}$.   With these boundary conditions, the model is invariant under cyclic shifts, and its Hamiltonian decomposes into a sum of single-mode contributions: 

\be
H_{\Lambda_R}(\tau) = \sum_{p\in \Lambda_L^*} \mathcal E(p) \, \vert  \widehat \tau_\beta(p)\vert^2,
\ee 
with
\be 
\mathcal E(p) :=  2\sum_{j=1}^d [1- \cos(p_j)] = 4  \sum_{j=1}^d  \sin^2(p_j/2)\, .
\ee   

Among the various  relations in which the Fourier transform plays a useful role,  the following statements  will be of relevance for our discussion.
\begin{itemize} 
 \item[i)]   The  spin-wave modes' second moment coincides with the finite volume Fourier transform of the two-point correlation function ($S^{(L)}(p)$): 
\be
\widehat S^{(L)}_{\rho,\beta}(p)   :=  \sum_{x\in \Lambda_L} e^{ip\cdot x} \langle \tau_0 \tau_x \rangle_{\Lambda_L,\rho,\beta}^{(b.c.)}  = 
  \langle  \vert  \widehat \tau_\beta(p)\vert^2  \rangle_{\Lambda_L,\rho,\beta}^{(b.c.)}   \geq  0\,.
\ee

 \item[ii)]  For the n.n.f. interaction, and more generally reflection-positive interactions, the following gaussian-domination (aka \emph{infrared}) bound holds~\cite{FILS78,FroSimSpe76}
\be \label{eq_IR}
\mathcal E(p) \widehat S^{(L)}_{\rho,\beta}(p) \leq \frac{1}{2|J|\beta}  \,. 
\ee 
The bound appeals to the physicists' intuition, reminding one of  the equipartition law.    Alas, it has so far been proven only for reflection-positive interactions. 
 \item[iii)]   The Parseval-Plancherel identity yields the sum rule 
\be \label{sumrule}
\langle \tau_0^2 \rangle_{\Lambda_L,\rho,\beta}^{(b.c.)}  =   \frac{1}{\vert \Lambda_L\vert} \sum_{p\in \Lambda_L^*} \widehat S^{(L)}_{\rho,\beta}(p)\,. 
\ee 
\end{itemize} 
As was pointed out in~\cite{FroSimSpe76}, the combination of  \eqref{sumrule} with \eqref{eq_IR} yields 
a (then novel) way to prove the occurrence of spontaneous magnetization in dimensions $d>2$, at  high enough $\beta$.

More explicitly, in \eqref{sumrule}  one may note that the Infrared Bound \eqref{eq_IR}  does not provide any direct control on the  $p=0$ term, since $\mathcal E(0)=0$.   And, in fact,  the hallmark of the low temperature phase ($\beta > \beta_c(\rho)$) is that   
this single value of the summand attains  macroscopic size: 
 \be 
 \widehat S^{(L)}_{\rho,\beta}(0) \approx \vert \Lambda_L\vert \, M(\rho,\beta)^2
 \ee 
 with $M(\rho,\beta)$ the model's spontaneous magnetization.  \\ 
 
We shall also use the following statement on the relation between the finite volume and the infinite volume states.  

\begin{proposition}\label{prop:conv} For {a model in the GS class} on $\Z^d$,  $d >2$,  with translation invariant finite range interactions, for any $\beta < \beta_c(\rho)$: 
\begin{enumerate}  
\item the system has only one infinite volume Gibbs equilibrium state.  
\item the correlation functions of that state satisfy, for any finite $A\subset \Z^d$, and any sequence of finite volumes $V_n \subset \Z^d$ which asymptotically cover any finite region,
\be \label{eq:conv}
\big\langle \prod_{x\in A}\tau_x \big\rangle_{\rho,\beta}  \ = \  \lim_{V_n \to \Z^d}  
\big\langle \prod_{x\in A}\tau_x \big\rangle_{V_n,\rho, \beta}^{(b.c.)}
\ee 
with $\langle -\rangle_{V_n,\rho, \beta}^{(b.c.)}$ denoting the correlation function under  boundary conditions which may include either  cross-boundary spin couplings (e.g.~periodic), or arbitrary specified values of $\tau_{|\partial V_n}$.  
\item with the finite volumes taken as the rectangular domains $\Lambda_L$, also the Fourier Transform functions converge, i.e.~for any $ p  \in [-\pi ,\pi]^d$,  and sequence as in \eqref{eq:conv} 
\be \label{eq:FTconv}
\lim_{ n \to \infty}  \sum_{x\in V_n} e^{ip\cdot x} 
\langle \tau_0 \tau_x \rangle_{V_n,\rho, \beta}^{(b.c.)}  \ = \  
\sum_{x\in \Z^d}e^{ip\cdot x}S_{\rho,\beta}(x) \ =: \ 
\widehat S_{\rho,\beta}(p) 
\ee 
\end{enumerate} 
\end{proposition}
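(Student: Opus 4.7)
The plan is to reduce all three assertions to a single ingredient: exponential decay of the two--point function throughout the subcritical regime. For the Ising model this is the sharpness result of Aizenman--Barsky--Fern\'andez and Duminil-Copin--Tassion (and the random-current variant in \cite{DumGosRao18}). For a GS-class a priori distribution $\rho$, I would transport this to $\Z^d$ by realising the $\tau$--system as the $N\to\infty$ limit of the Ising system on the decorated graph $\Z^d\times\mathcal K_N$ described in Section~\ref{sec:GS}, applying Simon's inequality uniformly in $N$, and passing to the weak limit using the sub-Gaussian bound \eqref{sub_gauss}. The output is: for every $\beta<\beta_c(\rho)$ there exist $c,C>0$ with $S_{\rho,\beta}(x)\le Ce^{-c|x|}$, and the same bound, with the same constants, for $S^{(V_n)}_{\rho,\beta}$ in any finite volume with any of the boundary conditions under consideration. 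In particular $\chi(\rho,\beta):=\sum_{x}S_{\rho,\beta}(x)<\infty$.

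For part (1), I would combine this with the Lee--Yang property, which is available in the GS class by the very construction in Definition~\ref{def:rho} (this is in fact Griffiths' original motivation for introducing it). A standard Lee--Yang / convexity argument turns $\chi(\rho,\beta)<\infty$ into analyticity of the free energy in the external field $h$ at $h=0$, equivalently $\langle\tau_0\rangle^{+}_{\rho,\beta}=\langle\tau_0\rangle^{-}_{\rho,\beta}=0$. Griffiths II, which holds on each finite Ising approximant and survives the GS limit, then sandwiches any infinite-volume Gibbs state between the $\pm$ states, forcing uniqueness.

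For part (2), given uniqueness I would show that any sequence of finite volume states $\langle\cdot\rangle_{V_n,\rho,\beta}^{(b.c.)}$ converges to the same limit by a standard DLR / high-temperature-type argument: for a local observable supported in $A$, the difference between two such finite-volume states is controlled, via the DLR equations and the finite-range character of $J$, by a sum over vertices in $\partial V_n$ of two-point correlations $S_{\rho,\beta}(x,\partial V_n)$, and the exponential bound from Step~1 makes this sum tend to zero. The delicate case is the one of arbitrary specified $\tau_{|\partial V_n}$ in the GS setting, where spins are unbounded; here I would carry out the estimate at the level of the Ising constituents in the decorated graph (where spins are $\pm1$ and the boundary influence is trivially bounded) and then use \eqref{sub_gauss} together with Griffiths II to justify exchange of limits in $N\to\infty$.

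Part (3) is then a dominated-convergence statement. The uniform bound $S^{(\Lambda_L)}_{\rho,\beta}(x)\le Ce^{-c|x|}$ provides a summable majorant (independent of $L$), the pointwise convergence $S^{(\Lambda_L)}_{\rho,\beta}(x)\to S_{\rho,\beta}(x)$ is supplied by part (2), and the $x\mapsto e^{ip\cdot x}$ factor is bounded; the convergence of $\widehat S^{(L)}_{\rho,\beta}(p)$ to $\widehat S_{\rho,\beta}(p)$ follows for every $p$. The main obstacle I foresee is the uniformity-in-volume of the subcritical exponential decay for the GS class under arbitrary Dirichlet-type boundary data: the cleanest route is to perform all comparisons on the Ising constituents, where MMS monotonicity (Proposition~\ref{prop:MMS}) and the Simon--Lieb-type finite-volume criterion are available in a form that is stable under the block limit.
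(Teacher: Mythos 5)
The paper does not actually write out a proof of this proposition: it states that the result ``follows by standard arguments'' whose two ingredients are (a) exponential decay of correlations at $\beta<\beta_c(\rho)$, uniformly in the volume, and (b) the FKG inequality. Your proposal correctly identifies (a) as the engine and your treatment of parts (2) and (3) --- boundary influence controlled by summing two-point functions over $\partial V_n$, then dominated convergence with the uniform exponential majorant for the Fourier transform --- is exactly the intended argument. Where you diverge is in replacing ingredient (b): for part (1) you route through Lee--Yang and analyticity in $h$, and you propose to ``sandwich'' arbitrary Gibbs states and arbitrary finite-volume boundary conditions using Griffiths~II. The Lee--Yang detour is harmless but unnecessary (once the infinite-volume two-point function decays, $\langle\tau_0\rangle^{+}{}^2\le\lim_{|x|\to\infty}\langle\tau_0\tau_x\rangle^{+}=0$ already kills the magnetization).

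The one step that does not go through as written is the sandwiching by Griffiths~II. The boundary conditions allowed in the proposition include \emph{arbitrary} specified values of $\tau_{|\partial V_n}$, which may be negative or of mixed sign; Griffiths' second inequality only yields monotone comparisons under ferromagnetic perturbations with nonnegative couplings and fields, so it cannot compare $\langle\cdot\rangle^{(b.c.)}_{V_n}$ with such data to the $\pm$ states. The correct tool is the FKG inequality (which holds for GS-class single-site measures), giving $\langle f\rangle^{-}_{V_n}\le\langle f\rangle^{(b.c.)}_{V_n}\le\langle f\rangle^{+}_{V_n}$ for increasing $f$, whence uniqueness and convergence once $\langle\cdot\rangle^{+}=\langle\cdot\rangle^{-}$. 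This is precisely the second ingredient the paper names, and substituting it for Griffiths~II in your parts (1) and (2) repairs the argument; the rest of your outline, including performing the uniform estimates on the Ising constituents of the decorated graph and passing to the $N\to\infty$ limit via \eqref{sub_gauss}, is sound.
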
 

The statement follows by standard arguments that we omit here.  The main  ingredients are the exponential decay of correlations, which at any $\beta<\beta_c(\rho)$ are exponentially bounded, uniformly in the volume,    and the FKG inequality.   The first two points hold also for 
$\beta=\beta_c(\rho)$ \cite{AizDumSid15}.  However not the last, \eqref{eq:FTconv},   since at the critical temperature the correlation function is not summable.   

We shall employ the freedom which Proposition~\ref{prop:conv} provides in establishing the different monotonicity properties of $S(p)$ in $p$.  

 Furthermore, for the {\em  disordered regime}, where $M(\rho,\beta)=0$, the sum rule combined with the Infrared Bound  implies that for every $\beta<\beta_c(\rho)$,
\be\label{eq:bound tau square}
  \qquad \langle\tau_0^2\rangle_\beta=\int_{[-\pi,\pi]^d} \widehat S_{\rho,\beta}(p)dp\ \leq\  \int_{[-\pi,\pi]^d} \frac{dp}{2|J|\beta \,\calE(p)} \ .
\ee
Since $\calE(p)$ vanishes only at $p=0$ and there at the rate $\calE(p)\sim |p|^2$,  the integral is convergent for $d>2$ and one gets  
 \be \label{sigma^2bound}
\langle\tau_0^2\rangle_{\rho,\beta_c(\rho)} \  \leq\    \frac{C_d}{2 |J|\beta_c(\rho)} 
\ee
with $C_d<\infty$ for $d>2$.   
This bound will be used in  Section~\ref{sec:6}.\\

\subsection{The spectral representation and a sliding-scale Infrared Bound}\label{sec:3.2}

We  next present a Fourier transform counterpart (though one derived by different means) of the Messager-Miracle-Sole monotonicity stated in Section~\ref{sec:3.1}, and use it for a sliding-scale extension of the  Infrared Bound \eqref{eq_IR}.  The results, which include both old~\cite{GliJaf73, Sok82} and new observations, are based on the relation of the two-point function with the transfer matrix, and the positivity of the latter.

The transfer matrix has been the source of many insights on the structure of statistical mechanical systems with finite range interactions.  Its appearance can be seen in  Ising's study of one dimensional systems, for which it permits a simple proof of the absence of phase transition.   Also, in higher dimensions it has played an essential role in many important developments \cite{FroSimSpe76,GliJaf73,SchMatLie64}, some of which rely on positivity properties.   Here we  shall use the following consequences of its spectral representation for the two-point function.

\begin{proposition}[Spectral Representation]\label{prop:spec_rep}  In the n.n.f.~model on $\Z^d$ ($d\ge1$),  at $\beta<\beta_c(\rho)$, for every square summable function $v:\Z^{d-1}\rightarrow\mathbb C$, there exists a positive measure $\mu_{v,\beta}$ of  finite mass
  \be
 \int_{1/\xi(\rho,\beta)}^\infty  d\mu_{v,\beta} ( a) =  
  \sum_{x_\perp,y_\perp\in \Z^{d-1}}v_{x_\perp}\overline{v_{y_\perp}}S_{\rho,\beta}((0,y_\perp-x_\perp))
 \ee  
such that for every $n\in \Z$ 
 \begin{equation} \label{eq:spec_rep}
\sum_{x_\perp,y_\perp\in \Z^{d-1}}v_{x_\perp}\overline{v_{y_\perp}}S_{\rho,\beta}((n,x_\perp-y_\perp))=   
 \int_{1/\xi(\rho,\beta)}^\infty e^{-a|n|} d\mu_{v,\beta} ( a) \, .
 \end{equation}
And for every $p_\perp \in[-\pi,\pi]^{d-1}$  there exists a positive measure  $\mu_{p_\perp,\beta}$ of  finite mass such that for every $p_1 \in[-\pi,\pi]$
          \be \label{eq:repr}
        \widehat S_{\rho,\beta}(p) =\int_0^\infty \frac{ e^a - e^{-a}}
         {\mathcal E_1(p_1)  + ( e^{a/2} - e^{-a/2})^2}d\mu_{p_\perp,\beta}(a),      \ee
with $\mathcal E_1(k) := 2[1- \cos(k)]= 4   \sin^2(k/2).$
\end{proposition}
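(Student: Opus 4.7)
The plan is to build the transfer-matrix / Osterwalder--Schrader representation of the two-point function along the $\mathbf e_1$ direction and then apply the spectral theorem. Since the n.n.f.~interaction is reflection-positive across every hyperplane $\{x_1=k+\tfrac12\}$ \cite{FILS78,FroSimSpe76}, the induced sesquilinear form on local functions of the half-space $\{x_1\ge 1\}$ is positive semidefinite; quotienting by its null space and completing produces a Hilbert space $\mathcal H$, and the unit shift $x_1\mapsto x_1+1$ descends to a bounded self-adjoint positive contraction $T\colon\mathcal H\to\mathcal H$. Writing $T=e^{-H}$ with $H\ge 0$ self-adjoint on the closure of $\mathrm{ran}(T)$ is then legitimate.

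For $v\in\ell^2(\Z^{d-1})$, let $V\in\mathcal H$ denote the class of $\sum_{x_\perp}v_{x_\perp}\tau_{(1,x_\perp)}$, which lies in $\mathcal H$ thanks to the sub-gaussian growth condition on $\rho$. A routine reflection-positivity computation in the style of \cite{GliJaf73,Sok82} identifies, for every $n\in\Z$,
\[
\sum_{x_\perp,y_\perp\in\Z^{d-1}} v_{x_\perp}\overline{v_{y_\perp}}\,S_{\rho,\beta}((n,x_\perp-y_\perp)) \;=\; \langle V,T^{|n|}V\rangle_{\mathcal H}.
\]
The spectral theorem applied to $H$ then yields $\langle V,T^{|n|}V\rangle_{\mathcal H}=\int_0^\infty e^{-a|n|}\,d\mu_v(a)$, where $\mu_v$ is the spectral measure of the vector $V$ for $H$; setting $n=0$ matches the claimed mass formula via $\mu_v(\R_+)=\|V\|^2_{\mathcal H}$. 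To sharpen $\mathrm{supp}(\mu_v)\subseteq[1/\xi(\rho,\beta),\infty)$, observe that the $\tau\mapsto-\tau$ symmetry of the Griffiths--Simon a-priori measure places $V$ in the odd sector of $\mathcal H$, which excludes the vacuum; on that sector the bottom of $\sigma(H)$ equals the single-particle mass, and we claim this coincides with $1/\xi(\rho,\beta)$. Indeed, by MMS monotonicity (Proposition~\ref{prop:MMS}) axial decay is the slowest exponential decay in the system, and by definition of $\xi(\rho,\beta)$ this axial rate is exactly $1/\xi(\rho,\beta)$.

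For the Fourier statement, at $\beta<\beta_c(\rho)$ correlations decay exponentially uniformly in volume, so for each $p_\perp\in[-\pi,\pi]^{d-1}$ the partial transform $g_{p_\perp}(n):=\sum_{x_\perp}e^{ip_\perp\cdot x_\perp}S_{\rho,\beta}((n,x_\perp))$ converges absolutely. Apply the first part to the compactly supported truncations $v^{(R)}_{x_\perp}:=\mathbf 1_{|x_\perp|_\infty\le R}\,e^{ip_\perp\cdot x_\perp}$ and pass to the limit $R\to\infty$: the sequence of spectral measures is tight because the moments $\int e^{-a|n|}d\mu_{v^{(R)}}(a)$ converge to the finite limits $g_{p_\perp}(n)$ for every $n$, producing a positive limit measure $\mu_{p_\perp,\beta}$ with $g_{p_\perp}(n)=\int_0^\infty e^{-a|n|}d\mu_{p_\perp,\beta}(a)$. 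Summing over $n\in\Z$ against $e^{ip_1 n}$ and evaluating the elementary geometric series
\[
\sum_{n\in\Z}e^{ip_1 n-a|n|} \;=\; \frac{e^a-e^{-a}}{\mathcal E_1(p_1)+(e^{a/2}-e^{-a/2})^2},
\]
which follows from the identity $e^a+e^{-a}-2\cos p_1=\mathcal E_1(p_1)+(e^{a/2}-e^{-a/2})^2$, yields the claimed representation of $\widehat S_{\rho,\beta}(p)$.

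The main obstacle is the sharp identification $\inf\sigma(H\restriction\text{odd sector})=1/\xi(\rho,\beta)$; every other step is standard transfer-matrix / spectral-theorem bookkeeping. The closing of this identification relies on MMS monotonicity, which guarantees that no vector in $\mathcal H$ can decay under $T$ slower than the axial two-point function, the latter being of exponential rate exactly $1/\xi(\rho,\beta)$ by definition. A secondary, harmless technical issue is the $R\to\infty$ limit in the Fourier step, which is controlled by the uniform exponential decay of correlations at $\beta<\beta_c(\rho)$ together with Proposition~\ref{prop:conv}.
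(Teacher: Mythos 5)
Your proof is correct in substance and rests on the same mechanism as the paper's --- positivity of a transfer operator combined with the spectral theorem --- but the implementation is genuinely different. The paper works on finite transverse tori $(\Z/\ell\Z)^{d-1}$, where the transfer matrix is an explicit Hilbert--Schmidt kernel; compactness, reflection positivity and the Krein--Rutman theorem give a non-degenerate top eigenvalue, the trace formula isolates it as $m\to\infty$, and $\mu_{v,\beta}$ arises as a weak limit ($\ell\to\infty$) of discrete spectral measures via the moment criterion on $[0,1]$. You instead perform the Osterwalder--Schrader construction directly in infinite volume and read off $\mu_{v,\beta}$ as the spectral measure of the vector $V$ for the shift contraction $T=e^{-H}$. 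This avoids the compactness/Perron--Frobenius discussion and the double limit, at the price of the infinite-volume functional-analytic set-up: note that positivity of $T$ (as opposed to mere self-adjointness) requires reflection positivity through \emph{vertex} hyperplanes as well as the mid-edge ones you invoke, and that placing $V$ on a plane fixed by a vertex reflection is what makes $\langle V,T^{|n|}V\rangle$ equal to the sum involving $S_{\rho,\beta}((n,\cdot))$ rather than $S_{\rho,\beta}((n+1,\cdot))$. For the support statement your MMS argument is the right one and can be made precise without any appeal to odd sectors or single-particle masses: for finitely supported $v$, \eqref{eq:b55} gives $|\langle V,T^nV\rangle|\le\|v\|_1^2\,S_{\rho,\beta}(n{\bf e}_1)$, so $\inf\mathrm{supp}\,\mu_{v}\ge 1/\xi(\rho,\beta)$ directly from the definition of the correlation length; the general $\ell^2$ case then follows by approximating $V$ in $\mathcal H$ by finitely supported vectors and using continuity of the spectral projections along norm-convergent sequences.

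One concrete slip in the Fourier step: your truncations $v^{(R)}$ must be normalized by $|\Lambda_R^{(d-1)}|^{-1/2}$, as in the paper's choice of $v_{p_\perp}$. Without this factor the total masses $\int d\mu_{v^{(R)}}=\sum_{x_\perp,y_\perp\in\Lambda_R^{(d-1)}}e^{ip_\perp\cdot(x_\perp-y_\perp)}S_{\rho,\beta}((0,x_\perp-y_\perp))$ grow like $R^{d-1}$, so the moments do \emph{not} converge to the finite limits $g_{p_\perp}(n)$ and there is no tightness. After normalization the moments do converge, by the uniform exponential decay of $S_{\rho,\beta}$ at $\beta<\beta_c(\rho)$, and the moment criterion in the compact variable $\lambda=e^{-a}\in[0,1]$ produces the positive limit measure; the concluding resummation over $n$ and the geometric-series identity are correct.
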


Although the spectral representation is quite well-known (cf.~\cite{GliJaf73} and references therein)  for completeness of the presentation we include the derivation of \eqref{eq:spec_rep} in the Appendix.      
Equation \eqref{eq:repr} then follows by applying \eqref{eq:spec_rep} to   the function 
$$ v_{p_\perp}(x_\perp):=  
\frac{1}{\sqrt{|\Lambda^{(d-1)}_\ell|}} e^{ip_\perp \cdot x_\perp} \mathbb I\left[ x\in \Lambda_\ell^{(d-1)}\right]$$
and taking the limit $\ell\to \infty$.  Here  $\Lambda^{(d-1)}_\ell$ is the $d-1$ dimensional version of the box $\Lambda_L$ and 
$\mathbb I [\cdot]$ is the indicator function.   The convergence is facilitated by the exponential decay of correlations at  $\beta <\beta_c$. 

Of particular interest for us are the following implications of \eqref{eq:repr} (the  first  was noted and applied  in \cite{GliJaf73}). 

 \begin{proposition}\label{cor:mono}
For {a n.n.f.~model} on $\Z^d$ ($d\ge1$), at any $\beta < \beta_c(\rho)$: 
 \begin{enumerate}[noitemsep]
\item  $\widehat S_{\rho,\beta}(p_1,p_2,\dots,p_d)$ is monotone decreasing in each $|p_j|$, over $[-\pi,\pi]$,  
\item $\calE_1(p_1)\widehat S_{\rho,\beta}(p)$ and $|p_1|^2\widehat S_{\rho,\beta}(p_1)$ 
are monotone increasing in  $|p_1|$, 
\item   the function 
\be \label{eq:evenFT}  
\widehat S^{(mod)}_{\rho,\beta}(p) := \widehat S_{\rho,\beta}(p)+\widehat S_{\rho,\beta}(p+\pi(1,1,0,\dots,0))
\ee  is monotone decreasing in $\delta$ along the line of constant $ \{p_3,\dots,p_d\} $  and  
 \be   
 (p_1,p_2) \ =\    (|p_1-p_2|+\delta,|p_1-p_2|-\delta)  \,,  \quad \delta \in [0, |{p_1-p_2}|] \,. 
  \ee    
\end{enumerate}
and the above remains  true under any permutation of the indices. 
 \end{proposition}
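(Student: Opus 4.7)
The three assertions all flow from the spectral representation \eqref{eq:repr} of Proposition~\ref{prop:spec_rep}, combined with the invariance of the n.n.f.\ Hamiltonian under lattice symmetries.

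\textbf{Part 1.} The integrand in \eqref{eq:repr} is of the form $(e^a - e^{-a})/[\mathcal E_1(p_1) + (e^{a/2}-e^{-a/2})^2]$, which is positive and (at fixed $a$) monotone decreasing in $\mathcal E_1(p_1)$. Since $\mathcal E_1(p_1) = 4\sin^2(p_1/2)$ is monotone increasing in $|p_1|$ on $[-\pi,\pi]$, the integrand is monotone decreasing in $|p_1|$, and integration against the positive measure $d\mu_{p_\perp,\beta}$ preserves this monotonicity. By the invariance of the Hamiltonian under permutations of the coordinate axes, the spectral representation \eqref{eq:repr} also holds with any index $j\in\{1,\dots,d\}$ in place of the first (with a different spectral measure indexed by the remaining coordinates), yielding monotonicity in each $|p_j|$.

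\textbf{Part 2.} Multiplying \eqref{eq:repr} by $\mathcal E_1(p_1)$ gives
\[
\mathcal E_1(p_1)\widehat S_{\rho,\beta}(p) = \int_0^\infty \frac{\mathcal E_1(p_1)(e^a-e^{-a})}{\mathcal E_1(p_1) + (e^{a/2}-e^{-a/2})^2}\, d\mu_{p_\perp,\beta}(a),
\]
whose integrand, for each fixed $a$, has the form $x\mapsto Cx/(x+D)$ with $C,D>0$, hence is monotone increasing in $x=\mathcal E_1(p_1)$ and thus in $|p_1|$. For the $|p_1|^2$ variant, write
\[
|p_1|^2\,\widehat S_{\rho,\beta}(p) = \frac{|p_1|^2}{\mathcal E_1(p_1)} \cdot \mathcal E_1(p_1)\widehat S_{\rho,\beta}(p).
\]
The first factor equals $(|p_1|/(2\sin(|p_1|/2)))^2$, which is monotone increasing on $(0,\pi]$ since $g(p):=2\sin(p/2) - p\cos(p/2)$ satisfies $g(0)=0$ and $g'(p)=(p/2)\sin(p/2)>0$ there; the second is monotone increasing by what we just showed. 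The product of two positive non-decreasing functions is non-decreasing.

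\textbf{Part 3.} The Fourier shift by $\pi(1,1,0,\dots,0)$ corresponds under duality to multiplication by $(-1)^{x_1+x_2}$, so
\[
\widehat S^{(mod)}_{\rho,\beta}(p) = 2\sum_{x\in \Z^d:\,x_1+x_2\text{ even}} e^{ip\cdot x}\, S_{\rho,\beta}(x),
\]
i.e.\ twice the Fourier transform of $S_{\rho,\beta}$ restricted to the even sublattice. The change of basis $u=(x_1+x_2)/2,\;v=(x_1-x_2)/2$ identifies this sublattice with $\Z^d$, with dual variables $(\tilde p_u,\tilde p_v,p_3,\dots,p_d)=(p_1+p_2,\,p_1-p_2,\,p_3,\dots,p_d)$. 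Reflection positivity across the diagonal hyperplanes $\{x_1-x_2=t\}$ (the ``diagonal'' reflections listed before Proposition~\ref{prop:MMS}) yields a positive transfer matrix advancing $v$ by one step. Repeating the construction that led to \eqref{eq:repr}, with $v$ in the role previously played by the first coordinate, produces
\[
\widehat S^{(mod)}_{\rho,\beta}(p) = \int_0^\infty \frac{e^a-e^{-a}}{\mathcal E_1(p_1-p_2) + (e^{a/2}-e^{-a/2})^2}\, d\tilde\mu_{\tilde p_\perp,\beta}(a),
\]
with $\tilde p_\perp = (p_1+p_2,\,p_3,\dots,p_d)$. Along the parametrized line these coordinates remain fixed while $p_1-p_2=2\delta$ varies, so the monotonicity in $\delta$ follows exactly as in Part~1. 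The assertion under arbitrary permutations of indices follows from the corresponding lattice symmetry.

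The main obstacle is the rigorous execution of Part~3: one must set up the diagonal transfer matrix on the half-spaces separated by $\{x_1=x_2\}$, verify its positivity via the MMS-type diagonal reflection positivity, check that the spectrum is supported in $[1/\xi(\rho,\beta),\infty)$ as in Proposition~\ref{prop:spec_rep}, and identify the resulting Fourier kernel in the closed form above with $\mathcal E_1(p_1-p_2)$ in the denominator (rather than $\mathcal E_1((p_1-p_2)/2)$, since $v$ ranges over $\Z$ as $x$ varies over the sublattice). The remaining monotonicity arguments in Parts~1 and 2 reduce to elementary one-variable calculus.
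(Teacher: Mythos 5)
Your proposal is correct and takes essentially the same route as the paper: parts 1 and 2 follow from the spectral representation \eqref{eq:repr} together with the elementary monotonicity of $k\mapsto\calE_1(k)$, $k\mapsto \calE_1(k)/(\calE_1(k)+c)$ and $k\mapsto k^2/\calE_1(k)$ (which the paper merely lists and you verify explicitly), and part 3 is handled exactly as in the paper by passing to the even sublattice and the positive diagonal transfer matrix $T^*T$ of Fig.~\ref{fig:rotation}, identifying $\widehat S^{(mod)}_{\rho,\beta}$ with twice the sublattice Fourier transform.
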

 
 The correction in \eqref{eq:evenFT} is insignificant in the regime where  $\widehat S_{\rho,\beta}(p) $ is large.  That is so since  $|\widehat S_{\rho,\beta}(p+\pi(1,1,0,\dots,0))| \leq  C/\beta$  
uniformly for  $|p| \leq \pi/2$. (The main term diverges in the limit  $\beta \nearrow \beta_c(\rho)$ and $p\to 0$.)
 
  \begin{proof}
The first two statements are implied by the combination of   \eqref{eq:repr} with the observation that each of the following functions is monotone in  $k\in[0,\pi]$:
$k\mapsto \calE_1(k)$, 
$k\mapsto k^2/\calE_1(k)$, and for each $a\ge0$, $k\mapsto \calE_1(k)/(\mathcal E_1(k)  +   ( e^{a/2} - e^{-a/2})^2)$.

The third statement is based on the application of the transfer matrix in the diagonal direction  (cf. Fig.~\ref{fig:rotation}). More explicitly, to produce the spectral representation one may start by considering a partially rotated rectangular region, whose main axes are associated with the coordinate system  $(x_1+x_2, x_1-x_2, x_3,\dots, x_d)$.   The finite-volume Hamiltonian is taken with the correspondingly modified periodic boundary conditions which produce 
cyclicity in these  directions.   As stated in \eqref{eq:FTconv}, for $\beta <\beta_c(\rho)$ the change does not affect the two-point function's infinite volume limit.

In this case, there are two transfer matrices $T$ and $T^*$ corresponding to adding one layer of even (resp.~odd) vertices, i.e.~vertices with $x_1+x_2$ even (resp.~odd). 
The argument by which monotonicity was proven above for the Cartesian directions
 applies to the two-point function's restriction to the sub-lattice of even vertices since the proof would involve the  matrix  $T T^*$, which is positive.  
 
 Then, if  $ \widehat S^{(mod)}_{\rho,\beta}$ is given by \eqref{eq:evenFT}, one finds     
\begin{eqnarray} 
\widehat S^{(mod)}_{\rho,\beta}(p)  &=&  \sum_{x\in \Lambda_L} e^{i p\cdot x}   S_{\rho,\beta}(0,x) \sum_{k=0,1} e^{i\pi(x_1+x_2) k}= 2 \sum_{\substack {x\in \Lambda_L\\    x_1+x_2   \text{ even}}}e^{ip\cdot x}   S_{\rho,\beta}(0,x)  .
\end{eqnarray} 
Thus, the third monotonicity statement follows by a direct adaptation of the proof of the first one. 
\end{proof}
\begin{figure}
\begin{center}
\includegraphics[width = 0.40\textwidth]{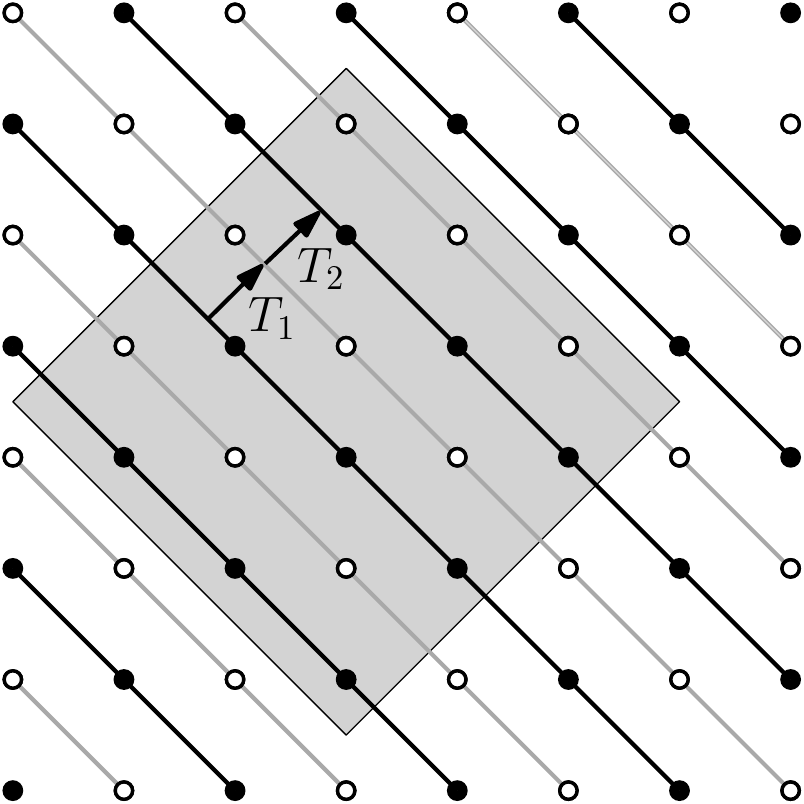}
\caption{The split of $Z^2$  into even and odd sub-lattices and their stratification into intertwined diagonal hyperplanes.  The partition function of $(-L,L]^2$ with rotated-periodic boundary condition can be written as $Z_{2L}= {\rm tr} (T_2 T_1)^L$, with 
$T_j$ a pair of  conjugate mapping between the Hilbert spaces of the even and the odd hyperplanes.   The product  $T_2 T_1=T^*_1 T_1$  provides the even subgraph's transfer matrix  in one of this graph's principal directions. }
\label{fig:rotation}
\end{center}
\end{figure}

\begin{corollary} \label{cor:monotonicity FT}
For {a n.n.f.~model} on $\Z^d$ ($d\ge1$) at any $\beta<\beta_c(\rho)$, the two-point function satisfies, for all $p \in [-\pi/2, \pi/2]^d$,  
\be \label{eq:mono1}
\widehat S_{\rho,\beta}(   \|p\|_\infty, 0_\perp)    \  \geq  \ \widehat S_{\rho,\beta}(p)\  \geq \ \widehat S_{\rho,\beta}( \|p\|_1, 0_\perp)   - \frac{C}{\beta},
\ee
with $C$ depending on the dimension only.
\end{corollary}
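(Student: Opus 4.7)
The upper bound is a direct consequence of item (i) of Proposition~\ref{cor:mono}. Invoking the invariance of $\widehat S_{\rho,\beta}$ under permutations of coordinates and under reflections $p_j \mapsto -p_j$ (both induced by lattice symmetries), one may assume without loss of generality that $|p_1|=\|p\|_\infty$. Then the monotonicity of $\widehat S_{\rho,\beta}(p)$ in each $|p_j|$ allows one to set $p_2=\dots=p_d=0$ while only increasing the value, giving $\widehat S_{\rho,\beta}(p)\le \widehat S_{\rho,\beta}(\|p\|_\infty,0_\perp)$.

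For the lower bound, assume (by the same symmetries) that $p_1\ge p_2\ge\dots\ge p_d\ge 0$, and first focus on the main case $\|p\|_1\le \pi/2$, where every partial sum $s_k:=p_1+\dots+p_k$ lies in $[0,\pi/2]$. Define a sequence of configurations
\[
p^{(0)}=p,\qquad p^{(k)}=(s_{k+1},0,\dots,0,p_{k+2},\dots,p_d)\quad(1\le k\le d-1),
\]
so that $p^{(d-1)}=(\|p\|_1,0_\perp)$. For each $k$, the vectors $p^{(k-1)}$ and $p^{(k)}$ agree outside the $(1,k+1)$-plane, and within that plane both have the same sum $s_{k+1}$ of the two coordinates, while $p^{(k)}$ is the extreme endpoint (one of the two entries is zero). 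The version of item (iii) of Proposition~\ref{cor:mono} for the $(1,k+1)$ plane (obtained by permutation symmetry) therefore gives
\[
\widehat S_{\rho,\beta}(p^{(k-1)})+\widehat S_{\rho,\beta}(p^{(k-1)}+\pi e_1+\pi e_{k+1}) \ \ge \ \widehat S_{\rho,\beta}(p^{(k)})+\widehat S_{\rho,\beta}(p^{(k)}+\pi e_1+\pi e_{k+1}).
\]
Dropping the second, non-negative, term on the right yields
\[
\widehat S_{\rho,\beta}(p^{(k-1)})\ \ge\ \widehat S_{\rho,\beta}(p^{(k)})-\widehat S_{\rho,\beta}(p^{(k-1)}+\pi e_1+\pi e_{k+1}).
\]
Since $s_k\le \pi/2$ and $p_{k+1}\le \pi/2$, the shifted point satisfies $\mathcal E_1(s_k+\pi)=4\cos^2(s_k/2)\ge 2$ and likewise $\mathcal E_1(p_{k+1}+\pi)\ge 2$, so $\mathcal E(p^{(k-1)}+\pi e_1+\pi e_{k+1})\ge 4$. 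The Infrared Bound \eqref{eq_IR} then bounds the correction by $1/(8|J|\beta)$. Summing the $d-1$ telescoping inequalities produces $\widehat S_{\rho,\beta}(p)\ge \widehat S_{\rho,\beta}(\|p\|_1,0_\perp)-(d-1)/(8|J|\beta)$, as required.

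The complementary range $\|p\|_1>\pi/2$ is handled directly: in the regime where $\widehat S_{\rho,\beta}(\|p\|_1,0_\perp)$ is not already smaller than $C/\beta$ via the Infrared Bound, the result is vacuous since $\widehat S_{\rho,\beta}(p)\ge 0$. The main obstacle in the argument above is the correction term $\widehat S_{\rho,\beta}(p^{(k-1)}+\pi e_1+\pi e_{k+1})$ produced by the ``modified'' function of item (iii): its control demands that the shifted coordinates remain bounded away from the singularity $p=0$, and it is precisely this constraint that forces the case split at $\|p\|_1\sim\pi/2$ and sets the $1/\beta$ scale of the error.
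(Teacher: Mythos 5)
Your argument is correct and follows essentially the same route as the paper's (one-line) proof: the upper bound from the coordinate-wise monotonicity of Proposition~\ref{cor:mono}(1), and the lower bound by telescoping the diagonal monotonicity of $\widehat S^{(mod)}$ along the same lines used for \eqref{eq:b55}, with each shifted correction term controlled by the Infrared Bound since its two shifted coordinates keep $\mathcal E$ bounded below. Your explicit case split on $\|p\|_1$ makes precise a point the paper leaves implicit (the partial sums must stay away from $\pi$ for the corrections to be $O(1/\beta)$), and disposing of the complementary range via $\widehat S_{\rho,\beta}\ge 0$ together with the Infrared Bound is fine.
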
 

The restriction to $p\in[-\pi/2,\pi/2]^d$ guarantees that the second term of \eqref{eq:evenFT} can be bounded by $C/\beta$ (this bounds corresponds to the $-C/\beta$ term on the right-hand side of \eqref{eq:mono1}), as explained below Proposition~\ref{cor:mono}.
\begin{proof} 
The inequality follows from  Proposition~\ref{cor:mono} through the monotonicity lines used for \eqref{eq:b55}.  
\end{proof} 
The previous bound combined with the second statement in  
Proposition~\ref{cor:mono} yield an interesting consequence for  the  behaviour of the {\em susceptibility} truncated at a distance $L$, which we define as 
 \be
 \chi_L(\rho,\beta):=\sum_{x\in\Lambda_L}S_{\rho,\beta}(x).
 \ee
\begin{theorem}[Sliding-scale Infrared Bound]\label{prop_mon_chi}
There exists a constant $C=C(d)>0$ such that for {every n.n.f.~model} on $\Z^d$ ($d>2$), every $\beta\le \beta_c(\rho)$ and $L\ge \ell\ge 1$,
\be\label{eq:multi chi}
\frac{\chi_L(\rho,\beta)}{L^2}\le \frac{C}{\beta}\frac{\chi_\ell(\rho,\beta)}{\ell^2}.
\ee
\end{theorem}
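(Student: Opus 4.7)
My plan is a Fourier-analytic two-scale comparison between $\chi_L$ and $\chi_\ell$, combining Parseval, the monotonicity of $\calE_1(p_j)\widehat S(p)$ in $|p_j|$ from Proposition~\ref{cor:mono}(ii), and the standard infrared bound as a backstop.

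First, I would relate $\chi_n$ to a Fourier average concentrated on modes of scale $1/n$. Using the non-negative tent test function $g_n(x):=\prod_{j=1}^d (1-|x_j|/n)_+$, whose Fourier transform is a product of one-dimensional Fej\'er kernels and hence non-negative, Parseval gives
\[
2^{-d}\chi_{n/2}\ \le\ \sum_x g_n(x)S(x)\ =\ \int_{[-\pi,\pi]^d}\widehat g_n(p)\widehat S(p)\,\frac{dp}{(2\pi)^d}\ \le\ \chi_n .
\]
Under the scaling change of variables $q=np$, this integral becomes $\int F(q)\widehat S(q/n)\,dq/(2\pi)^d$ for a fixed non-negative profile $F$ independent of $n$. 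Thus $\chi_n/n^2$ is sandwiched, up to dimensional constants, between Fourier averages of $\widehat S$ at scale $1/n$, and the target inequality is reduced to comparing these Fourier averages at the two scales.

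Second, for each fixed $q$ I would compare $\widehat S(q/L)$ with $\widehat S(q/\ell)$ by using Proposition~\ref{cor:mono}(ii) in the coordinate $j_*$ realizing $|q|_\infty$: with $p_{j_*}=q_{j_*}/L\le q_{j_*}/\ell$ at fixed perpendicular coordinates,
\[
\widehat S(q/L)\ \le\ \frac{\calE_1(q_{j_*}/\ell)}{\calE_1(q_{j_*}/L)}\,\widehat S\bigl(q_{j_*}/\ell,\,q_{-j_*}/L\bigr)\ \le\ C\,(L/\ell)^2\,\widehat S\bigl(q_{j_*}/\ell,\,q_{-j_*}/L\bigr) .
\]
The discrepancy in the remaining $d-1$ coordinates would then be absorbed using Corollary~\ref{cor:monotonicity FT} (the MMS-type bound $\widehat S(p)\le\widehat S(\|p\|_\infty,0_\perp)$), which allows me to pass the comparison to the single axis direction; on the complementary high-momentum region $|p|_\infty>\pi/\ell$ the standard infrared bound $\widehat S(p)\le 1/(2\beta|J|\calE(p))$ is applied directly, and it is this tail that contributes the factor $1/\beta$ in the final estimate.

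Combining these two steps yields $\chi_L/L^2\le (C/\beta)\,\chi_\ell/\ell^2$, with $C$ depending only on $d$.

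\textit{Main obstacle.} The principal difficulty is dimensional. A naive application of Proposition~\ref{cor:mono}(ii) coordinate by coordinate would produce the far-too-weak factor $(L/\ell)^{2d}$ instead of the desired $(L/\ell)^2$. The correct strategy is to apply the 1D monotonicity only in the single coordinate realizing $|q|_\infty$, and to use Corollary~\ref{cor:monotonicity FT} (which is sharp for the axis direction) and the infrared bound for the other directions; the delicate bookkeeping lies in splitting the momentum integral at the scale $\pi/\ell$ and correctly apportioning contributions so that the correct factor $(L/\ell)^2$ emerges on the low-momentum region while the $1/\beta$ from the infrared bound controls the high-momentum tail.
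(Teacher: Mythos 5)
Your proposal follows essentially the same route as the paper's proof: a smeared Parseval identity reducing $\chi_n/n^2$ to a Fourier average of $\widehat S$ at momentum scale $1/n$, the single-coordinate monotonicity of $\calE_1(p_1)\widehat S(p)$ from Proposition~\ref{cor:mono} to gain exactly the factor $(L/\ell)^2$, Corollary~\ref{cor:monotonicity FT} to pass to the axis direction, and the infrared bound to supply the $C/\beta$ --- and you correctly identify the dimensional pitfall ($(L/\ell)^{2d}$ versus $(L/\ell)^2$) that the argument must avoid. The only material difference is that the paper smears with a Gaussian rather than a Fej\'er kernel, which makes the high-momentum tail exponentially small in $\ell$ rather than polynomially small, so the bookkeeping you defer in your final splitting is somewhat more delicate with your choice of kernel (though fixable, e.g.\ by treating $\ell$ comparable to $L$ separately via \eqref{eq:comparison chi}).
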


The case $\ell=1$ is in essence similar to the Infrared Bound~\eqref{eq_IR}, as is explained below, so that \eqref{eq:multi chi} may be viewed as a \emph{sliding-scale}  version of this inequality.   One may also note that \eqref{eq:multi chi}  is a sharp improvement (replacing the exponent $d$ by $2$) on the more naive application of the Messager-Miracle-Sole inequality  giving that for every $L\ge \ell\ge1$,
\be\label{eq:comparison chi}
\frac{\chi_L(\rho,\beta)}{L^d}\le \frac{\chi_\ell(\rho,\beta)}{\ell^d}.
\ee

\begin{proof}   Let us first note that it suffices to prove the claim for all $\beta<\beta_c(\rho)$, with a uniform constant $C$.   Its extension to the critical point can be deduced from the continuity 
\be
S_{\rho,\beta}(x) \ = \ \lim_{\beta\to \beta_c(\rho)} S_{\beta_c(\rho)} (x) 
\ee 
 (which follows from the main result of \cite{AizDumSid15}).    This observation allows us to apply the monotonicity results discussed above. 
  
Below, the constants $C_i$ are to be understood as dependent on $d$ only. Consider the smeared version of $\chi_L(\rho,\beta)$ defined by
\begin{equation}
\widetilde\chi_L(\rho,\beta):=\sum_{x\in \bbZ^d}e^{-(\|x\|_2/L)^2}S_{\rho,\beta}(x).
\end{equation}
with $\|p\|_2^2:=\sum_{i=1}^dp_i^2$.
The MMS monotonicity statement~\eqref{eq:MMS3}  implies that 
\be
e^{-d}\chi_L(\rho,\beta)\le \widetilde\chi_L(\rho,\beta)\le C_1\chi_L(\rho,\beta)
\ee for every $L$, 
so that it suffices to prove that for every $L\ge \ell\ge1$,
\be\label{eq:okk}
\frac{\widetilde\chi_L(\rho,\beta)}{L^2}\le C_2\frac{\widetilde\chi_\ell(\rho,\beta)}{\ell^2}.
\ee
We will work in Fourier space, and use the identity
\be
\widetilde\chi_L(\rho,\beta)~{\asymp}~  
L^d \int_{[-\pi,\pi]^d} e^{-\|p\|_2^2L^2} \, \widehat S_{\rho,\beta}(p) dp \,,
\ee
where $f\asymp g$ means $cg\le f\le Cg$ with $c,C$ independent of everything else (we use that the Fourier transform of the Gaussian on the lattice is a Jacobi theta-function within multiplicative constants of $e^{-\|p\|_2^2L^2}$ on $[-\pi,\pi]^d$).

Now, let 
\be
A:=\{p\in [-\tfrac{\pi\ell}L,\tfrac{\pi\ell}L]^d:\,|p_1|=\|p\|_\infty\}.
\ee Using the symmetries of $\widehat S_{\rho,\beta}$ and the decay of Corollary~\ref{cor:monotonicity FT}, we find that 
\be \label{eq:okkk}
\int_{[-\pi,\pi]^d} e^{-\|p\|_2^2L^2} \,\widehat S_{\rho,\beta}(p) dp \le (d+C_3e^{-\ell^2})\,\int_{A} e^{-\|p\|_2^2L^2} \, \widehat S_{\rho,\beta}(p) dp.
\ee
Since $|p_1|=\|p\|_\infty$ for $p\in A$ and $\|p\|_\infty\ge \|p\|_1/d$, the second property of Proposition~\ref{cor:mono} and Corollary~\ref{cor:monotonicity FT} give that 
\be
\widehat S_{\rho,\beta}(p)\le \widehat S_{\rho,\beta}( \|p\|_\infty, 0_\perp)\le (d\tfrac {L}\ell)^2\widehat S_{\rho,\beta}(\tfrac L\ell \|p\|_1, 0_\perp)\le  (d\tfrac {L}\ell)^2(\widehat S_{\rho,\beta}(\tfrac L\ell p)+C/\beta).
\ee
Using this inequality and making the change of variable $p\mapsto q=\tfrac L\ell p$ gives
\begin{align}
\int_{A} \exp[-\|p\|_2^2L^2] \widehat S_{\rho,\beta}(p) dp&\le C_4\,
(\tfrac \ell L)^{d-2} \Big(\int_{[-\pi,\pi]^d} \exp[-\|q\|_2^2\ell^2]\widehat S_{\rho,\beta}(q) dq+C_5/\beta\Big),
 \end{align}
which after plugging in \eqref{eq:okkk} and taking the Fourier transform implies that 
\be
\widetilde\chi_L(\rho,\beta)\le C_6(\tfrac \ell L)^{d-2}(\widetilde \chi_\ell(\rho,\beta)+C_5/\beta).
\ee
The inequality \eqref{eq:okk} follows from the fact that $\widetilde\chi_\ell(\rho,\beta)\ge 1$, so that the constant $C_5/\beta$ can be removed by changing $C_6$ into a larger constant $C_7/\beta$. 
\end{proof}

Inequality~\eqref{eq:MMS3}  and then the sliding-scale Infrared Bound with $L=|x|$ and $\ell=1$ \eqref{eq:multi chi} implies that for every $x\in\Z^d$,
\begin{equation}\label{eq:IB GS}
S_{\rho,\beta}(x)\ \le \  \frac{C_1}{|x|^{d}}\sum_{y\in {\rm Ann}(d|x|,2d|x|)}S_{\rho,\beta}(y)\ \le\  \frac{C_1}{|x|^{d}}\chi_{2d|x|}(\rho,\beta)\ \le\   \frac{C_2\langle\tau_0^2\rangle_{\rho,\beta} }{|x|^{d-2}}.
\end{equation}
The factor $\langle\tau_0^2\rangle_\beta$ in the upper bound may seem pointless for the Ising model where it is simply equal to 1, but it  becomes very important when studying unbounded spins, as  in Section~\ref{sec:6}, where it is essential for a dimensionless improved tree diagram bound.

It may be noted that the combination of   \eqref{eq:IB GS} with \eqref{eq:bound tau square}  leads to the more standard formulation \cite{FILS78,FroSimSpe76} of the Infrared Bound in $x$-space:
\be \label{eq:IB} 
S_{\rho,\beta}(x) \le
\frac{C}{\beta |J|\,|x|^{d-2}}.
\ee   


\subsection{A lower bound} 

The above upper bound will next be supplemented by a  power-law lower bound on the two point function at $\beta_c$.   Conceptually, it originates in the observation that it the correlations  drop on some scale by a fast enough power law then on larger scales they decay exponentially fast.    An  early version of this principle can be found in Hammersley's analysis of percolation~\cite{Ham57}.   A general statement was presented in Dobrushin's analysis of the constructive criteria for the high temperature phase.   For Ising systems a simple version of such  statement can be deduced from the following observation.

\begin{lemma} \label{lem_SL}
For every ferromagnetic {model in the GS class} on $\Z^d$ ($d\ge1$) with coupling constants that are invariant under translations, every finite $0\in\Lambda \subset \Z^d$ and every $y\notin \Lambda$,
 \be\label{eq:genSL}
\ \qquad S_{\rho,\beta}(y)\le \sum_{\substack{u\in\Lambda\\ v\notin\Lambda}}S_{\rho,\beta}(u)\,\beta J_{u,v}\,S_{\rho,\beta}(y-v).
\ee
\end{lemma}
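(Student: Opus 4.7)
The plan is to deduce the inequality from the classical Simon--Lieb inequality for Ising systems, applied to the Griffiths--Simon decoration of $\rho$, followed by Griffiths' monotonicity inequality and a weak-limit argument.

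\medskip

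\noindent\emph{Step 1 (Ising Simon--Lieb).} The classical input is: for an Ising model with ferromagnetic couplings $(J^G_{u,v})$ on an arbitrary locally finite graph $G$, any finite set $\Lambda' \ni o$, and any $y \notin \Lambda'$,
\begin{equation*}
\langle \sigma_o \sigma_y \rangle_{G, \beta} \;\le\; \sum_{\substack{u \in \Lambda',\, v \notin \Lambda' \\ \{u,v\} \in E(G)}} \langle \sigma_o \sigma_u \rangle_{\Lambda', \beta}\,\beta J^G_{u, v}\,\langle \sigma_v \sigma_y \rangle_{G, \beta}.
\end{equation*}
This is the inequality of Simon~\cite{Sim80}, with Lieb's improvement $\tanh(\beta J)\le \beta J$. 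It admits a clean random-current proof: any current with $\partial \n = \{o,y\}$ must have its backbone exit $\Lambda'$ through some edge $\{u,v\}$; conditioning on the \emph{first} such exit edge and applying the switching lemma to decouple the interior and exterior portions yields the bound.

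\medskip

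\noindent\emph{Step 2 (Lifting through the GS decoration).} First take $\rho = \rho_N$ of type~(1) in Definition~\ref{def:rho}. As recalled in Section~\ref{sec:GS}, the GS system with a-priori distribution $\rho_N$ on $\Z^d$ is a marginal of an Ising model on the decorated graph $\widetilde{G} := \Z^d \times \mathcal{K}_N$ under $\tau_x = \alpha_N \sum_{n=1}^N \sigma_{x,n}$, with intra-block couplings dictated by $\rho_N$ and inter-block couplings scaled so that the Hamiltonian reproduces $-\sum_{x,y} J_{x,y}\,\tau_x\tau_y$ on the $\tau$-variables. Apply Step~1 on $\widetilde{G}$ with the lifted set $\widetilde{\Lambda} := \Lambda \times \{1, \ldots, N\}$ separately for each pair $((0,n),(y,m))$; multiply by $\alpha_N^2$ and sum over $n,m$. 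The intra-block permutation symmetry forces the double sums over block indices to factor and, with the correct powers of $\alpha_N$ and $N$ fixed by the GS construction, collapse into
\begin{equation*}
\alpha_N^2\!\!\sum_{n,n'}\!\langle \sigma_{0,n}\,\sigma_{u,n'}\rangle_{\widetilde{\Lambda}} \;=\; S^\Lambda_{\rho_N,\beta}(u), \qquad \alpha_N^2\!\!\sum_{m,m'}\!\langle \sigma_{v,m'}\,\sigma_{y,m}\rangle \;=\; S_{\rho_N,\beta}(y-v),
\end{equation*}
yielding
\begin{equation*}
S_{\rho_N,\beta}(y) \;\le\; \sum_{u \in \Lambda,\,v \notin \Lambda} S^\Lambda_{\rho_N,\beta}(u)\,\beta J_{u,v}\,S_{\rho_N,\beta}(y-v).
\end{equation*}

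\medskip

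\noindent\emph{Step 3 (Griffiths and the weak limit).} The Griffiths monotonicity inequality (valid for the GS class by construction) gives $S^\Lambda_{\rho_N,\beta}(u) \le S_{\rho_N,\beta}(u)$, removing the $\Lambda$-restriction on the right. For a general $\rho$ in the GS class, choose a sequence $\rho_N \Rightarrow \rho$ of type-(1) measures from Definition~\ref{def:rho}(2); the sub-gaussian growth~\eqref{sub_gauss} yields uniform integrability of $\tau_0 \tau_y$ under the corresponding Gibbs measures, so each correlation in the (finite) sum passes to the limit and one recovers \eqref{eq:genSL} for the original $\rho$.

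\medskip

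\noindent\emph{Main obstacle.} All substantive content lies in the classical Ising Simon--Lieb inequality of Step~1. The GS lift of Step~2 is essentially bookkeeping, but it relies crucially on the permutation symmetry of the constituent Ising spins within each block, which is what makes the double sums over block indices collapse cleanly into the single-site GS correlations $S^\Lambda(u)$ and $S(y-v)$ with all $\alpha_N$-factors cancelling.
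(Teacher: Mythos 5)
Your proposal is correct and matches the route the paper itself indicates (the paper only sketches it in a footnote: the inequality follows from a switching-lemma/backbone proof of the Simon--Lieb bound for Ising, and being spin-dimension balanced it extends to the GS class ``essentially by linearity, and then... by continuity''). Your Steps 1--3 fill in exactly that outline -- the Ising bound on the decorated graph $\Z^d\times\mathcal K_N$, the block sums factorizing by bilinearity, Griffiths' monotonicity to pass from $S^\Lambda$ to $S$, and the weak limit for type-(2) measures -- so there is nothing genuinely different to compare.
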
 

This statement is a mild extension of Simon's inequality which was originally formulated for the n.n.f.~Ising models~\cite{Sim80}.  Being spin-dimension balanced,  it is valid also for the Griffiths-Simon class of variables  and more general pair interactions\footnote{The factor $S_{\rho,\beta}(u)$ in \eqref{eq:genSL} can also be replaced  by the finite volume expectation $\langle \tau_0 \tau_u\rangle_\Lambda$, as in Lieb's improvement of Simon's inequality~\cite{Lie80}.     
Both versions have an easy proof through a simple application of the switching lemma, in its mildly improved form.}.

The MMS monotonicity allows us to extract the following point-wise implication, which will be used below 
  
\begin{corollary}[Lower bound on $S_{\rho,\beta}$]\label{cor:SL}
For {a n.n.f.~model in the GS class} on $\Z^d$ ($d\ge1$), there exists $c=c(d)>0$ such that for every $\beta\le\beta_c(\rho)$ and  $x\in\Z^d$,
\be\label{eq:Simon-Lieb}
S_{\rho,\beta}(x)\ge \frac{c}{\beta |J|\,\|x\|_\infty^{d-1}}\exp\big(-\frac{d\|x\|_\infty+1}{\xi(\rho,\beta)}\big).
\ee
\end{corollary}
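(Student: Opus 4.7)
The plan is to reduce the statement via the Messager--Miracle--Sol\'e inequalities to an axial lower bound of the form $a_n := S_{\rho,\beta}((n,0_\perp)) \ge c(\beta|J|\,n^{d-1})^{-1}\exp(-(n+1)/\xi(\rho,\beta))$, and then to derive that axial bound by iterating Simon's inequality (Lemma~\ref{lem_SL}) along the axis and matching the exponential rate it produces against the definition of the correlation length.

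For the reduction step, Proposition~\ref{prop:MMS}(ii) gives $S_{\rho,\beta}(x) \ge S_{\rho,\beta}((\|x\|_1,0_\perp)) = a_{\|x\|_1}$, and the axial monotonicity of Proposition~\ref{prop:MMS}(i), combined with $\|x\|_1 \le d\|x\|_\infty$, yields $a_{\|x\|_1} \ge a_{d\|x\|_\infty}$. Hence, once the axial bound above is established, the desired estimate follows by substituting $n=d\|x\|_\infty$ and absorbing the factor $d^{d-1}$ into the constant.

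For the axial bound itself, I would introduce the Simon quantity
\begin{equation*}
F_n := \sum_{\substack{u \in \Lambda_n,\, v\notin\Lambda_n\\ |u-v|=1}} \beta J_{u,v}\,S_{\rho,\beta}(u),
\end{equation*}
and obtain two-sided control on it. The upper bound is immediate: every $u$ in the sum satisfies $\|u\|_\infty = n$, so Proposition~\ref{prop:MMS}(ii) gives $S_{\rho,\beta}(u) \le a_n$; combined with $|\partial\Lambda_n| \le C\,n^{d-1}$ and the nearest-neighbour identity $|J|=2dJ$, this yields $F_n \le C\beta|J|\,n^{d-1}\,a_n$. The matching lower bound on $F_n$ is extracted from Lemma~\ref{lem_SL} applied to $\Lambda = \Lambda_n$ and $y = k(n+1)\mathbf{e}_1$, $k \ge 1$: every outer neighbour $v$ of $\Lambda_n$ satisfies $\|v\|_\infty = n+1$, so the first coordinate of $y-v$ is at least $(k-1)(n+1)$, hence $\|y-v\|_\infty \ge (k-1)(n+1)$, and by Proposition~\ref{prop:MMS} one gets $S_{\rho,\beta}(y-v) \le a_{(k-1)(n+1)}$ uniformly across the sum. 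Simon's inequality then collapses to $a_{k(n+1)} \le F_n \cdot a_{(k-1)(n+1)}$, and iterating in $k$ gives $a_{k(n+1)} \le F_n^k \,\langle \tau_0^2\rangle_{\rho,\beta}$. Taking $-\log$, dividing by $k(n+1)$ and letting $k\to\infty$, the left-hand side converges to $1/\xi(\rho,\beta)$ by the very definition of the correlation length, which forces $F_n \ge \exp(-(n+1)/\xi(\rho,\beta))$. Merging the two bounds on $F_n$ gives the required axial inequality for $a_n$; the critical case $\xi(\rho,\beta)=\infty$ is handled identically and reduces to $F_n \ge 1$.

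The main subtlety is the alignment in the iteration: the axial placement $y=k(n+1)\mathbf{e}_1$ is chosen precisely so that every outer neighbour of $\Lambda_n$ lies at $\ell_\infty$-distance at least $(k-1)(n+1)$ from $y$, which is what allows Proposition~\ref{prop:MMS} to dominate $S_{\rho,\beta}(y-v)$ uniformly by the single axial quantity $a_{(k-1)(n+1)}$. With this alignment in place, the rest of the proof is a clean comparison between a one-step Simon inequality and the defining exponential rate of $\xi$.
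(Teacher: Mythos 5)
Your proof is correct and follows essentially the same route as the paper's: iterate Simon's inequality (Lemma~\ref{lem_SL}) on the box $\Lambda_{d\|x\|_\infty}$, compare the per-step factor (your $F_n$, the paper's $Y_{\rho,\beta}(\Lambda_L)$) against the exponential rate defining $\xi(\rho,\beta)$, and bound that factor from above via the Messager--Miracle--Sol\'e monotonicity and $|\partial\Lambda_n|\le Cn^{d-1}$. Your explicit placement $y=k(n+1)\mathbf{e}_1$ and the termination of the iteration at $a_0=\langle\tau_0^2\rangle_{\rho,\beta}$ is a slightly more careful bookkeeping of the same argument.
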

\begin{proof}

Let us introduce
\be
Y_{\rho,\beta}(\Lambda):=\sum_{\substack{u\in \Lambda\\ v\notin \Lambda}}S_{\rho,\beta}(u)\,\beta J_{u,v}.
\ee
Set $L:=d\|x\|_\infty$. Applying  \eqref{eq:genSL} with $\Lambda_L$ and $y=n{\bf e}_1$, and iterating it $\lfloor \tfrac{n}{L+1}\rfloor$ times (i.e.~as many as possible without reducing the last factor to a distance shorter than $L$),   we get
\be
\beta |J| S_{\rho,\beta}(n{\bf e}_1)\le Y_{\rho,\beta}(\Lambda_{L})^{\lfloor \tfrac{n}{L+1}\rfloor}.
\ee
Since $ \lim_n S_{\rho,\beta}(n{\bf e}_1)^{1/n} = e^{-1/\xi(\rho,\beta)}$, we deduce that  
\be \label{eq_Y}
e^{-1/\xi(\rho,\beta)}   \leq   Y_{\rho,\beta}(\Lambda_{L})^{ \tfrac{1}{L+1} }\,. 
\ee 
On the other hand, by \eqref{eq:MMS3}, for each $x\in \Z^d$,   $ S_{\rho,\beta}(u) \leq S_{\rho,\beta}(x) $  for all $u\in \partial\Lambda_L$, and hence
\begin{equation}\label{eq:SLa} \frac{Y_{\rho,\beta}(\Lambda_{L})}{|\partial\Lambda_{L}|} \leq \beta  |J|S_{\rho,\beta}(x)  \, .\end{equation}
The substitution of  \eqref{eq_Y} in \eqref{eq:SLa} yields the claimed lower bound \eqref{eq:Simon-Lieb}.
\end{proof}


\subsection{Regularity of the two-point function's gradient}\label{sec:3.3}

\begin{proposition}[gradient estimate]\label{prop:regularity gradient}
There exists $C=C(d)>0$ such that for {every n.n.f.~model in the GS class}, every $\beta\le\beta_c(\rho)$, every $x\in \Z^d$  and every $1\le i\le d$,
\be \label{eq_reg}
|S_{\rho,\beta}(x\pm{\bf e}_i)-S_{\rho,\beta}(x)|\le \frac{F(|x|)}{|x|}S_{\rho,\beta}(x),
\ee
where 
\be
F(n):=C\frac{S_{\rho,\beta}(dn{\bf e}_1)}{S_{\rho,\beta}(n{\bf e}_1)}\log\Big(\frac{2S_{\rho,\beta}(\tfrac{n}2{\bf e}_1)}{S_{\rho,\beta}(n{\bf e}_1)}\Big).
\ee
\end{proposition}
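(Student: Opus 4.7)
The proof rests on two ingredients already established in the paper. The spectral representation (Proposition~\ref{prop:spec_rep}) exhibits one-dimensional slices of the two-point function as Laplace transforms of positive measures, which yields log-convexity; the MMS monotonicity (Proposition~\ref{prop:MMS}) provides the sandwich comparing $S_{\rho,\beta}(x)$ to its axial values. Combining these, I would establish a finite-difference estimate along the principal axis from log-convexity plus monotonicity, then normalize by $S_{\rho,\beta}(x)$ using the MMS sandwich.

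\smallskip\noindent\textit{Step 1 (Spectral log-convexity).} By the hyperoctahedral symmetries of $\Z^d$, assume $i=1$ and write $x=(n,x_\perp)$ with $n=x_1\ge 0$. Applying Proposition~\ref{prop:spec_rep} to $v=\delta_0$ and $v=\delta_0\pm \delta_{-x_\perp}$, and using the reflection symmetry $S_{\rho,\beta}((n,x_\perp))=S_{\rho,\beta}((n,-x_\perp))$, the three functions
\begin{equation*}
f(n):=S_{\rho,\beta}((n,0_\perp)), \qquad g_\pm(n):=S_{\rho,\beta}((n,0_\perp))\pm S_{\rho,\beta}((n,x_\perp))
\end{equation*}
are Laplace transforms $\int_{1/\xi(\rho,\beta)}^\infty e^{-a|n|}\,d\mu(a)$ of positive measures (for $g_-$, positivity uses the MMS bound $S_{\rho,\beta}((n,x_\perp))\le S_{\rho,\beta}((n,0_\perp))$). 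In particular each $h\in\{f,g_+,g_-\}$ is log-convex in $n$, and the support of its spectral measure in $[1/\xi(\rho,\beta),\infty)$ forces the consecutive ratio $r_n:=h(n+1)/h(n)$ to lie in $(0,e^{-1/\xi(\rho,\beta)}]\subset(0,1]$ and to be non-decreasing in $n$.

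\smallskip\noindent\textit{Step 2 (Finite-difference bound from log-convexity).} For such $h$ and $0\le m\le n$, monotonicity of $r_n$ gives
\begin{equation*}
\frac{h(n)}{h(n-m)}=\prod_{k=1}^m r_{n-k}\le r_n^m,
\end{equation*}
whence $1-r_n\le 1-(h(n)/h(n-m))^{1/m}\le \tfrac{1}{m}\log(h(n-m)/h(n))$. Choosing $m=\lfloor n/2\rfloor$ yields
\begin{equation*}
|h(n+1)-h(n)|\le \frac{C}{n}\,h(n)\,\log\!\Big(\tfrac{2\,h(n/2)}{h(n)}\Big).
\end{equation*}
Applying this to $g_\pm$ and combining via the identity $2[S_{\rho,\beta}((n+1,x_\perp))-S_{\rho,\beta}((n,x_\perp))]=[g_+(n+1)-g_+(n)]-[g_-(n+1)-g_-(n)]$, together with the MMS bounds $g_\pm(k)\le 2f(k)$ to replace the off-axis logarithmic factors by their axial counterparts, gives
\begin{equation*}
|S_{\rho,\beta}(x+\mathbf{e}_1)-S_{\rho,\beta}(x)|\le \frac{C}{|x|}\,S_{\rho,\beta}(|x|\mathbf{e}_1)\,\log\!\Big(\tfrac{2\,S_{\rho,\beta}(\tfrac{|x|}{2}\mathbf{e}_1)}{S_{\rho,\beta}(|x|\mathbf{e}_1)}\Big).
\end{equation*}

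\smallskip\noindent\textit{Step 3 (Normalization).} By the MMS sandwich (Proposition~\ref{prop:MMS}(ii)), $S_{\rho,\beta}(x)\ge S_{\rho,\beta}(\|x\|_1\mathbf{e}_1)\ge S_{\rho,\beta}(d|x|\mathbf{e}_1)$ since $\|x\|_1\le d|x|$. Dividing the bound of Step~2 by $S_{\rho,\beta}(x)$ and inserting this lower bound rearranges the constant into the announced form $F(|x|)/|x|$ with $F(n)$ involving the axial ratios $S_{\rho,\beta}(dn\mathbf{e}_1)/S_{\rho,\beta}(n\mathbf{e}_1)$ and $\log(2\,S_{\rho,\beta}(\tfrac n2 \mathbf{e}_1)/S_{\rho,\beta}(n\mathbf{e}_1))$.

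\paragraph{Main obstacle.} The subtlest point is Step~2 when $g_-(n)$ is very small (for instance when $x_\perp$ is nearly parallel to $\mathbf{e}_1$): the naive log-convex estimate $|g_-(n+1)-g_-(n)|\le (C/n)\,g_-(n)\log(g_-(n/2)/g_-(n))$ involves a potentially unbounded logarithm of the off-axis ratio, which cannot be directly controlled by the axial one. I would sidestep this by the elementary alternative $h\log(A/h)\le A/e$ (valid for $0<h\le A$), which converts the log-convex bound into the absolute $\ell^\infty$-type estimate $\le g_-(n/2)/e\le 2f(n/2)/e$, uniform in $x_\perp$. Threading this through Step~3 so as to extract the precise scale ratios appearing in $F(n)$ is a matter of careful bookkeeping, coordinating the log-convex bounds for $g_+$ and $g_-$ with the MMS sandwich $S_{\rho,\beta}(d|x|\mathbf{e}_1)\le S_{\rho,\beta}(x)\le S_{\rho,\beta}(|x|\mathbf{e}_1)$ at each scale.
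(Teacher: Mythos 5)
Your strategy coincides with the paper's: the spectral representation makes the axial slices Laplace transforms of positive measures, hence log-convex; the ratio-monotonicity argument between scales $n/2$ and $n$ produces the logarithmic factor; and MMS transfers the estimate off-axis and normalizes. The one structural difference is that the paper works only with $w_n:=S_{\rho,\beta}(n{\bf e}_1)+S_{\rho,\beta}((n,x_\perp))$, i.e.\ your $g_+$, and never needs $g_-$: since MMS makes both $S_{\rho,\beta}(n{\bf e}_1)$ and $S_{\rho,\beta}((n,x_\perp))$ non-increasing in $n$, one has $0\le S_{\rho,\beta}((n,x_\perp))-S_{\rho,\beta}((n+1,x_\perp))\le w_n-w_{n+1}$, and the log-convexity bound for $w$ alone finishes the job. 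Your symmetrized identity forces you to control $|g_-(n+1)-g_-(n)|$, and this is a genuine gap that your proposed patch does not close: the elementary bound $h\log(A/h)\le A/e$ converts the $g_-$ term into a quantity of order $f(n/2)/n=S_{\rho,\beta}(\tfrac n2{\bf e}_1)/n$, which has lost the essential factor $S_{\rho,\beta}(n{\bf e}_1)$. The target is of order $\tfrac1n S_{\rho,\beta}(n{\bf e}_1)\log\big(2S_{\rho,\beta}(\tfrac n2{\bf e}_1)/S_{\rho,\beta}(n{\bf e}_1)\big)$, and the ratio $S_{\rho,\beta}(\tfrac n2{\bf e}_1)/S_{\rho,\beta}(n{\bf e}_1)$ is a priori only known to be polynomially large (and genuinely is for $n\gg\xi(\rho,\beta)$), so your fallback overshoots the target by up to a factor $n/\log n$ --- precisely in the irregular regime this proposition is designed to survive. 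The repair is simply to discard $g_-$ and use the sign information from MMS as above.

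Two smaller points. First, ``assume $i=1$ by symmetry'' only aligns the increment direction; it does not guarantee $x_1=\|x\|_\infty$, and when the largest coordinate of $x$ is not the $i$-th your derivation yields $C/x_i$ in place of $C/|x|$, with the wrong scales inside the logarithm. The paper treats this case separately: after normalizing $x=(|x|,x_\perp)$, two diagonal MMS reflections give $|S_{\rho,\beta}(x\pm{\bf e}_i)-S_{\rho,\beta}(x)|\le S_{\rho,\beta}(x-d{\bf e}_1)-S_{\rho,\beta}(x+d{\bf e}_1)$, reducing to the aligned case. Second, your Step 3 actually produces the ratio $S_{\rho,\beta}(n{\bf e}_1)/S_{\rho,\beta}(dn{\bf e}_1)$ (from dividing the axial bound by the lower bound $S_{\rho,\beta}(x)\ge S_{\rho,\beta}(d|x|{\bf e}_1)$), i.e.\ the reciprocal of the prefactor displayed in $F$; this is also what the paper's own computation yields, so you should state explicitly which form your bookkeeping delivers rather than asserting that it ``rearranges into the announced form''.
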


The previous proposition  is particularly interesting when
$
S_{\rho,\beta}(2dn{\bf e}_1)\ge c_0 S_{\rho,\beta}(\tfrac n2{\bf e}_1)
$, in which case we obtain the existence of a constant $C_0=C_0(c_0,d)>0$ such that for every $x\in\partial\Lambda_n$ and $1\le i\le d$,
\be\label{eq:gradient estimate}
|S_{\rho,\beta}(x\pm{\bf e}_i)-S_{\rho,\beta}(x)|\le \frac{C_0}{|x|}S_{\rho,\beta}(x).
\ee

\begin{proof}
Without loss of generality, we may assume that $x=(|x|,x_\perp)$. We first assume that $i=1$. 
Introduce the three sequences $u_n:=S_{\rho,\beta}(n{\bf e}_1)$, $v_n:=S_{\rho,\beta}((n,x_{\perp}))$ and $w_n:=u_n+v_n$. The spectral representation applied to the function $v$ being the sum of the Dirac functions at $0_\perp$ and $x_\perp$ implies  the existence of a finite measure $\mu_{x_\perp,\beta}$ such that 
\be
w_n=\int_0^\infty e^{-na}d\mu_{x_\perp,\beta}(a).
\ee
Cauchy-Schwarz gives $w_n^2\le w_{n-1}w_{n+1}$, which when 
iterated between $n$ and $n/2$ (assume $n$ is even, the odd case is similar) leads to
\be
\frac{w_{n+1}}{w_n}\ge \big(\frac{w_n}{w_{n/2}}\big)^{2/n}\ge 1-\tfrac2n\log\big(\frac{w_{n/2}}{w_n}\big).
\ee
We now use that $u_{n/2}\ge v_{n/2}$, $u_{n}\ge v_{n}$, and $u_n\ge u_{n+1}$ which are all consequences of the Messager-Miracle-Sole inequality. Together with trivial algebraic manipulations, we get
\be
v_{n+1}\ge v_n-\frac{4\log(2u_{n/2}/u_n)}{n} u_n.
\ee
The bound we are seeking corresponds to $n=|x|$.

To get the result for $i\ne 1$, use the Messager-Miracle-Sole inequality applied twice to get that
\be
|S_{\rho,\beta}(x\pm {\bf e}_i)-S_{\rho,\beta}(x)|\le S_{\rho,\beta}(x-d{\bf e_1})-S_{\rho,\beta}(x+d{\bf e}_1),
\ee
and then refer to the previous case to conclude (one obtains the result for $n=|x|-d$, but the proof can be easily adapted to get the result for $n=|x|$).
\end{proof}

\begin{remark}When $x=n{\bf e}_1$ and $i=1$, running through the lines of the previous proof shows that one can take $F(n)=2\log(S_{\rho,\beta}(n{\bf e}_1)/S_{\rho,\beta}(\tfrac{n}2{\bf e}_1))$ which is bounded by $(2+o(1))\log n$ thanks to the lower bound \eqref{eq:Simon-Lieb} and the Infrared Bound \eqref{eq:IB}. We therefore get that for every $n\le\xi(\rho,\beta)$,
\be
S_{\rho,\beta}(n{\bf e}_1)-S_{\rho,\beta}((n+1){\bf e}_1)\le (2+o(1))\frac{\log n}{n}S_{\rho,\beta}(n{\bf e}_1).
\ee
It would be of interest to remove the $\log n$ factor, as this would enable a proof that $S_{\rho,\beta}(n{\bf e}_1)$ does not drop too fast between different scales.
\end{remark}

\subsection{Regular scales}\label{sec:3.4}

Using the dyadic distance scales, we shall now introduce the notion of \emph{regular scales},  which in essence means  that on the given scale the two-point function has the properties which in the conditional proof of Section~\ref{sec:2}, were available under the assumption 
\eqref{eq:S}.

\begin{definition}
Fix $c,C>0$.  
An annular region 
$ {\rm Ann}(n/2,4n)$ is said to be {\em regular} if the following four properties are satisfied:
\begin{itemize}
\item[P1] for every $x,y\in {\rm Ann}(n/2,4n)$, $S_{\rho,\beta}(y)\le CS_{\rho,\beta}(x)$;
\item[P2] for every $x,y\in {\rm Ann}(n/2,4n)$, $|S_{\rho,\beta}(x)-S_{\rho,\beta}(y)|\le \tfrac{C|x-y|}{|x|}S_{\rho,\beta}(x)$;
\item[P3] for every $x\in \Lambda_n$ and $y\notin \Lambda_{Cn}$, $S_{\rho,\beta}(y)\le \tfrac12S_{\rho,\beta}(x)$;
\item[P4] $\chi_{2n}(\rho,\beta)\ge (1+c)\chi_n(\rho,\beta)$.
\end{itemize}
A scale $k$ is said to be {\em regular} if the above holds for $n=2^k$, and a vertex $x\in\Z^d$ will be said to be in a regular scale if it  belongs to an annulus with the above properties. 
\end{definition}

One may note that P1 follows trivially from P2 but we still choose to state the two properties independently (the proof would work with weaker versions of P2 so one can imagine cases where the notion of regular scale could be used with a different version of P2 not implying P1).

Under the power-law assumption \eqref{eq:S} of Section~\ref{sec:2} every scale is regular at criticality. However, for now we do not have an unconditional proof of that.   For an unconditional proof of our main results, this gap will be addressed through the following statement, which is the main result of this section. 

 \begin{theorem}[Abundance of regular scales]\label{thm:regular scales}
Fix $d>2$ and $\alpha>2$. There exist $c=c(d)>0$ and $C=C(d)>0$ such that for {every n.n.f.~model in the GS class} and every $n^\alpha\le N\le\xi(\rho,\beta)$, there are at least 
$c\log_2(N/n)$ regular scales $k$ with $n\le 2^k\le N$.
 \end{theorem}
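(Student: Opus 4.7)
The plan is to bound, for each of the four properties P1--P4, the number of dyadic scales $k$ with $2^k\in[n,N]$ at which the property fails, and to verify that for an appropriate choice of the universal constants $c,C$ in the definition of ``regular scale'' the union of the bad sets has density strictly less than $1$ in $[\log_2 n,\log_2 N]$. Throughout, set $K := \log_2(N/n)$.

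First, consider P4. The sliding-scale Infrared Bound (Theorem~\ref{prop_mon_chi}) gives $\chi_{2^{k+1}}(\rho,\beta)/\chi_{2^k}(\rho,\beta)\le 4C(d)/\beta$ on every scale, while summing the pointwise Simon--Lieb lower bound from Corollary~\ref{cor:SL} yields $\chi_L(\rho,\beta)\ge c(d)L/\beta$ for $L\le \xi(\rho,\beta)$, and the same sliding-scale bound with $\ell=1$ gives $\chi_L(\rho,\beta)\le C(d) L^2/\beta$. Using $N\ge n^\alpha$ with $\alpha>2$, these combine to $\log(\chi_N/\chi_n)\ge c_\alpha K$ for some $c_\alpha=c_\alpha(d,\alpha)>0$. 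Since this logarithm is also a sum of $K$ nonnegative terms each bounded by $\log(4C/\beta)$, a counting argument---with $c$ chosen so that $\log(1+c)<c_\alpha/2$---shows that the number of scales failing P4 is at most $(1-c_4)K$ for some $c_4>0$.

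Next, for P3, MMS monotonicity (Proposition~\ref{prop:MMS}) reduces the pointwise condition to the one-dimensional statement $a_{k+\log_2 d}\ge 2\,a_{k+\log_2 C}$, where $a_j:=S_{\rho,\beta}(2^j{\bf e}_1)$ and $C$ is the free constant appearing in the definition of regular scale. Setting $s_j:=\log(a_j/a_{j+1})\ge 0$, the two-sided pointwise bounds on $S_{\rho,\beta}$ give $\sum_j s_j\in[c_1 K,c_2 K]$. The spectral representation (Proposition~\ref{prop:spec_rep}) implies that $n\mapsto \log S_{\rho,\beta}(n{\bf e}_1)$ is convex in $n$, so in particular $s_k/2^k$ is non-increasing in $k$. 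This log-convexity is a crucial ingredient: it prevents the sequence $s_j$ from concentrating on a vanishing fraction of scales, and within any dyadic window it controls the largest $s_j$ by a bounded multiple of any smaller one. Partitioning the range of scales into blocks of length $\log_2(C/d)$ and choosing $C$ so large that the average block-sum $c_1\log_2(C/d)$ dominates $\log 2$, a Markov-type argument produces an upper bound of $\varepsilon K$ (with $\varepsilon$ arbitrarily small) on the number of blocks whose sum falls outside a good range $[\log 2,\log A]$ for some large constant $A$.

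On any scale $k$ whose block-sum lies in $[\log 2,\log A]$, one has $S_{\rho,\beta}(2dn{\bf e}_1)/S_{\rho,\beta}((n/2){\bf e}_1)\ge c_0$, so the gradient estimate \eqref{eq:gradient estimate} of Proposition~\ref{prop:regularity gradient} applies and immediately delivers P2; integrating it along nearest-neighbour lattice paths inside ${\rm Ann}(n/2,4n)$ then yields the ratio bound P1. Combining the four bad-scale estimates and choosing the constants $c,C,A$ appropriately, the total bad-scale density is strictly less than $1$, producing at least $cK$ regular scales. The main obstacle will be the simultaneous control of P2 and P3, which are in direct tension: P2 demands slow decay of $S_{\rho,\beta}$ on the annulus, while P3 demands a definite factor-of-two drop across it. Without the log-convexity afforded by the spectral representation, a configuration with a handful of enormous ``spikes'' in $s_j$ followed by long near-constant stretches could simultaneously fail P2 at the spikes and P3 on the stretches on a constant fraction of scales, defeating the counting argument.
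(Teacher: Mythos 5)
Your plan is genuinely different from the paper's. The paper isolates a \emph{single} susceptibility condition, $\chi_{rm}(\rho,\beta)\ge \chi_{4dm}(\rho,\beta)+\chi_m(\rho,\beta)$, shows by pigeonhole that it holds on a positive fraction of scales (using $\chi_N\ge c_1 (N/n)^{(\alpha-2)/(\alpha-1)}\chi_n$ — essentially your P4 computation — together with the comparison $\chi_{rm}\le r^d\chi_m$ from \eqref{eq:comparison chi}), and then derives all of P1--P4 from that one condition via MMS and the gradient estimate. The structural point that makes this pigeonhole work is that the per-scale multiplicative growth of $\chi$ is bounded \emph{above} by a universal constant, so a lower bound on the total growth automatically forces a positive density of scales with substantial growth: there is no upper tail to worry about.

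Your counting for P3 lacks exactly this feature, and that is where the argument breaks. With $s_j=\log(a_j/a_{j+1})$, your Markov step bounds the number of blocks with sum exceeding $\log A$ by $\Sigma/\log A$, which is fine, but nothing in the argument bounds the number of blocks with sum below $\log 2$: an average block-sum of order $\log_2(C/d)$ is perfectly compatible with almost every block having sum near $0$, provided a few blocks carry almost all of $\Sigma$. The individual $s_j$ are \emph{not} uniformly bounded — the two-sided pointwise bounds \eqref{eq:IB} and \eqref{eq:Simon-Lieb} only give $s_j\le j\ln 2+O(1)$, so a single scale near $N$ may a priori carry a constant fraction of $\Sigma\asymp K$. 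Log-convexity does not rescue this: it yields $s_{j-r}\ge 2^{-r}s_j$, so a spike of height $H$ at scale $J$ forces $s_{J-r}\ge\log 2$ only for $r\le \log_2(H/\log 2)=O(\log K)$ scales (a vanishing fraction) and imposes no upper constraint on the scales below $J$ nor any constraint above $J$; in particular it does not prevent the total drop from concentrating on $O(\log K)$ scales. Since your P4-good set is only shown to have density $c_4$, which may itself be small, the intersection of the two good sets could then be empty. Closing this would require exploiting the pointwise bounds at every intermediate scale to exclude long stretches of near-constant $S_{\rho,\beta}(\cdot\,{\bf e}_1)$, which is delicate (the bound $S_{\rho,\beta}(m{\bf e}_1)\le Cm^{2-d}$ only forces a positive averaged decay exponent once $m\ge n^{1+1/(d-2)}$, borderline for small $d$ and $\alpha$ near $2$) and is precisely what the paper's susceptibility-based argument is designed to avoid. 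There are also two smaller loose ends: the window $[n/2,2dn]$ needed for the gradient estimate need not sit inside a single block, and P2/P3 as defined are two-sided statements on the full annulus, not only along ${\bf e}_1$, so the reduction via MMS has to be done on a slightly enlarged window in both directions.
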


\begin{proof}
The lower bound~\eqref{cor:SL} for $S_{\rho,\beta}$ and the Infrared Bound \eqref{eq:IB} imply that 
\be
\chi_N(\rho,\beta)\ge c_0N\ge c_0(N/n)^{(\alpha-2)/(\alpha-1)}n^2\ge c_1(N/n)^{(\alpha-2)/(\alpha-1)}\chi_n(\rho,\beta).
\ee
Using the sliding-scale Infrared Bound \eqref{eq:comparison chi}, there exist $r,c_2>0$ (independent of $n,N$) such that there are at least $c_2\log_2(N/n)$ scales $m=2^k$ between $n$ and $N$ such that 
\be\label{eq:kkl}
\chi_{rm}(\rho,\beta)\ge \chi_{4dm}(\rho,\beta)+\chi_m(\rho,\beta).
\ee
 Let us verify that the different properties of regular scales are satisfied for such an $m$. 
Applying \eqref{eq:MMS3} in the first inequality,   
the assumption \eqref{eq:kkl} in the second, and \eqref{eq:MMS3} in the third,
one has 
\begin{align}|{\rm Ann}(4dm,rm)|S_{\rho,\beta}(4dm{\bf e}_1)
&\ge \chi_{rm}(\rho,\beta)-\chi_{4dm}(\rho,\beta)\ge \chi_m(\rho,\beta)\ge |\Lambda_{m/(4d)}|S_{\rho,\beta}(\tfrac14m{\bf e}_1).\end{align}
This implies that $S_{\rho,\beta}(4dm{\bf e}_1)\ge c_0S_{\rho,\beta}(\tfrac14m{\bf e}_1)$, which immediately gives P1 by \eqref{eq:MMS3}   for $S_{\rho,\beta}$ and P2 by the gradient estimate given by Proposition~\ref{prop:regularity gradient}.  Furthermore, the fact that $S_{\rho,\beta}(x)\ge S_{\rho,\beta}(4dm{\bf e}_1)\ge \frac{c_3}{m^d}\chi_m(\rho,\beta)$ for every $x\in {\rm Ann}(m,2m)$ implies P4. To prove P3, observe that for every $R$, the previous displayed inequality together with the sliding-scale Infrared Bound \eqref{eq:multi chi} give that for every $y\notin \Lambda_{dRm}$ and $x\in \Lambda_m$,
\begin{align}
|\Lambda_{Rm}|S_{\rho,\beta}(y)\le \chi_{Rm}(\rho,\beta)\le C_4R^{2}\chi_{m}(\rho,\beta)\le C_5R^{2}m^dS_{\rho,\beta}(x), \end{align}
which implies the claim for $C$ and $c$ respectively large and small enough using here the assumption that $d>2$.
\end{proof}

\section{Unconditional proofs of the Ising's results}\label{sec:4}

In this section, we prove our results for every $\beta\le\beta_c$ without making the power-law assumption of Section~\ref{sec:2}.   We emphasize that unlike the introductory discussion of that  section, the proofs given below are unconditional.  The discussion is also not restricted to the critical point itself and covers more general approaches of the scaling limits, from the side $\beta\leq \beta_c$ (hence the correlation length will be mentioned in several places).  However, at this stage the discussion is still restricted to the n.n.f.~Ising model.

\subsection{Unconditional proofs of the intersection-clustering bound and Theorem~\ref{thm:improved tree bound simple} for the Ising model}\label{sec:4.1}

The notation remains  as in Section~\ref{sec:2}. The endgame in this section will be the unconditional proof of the intersection-clustering bound  that we restate below in the right level of generality. The main modification is that the sequence $\calL$ of integers $\ell_k$ will be   chosen dynamically, adjusting it to the behaviour of the two-point function. More precisely, recall the definition of the bubble diagram $B_L(\beta)$ truncated at a distance $L$. Fix $D\gg1$ and define recursively a (possibly finite) sequence $\mathcal L$ of integers $\ell_k=\ell_k(\beta,D)$ by the formula $\ell_0=0$ and
\be\label{eq:def ell}
\ell_{k+1}=\inf\{\ell:B_\ell(\beta)\ge DB_{\ell_k}(\beta)\}.
\ee
By the Infrared Bound \eqref{eq:IB}, $B_L-B_\ell \leq C_0\log(L/\ell)$ (in  dimension $d=4$) from which it is a simple exercise to deduce that under the above definition 
\be 
D^k\le B_{\ell_k}(\beta)\le CD^k 
\ee 
 for every $k$ and some large constant $C$ independent of $k$.

\begin{proposition}[clustering bound]\label{prop:non isolated general}  
 For $d=4$ and $D$ large enough, there exists $\delta=\delta(D)>0$ such that  for every $\beta\le\beta_c$,  every $K>3$ with $\ell_K\le \xi(\beta)$, and every $u,x,y,z,t\in\Z^4$ with mutual distances between $x,y,z,t$ larger than $2\ell_{K}$,
\be \label{eq_cluster}
{\bf P}^{ux,uz,uy,ut}_\beta[{\bf M}_{u}(\calT; \mathcal L, K)< \delta K]\le 2^{-\delta K}.
\ee 
\end{proposition}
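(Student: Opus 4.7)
The plan is to preserve the architecture of the proof of Proposition~\ref{prop:non isolated}: reduce the statement to two unconditional ingredients — a single-annulus intersection lower bound and a multi-current mixing estimate — and then combine them by the same Markov-type induction over subsets $S\in\calS_K$ of even integers in $\{1,\dots,K-2\}$. The task is to recover both ingredients without the power-law assumption~\eqref{eq:S}, drawing instead on the results of Section~\ref{sec:3}, and especially on the abundance of regular scales guaranteed by Theorem~\ref{thm:regular scales}.

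First, I would establish an unconditional analogue of Lemma~\ref{lem:intersection}: ${\bf P}^{ux,uz,\emptyset,\emptyset}_\beta[I_k]\ge c_0$ for a constant $c_0=c_0(D)>0$ uniform in $k$ and in $\beta\le\beta_c$. The strategy stays the same, namely a second moment computation on the intersection $\calM$ of the clusters of $u$ inside a sub-annulus of $[\ell_k,\ell_{k+1}]$, followed by a backbone-based uniqueness argument for the crossing clusters. Two substitutions are required. First, the lower and upper bounds on $\mathbf{E}[|\calM|]$ and $\mathbf{E}[|\calM|^2]$ that in the conditional proof relied on $\log(M/m)$ are now expressed directly via the bubble diagram: by construction of $\calL$ one has $B_{\ell_{k+1}}-B_{\ell_k}\asymp D^k$, and Proposition~\ref{prop:3} combined with a splitting of $[\ell_k,\ell_{k+1}]$ into sub-scales delivers a Cauchy--Schwarz ratio bounded away from $0$. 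Second, the crossing-uniqueness estimates on backbone zigzags and sourceless long connections, which before relied on power-law decay, will now follow from the sliding-scale Infrared Bound (Theorem~\ref{prop_mon_chi}), MMS monotonicity~\eqref{eq:MMS3}, and properties P1--P3 of a regular scale positioned deep inside the gap $[\ell_k,\ell_{k+1}]$. Such a regular scale exists by Theorem~\ref{thm:regular scales} once $D$ is chosen large enough to force $\ell_{k+1}/\ell_k$ past the threshold required.

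Second, I need an unconditional, multi-current analogue of Proposition~\ref{prop:mixing multi simplify}, of the form $|{\bf P}[E\cap F]-{\bf P}[E]\,{\bf P}[F]|\le C/\sqrt{\log(N/n)}$ for events $E$, $F$ measurable with respect to the current restrictions to $\Lambda_n$ and $\Z^d\setminus\Lambda_N$ respectively. The plan is to re-run the resolution-of-identity argument built from the random variable $\bf N$ of Proposition~\ref{prop:mixing multi simplify}, but restricting the vertex sums to regular scales so that P1--P2 together with the gradient estimate of Proposition~\ref{prop:regularity gradient} can replace the sharp inputs~\eqref{eq:S} and~\eqref{eq:nabla S}. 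I expect this step to be the main obstacle: the first and second moments of $\bf N$ were previously computed using pointwise power-law bounds at every scale, and comparable accuracy must now be assembled from information at only those scales which happen to be regular, with the contribution of the remaining scales controlled by the sliding-scale Infrared Bound.

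With both ingredients in hand, the conclusion follows by the induction of Proposition~\ref{prop:non isolated}. Let $A_S$ be the event that no $I_k$ occurs for $k\in S$; peeling off the largest $s\in S$ via the mixing estimate with $n=\ell_{s-1}$, $N=\ell_s$ yields, for $D$ large enough,
\[
{\bf P}^{ux,uz,\emptyset,\emptyset}_\beta[A_S]\le (1-2c_0)\,{\bf P}^{ux,uz,\emptyset,\emptyset}_\beta[A_{S\setminus\{s\}}]+c_0(1-c_0)^{|S|-1},
\]
using that $\log(\ell_s/\ell_{s-1})$ grows exponentially in $s$ (since $B_{\ell_s}\asymp D^s$ while $B_L\le C\log L$ by the Infrared Bound). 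This iterates to $(1-c_0)^{|S|}$; Corollary~\ref{cor:1} transfers the bound to the corresponding event $B_S$ under the full source configuration $ux,uz,uy,ut$, and a union bound over $S\in\calS_K$ with $|S|\ge (\tfrac12-\delta)K$ gives~\eqref{eq_cluster} for a small enough $\delta=\delta(D)$.
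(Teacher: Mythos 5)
Your overall architecture is right, and your treatment of the mixing ingredient and of the final induction matches the paper's. But there is a genuine gap in your first ingredient. You claim an unconditional intersection lower bound ${\bf P}^{ux,uz,\emptyset,\emptyset}_\beta[I_k]\ge c_0$ for the \emph{original, arbitrary} sources $x,z$, arguing that it suffices to find a regular scale inside the gap $[\ell_k,\ell_{k+1}]$. That is not where the difficulty sits. Both the first-moment lower bound (you need $\langle\sigma_v\sigma_x\rangle\ge c\,\langle\sigma_0\sigma_x\rangle$ for $v$ ranging over an annulus of radius comparable to $\ell_{k+1}$, which may be as large as $|x|/2$) and the backbone zigzag bound (you need $\langle\sigma_w\sigma_x\rangle\le C\langle\sigma_0\sigma_x\rangle$ for $w\in\partial\Lambda_{\ell_k}$) require regularity of the two-point function \emph{at the scale of the source} $|x|$, i.e.\ properties P1--P2 for an annulus containing $x$ — not for some annulus between $\ell_k$ and $\ell_{k+1}$. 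Unconditionally, the gradient estimate of Proposition~\ref{prop:regularity gradient} only gives $|\nabla S|\lesssim (\log|x|/|x|)\,S$, and accumulating this over a displacement of order $\ell_{k+1}\sim|x|$ produces a factor $|x|^{-C}$ rather than a constant; the paper explicitly remarks that removing the $\log$ is open. This is why the unconditional intersection property (Lemma~\ref{lem:intersection general}) is stated only for a source $y$ lying in a regular scale, and why the paper inserts an extra step absent from your proposal: before starting the induction, it uses the source-independence part of the mixing theorem (\eqref{eq:independence}--\eqref{eq:independence2} of Theorem~\ref{thm:mixing multi}) to replace ${\bf P}^{0x,0z,\emptyset,\emptyset}[A_S]$ by ${\bf P}^{0y,0y,\emptyset,\emptyset}[A_S]$ for some $y\in{\rm Ann}(\ell_{K-1},\ell_K)$ in a regular scale (which exists by Theorem~\ref{thm:regular scales} since $\ell_K\le\xi(\beta)$), at the cost of an additive error $2^{-\delta K}$; the restriction of $S$ to $\{1,\dots,K-3\}$ is what makes $A_S$ measurable inside $\Lambda_{\ell_{K-2}}$ so that this swap is legitimate.

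Once you add that source-replacement step, the rest of your argument goes through: the peeling induction under ${\bf P}^{0y,0y,\emptyset,\emptyset}$, the transfer to $B_S$ via Corollary~\ref{cor:1}, and the union bound over $S$ are exactly as in the paper. Two smaller remarks: the constants are chosen in the order ``fix $\alpha$ so that mixing holds, then $D=D(\alpha)$ via Lemma~\ref{cor:growth} so that $\ell_{k+1}\ge\ell_k^\alpha$'', rather than choosing $D$ directly; and your replacement of $\log(M/m)$ by $B_{\ell_{k+1}}-B_{\ell_k}\asymp D^k$ in the second-moment computation is indeed how \eqref{eq:BB} is obtained, but again only after the source has been moved to a regular scale.
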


Before proving this proposition, let us explain how it implies the improved tree diagram bound.
\begin{proof}[Theorem~\ref{thm:improved tree bound simple}]
Choose $D$ large enough that the previous proposition holds true. We follow the same lines as in Section~\ref{sec:2.1}, simply noting that since $B_{\ell_k}(\beta) \le  CD^k$, we may choose $K\ge c\log B_L(\beta)$ with $2\ell_{K}\le L$, where $c$ is independent of $L$ and $\beta$, so that \eqref{eq:p1} implies the improved tree diagram bound inequality. 
\end{proof}

The main modification we need for an unconditional proof of the intersection-clustering bound 
lies in  the derivation of the intersection and mixing properties. 
The former is similar to Lemma~\ref{lem:intersection}, but restricted to sources that lie in regular scales.  We restate it here in a slightly modified form.

Recall that $I_k$ 
is the event  that there exist unique clusters of ${\rm Ann}(\ell_{k},\ell_{k+1})$ in $\n_1+\n_3$ and $\n_2+\n_4$ crossing the annulus from the inner boundary to the outer boundary and that these two clusters are intersecting.
\begin{lemma}[Intersection property]\label{lem:intersection general}
Fix $d=4$. 
There exists $c>0$ such that for every $\beta\le\beta_c$, every $k$, and every $y\notin \Lambda_{2\ell_{k+1}}$ in a regular scale,
\be
{\bf P}^{0y,0y,\emptyset,\emptyset}_\beta[I_k]\ge c.
\ee
\end{lemma}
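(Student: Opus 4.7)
The plan is to adapt the conditional proof of Lemma~\ref{lem:intersection}, substituting the two roles formerly played by the power-law assumption \eqref{eq:S} by (i) the growth relation $B_{\ell_{k+1}}(\beta) \ge D\,B_{\ell_k}(\beta)$ built into the definition of $\calL$, and (ii) the regularity properties P1--P4 satisfied by the annulus containing~$y$. First, I would pick nested intermediate scales $\ell_k \le n \le m \le M \le N \le \ell_{k+1}$ with polynomial gaps $n \ge \ell_k^{3+\varepsilon}$, $m \ge n^{3+\varepsilon}$, $M \ge m^{1+\varepsilon}$, $N \ge M^{3+\varepsilon}$. This is feasible because, for $D$ large, the defining relation of $\calL$ combined with the logarithmic bound $B_L(\beta) \le C\log L$ (a consequence of the Infrared Bound in dimension~$4$) forces $\ell_{k+1}/\ell_k$ to grow faster than any polynomial of $\ell_k$.

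For the intersection step, let $\calM := {\mathbf C}_{\n_1+\n_3}(0) \cap {\mathbf C}_{\n_2+\n_4}(0) \cap {\rm Ann}(m,M)$ and run the second-moment method. The switching lemma gives
\[
{\bf E}^{0y,0y,\emptyset,\emptyset}_\beta[|\calM|] = \sum_{v \in {\rm Ann}(m,M)} \Big(\frac{\langle\sigma_0\sigma_v\rangle_\beta\,\langle\sigma_v\sigma_y\rangle_\beta}{\langle\sigma_0\sigma_y\rangle_\beta}\Big)^2.
\]
Since $y$ sits in a regular annulus and each $v$ appearing above satisfies $|v| \le M \le \ell_{k+1} \le |y|/2$, property P1 (possibly combined with the MMS inequality to stay inside a regular annulus) pinches the ratio $\langle\sigma_v\sigma_y\rangle_\beta/\langle\sigma_0\sigma_y\rangle_\beta$ between two positive constants, giving ${\bf E}[|\calM|] \ge c(B_M - B_{m-1})$. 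The three-point bound of Proposition~\ref{prop:3} yields ${\bf E}[|\calM|^2] \le C(B_M - B_{m-1})B_{2M}$. Taking $m \approx \ell_k$ and $M \approx \ell_{k+1}$, the growth relation gives $B_{m-1} \le B_M/D$, while the slow growth of $B$ yields $B_{2M} \le (1+C/D)B_M$; Cauchy--Schwarz then delivers ${\bf P}^{0y,0y,\emptyset,\emptyset}_\beta[\calM \ne \emptyset] \ge c>0$ for $D$ large enough.

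Uniqueness of the crossings would be treated exactly as in Lemma~\ref{lem:intersection} through the four backbone events $F_1,\dots,F_4$. Each probability is bounded by a sum of the form
\[
\sum_{v,w} \frac{\langle\sigma_0\sigma_v\rangle_\beta \,\langle\sigma_v\sigma_w\rangle_\beta \,\langle\sigma_w\sigma_y\rangle_\beta}{\langle\sigma_0\sigma_y\rangle_\beta}
\]
(or an analogous sourceless sum for $F_2, F_3$, obtained by conditioning on the backbone and invoking Griffiths's inequality). The two-point factors in the numerator are controlled by the Infrared Bound, while the ratio $\langle\sigma_w\sigma_y\rangle_\beta/\langle\sigma_0\sigma_y\rangle_\beta \le C$ is provided \emph{as a whole} by the regularity of $y$'s scale (uniformly over $w \in \partial\Lambda_{\ell_k}$, with $|w|\ll|y|$), with no need for any pointwise lower bound on $\langle\sigma_0\sigma_y\rangle_\beta$. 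The polynomial separation between consecutive intermediate scales then yields ${\bf P}[F_i] = O(\ell_k^{-\varepsilon})$, which can be made smaller than $c/2$ by taking $\ell_0$ large enough (a choice compatible with the large-$D$ requirement).

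The main obstacle is exactly the uniqueness step. In Lemma~\ref{lem:intersection} the pointwise lower bound $\langle\sigma_0\sigma_y\rangle_\beta \ge c/|y|^{d-2}$ coming from \eqref{eq:S} was used to convert each backbone sum into an absolute estimate; unconditionally only the weaker Simon--Lieb bound $\langle\sigma_0\sigma_y\rangle_\beta \ge c/|y|^{d-1}$ (Corollary~\ref{cor:SL}) is available, which would lose a factor of $|y|$ and wreck the argument. The resolution is dimensionless: never isolate $1/\langle\sigma_0\sigma_y\rangle_\beta$ but keep it paired with some $\langle\sigma_w\sigma_y\rangle_\beta$ in the numerator, whose ratio is controlled by the regularity of $y$'s scale independently of the absolute magnitude of $\langle\sigma_0\sigma_y\rangle_\beta$.
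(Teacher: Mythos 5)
Your proposal follows essentially the same route as the paper's proof: a second-moment argument on $\calM$ in which the ratios $\langle\sigma_\cdot\sigma_y\rangle_\beta/\langle\sigma_0\sigma_y\rangle_\beta$ are controlled by the regularity (P1--P2) of $y$'s scale rather than by \eqref{eq:S}, the first-moment lower bound comes from $B_{\ell_{k+1}}(\beta)\ge D\,B_{\ell_k}(\beta)$ together with the slow-growth statement of Lemma~\ref{cor:growth}, and the uniqueness events $F_1,\dots,F_4$ are handled with only the Infrared upper bound plus the regularity-controlled ratio, never a pointwise lower bound on $\langle\sigma_0\sigma_y\rangle_\beta$. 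The one imprecision is your justification of the polynomial gap between $\ell_k$ and $\ell_{k+1}$: the bound $B_L(\beta)\le C\log L$ alone does not force $\ell_{k+1}\ge \ell_k^{\alpha}$ (absent a matching lower bound on the bubble), and the correct tool is again Lemma~\ref{cor:growth}, exactly as invoked in the proof of Proposition~\ref{prop:non isolated general}.
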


\begin{proof}
 Restricting our attention to the case of $y$ belonging to a regular scale enables us to use properties P1 and P2 of the regularity assumption on the scale.
With this additional assumption, we follow the same proof as the one of the conditional version (Lemma~\ref{lem:intersection}).
Introduce the intermediary integers $n\le m\le M\le N$ satisfying 
\be 
\ell_k^4\ge n\ge \ell_{k}^{3+\ep}\ ,\  n^4\ge m\ge n^{3+\ep}\ ,\  M^4\ge N\ge M^{3+\ep}\ ,\ N^4\ge \ell_{k+1}\ge N^{3+\ep}.
\ee
For the second moment method on $\calM$, the first and second moments take the following forms
\begin{align}\label{eq:BB}{\bf E}^{0y,0y,\emptyset,\emptyset}_\beta[|\calM|]&\ge c_1 (B_M(\beta) -B_{m-1}(\beta))\ge c_2B_{\ell_{k+1}}(\beta),\\
{\bf E}^{0y,0y,\emptyset,\emptyset}_\beta[|\calM|^2]&\le c_3(B_M(\beta) -B_{m-1}(\beta))B_{2M}(\beta)\le c_3B_{\ell_{k+1}}(\beta)^2,
\end{align}
where in the second inequality of the first line, we used that $D$ is large enough and  Lemma~\ref{cor:growth} below to get that
\[
B_M(\beta)\ge \frac{B_{\ell_{k+1}}(\beta)}{1+15C}\quad\text{ and }\quad B_{m-1}(\beta)\le (1+15C)B_{\ell_k}(\beta)\le \frac{1+15C}{D}B_{\ell_{k+1}}(\beta).
\]
For the bound on the probabilities of the events $F_1,\dots,F_4$ defined as in Section~\ref{sec:2.2}, recall that the vertices $x$ and $z$ there are in our case both equal to $y$ that belongs to a regular scale.  Using Property 2 of the regularity of scales, the bounds in \eqref{eq:bound} and \eqref{eq:bound sourceless} follow readily from the Infrared Bound \eqref{eq:IB}.
\end{proof}

In the previous proof, we used the following statement.
\begin{lemma}\label{cor:growth}
For $d=4$, there exists $C>0$ such that for every $\beta\le \beta_c$ and every $\ell\le L\le\xi(\beta)$,
\be\label{eq:lemma1} B_{L}(\beta)\le \Big(1+C\frac{\log(L/\ell)}{\log \ell}\Big)\,B_{\ell}(\beta).\ee
\end{lemma}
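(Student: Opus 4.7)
The bound is equivalent to $B_L(\beta)-B_\ell(\beta)\le C\,\frac{\log(L/\ell)}{\log \ell}\,B_\ell(\beta)$, so my plan is to bound the annular shell contribution $B_L-B_\ell$ from above and the bubble $B_\ell$ from below, independently, and then combine.

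For the upper bound, I will apply the Infrared Bound~\eqref{eq:IB} specialized to $d=4$ and to the n.n.f.~Ising model, which gives $\langle\sigma_0\sigma_x\rangle_\beta\le C/|x|^2$. Squaring and summing with a dyadic decomposition of the annulus $\Lambda_L\setminus\Lambda_\ell$ yields
\[
B_L(\beta)-B_\ell(\beta) \;=\; \sum_{\ell<\|x\|_\infty\le L}\langle\sigma_0\sigma_x\rangle_\beta^2 \;\le\; C'\!\!\sum_{k:\,\ell<2^k\le L}\!\! 2^{4k}\cdot 2^{-4k} \;\le\; C_1\log(L/\ell),
\]
uniformly in $\beta\le\beta_c$.

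The main step is the lower bound: I aim to prove that $B_\ell(\beta)\ge c_0\log\ell$ for every $\beta\le\beta_c$ and $2\le \ell\le\xi(\beta)$. Here I will invoke Theorem~\ref{thm:regular scales} to obtain at least $c\log\ell$ regular scales $k$ with $2^k\le\ell$. On each regular shell, the reverse Cauchy--Schwarz inequality
\[
B_{2^{k+1}}-B_{2^k}\;\ge\;\frac{(\chi_{2^{k+1}}-\chi_{2^k})^2}{|{\rm Ann}(2^k,2^{k+1})|}
\]
combined with property P4 (yielding $\chi_{2^{k+1}}-\chi_{2^k}\ge c\,\chi_{2^k}$), property P1, and the sliding-scale Infrared Bound~\eqref{eq:multi chi} (the latter two ensuring that $\chi_{2^k}/(2^k)^2$ is bounded below on regular scales) should deliver a uniform positive contribution $B_{2^{k+1}}-B_{2^k}\ge c''$ per regular scale in dimension four; summing over the $\asymp\log\ell$ regular scales then produces the desired logarithmic lower bound.

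Combining the two estimates gives $B_L(\beta)\le B_\ell(\beta)+C_1\log(L/\ell)\le B_\ell(\beta)\bigl(1+(C_1/c_0)\,\log(L/\ell)/\log\ell\bigr)$, which is the claim with $C:=C_1/c_0$. The main obstacle is the lower bound on $B_\ell$: the naive Simon--Lieb lower bound of Corollary~\ref{cor:SL}, namely $\langle\sigma_0\sigma_x\rangle_\beta\ge c/|x|^3$ in $d=4$, only yields $\chi_{2^{k+1}}-\chi_{2^k}\ge c\,2^k$ and a per-scale bubble contribution of order $c^2/2^{2k}$, which is summable and produces only a bounded, not logarithmic, lower bound. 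The crucial leverage comes from restricting attention to regular scales, on which the two-point function is forced to saturate the Infrared upper bound up to constants, so that each regular shell contributes order one to $B_\ell$.
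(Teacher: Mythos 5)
Your overall architecture --- bound the shell $B_L-B_\ell$ from above by $C_1\log(L/\ell)$ via the Infrared Bound, bound $B_\ell$ from below by $c_0\log\ell$, and divide --- has a genuine gap in the second step, and you have in fact located it yourself without resolving it. The claim that each regular dyadic shell contributes a uniform constant to the bubble requires $\chi_{2^k}/(2^k)^2\ge c>0$ on regular scales, i.e.\ that the susceptibility \emph{saturates} its infrared upper bound. Neither P1 nor the sliding-scale Infrared Bound~\eqref{eq:multi chi} delivers this: \eqref{eq:multi chi} is one-sided, giving $\chi_L/L^2\le (C/\beta)\,\chi_\ell/\ell^2$ for $L\ge\ell$, so it can only be used to bound $\chi_{2^k}/2^{2k}$ from \emph{above} in terms of smaller scales, never from below. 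The only unconditional lower bound available is $\chi_\ell\ge c\,\ell$ (from Corollary~\ref{cor:SL}), and combined with P4 and Cauchy--Schwarz this gives a per-shell contribution of order $2^{-2k}$ --- exactly the summable bound you dismiss as ``naive''. Regularity of a scale constrains \emph{ratios} of the two-point function and the relative growth of $\chi$ across that scale; it says nothing about absolute magnitudes. Indeed, an unconditional proof of $B_\ell\ge c_0\log\ell$ would rule out $\eta>0$, which the paper explicitly leaves open (recall the remark after the power-law assumption: if $\eta>0$ the bubble is \emph{bounded}, and the lemma must still hold in that regime --- your proposed lower bound would then be false).

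The paper avoids any absolute lower bound by making the comparison purely relative. For an arbitrary dyadic scale $N$ with $\ell\le N\le L$ and a regular scale $n=2^k\le\ell$, one chains
\begin{equation*}
B_{2N}(\beta)-B_N(\beta)\ \le\ C_0N^{-4}\chi_{N/d}(\beta)^2\ \le\ C_1 n^{-4}\chi_n(\beta)^2\ \le\ C_2 n^{-4}\big(\chi_{2n}(\beta)-\chi_n(\beta)\big)^2\ \le\ C_3\big(B_{2n}(\beta)-B_n(\beta)\big),
\end{equation*}
using MMS, then the sliding-scale Infrared Bound in its correct (downward) direction, then P4, then Cauchy--Schwarz. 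Thus each of the $\log_2(L/\ell)$ shells above $\ell$ is dominated by a constant times each of the $\gtrsim\log_2\ell$ regular shells below $\ell$, whose total contribution is at most $B_\ell$; this yields $B_L-B_\ell\le C\,\frac{\log(L/\ell)}{\log\ell}B_\ell$ whether or not the bubble actually diverges. If you want to repair your argument, replace the absolute lower bound on $B_\ell$ by this comparison of a shell above $\ell$ with the regular shells below $\ell$.
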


\begin{proof}
For every $n\le N$ for which $n=2^k$ with $k$ regular, we have that (recall the definition of $\chi_n(\beta)$ from the previous section)
\begin{align}B_{2N}(\beta)-B_N(\beta)
&\le C_0N^{-4}\chi_{N/d}(\beta)^2\nonumber\\
&\le C_1n^{-4}\chi_n(\beta)^2\nonumber\\
&\le C_2n^{-4}(\chi_{2n}(\beta)-\chi_{n}(\beta))^2\nonumber\\
&\le C_3(B_{2n}(\beta)-B_n(\beta)),\end{align}
where in the first inequality we used \eqref{eq:MMS3}, in the second the sliding-scaled Infrared Bound \eqref{eq:multi chi}, in the third Property P4 of the regularity of $n$, and in the last Cauchy-Schwarz.

Now, there are $\log_2(L/\ell)$ scales  between $\ell$ and $L$, and at least $\tfrac1C\log_2 \ell$ regular scales between $1$ and $\ell$ by abundance of regular scales (Theorem~\ref{thm:regular scales}). Since the sums of squared correlations on any of the former contribute less to $B_L(\beta)-B_\ell(\beta)$ than any of the latter to $B_{\ell}(\beta)$, we deduce  that 
\be
B_L(\beta)\le \Big(1+C\frac{\log_2(L/\ell)}{\log_2\ell}\Big)B_\ell(\beta).
\ee
\end{proof}

Next comes the unconditional mixing property.
\begin{theorem}[random currents' mixing property]
\label{thm:mixing multi}
For $d\ge4$, there exist $\alpha,c>0$ such that for every $t\le s$, every $\beta\le \beta_c$, every $n^\alpha\le N\le\xi(\beta)$, every $x_i\in\Lambda_n$ and $y_i\notin\Lambda_N$ $(i\le t)$, and every events $E$ and $F$ depending on the restriction of $(\n_1,\dots,\n_s)$ to edges within $\Lambda_n$ and outside of $\Lambda_N$ respectively,  
\begin{align}\label{eq:mixing}\big|{\bf P}^{x_1y_1,\dots,x_ty_t,\emptyset,\dots,\emptyset}_\beta[E\cap F]-{\bf P}^{x_1y_1,\dots,x_ty_t,\emptyset,\dots,\emptyset}_\beta[E]{\bf P}^{x_1y_1,\dots,x_ty_t,\emptyset,\dots,\emptyset}_\beta[F]\big|&\le s(\log \tfrac Nn)^{-c}.\end{align}
 Furthermore, for every $x'_1,\dots,x'_t\in\Lambda_n$ and $y'_1,\dots,y'_t\notin\Lambda_N$, we have that 
\begin{align}\label{eq:independence}\big|{\bf P}^{x_1y_1,\dots,x_ty_t,\emptyset,\dots,\emptyset}_\beta[E]-{\bf P}^{x_1y_1',\dots,x_ty_t',\emptyset,\dots,\emptyset}_\beta[E]\big|&\le s(\log \tfrac Nn)^{-c},\\
\big|{\bf P}^{x_1y_1,\dots,x_ty_t,\emptyset,\dots,\emptyset}_\beta[F]-{\bf P}^{x_1'y_1,\dots,x_t'y_t,\emptyset,\dots,\emptyset}_\beta[F]\big|&\le s(\log \tfrac Nn)^{-c}\label{eq:independence2}.\end{align}
\end{theorem}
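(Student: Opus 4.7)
The plan is to adapt the conditional argument of Proposition~\ref{prop:mixing multi simplify} to three more demanding settings simultaneously: arbitrary dimension $d \geq 4$; arbitrary $\beta \leq \beta_c$ without the power-law assumption~\eqref{eq:S}; and an arbitrary number $s$ of currents with $t$ prescribed source pairs. I would first establish the single-current version ($s = t = 1$) and then obtain the multi-current statement by decoupling one source pair at a time.

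For the single-current case, let $m < M$ be two intermediate scales with $n \ll m$ and $M \ll N$, and use Theorem~\ref{thm:regular scales} (abundance of regular scales) to select a dyadic sequence $m = n_1 < n_2 < \cdots < n_K \leq M$, each $n_k$ in a regular scale, with $K \geq c\log(N/n)$. Form the resolution-of-identity
\[
\mathbf{N} := \frac{1}{K}\sum_{k=1}^{K} \frac{1}{\alpha_k} \sum_{y \in {\rm Ann}(n_k, n_{k+1})} \mathbb{I}[y \stackrel{\n_1 + \n_2}{\longleftrightarrow} 0], \qquad \alpha_k := \sum_{y \in {\rm Ann}(n_k, n_{k+1})} \langle \sigma_0 \sigma_y \rangle_\beta.
\]
Properties P1, P2, and P4 of regular scales, combined with the gradient estimate~\eqref{eq:gradient estimate} and Proposition~\ref{prop:3}, should deliver $\mathbf{E}^{x_1 y_1, \emptyset}[\mathbf{N}] \geq 1 - C/K$ and $\mathbf{E}^{x_1 y_1, \emptyset}[\mathbf{N}^2] \leq 1 + C/K$, uniformly in $\beta \leq \beta_c$. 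The switching identity~\eqref{eq:ggf1} then rewrites $\mathbf{E}^{x_1 y_1, \emptyset}[\mathbf{N} \mathbb{I}_{\n_1 \in E \cap F}]$ as a weighted average over $y$ of probabilities $\mathbf{P}^{x_1 y, y y_1}[\n_1 \in E, \n_2 \in F, G(y)]$, in which the inner event $\{\n_1 \in E\}$ and the outer event $\{\n_2 \in F\}$ are now carried by independent currents. A uniform lower bound on $\mathbf{P}[G(y)]$ would come from a backbone decomposition across two further intermediate regular scales, exactly as in the control of $F_1, \ldots, F_4$ in the proof of Lemma~\ref{lem:intersection}; this is where $\alpha$ needs to be taken large. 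The algebraic manipulations at the end of the proof of Proposition~\ref{prop:mixing multi simplify} then close the single-current case with error $(\log N/n)^{-c}$.

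For the extension to multi-current, I would iterate a swap on one source pair at a time. Enlarge the family by an auxiliary sourceless current $\n_{s+1}$ and apply the resolution-of-identity jointly to $(\n_i, \n_{s+1})$ for each $i \leq t$. The switching identity redirects the flux of $\n_i$ through an intermediate vertex $z_i \in {\rm Ann}(m, M)$, splitting $\n_i$ into an inner segment with sources $\{x_i, z_i\}$ and an outer segment with sources $\{z_i, y_i\}$, while leaving the remaining currents untouched via the generalized switching principle of~\cite{AizDumTasWar18} and the conditional independence of the currents given their sources. After $t$ such swaps the whole family of inner restrictions is independent of the whole family of outer restrictions, at a total cost $t \cdot (\log N/n)^{-c} \leq s (\log N/n)^{-c}$; this yields~\eqref{eq:mixing}. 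The two displaced-source statements~\eqref{eq:independence} and~\eqref{eq:independence2} follow on specialising $F$ (resp.\ $E$) to the full event in~\eqref{eq:mixing} applied with two distinct outer (resp.\ inner) source configurations, and a triangle inequality.

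The main obstacle will be the multi-current bookkeeping in the iterated switching: one has to verify that the event $G(z_i)$ and its analogues remain measurable with respect to the designated pieces of the current configuration after successive swaps, and that no spurious cross-dependencies are introduced between the modified source pairs $\{x_i, z_i\}, \{z_i, y_i\}$ on distinct indices $i$. A secondary technical point is the control of $\mathbf{P}^{\cdots}[G(y)]$ in the absence of~\eqref{eq:S}: the cross-annular crossing probabilities (backbone zig-zags and sourceless crossings of wide annuli) must be bounded purely through regular-scale estimates and the sliding-scale Infrared Bound (Theorem~\ref{prop_mon_chi}), which forces $n, m, M, N$ to be separated by large polynomial gaps and hence $\alpha$ to be taken large.
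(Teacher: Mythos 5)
Your overall architecture (resolution of identity over regular scales, switching to factorize inner and outer events, backbone bounds for $G(y)$, closing algebraic manipulations) is the one the paper uses, but there is a concrete gap in how you treat the far sources $y_i$. You define $\mathbf N$ by summing the indicators over the \emph{full} annuli ${\rm Ann}(n_k,n_{k+1})$ and normalizing by $\alpha_k=\sum_{y}\langle\sigma_0\sigma_y\rangle_\beta$, exactly as in the conditional proof. Without the power-law assumption \eqref{eq:S} this breaks down: the source $y_i$ lies outside $\Lambda_N$ and need not sit in a regular scale (indeed $|y_i|$ may greatly exceed $\xi(\beta)$ when $\beta<\beta_c$), so neither P1--P2 nor the gradient estimate \eqref{eq:gradient estimate} controls the ratios $\langle\sigma_u\sigma_{y_i}\rangle/\langle\sigma_{x_i}\sigma_{y_i}\rangle$ that enter the first moment of your $\mathbf N$, the second-moment bound, and the backbone estimate for $\Gamma(\n_i')$ crossing ${\rm Ann}(r,m)$ (cf.\ \eqref{eq:B bound}). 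The paper's fix is twofold: it normalizes each $\mathbf N_i$ by $A_{x_i,y_i}(2^k)=\sum_u a_{x_i,y_i}(u)$ so that $\mathbf E[\mathbf N_i]=1$ \emph{exactly} by the switching lemma, and it restricts the sum to the sets $\bbA_{y_i}(2^k)\subset{\rm Ann}(2^k,2^{k+1})$ on which $\langle\sigma_x\sigma_{y_i}\rangle\le(1+C|x-u|/|y_i|)\langle\sigma_u\sigma_{y_i}\rangle$ holds by construction (see Remark~\ref{rmk:aa}); the MMS inequality guarantees these sets carry a uniformly positive fraction of the annulus mass. Without some substitute for this device your moment estimates and Lemma~\ref{lem:G} do not go through.

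A second, structural, issue is the multi-current step. Decoupling ``one source pair at a time'' with a single shared auxiliary sourceless current $\n_{s+1}$ does not work as stated: the indicators $\mathbb I[u_i\leftrightarrow x_i \text{ in }\n_i+\n_{s+1}]$ for distinct $i$ all depend on $\n_{s+1}$, destroying the independence needed for the concentration of the product, and after the first switching the auxiliary current is consumed. Moreover, after swapping only the first pair the events $E$ and $F$ are still coupled through $\n_2,\dots,\n_t$, so the errors from $t$ successive approximations must be tracked through modified measures at each stage --- precisely the bookkeeping you flag as the main obstacle. The paper avoids this entirely by duplicating \emph{all} $s$ currents at once (one independent sourceless partner $\n_i'$ per current), setting $\mathbf N=\prod_{i\le t}\mathbf N_i$, obtaining $\mathbf E[(\mathbf N-1)^2]\le C/\log(N/n)$ via a telescoping identity and the currents' independence (Proposition~\ref{lem:concentration}), and performing all $t$ switchings simultaneously inside a single event $G(u_1,\dots,u_t)$. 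I recommend you adopt the simultaneous scheme rather than attempting the iteration.
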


We postpone the proof to Section~\ref{sec:4.2} below.  Before showing how Theorem~\ref{thm:mixing multi} is used in the proof of the improved tree diagram bound, let us make an interlude and comment on this statement.
\bigbreak
\noindent{\bf Discussion} 
 The relation \eqref{eq:mixing} is an assertion of approximate independence between events at far distances, and \eqref{eq:independence}--\eqref{eq:independence2} expresses a degree of   independence of the probability of an event  from the precise placement of  the sources when these are far from the event in question.   This result should be of interest on its own, and possibly have other applications, since mixing properties efficiently replace independence in statistical mechanics. 

The main difficulty of the theorem concerns currents with a source inside $\Lambda_n$ and a source outside $\Lambda_N$ (i.e.~the first $t$ ones). In this case, the currents are constrained to have a path linking the two, and that may be a conduit for information, and correlation,  between $\Lambda_n$ and the exterior of $\Lambda_N$.   
To appreciate the point it may be of help to compare the situation  with Bernoulli percolation: there the mixing property without sources is a triviality (by the variables' independence); while  an analogue of the mixing property with sources $x$ and $y$ would concern  Bernoulli percolation conditioned on having a path from $x$ to $y$.  Proving convergence at criticality, for $x$ set as the origin and $y$ tending to infinity, of these conditioned measures is a notoriously hard problem.  It would in particular imply the existence of the so-called Incipient Infinite Cluster (IIC), and the definition of the IIC was justified in 2D \cite{Kes86} and in high dimension \cite{HofJar04}, but it is still open in dimensions $3\le d\le 10$. When the number  of sources is even inside $\Lambda_n$, things become much simpler and one may in fact prove a quantitative ratio weak mixing using mixing properties for (sub)-critical random-cluster measures with cluster-weight 2 provided by \cite{AizDumSid15}. 

Theorem~\ref{thm:mixing multi}   has an extension to three dimensions using \cite{AizDumSid15}, but there it becomes non-quantitative (the careful reader will notice that the condition $d>3$ is coming from the exponent appearing in the proof of \eqref{eq:aahaha} in Lemma~\ref{lem:G} in the next section). More precisely, one may prove that in dimension $d=3$, for every $n$, $s$ and $\ep$, there exists a constant $N$ sufficiently large that the previous theorem holds with an error $\ep$ instead of $s(\log \tfrac Nn)^{-c}$. This has a particularly interesting application: one may construct the IIC in dimension $d=3$ for this model, since the random-cluster model with cluster weight $q=2$ conditioned on having a path from $x$ to $y$ can be obtained as the random current model with sources $x$ and $y$ together with an additional independent sprinkling (see \cite{AizDumTasWar18}).  This represents a non-trivial result for critical 3D Ising. More generally, we believe that the previous mixing result may be a key tool in the rigorous description of the critical behaviour of the Ising model in three dimensions.

This concludes the interlude, and we  return now to the proof of the intersection-clustering bound. 

\begin{proof}[Proposition~\ref{prop:non isolated general}]
We follow the same argument as in the proof of the conditional version (Proposition~\ref{prop:non isolated}) and borrow the notation from the corresponding proof at the end of Section~\ref{sec:2.2}. We fix $\alpha>2$ large enough that the mixing property Theorem~\ref{thm:mixing multi} holds true. Using Lemma~\ref{cor:growth}, we may choose $D=D(\alpha)$ such that $\ell_{k+1}\ge \ell_k^\alpha$. 

The proof is exactly identical to the proof of Proposition~\ref{prop:non isolated}, with the exception of the bound on ${\bf P}^{0x,0z,\emptyset,\emptyset}[A_S]$ and the fact that we restrict ourselves to subsets $S$ of even integers in $\{1,\dots,K-3\}$. In order to obtain this result, first observe that since we assumed $\ell_K\le \xi(\beta)$, by Theorem~\ref{thm:regular scales} there exists $y\in {\rm Ann}(\ell_{K-1},\ell_{K})$ in a regular scale. Since the event $A_S$ depends on the currents inside $\Lambda_{\ell_{K-2}}$ (since $S$ does not contain integers strictly larger than $K-3$), and that $\ell_{K-1}\ge \ell_{K-2}^\alpha$, the mixing property (Theorem~\ref{thm:mixing multi}) shows that
\be
{\bf P}^{0x,0z,\emptyset,\emptyset}[A_S]\le {\bf P}^{0y,0y,\emptyset,\emptyset}[A_S]+\frac{C}{\sqrt{\log \ell_{K-1}}}\le {\bf P}^{0y,0y,\emptyset,\emptyset}[A_S]+2^{-\delta K}.
\ee
To derive the first bound on the right-hand side, we apply the mixing property repeatedly (Theorem~\ref{thm:mixing multi}) and the intersection property (Lemma~\ref{lem:intersection general}) exactly as in the conditional proof. For the second inequality, we lower bound $\ell_{K-1}$ using $B_{\ell_{K-1}}(\beta)\ge D^{K-1}$ and the Infrared Bound \eqref{eq:IB}.
\end{proof}

\subsection{The mixing property: proof of Theorem~\ref{thm:mixing multi}}\label{sec:4.2}

As we saw, the mixing property is in the core of the proof of our main result.  The strategy of the proof was explained in Section~\ref{sec:2.2} when we proved mixing for one current under the power-law assumption.  In this section we  again define a random variable ${\bf N}$ which is approximately $1$  and is a weighted sum over ($t$-tuple of) vertices connected to the origin. The main difficulty will come from the fact that since we do not fully control the spin-spin correlations, we  will need to define ${\bf N}$ in a smarter fashion. Also, whereas in Section~\ref{sec:2.2} we  treated the case of a single current ($s=1$), here we generalize to multiple currents.

Fix $\beta\le\beta_c$ and drop it from the notation. Also fix $s\ge  t\ge 
1$ and $n^\alpha\le N\le\xi(\beta)$. Below, constants $c_i$ and $C_i$ are independent of the choices of $s,t,
\beta,n,N$ satisfying the properties above. We introduce the integers $m$ and $M$ such that $m/n=(N/n)^{1/3}$ and $N/M=(N/n)^{1/3}$ (we omit the details of the rounding operation).

For ${\bf x}=(x_1,\dots,x_t)$ and ${\bf y}=(y_1,\dots,y_t)$, we will use the following shortcut notation 
\be
{\bf P}^{\bf xy}:={\bf P}^{x_1y_1,\dots,x_ty_t,\emptyset,\dots,\emptyset}\quad\text{and}\quad{\bf P}^{{\bf xy}}\otimes{\bf P}^\emptyset,
\ee
 where the second measure is the law of the random variable $(\n_1,\dots,\n_s,\n'_1,\dots,\n'_s)$, where  $(\n'_1,\dots,\n'_s)$ is an independent family of sourceless currents. 

To define $\mathbf N$, first introduce for every vertex $y\notin \Lambda_{2dm}$,  the set (see Fig.~\ref{fig:4}) 
\begin{align}
\bbA_{y}(m)&:=\big\{u\in {\rm Ann}(m,2m):\forall x\in \Lambda_{m/d}, \langle\sigma_x\sigma_y\rangle \le \big(1+\tfrac{C|x-u|}{|y|}\big)\langle\sigma_u\sigma_y\rangle\big\},
\end{align}
where $C$ is given by the definition of good scales.
\begin{remark}\label{rmk:aa} When $y$ is in a regular scale, then $\bbA_{y}(m)$ is equal to ${\rm Ann}(m,2m)$ by Property P2 of regular scales. The reason why we  consider $\bbA_{y}(m)$ instead of the full annulus ${\rm Ann}(m,2m)$ is technical: since $y$ will not a priori be assumed to belong to a regular scale (in fact $|y|$ may be much larger than $\xi(\beta)$ when $\beta<\beta_c$), we will use (for \eqref{eq:b0} and \eqref{eq:B bound} below) the inequality between $\langle\sigma_x\sigma_y\rangle$ and $\langle\sigma_u\sigma_y\rangle$ in several bounds. Now,  if $y_1=|y|$, then
\be\label{eq:aah1}
\bbA_{y}(m)\supset\{z\in\Z^d: m\le z_1\le 2m\text{ and }0\le z_j\le m/d\text{ for $j>1$}\}\ee
as the Messager-Miracle-Sole inequality implies\footnote{
The claim follows directly from the inequality $S_\beta(y)\le S_\beta(x)$ for every $x,y$ such that $x_1\ge0$ and  $y_1\ge x_1+\sum_{j>1}|y_j-x_j|$. In order to prove this inequality, define, for $0\le i\le d$,
\be
v^{(i)}:=(x_1+\sum_{j=i}^d|y_j-x_j|,x_2,\dots,x_{i},y_{i+1},\dots,y_d).
\ee
Successive applications of the Messager-Miracle-Sole inequality  with respect to the sum or the difference (depending on whether $x_i$ is positive or negative) of the first and $i$-th coordinates implies that 
\be
S_\beta(y)\le S_\beta(v^{(1)})\le S_\beta(v^{(2)})\le\dots\le S_\beta(v^{(d)})=S_\beta(x).
\ee
} that $\langle\sigma_z\sigma_y\rangle\ge \langle\sigma_x\sigma_y\rangle$ for every $x\in \Lambda_{m/d}$.
\end{remark}
From now on, fix a set $\calK$ of regular scales $k$ with $m\le 2^k\le M/2$ satisfying that distinct $k,k'\in \calK$ are differing by a multiplicative factor at least $C$ (where the constant $C$ is given by Theorem~\ref{thm:regular scales}). We further assume that $|\calK|\ge c_1\log(N/n)$, where $c_1$ is sufficiently small. The existence of $\calK$ is guaranteed by the definition of $m$ and $M$ and the abundance of regular scales given by Theorem~\ref{thm:regular scales}.

Define ${\bf N}:=\prod_{i=1}^t{\bf N}_i$, 
where
\begin{equation}{\bf N}_i:=\frac1{|\calK|}\sum_{k\in \calK}\frac{1}{A_{x_i,y_i}(2^k)}\sum_{u\in\bbA_{y_i}(2^k)}\mathbb I[u\stackrel{\n_i+\n'_i}\longleftrightarrow x_i],
\end{equation}
where
$
a_{x,y}(u):=\langle\sigma_x\sigma_u\rangle\langle\sigma_u\sigma_y\rangle/\langle\sigma_x\sigma_y\rangle$ and $ A_{x,y}(m):=\sum_{u\in \bbA_{y}(m)}a_{x,y}(u)$.
The first step of the proof is the following concentration inequality.
\begin{proposition}[Concentration of ${\bf N}$]\label{lem:concentration}
For every $\alpha>2$, there exists $C_0=C_0(\alpha,t)>0$ such that for every $n$ large enough and $n^\alpha\le N\le\xi(\beta)$,
\begin{equation}
{\bf E}^{{\bf xy},\emptyset}[({\bf N}-1)^2]\le \frac{C_0}{\log(N/n)}.
\end{equation}
\end{proposition}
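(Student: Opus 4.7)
The strategy reduces the claim to an independent single-current calculation and then controls each factor via a sharp second-moment estimate exploiting the definition of $\bbA_{y_i}$ and the regularity of the scales in $\calK$.

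By the independence of the pairs $(\n_i,\n'_i)$ for $i=1,\dots,t$ under ${\bf P}^{{\bf xy},\emptyset}$ and the fact that each $\mathbf N_i$ depends only on the $i$-th such pair, one has ${\bf E}^{{\bf xy},\emptyset}[\mathbf N^r]=\prod_{i=1}^t{\bf E}^{{\bf xy},\emptyset}[\mathbf N_i^r]$ for $r=1,2$. It therefore suffices to prove ${\bf E}[\mathbf N_i]=1$ and ${\bf E}[\mathbf N_i^2]\le 1+C_1/|\calK|$; then $(1+C_1/|\calK|)^t-1\le C_0'/|\calK|$ for $n$ large enough, and the conclusion follows from $|\calK|\ge c_1\log(N/n)$. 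The first moment is an \emph{exact} identity: by the two-current instance of \eqref{eq:orga}, ${\bf P}^{x_iy_i,\emptyset}[u\stackrel{\n_i+\n'_i}\longleftrightarrow x_i]=a_{x_i,y_i}(u)$, and by construction $A_{x_i,y_i}(2^k)=\sum_{u\in\bbA_{y_i}(2^k)}a_{x_i,y_i}(u)$ normalizes the inner sum to $1$ for every $k\in\calK$. The exactness (as opposed to the near-equality used in the conditional proof of Section~\ref{sec:2.2}) is a key simplification.

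For the second moment I apply Proposition~\ref{prop:3} to obtain
\[{\bf P}^{x_iy_i,\emptyset}[u,v\stackrel{\n_i+\n'_i}\longleftrightarrow x_i]\le a_{x_i,y_i}(u)\,a_{u,y_i}(v)+a_{x_i,y_i}(v)\,a_{v,y_i}(u),\]
and restrict by symmetry to pairs $k\le\ell$ in $\calK$. The first (``well-ordered'') term is the main one and is controlled by the sharp pointwise comparison
\[\frac{a_{u,y_i}(v)}{a_{x_i,y_i}(v)}=\frac{\langle\sigma_u\sigma_v\rangle}{\langle\sigma_{x_i}\sigma_v\rangle}\cdot\frac{\langle\sigma_{x_i}\sigma_{y_i}\rangle}{\langle\sigma_u\sigma_{y_i}\rangle}\le\bigl(1+C\,2^k/2^\ell\bigr)\bigl(1+C\,2^k/N\bigr),\]
where the second factor comes directly from the defining inequality of $\bbA_{y_i}(2^k)$ applied with $x=x_i$ (using $|x_i-u|\le C\cdot 2^k$), and the first factor is the gradient estimate (property P2) at the regular scale $2^\ell\in\calK$, applicable because the scale separation built into $\calK$ forces $v-u$ and $v-x_i$ into ${\rm Ann}(2^\ell/2,4\cdot 2^\ell)$. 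Summing first in $v\in\bbA_{y_i}(2^\ell)$ and then in $u\in\bbA_{y_i}(2^k)$ yields
\[\sum_{u,v}a_{x_i,y_i}(u)\,a_{u,y_i}(v)\le\bigl(1+C(2^k/2^\ell+2^k/N)\bigr)\,A_{x_i,y_i}(2^k)\,A_{x_i,y_i}(2^\ell).\]

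For the exchanged term, I rewrite $a_{x_i,y_i}(v)\,a_{v,y_i}(u)=a_{x_i,y_i}(u)\cdot\langle\sigma_{x_i}\sigma_v\rangle\langle\sigma_v\sigma_u\rangle/\langle\sigma_{x_i}\sigma_u\rangle$; using P2 at scale $2^\ell$ to bound $\langle\sigma_v\sigma_u\rangle\le C\langle\sigma_{x_i}\sigma_v\rangle$, the resulting double sum factorizes, and the $v$-sum is controlled by the single-scale bubble increment $\sum_{v\in\bbA_{y_i}(2^\ell)}\langle\sigma_{x_i}\sigma_v\rangle^2$, bounded via Lemma~\ref{cor:growth} by $O(B_{2^{\ell+1}}(\beta)-B_{2^{\ell-1}}(\beta))=O(1)$ in dimension $d=4$. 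Combined with the sliding-scale lower estimate $A_{x_i,y_i}(2^k)\ge c\chi_{2^k}(\beta)\ge c'(2^k)^2/\beta$ coming from \eqref{eq:multi chi} and P4, the net exchanged contribution divided by $A(2^k)A(2^\ell)$ is $O((2^k/2^\ell)^{d-2})$, summable geometrically in $\ell-k$. Assembling both contributions over $k\le\ell\in\calK$, the off-diagonal sums telescope, while the $|\calK|$ diagonal pairs $k=\ell$ are absorbed by the $|\calK|^{-2}$ prefactor using only the crude bound ${\bf E}[M_k^2]=O(1)$ (obtained from the same analysis at $2^k/2^\ell=1$); altogether ${\bf E}[\mathbf N_i^2]\le 1+C'/|\calK|$, which by the first paragraph completes the proof. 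The chief obstacle is the pointwise ratio comparison in the third paragraph: the error must shrink like $2^k/2^\ell$ for the off-diagonal double sum to contribute only $O(|\calK|)$ rather than $O(|\calK|^2)$, and this geometric decay relies in an essential way on the construction of $\bbA_y$ (which absorbs the fluctuations of $\langle\sigma_\cdot\sigma_{y_i}\rangle$ near $y_i$, where no regularity is assumed) together with the regularity of the outer scale $\ell\in\calK$.
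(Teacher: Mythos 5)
Your overall architecture matches the paper's: reduce to a single-current second-moment estimate $\mathbf E[\mathbf N_i^2]\le 1+C/|\calK|$ (the paper uses a telescoping identity plus Cauchy--Schwarz where you use the product-of-moments identity, but both work for fixed $t$), observe that the first moment equals $1$ exactly by the switching lemma, split the two-point connection probability via Proposition~\ref{prop:3} into a ``well-ordered'' and an ``exchanged'' term, and handle the well-ordered term exactly as in \eqref{eq:b1}, i.e.\ via the defining inequality of $\bbA_{y_i}(2^k)$ together with P2 at the outer regular scale. Up to that point the argument is sound.

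The genuine gap is in the exchanged term. You normalize by $A_{x_i,y_i}(2^k)A_{x_i,y_i}(2^\ell)$ and claim a bound $O((2^k/2^\ell)^{d-2})$ by invoking ``$A_{x_i,y_i}(2^k)\ge c\,\chi_{2^k}(\beta)\ge c'(2^k)^2/\beta$''. The first inequality is fine (P4 plus Remark~\ref{rmk:aa}), but the second is not available: the sliding-scale Infrared Bound \eqref{eq:multi chi} only yields the \emph{upper} bound $\chi_L\le CL^2/\beta$, and the unconditional lower bound coming from \eqref{eq:Simon-Lieb} is merely $\chi_L\gtrsim L$. Asserting $\chi_L\gtrsim L^2$ is tantamount to assuming saturation of the Infrared Bound, i.e.\ essentially the power-law hypothesis \eqref{eq:S} that the regular-scale machinery exists to avoid. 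If one carries out your aggregate computation correctly (P1 at scale $\ell$ to replace $\langle\sigma_u\sigma_v\rangle$ by $\langle\sigma_{x_i}\sigma_v\rangle$, the $O(1)$ single-scale bubble increment for the $v$-sum, and the comparison of that increment with $\langle\sigma_0\sigma_{2^\ell{\bf e}_1}\rangle\,A_{x_i,y_i}(2^\ell)$), the exchanged contribution divided by $A(2^k)A(2^\ell)$ comes out as $C\,\langle\sigma_0\sigma_{2^\ell{\bf e}_1}\rangle/\langle\sigma_0\sigma_{2^k{\bf e}_1}\rangle$, and one still needs this ratio to decay geometrically in the number of scales of $\calK$ separating $k$ from $\ell$. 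That decay is exactly what property P3 supplies (each regular scale of $\calK$ between $k$ and $\ell$ forces the two-point function to halve), and it is the ingredient the paper uses in \eqref{eq:b2} in place of any absolute lower bound on $\chi$. The same caveat applies to your diagonal case $k=\ell$: the bound $\mathbf E[M_k^2]=O(1)$ is true, but it is obtained from P4 and the comparison of the susceptibility increment with $A_{x_i,y_i}(2^k)$ as in \eqref{eq:b3}, not from ``the same analysis at $2^k/2^\ell=1$'' if that analysis rests on $\chi_L\gtrsim L^2$.
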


\begin{proof}
We shall apply the telescopic formula
\[
{\bf N}-1\, \equiv \, \prod_{i=1}^t{\bf N}_i-1 \, =\,  \sum_{i=1}^t({\bf N}_i-1)\prod_{j>i}{\bf N}_j 
\]
with the last product  interpreted as $1$ for $i=t$.
Hence, by the Cauchy-Schwarz inequality and the currents' independence, 
\be{\bf E}^{{\bf xy},\emptyset}[({\bf N}-1)^2]\, \le \, t \, \sum_{i=1}^t{\bf E}^{{\bf xy},\emptyset}[({\bf N}_i-1)^2] 
\,  \prod_{j>i}{\bf E}^{{\bf xy},\emptyset}[{\bf N}_j^2]\,.
\ee
It therefore suffices to show that there exists a constant $C_1>0$ such that for every $i\le t$,
\begin{equation}\label{eq:ui}
{\bf E}^{{\bf xy},\emptyset}[({\bf N}_i-1)^2]\le \frac{C_1}{\log (N/n)}.
\end{equation}
To lighten the notation, and since the random variable ${\bf N}_i$ depends only on $\n_i$ and $\n'_i$, we omit the index in $x_i,y_i,\n_i,\n'_i$ and write instead just $x,y,\n,\n'$.  We keep the index in ${\bf N}_i$ to avoid confusion with ${\bf N}$ which is the product of these random variables. 

The proof of \eqref{eq:ui} is also based on a computation of the first and second moments of ${\bf N}_i$.
For the first moment, the switching lemma and the definition of ${\bf N}_i$ imply that
${\bf E}^{xy,\emptyset}[{\bf N}_i]=1$.
From the lower bound on $|\calK|$, to bound the second moment it therefore suffices to show that 
\begin{align}\label{eq:h1}{\bf E}^{xy,\emptyset}[{\bf N}_i^2]\le 1+\frac{C_2}{|\calK|},\end{align}
which follows from the inequality, for every $\ell\ge k$ in $\calK$, 
\begin{equation}\label{eq:h5}
\sum_{\substack{u\in \bbA_{y}(2^k)\\ v\in\bbA_{y}(2^\ell)}}{\bf P}^{xy,\emptyset}[u,v\stackrel{\n+\n'}\longleftrightarrow x]\le A_{x,y}(2^k)A_{x,y}(2^\ell)(1+C_32^{-(\ell-k)}).
\end{equation}

\paragraph{Case $\ell>k$.} We find by \eqref{eq:prop3b} that 
\begin{align}\label{eq:b0}{\bf P}^{xy,\emptyset}[u,v\stackrel{\n+\n'}\longleftrightarrow x]&\le a_{x,y}(u)a_{x,y}(v)\left(\frac{\langle\sigma_x\sigma_y\rangle}{\langle\sigma_u\sigma_y\rangle}\frac{\langle\sigma_u\sigma_v\rangle}{\langle\sigma_x\sigma_v\rangle}+\frac{\langle\sigma_x\sigma_y\rangle}{\langle\sigma_v\sigma_y\rangle}\frac{\langle\sigma_u\sigma_v\rangle}{\langle\sigma_x\sigma_u\rangle}\right).\end{align}
Now, since $u\in \bbA_{y}(2^k)$, $\langle\sigma_x\sigma_y\rangle\le(1+C\tfrac{|u-x|}{|y|})\langle\sigma_u\sigma_y\rangle$.  Furthermore, since $\ell$ is a regular scale, Property P2 of regular scales  implies that $\langle\sigma_u\sigma_v\rangle\le (1+C\tfrac{|u-x|}{|v|})\langle\sigma_x\sigma_v\rangle$. We deduce that 
\begin{align}\label{eq:b1}\frac{\langle\sigma_x\sigma_y\rangle}{\langle\sigma_u\sigma_y\rangle}\frac{\langle\sigma_u\sigma_v\rangle}{\langle\sigma_x\sigma_v\rangle}\le 1+C_02^{-(\ell-k)}.\end{align}
Similarly,  since $v\in \bbA_{y}(2^\ell)$, $\langle\sigma_x\sigma_y\rangle\le(1+C\tfrac{|v-x|}{|y|})\langle\sigma_v\sigma_y\rangle$.  Property P3 for the $\ell-k$ regular scales in $\calK$ between $k$ and $\ell$ implies that 
\begin{equation}\label{eq:b2}\frac{\langle\sigma_x\sigma_y\rangle}{\langle\sigma_v\sigma_y\rangle}\frac{\langle\sigma_u\sigma_v\rangle}{\langle\sigma_x\sigma_u\rangle}\le C_12^{-(\ell-k)}.\end{equation}
Plugging \eqref{eq:b1}--\eqref{eq:b2}  into \eqref{eq:b0} and summing over $u\in\bbA_{y}(2^k)$ and $v\in\bbA_{y}(2^\ell)$ gives \eqref{eq:h5}.

\paragraph{Case $\ell=k$.} Assume that $\langle\sigma_u\sigma_y\rangle\le \langle\sigma_v\sigma_y\rangle$. Use \eqref{eq:prop3b} to write
\begin{align}
\label{eq:dt}{\bf P}^{xy,\emptyset}[u,v\stackrel{\n+\n'}\longleftrightarrow x]&\le\langle\sigma_v\sigma_u\rangle\left(\frac{\langle\sigma_x\sigma_u\rangle}{\langle\sigma_x\sigma_v\rangle}+\frac{\langle\sigma_u\sigma_y\rangle}{\langle\sigma_v\sigma_y\rangle}\right) a_{x,y}(v).
\end{align}
By Property P1 of regular scales, the first term under parenthesis is bounded by a constant. The second one is bounded by 1 by assumption. Now, for each $v\in \bbA_{y}(2^k)$, 
 \begin{align*}
\sum_{u\in \bbA_{y}(2^k):\langle\sigma_u\sigma_y\rangle\le \langle\sigma_v\sigma_y\rangle}\langle\sigma_v\sigma_u\rangle&\le \chi_{2^{k+1}}(\beta)\le C_2(\chi_{2^{k+1}}(\beta)-\chi_{2^k}(\beta))\\
& \le  C_3\sum_{u\in \bbA_{y}(2^k)}\langle\sigma_0\sigma_u\rangle \\
&\le C_4\sum_{u\in \bbA_{y}(2^k)}\frac{\langle\sigma_x\sigma_u\rangle\langle\sigma_u\sigma_y\rangle}{\langle\sigma_x\sigma_y\rangle}=C_4A_{x,y}(2^k),
\end{align*}
where the first inequality is trivial, the second one is true by Property P4, the third by Remark~\ref{rmk:aa} (when $y$ is regular then it is a direct consequence of P4, and when it is not one can use \eqref{eq:aah1} and the Messager-Miracle-Sole inequality), and the fourth inequality follows from Property P1 of regular scales (to replace $\langle\sigma_0\sigma_u\rangle$ by $\langle\sigma_x\sigma_u\rangle$) and the fact that since $u\in \bbA_y(2^k)$,  $\langle\sigma_x\sigma_y\rangle\le (1+C|u-x|/|y|)\langle\sigma_u\sigma_y\rangle\le C_5\langle\sigma_u\sigma_y\rangle$. 

We deduce that 
\begin{align}\label{eq:b3}\sum_{u,v\in\bbA_{y}(2^k)}{\bf P}^{xy,\emptyset}[u,v\stackrel{\n_1+\n_2}\longleftrightarrow x]\le 2\sum_{\substack{u,v\in\bbA_{y}(2^k)\\
\langle\sigma_u\sigma_y\rangle\le \langle\sigma_v\sigma_y\rangle}}{\bf P}^{xy,\emptyset}[u,v\stackrel{\n_1+\n_2}\longleftrightarrow x]&\le  C_6A_{x,y}(2^k)^2.\end{align}
\end{proof}

For a proof of Theorem~\ref{thm:mixing multi} we fix $\alpha>2$ (which will be taken large enough later).   Applying the Cauchy-Schwarz inequality gives
\begin{align}\label{eq:gf}|{\bf P}^{{\bf xy}}[E\cap F]-{\bf E}^{{\bf xy},\emptyset}[{\bf N}\mathbb I_{(\n_1,\dots,\n_s)\in E\cap F}]|&\le \sqrt{{\bf E}^{{\bf xy},\emptyset}[({\bf N}-1)^2]}\le \frac{C_1}{\sqrt{\log(N/n)}}.
\end{align}
Now, for ${\bf u}=(u_1,\dots,u_t)$ with $u_i\in{\rm Ann}(m,M)$ for every $i$, let $G(u_1,\dots,u_t)$ be the event that for every $i\le s$, there exists ${\bf k}_i\le \n_i+\n'_i$ such that ${\bf k}_i=0$ on $\Lambda_n$, ${\bf k}_i=\n_i+\n'_i$ outside $\Lambda_N$, and $\partial{\bf k}_i$ is equal to $\{u_i,y_i\}$ if $i\le t$ and $\emptyset$ if $t<i\le s$.
The switching principle implies as in Section~\ref{sec:3.2} that 
\begin{align}\label{eq:ggf}{\bf P}^{{\bf xy},\emptyset}&[(\n_1,\dots,\n_s)\in E\cap F,u_i\stackrel{\n_i+\n'_i}\longleftrightarrow x_i\text{ for }i\le t,G(u_1,\dots,u_t)]\nonumber\\
&=\Big(\prod_{i=1}^ta_{x_i,y_i}(u_i)\Big)\,{\bf P}^{{\bf xu},{\bf uy}}[(\n_1,\dots,\n_s)\in E,(\n'_1,\dots,\n'_s)\in F,G(u_1,\dots,u_t)].\end{align}
 Also, as before, we have the trivial identity 
\begin{equation}\label{eq:ggff}{\bf P}^{{\bf x u},{\bf uy}}[(\n_1,\dots,\n_s)\in E,(\n'_1,\dots,\n'_s)\in F]={\bf P}^{{\bf xu}}[E]{\bf P}^{\bf uy}[F].\end{equation}

\begin{figure}
\begin{center}
\includegraphics[width = 0.63\textwidth]{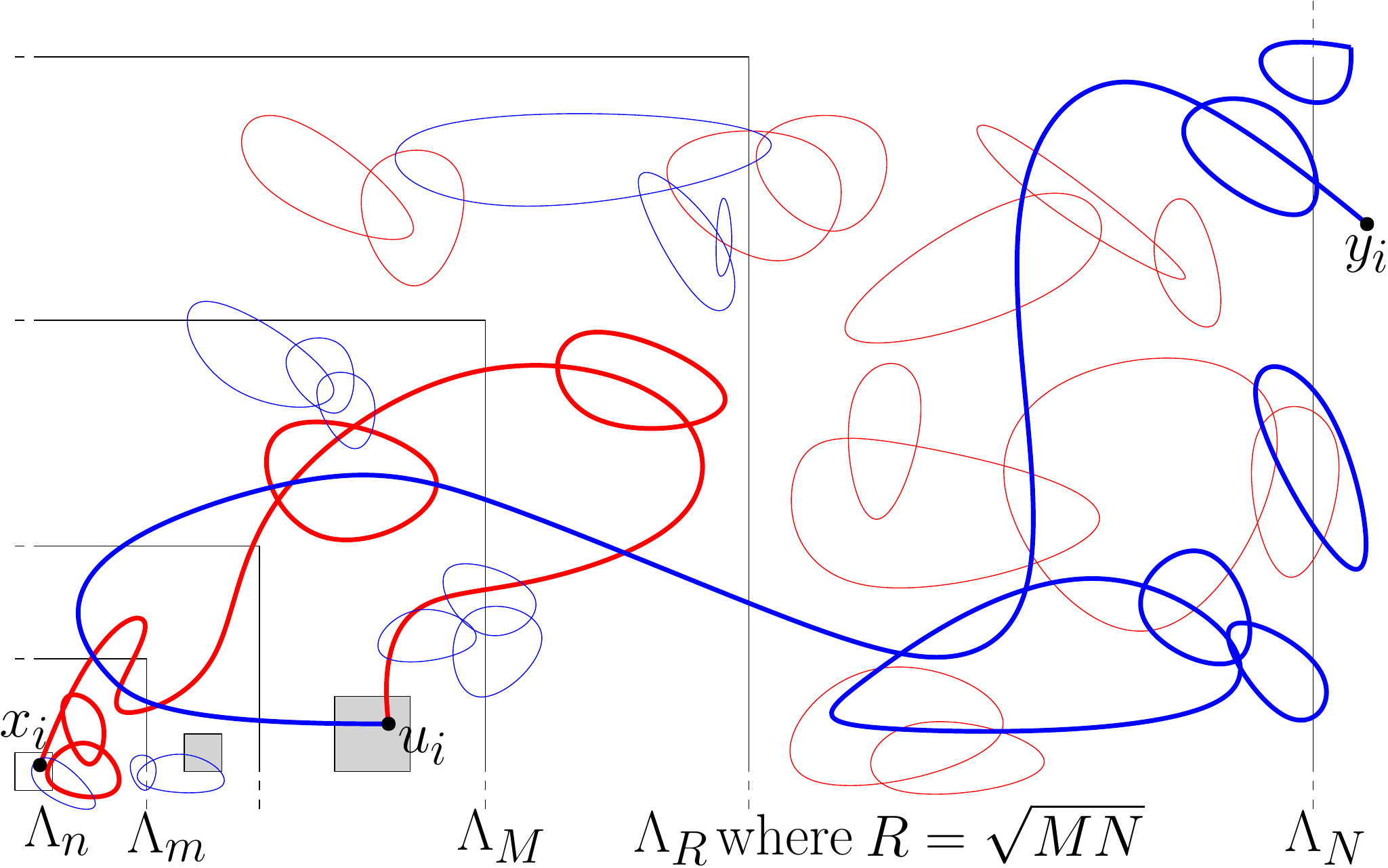}
\caption{The currents $\n_i$ (red) and $\n'_i$ (blue).  Since the sources of $\n_i$, i.e.~$x_i$ and $u_i$,  are both in $\Lambda_M$, a reasoning similar to the proof of uniqueness in the intersection property (first control the backbone, proving that it does not cross the annulus ${\rm Ann}(M,R)$, and then the remaining sourceless current) enables us to conclude that the probability that the current contains a crossing of ${\rm Ann}(M,N)$ is small.   Similarly, since the sources $u_i$ and $y_i$ of $\n'_i$ lie both outside of $\Lambda_m$, we can prove that the probability that $\n'_i$ crosses ${\rm Ann}(n,m)$ is small.  An extra care is needed for establishing  the latter since $y$ is not assumed to be regular.  To circumvent this problem, we consider only intersection sites  $u_i$ in one of the boxes $\bbA_k(y_i)$, which are depicted here in gray.}
\label{fig:4}
\end{center}
\end{figure}

We now pause the argument to establish the following lemma.
\begin{lemma}\label{lem:G}
For $d\ge4$, there exist $\ep>0$ and $\alpha_0=\alpha_0(\ep)>0$ large enough such that for every $n^{\alpha_0}\le N\le \xi(\beta)$ and for every ${\bf u}$ with $u_i\in \bbA_{y_i}(2^{k_i})$ for some $k_i$ with $m\le 2^{k_i}\le M/2$ for every $1\le i\le t$,
\begin{align}
\Big(\prod_{i=1}^ta_{x_i,y_i}(u_i)\Big)^{-1}{\bf P}^{{\bf xy},\emptyset}[u_i\stackrel{\n_i+\n'_i}\longleftrightarrow x_i,\forall i\le t,G(u_1,\dots,u_t)^c]
&={\bf P}^{{\bf xu},{\bf uy}}[G(u_1,\dots,u_t)^c]\nonumber\\
&\le s\big(\tfrac nN\big)^{\ep}.
\end{align}
\end{lemma}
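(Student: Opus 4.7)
The equality in the claim is an instance of the switching identity already used at \eqref{eq:ggf}: the event $G(u_1,\dots,u_t)^c$ depends only on the sums $\n_i+\n'_i$, and is therefore unchanged when the switching lemma is applied individually to each pair of currents to exchange sources between $\n_i$ and $\n'_i$. It is therefore enough to bound $\mathbf P^{\mathbf{xu},\mathbf{uy}}[G(u_1,\dots,u_t)^c]$. Since the pairs $(\n_i,\n'_i)_{i\le s}$ are independent under this measure, writing $G=\bigcap_{i=1}^s G_i$, with $G_i$ the event that an admissible $\mathbf k_i\le \n_i+\n'_i$ exists for the $i$-th pair, a union bound reduces the proof to showing
\[\mathbf P^{\mathbf{xu},\mathbf{uy}}[G_i^c]\le (n/N)^\epsilon\]
uniformly in $i\le s$, for some $\epsilon=\epsilon(d)>0$, provided $\alpha_0=\alpha_0(\epsilon)$ is chosen large enough.

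For $i\le t$, the construction sketched before \eqref{eq:gggff} shows that $G_i$ is implied by the conjunction of two geometric properties: (a) no cluster of $\n_i$ crosses ${\rm Ann}(M,N)$; (b) no cluster of $\n'_i$ crosses ${\rm Ann}(n,m)$. Indeed, (a) guarantees that the restriction of $\n_i$ to the clusters meeting $\Lambda_N^c$ is sourceless and contained in $\Lambda_M^c\subset \Lambda_n^c$, while (b) guarantees that the restriction of $\n'_i$ to the clusters meeting $\Lambda_m^c$ carries the full source set $\{u_i,y_i\}$ and avoids $\Lambda_n$; the sum of the two restrictions is then an admissible $\mathbf k_i$. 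For $i>t$ both currents are sourceless, and $G_i$ holds as soon as no cluster of $\n_i+\n'_i$ crosses ${\rm Ann}(n,N)$. A further union bound over these crossing events thus reduces the problem to bounding, for each relevant annulus, the probability of a crossing by a single sourced or sourceless current.

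Each such crossing probability is handled by the backbone-plus-depleted-current scheme of the proofs of Lemma~\ref{lem:intersection} and Lemma~\ref{lem:intersection general}: split the annulus ${\rm Ann}(a,b)$ as ${\rm Ann}(a,\sqrt{ab})\cup {\rm Ann}(\sqrt{ab},b)$, condition on the backbone, and bound separately (i) the probability that the backbone performs two consecutive crossings of the inner sub-annulus, by the chain rule for backbones combined with the Infrared Bound~\eqref{eq:IB}, and (ii) the probability that the depleted sourceless remainder crosses the outer sub-annulus, by Griffiths' inequality reducing to a bound of the form $\sum_{v,w}\langle\sigma_v\sigma_w\rangle^2\le C a^{d-1}b^{3-d}$. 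In the geometry fixed at the start of Section~\ref{sec:4.2}, $m/n$ and $N/M$ are both $(N/n)^{1/3}$, so each ratio $b/a$ appearing is a fixed positive power of $N/n$; choosing $\alpha_0$ large enough turns each such contribution into a power $(n/N)^\epsilon$.

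The main obstacle is the sourceless step (ii) in the critical dimension $d=4$: the bound $\sum_{v\in\partial\Lambda_a,w\in\partial\Lambda_b}\langle\sigma_v\sigma_w\rangle^2\le Ca^3/b$ is small only when $b$ exceeds a large fixed power of $a$, which is precisely what forces $\alpha_0$ to be taken large. A secondary difficulty is that neither $x_i$ nor $u_i$ is a priori known to lie on a regular scale of the two-point function; the substitute is the hypothesis $u_i\in\mathbb A_{y_i}(2^{k_i})$ with $2^{k_i}$ regular. Its defining inequality, together with Properties P1--P2 of regular scales, supplies the two-sided comparisons between $\langle\sigma_v\sigma_{y_i}\rangle$, $\langle\sigma_{u_i}\sigma_{y_i}\rangle$ and $\langle\sigma_{x_i}\sigma_{y_i}\rangle$ for $v$ in the annuli under consideration, as well as the analogous comparisons on scale $2^{k_i}$ relevant to $\n_i$, that convert the chain-rule factors in step (i) into genuine powers of $n/N$; the constraint $N\le\xi(\beta)$ together with Corollary~\ref{cor:SL} keeps the denominators $\langle\sigma_{x_i}\sigma_{u_i}\rangle$ and $\langle\sigma_{u_i}\sigma_{y_i}\rangle$ appearing there bounded below.
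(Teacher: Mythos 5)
Your proposal follows essentially the same route as the paper's proof: the equality via the generalized switching principle, the inclusion of $G_i$ in the two no-crossing events for $\n_i$ on ${\rm Ann}(M,N)$ and $\n'_i$ on ${\rm Ann}(n,m)$, a union bound over $i\le s$, and the backbone/depleted-sourceless-current splitting of each annulus at its geometric mean, with the Infrared Bound, the Simon--Lieb lower bound, and the hypothesis $u_i\in\bbA_{y_i}(2^{k_i})$ entering exactly where they do in the paper (the latter to control $\langle\sigma_v\sigma_{y_i}\rangle/\langle\sigma_{u_i}\sigma_{y_i}\rangle$ in the backbone estimate for $\n'_i$). The plan is correct and complete in its identification of the key steps.
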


\begin{proof} Fix $\ep>0$ sufficiently small (we will see below how small it should be). The first identity follows from the switching lemma so we focus on the second one. Let $G_i$ be the event that the current ${\bf k}_i$ exists. This event clearly contains (see Fig.~\ref{fig:4}) the event that ${\rm Ann}(M,N)$ is not crossed by a cluster in $\n_i$, and ${\rm Ann}(n,m)$ is not crossed by a cluster in $\n'_i$, since in such case ${\bf k}_i$ can be defined as the sum of $\n_i$ restricted to the clusters intersecting $\Lambda_N^c$ (this current has no sources) and $\n_i'$ restricted to the clusters intersecting $\Lambda_m^c$ (this current has sources $u_i$ and $y_i$). We focus on the probability of this event for $i\le t$, the case $t<i\le s$ being even simpler since there are no sources.

We  bound the probability of $\n_i$ crossing ${\rm Ann}(M,N)$  by splitting ${\rm Ann}(M,N)$ in two annuli ${\rm Ann}(M,R)$ and ${\rm Ann}(R,N)$ with $R=\sqrt{MN}$, then estimating the probability that the backbone of $\n_i$ crosses the inner annulus, and then the probability that the remaining current crosses the outer annulus. More precisely, the chain rule for backbones \cite{AizBarFer87} gives that for $\alpha_0=\alpha_0(\ep)>0$ large enough and $N\ge n^{\alpha_0}$, 
\be\label{eq:aahaha1}
{\bf P}^{\bf xy}[\Gamma(\n_i)\text{ crosses }{\rm Ann}(M,R)]\le \sum_{v\in \partial\Lambda_{R}}\frac{\langle\sigma_{x_i}\sigma_v\rangle\langle\sigma_v\sigma_{u_i}\rangle}{\langle\sigma_{x_i}\sigma_{u_i}\rangle}\le C_2R^3 \frac{M^3}{R^4}\le (n/N)^\ep,
\ee
where  the lower bound \eqref{eq:Simon-Lieb} to bound the denominator and the Infrared Bound~\eqref{eq:IB} for the numerator. 
Then, observe that the remaining current $\n_i\setminus\Gamma(\n_i)$ is sourceless. Adding an additional sourceless current  and using the switching lemma and Griffiths inequality \cite{Gri67} 
(very much like in the bound \eqref{eq:bound sourceless} in the proof of Lemma~\ref{lem:intersection}) gives
\be\label{eq:aahaha}
{\bf P}^{\bf xy}[\n_i\setminus\Gamma(\n_i)\text{ crosses }{\rm Ann}(R,N)]\le\sum_{\substack{v\in \partial\Lambda_R\\ w\in\partial\Lambda_N}}\langle\sigma_v\sigma_w\rangle^2\le C_3R^3N^3(R/N)^4\le (n/N)^\ep,
\ee
where we used the Infrared-Bound~\ref{eq:IB}, and in the last one the definition of $R$ and the fact that $N\ge n^{\alpha_0}$ for $\alpha_0$ large enough.

When dealing with the probability of $\n'_i$ crossing ${\rm Ann}(n,m)$, fix $r=\sqrt{nm}$ and apply the same reasoning with the annuli ${\rm Ann}(n,r)$ and ${\rm Ann}(r,m)$.
The equivalent of \eqref{eq:aahaha} is the same as before, but one must be more careful about the bound on the probability of the event dealing with the backbone:
\be\label{eq:B bound}
{\bf P}^{\bf xy}[\Gamma(\n_i')\text{ crosses }{\rm Ann}(r,m)]\le C_4\sum_{v\in \partial\Lambda_r}\frac{\langle\sigma_{u_i}\sigma_v\rangle\langle\sigma_v\sigma_{y_i}\rangle}{\langle\sigma_{u_i}\sigma_{y_i}\rangle}\le C_5r^3/m^2\le  (n/N)^\ep,
\ee
where we used the Infrared Bound \eqref{eq:IB} and  our assumption that $u_i$ belongs to one of the $\bbA_k(y_i)$ (to show that $\langle\sigma_v\sigma_{y_i}\rangle\le C_4\langle \sigma_{u_i}\sigma_{y_i}\rangle$).
  \end{proof}

Invoking the above lemma we now return  to the proof of Theorem~\ref{thm:mixing multi}. 

 Introduce the coefficients $\delta({\bf u}, {\bf x},{\bf y})$    equal to
\be
\delta({\bf u}, {\bf x},{\bf y}):=\prod_{i=1}^t\frac{a_{x_i,y_i}(u_i)}{|\calK| A_{x_i,y_i}(2^{k_i})}
\ee
for ${\bf u}$ such that for every $i\le t$, $u_i\in \bbA_{y_i}(2^{k_i})$ for some $k_i$, and equal to 0 for other ${\bf u}$.
Gathering \eqref{eq:gf}--\eqref{eq:ggff} as well as Lemma~\ref{lem:G}, and observing that the sum on $(u_1,\dots,u_t)$ of $\delta({\bf u},{\bf x},{\bf y})$ is 1, we obtain that 
 \begin{align}\label{eq:aaag}|{\bf P}^{{\bf xy}}[E\cap F]-\sum_{{\bf u}}\delta({\bf u},{\bf x},{\bf y}){\bf P}^{{\bf xu}}[E]{\bf P}^{\bf uy}[F]|&
\le \frac{C_5s}{\sqrt{\log(N/n)}}+2C_6s(n/N)^\ep\le \frac{C_7s}{\sqrt{\log(N/n)}},
\end{align}
provided that $N\ge n^{\alpha_0}$ where $\alpha_0$ is given by the previous lemma.

To conclude the proof is now a matter of elementary algebraic manipulations. We begin by proving \eqref{eq:independence} when all the $y_i,y'_i$ for $i\le t$ belong to regular scales (not necessarily the same ones). In this case, 
apply twice (once for $\bf y$ and once for $\bf y'$) the previous inequality for our event $E$ and the event on the outside being the full event to find
\begin{align}\label{eq:od}
&|{\bf P}^{{\bf xy}}[E]-{\bf P}^{{\bf xy}'}[E]|\le \big|\sum_{{\bf u}}(\delta({\bf u},{\bf x},{\bf y})-\delta({\bf u},{\bf x},{\bf y}')){\bf P}^{{\bf xu}}[E]\big|+\frac{2C_7s}{\sqrt{\log(N/n)}}.
\end{align}
Since all the $y_i,y'_i$ are in regular scales,   Remark~\ref{rmk:aa} implies that $\bbA_{y_i}(2^{k_i})=\bbA_{y_i'}(2^{k_i})={\rm Ann}(2^{k_i},2^{k_i+1})$. Furthermore, Property 2 of regular scales implies\footnote{Note that in this case $\delta({\bf u},{\bf x},{\bf y})$ and $\delta({\bf u},{\bf x},{\bf y}')$ are both close to 
\[\delta'({\bf u},{\bf x}):=\prod_{i\le t}\frac{\langle\sigma_{x_i}\sigma_{u_i}\rangle}{|\calK|\sum_{v_i\in {\rm Ann}(2^{k_i},2^{k_i+1})}\langle\sigma_{x_i}\sigma_{v_i}\rangle} .\]} that 
\be|\delta({\bf u},{\bf x},{\bf y})-\delta({\bf u},{\bf x},{\bf y}')|\le C_8s\tfrac MN\delta({\bf u},{\bf x},{\bf y})\le C_9s(n/N)^{1/3}\delta({\bf u},{\bf x},{\bf y}).\ee
Therefore, \eqref{eq:independence} follows readily (with a large constant $C_{10}$) in this case. The same argument works for the second identity~\eqref{eq:independence2} for every ${\bf x},{\bf x}'$ and ${\bf y}$, noticing that for every regular ${\bf u}$ for which the coefficients are non-zero,
\be|\delta({\bf u},{\bf x}',{\bf y})-\delta({\bf u},{\bf x},{\bf y})|\le C_{10}s\tfrac mn\delta({\bf u},{\bf x},{\bf y})\le C_{11}s(n/N)^{1/3}\delta({\bf u},{\bf x},{\bf y}).\ee

Now, assume further that $N\ge n^{3\alpha_0}$. Consider ${\bf z}=(z_1,\dots,z_t)$ with $z_i$ in a regular scale for every $i\le t$ and $z_i\in {\rm Ann}(m,M)$. We also pick ${\bf y}$ on which we do not assume anything. The fact that the $\delta({\bf u},{\bf x},{\bf y})$ sum to 1 implies that 
\begin{align}
|{\bf P}^{{\bf xy}}[E]-{\bf P}^{{\bf xz}}[E]|&=\big|{\bf P}^{{\bf xy}}[E]-\sum_{{\bf u}}\delta({\bf u},{\bf x},{\bf y}){\bf P}^{{\bf xz}}[E]\big|\nonumber\\
&\le \big|{\bf P}^{{\bf xy}}[E]-\sum_{{\bf u}}\delta({\bf u},{\bf x},{\bf y}){\bf P}^{{\bf xu}}[E]\big|+\frac{C_6s}{\sqrt{\log(m/n)}}\le \frac{C_7s}{\sqrt{\log(N/n)}},
\end{align}
where in the second line we have used \eqref{eq:od} for ${\bf u}$ and ${\bf z}$ for $n$ and $m$ which is justified since $m/n=(N/n)^{1/3}\ge n^{\alpha_0}$.

Finally, we repeat the reasoning above. For $N\ge n^{9\alpha_0}$, the proof of \eqref{eq:mixing} is obtained by applying \eqref{eq:aaag} for $E$, $F$, $N$ and $n$, the previous inequality for $E$, $m$ and $n$ (which is justified since $m/n=(N/n)^{1/3}\ge n^{3\alpha_0}$), and then again \eqref{eq:aaag} for $F$, $N$ and $n$. This concludes the proof with $\alpha=9\alpha_0$.

\subsection{Proof of Proposition~\ref{prop:gaussian b}}\label{sec:4.3}

As mentioned in the introduction, Newman \cite{New75} showed,  using the model's Lee-Yang property,  that gaussianity of the limit is implied by the asymptotic vanishing of the scaling limit of  $|U_4^\beta|$. A more quantitative proof is available  through the  inequality, valid for every $n\ge2$ and derived using the switching lemma in \cite[Proposition 12.1]{Aiz82},
\begin{align} \label{nGaussian}
0 &\leq \mathcal {G}_n[S_\beta] (x_1,\dots,x_{2n}) -S_\beta(x_1,\dots, x_{2n})\nonumber\\    
&\le - \frac{3}{2} \sum_{1\leq i<j<k<l \leq 2n}  S_{\beta}(x_1,\dots,\cancel {x_i},\dots, \cancel {x_j},\dots, \cancel {x_k},\dots, \cancel {x_l},\dots, x_{2n})  
\,U_4^\beta(x_i, x_j, x_k,x_l). 
\end{align}
Combined with the improved tree diagram bound \eqref{eq:improved tree bound}, this inequality has the following consequence: for a continuous function $f$ which vanishes outside $[-r,r]^4$, and for $\beta\le\beta_c$ and $L\le \xi(\beta)$,  
\begin{equation}\label{eq:moment}
\big|\langle T_{f,L}(\sigma)^{2n}\rangle_\beta-\tfrac{(2n)!}{2^nn!}\langle T_{f,L}(\sigma)^2\rangle_\beta^n\big|~\le~ \tfrac32 (2n)^4\langle T_{f,L}(\sigma)^{2n-4}\rangle_\beta \, \|f\|_\infty^4 \,S(L,r,\beta),
\end{equation}
where 
\be\label{eq:bubble log bound}
S(L,r,\beta)~:=\sum_{\substack{x\in\Z^d\\ x_1,\dots,x_4\in\Lambda_{rL}}}2\frac{\langle\sigma_x\sigma_{x_1}\rangle_\beta\langle\sigma_x\sigma_{x_2}\rangle_\beta\langle\sigma_x\sigma_{x_3}\rangle_\beta\langle\sigma_x\sigma_{x_4}\rangle_\beta}{\Sigma_L(\beta)^2\cdot B_{L(x_1,\dots,x_4)}(\beta)^{c}},
\ee
where $L(x_1,\dots,x_4)$ is the minimal distance between the $x_i$. Also note that by flip symmetry we find that $\langle T_{f,L}(\sigma)^{2n+1}\rangle_\beta=0$.
Multiplying \eqref{eq:moment} by $\tfrac{z^{2n}}{(2n)!}$ and summing over $n$, we obtain
\begin{align}\label{S_L}
 \Big|\langle \exp[z\,T_{f,L}(\sigma)]\rangle_\beta- \exp[\tfrac{z^2}2\langle T_{f,L}(\sigma)^2\rangle_\beta]\Big|~\le~ \exp[\tfrac{z^2}2\langle T_{f,L}(\sigma)^2\rangle_\beta] C_1z^4\|f\|_\infty^4 S(L,r,\beta).\end{align}
Proposition~\ref{prop:gaussian b} therefore
follows from the bound
\be\label{eq:bound S}
S(L,r,\beta)\le C_2r^{12}\Big(\frac{\log \log L}{\log L}\Big)^c
\ee
that we prove next. 
Note that \eqref{S_L} is easy to obtain under the power-law assumption \eqref{eq:S}. Without this assumption, the derivation requires  a few lines of (not particularly interesting) computations to address  the fact that we do not a priori know that $B_L(\beta)$ grows logarithmically. Roughly speaking, the argument below uses the idea that in the case in which $B_L(\beta)$ is small, then $\Sigma_L(\beta)$ is much smaller than $L^2$, so that both combine in such a way that the previous bound is always valid for $c>0$ sufficiently small. 
We now provide the proof of \eqref{eq:bound S}.  Fix $0<a<1$ (any choice would do) and split the sum into four sums
\begin{equation}
S(L,r,\beta)=\underbrace{\sum_{\substack{x\in \Lambda_{drL}\\ x_1,\dots,x_4\in\Lambda_{rL}\\ L(x_1,\dots,x_4)\ge L^a}}(\cdots)}_{(1)}+\underbrace{\sum_{\substack{x\notin \Lambda_{drL}\\ x_1,\dots,x_4\in\Lambda_{rL}\\ L(x_1,\dots,x_4)\ge L^a}}(\cdots)}_{(2)}+\underbrace{\sum_{\substack{x\in \Lambda_{drL}\\ x_1,\dots,x_4\in\Lambda_{rL}\\ L(x_1,\dots,x_4)< L^a}}(\cdots)}_{(3)}+\underbrace{\sum_{\substack{x\notin \Lambda_{drL}\\ x_1,\dots,x_4\in\Lambda_{rL}\\ L(x_1,\dots,x_4)< L^a}}(\cdots)}_{(4)}.
\end{equation}
\begin{description}
\item[Bound on (1)] We focus on this term and give more details since it is in fact the main contributor. By Lemma~\ref{cor:growth}, 
\be\label{eq:nt}
B_{L(x_1,\dots,x_4)}(\beta)\ge \tfrac1{C_3}B_L(\beta).
\ee
 Summing over the sites in $\Lambda_{rL}$, we get that
 \be\label{eq:oki}
 (1)\le C_3 \frac{|\Lambda_{rL}|\chi_{2rL}(\beta)^4}{\Sigma_L(\beta)^2B_L(\beta)^c}\le C_4r^{12}\Big(\frac{\chi_L(\beta)^2}{L^4B_L(\beta)}\Big)^c,
 \ee
 where in the second inequality we used the sliding-scale Infrared Bound \eqref{eq:multi chi} to bound $\chi_{2drL}(\beta)$ in terms of $\chi_L(\beta)\le C_5L^{-4}\Sigma_L(\beta)$ and the Infrared Bound \eqref{eq:IB} to   
  write
  \be\label{eq:ahaha}
\chi_L(\beta)\le C_6L^2.
\ee
Applying Cauchy-Schwarz for the first inequality below,  then bounding the two terms in the middle by \eqref{eq:ahaha} and Lemma~\ref{cor:growth} correspondingly, 
we find that 
\begin{align}
\frac{\chi_L(\beta)^2}{L^4B_L(\beta)}
&\le 2\frac{\chi_{L/\log L}(\beta)^2}{L^4}+C_7\frac{B_L(\beta)-B_{L/\log L}(\beta)}{B_L(\beta)}\le C_8\frac{\log \log L}{\log L}.
\end{align}
Plugging this estimate in \eqref{eq:oki} gives
\be
(1)\le C_9r^{12}\Big(\frac{\log \log L}{\log L}\Big)^c.
\ee

\item[Bound on (2)] Combine \eqref{eq:MMS3} and  the sliding-scale Infrared Bound \eqref{eq:multi chi} to get that for $i=1,\dots,4$, 
\be\label{eq:aha}
\langle\sigma_x\sigma_{x_i}\rangle_\beta\le \frac{C_{10}}{|x|^4}\chi_{|x|/d}(\beta)\le \frac{C_{11}}{L^2|x|^2}\chi_L(\beta).
\ee
Summing over the sites gives the same bound as in \eqref{eq:oki} so that the reasoning in (1) gives
\be
(2)\le C_{12}r^{12}\Big(\frac{\log \log L}{\log L}\Big)^c.
\ee
\item[Bound on (3)] This term is much smaller than the previous two due to the constraint that two sites must be close to each other. In fact, we will not even need the improved part of the tree diagram bound and  will simply use that $B_{L(x_1,\dots,x_4)}(\beta)\ge1$. Then, we use the Infrared Bound \eqref{eq:IB} to bound the terms $\langle\sigma_x\sigma_{x_i}\rangle_\beta$ and $\langle\sigma_x\sigma_{x_j}\rangle_\beta$ for which $x_i$ and $x_j$ are at a distance exactly $L(x_1,\dots,x_4)$. Summing over the other two sites $x_k$ and $x_l$ gives a contribution bounded by $\chi_{2rL}(\beta)^2\le C_{13}L^{-8}r^4\Sigma_L(\beta)^2$ by the sliding-scale Infrared Bound~\eqref{eq:multi chi}. Summing over $x$ and then $x_i$ and $x_j$ gives that 
\be
(3)\le \frac{C_{14}r^4\log(Lr)}{L^{4-4a}}.
\ee
\item[Bound on (4)] This sum is even simpler to bound than (3). Again, we simply use $B_{L(x_1,\dots,x_4)}(\beta)\ge1$, bound two of the terms $\langle\sigma_x\sigma_{x_i}\rangle_\beta$ using \eqref{eq:aha}, and the other two using the Infrared Bound \eqref{eq:IB}. Summing over the vertices and using the constraint that two of the sites must be close to each other gives the bound 
\be
(4)\le \frac{C_{15}r^8}{L^{4-4a}}.
\ee
\end{description}
In conclusion, all the sums (1)--(4) are sufficiently small (recall that by definition $r\ge1$) and the claim is derived. 
\begin{remark}
For $\beta<\beta_c$, applying \eqref{eq:bubble log bound} with $r=1$ and $L=\xi(\beta)$  gives the following bound on the {\em renormalized coupling constant} $g(\beta)$:
\begin{equation}\label{eq:RCC}
g(\beta):=\frac1{\xi(\beta)^4\chi(\beta)^2}\sum_{x,y,z\in \bbZ^d}|U_4^\beta(0,x,y,z)|\le \log(\tfrac1{|\beta-\beta_c|})^{-c},\end{equation}
where we used that 
\[\xi(\beta)^2\ge c_0\chi_{\xi(\beta)}(\beta)\ge c_1\chi(\beta)\ge c_2/(\beta_c-\beta) \,. \]
(The first inequality follows from the infrared bound, the second is a classical bound obtained first by Sokal \cite{Sok82}, and the third is a mean-field 
lower bound on $\chi(\beta)$ \cite{Aiz82}).
In field theory this quantity  is often referred to as the (dimensionless)  renormalized coupling constant.  In \cite{HarTas87} it was proved that for lattice $\phi^4_4$ measures of small enough $\lambda$ it  converges to $0$ at the rate $1/\log(\tfrac1{|\beta-\beta_c|})$.  Such behaviour is  expected to be true, in  dimension $d=4$, also for the n.n.f.~Ising model.
\end{remark}

\section{Generalization to models in the Griffiths-Simon class}\label{sec:6}

In this section we extend the results to nearest-neighbor ferromagnetic models in the GS class. 
An important observation is that  the results from the previous section also extend. 
Note, however, that $\rho$ can have unbounded support, so that to be of relevance the  relations of interest need to be expressed in spin-dimension balanced forms. 
Once this is done, many of the basic diagrammatic bounds which are available for the Ising model extend to the GS class essentially by linearity, and then to the GS class by continuity.   
Below, we carefully present the generalizations.

In the whole section, $U_4^{\rho,\beta}$ denotes the 4-point Ursell function of the $\tau$ variables, and $B_L(\rho,\beta)$ the bubble diagram truncated at a distance $L$. We also reuse the notation $\xi(\rho,\beta)$ and $\beta_c(\rho)$ introduced in Section~\ref{sec:3}.

\subsection{An improved tree diagram bound for models in the GS class}
 
For bounds which are not homogeneous in the spin dimension, one needs to pay attention to the fact that $\tau$ is neither dimensionless nor bounded, and prepare the extension by reformulating the Ising relations in a spin-dimensionless form.   
 
For example,  the basic tree diagram bound \eqref{tree} has four Ising spins on the left side and four pairs on the right.  An extension of the inequality to GS models can be reached by site-splitting the terms in which  an Ising spin is repeated, using the inequality \eqref{eq:genSL}
 (which has a simple proof by means of the switching lemma)
 \footnote{An alternative method for reducing a diagrammatic expression's spin-dimension is to divide by $\langle \sigma_u^2\rangle_0$.  Both methods are of use, and may be compared through   
 \eqref{sigma^2bound}.}.
   The resulting diagrammatic bounds  may at first glance appear as slightly more complicated than the one for the Ising case, but it has the advantage of being  dimensionally balanced.  That is a required condition for a bound to hold uniformly throughout the GS class of models.
 Additional consideration is needed for the factors by which the tree diagram bound of \cite{Aiz82} is improved here.   Taking care of that we get the following extension of the result, which also covers the $\phi^4$ lattice models.

\begin{theorem}[Improved tree diagram bound for the GS class]\label{thm:improved GS tree bound} There exist $C,c>0$ such that for every  n.n.f.~model in the GS class on $\Z^4$, every $\beta\le \beta_c(\rho)$,  $L\le \xi(\rho,\beta)$ and every $x,y,z,t\in\Z^d$ at distances larger than $L$ of each other,
\be\label{eq:improved tree bound GS}
|U_4^{\rho,\beta}(x,y,z,t)|\le C\Big(
\frac{B_0(\rho,\beta)}{B_L(\rho,\beta)}\Big)^c\sum_{u} \sum_{u',u''} 
\langle\tau_x\tau_u\rangle_{\rho,\beta} \,\beta J_{u,u'}\, \langle\tau_{u'}\tau_y\rangle_{\rho,\beta}\langle\tau_z\tau_u\rangle_{\rho,\beta}\,\beta J_{u,u''}\,  \langle\tau_{u''}\tau_t\rangle_{\rho,\beta}.
\ee
\end{theorem}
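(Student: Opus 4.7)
The plan is to deduce Theorem~\ref{thm:improved GS tree bound} from its Ising counterpart (Theorem~\ref{thm:improved tree bound simple}) via the Griffiths-Simon construction, which realizes any type-(1) GS model on $\Z^4$ as a ferromagnetic Ising model on the decorated graph $\G_N := \Z^4 \times \{1,\ldots,N\}$, with non-negative intra-block couplings $K_{n,m}$ and inter-block couplings of the form $\beta J_{v,v'}\,\alpha^2 b_n b_m$ across edges $v\sim v'$ of $\Z^4$. The argument proceeds in three phases: first, extend the Ising machinery of Sections~\ref{sec:3}--\ref{sec:4} to $\G_N$ uniformly in $N$; second, translate the resulting bound on $\G_N$ into a $\tau$-correlation bound and dimensionally balance it using Simon's inequality; third, pass to the limit $N\to\infty$ to cover type-(2) GS measures such as $\phi^4$.

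For phase one, the crucial observation is that reflection positivity with respect to hyperplanes of $\Z^4$ separating blocks survives on $\G_N$: the mean-field-type intra-block interactions sit entirely on one side of such a reflection, while the inter-block interactions decompose as sums of pair interactions coupling reflected vertices with non-negative weights $\alpha^2 b_n b_m$. Consequently, MMS monotonicity for $\langle\tau_0\tau_x\rangle$, the sliding-scale infrared bound (Theorem~\ref{prop_mon_chi}), the gradient estimate (Proposition~\ref{prop:regularity gradient}), the lower bound (Corollary~\ref{cor:SL}), and abundance of regular scales (Theorem~\ref{thm:regular scales}) all hold on $\G_N$. The random current representation and switching lemma are purely combinatorial and transfer verbatim. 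Hence the mixing property (Theorem~\ref{thm:mixing multi}), the intersection property (Lemma~\ref{lem:intersection general}), and finally the improved tree diagram bound for Ising hold on $\G_N$ with constants depending only on~$d$.

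For phase two, I would exploit the linearity identity $U_4^{\rho,\beta}(x,y,z,t) = \alpha^4\sum_{\vec n}\big(\prod_i b_{n_i}\big)\,U_4^{\G_N,\beta}\big((x,n_1),\ldots,(t,n_4)\big)$. Summing the $\G_N$-improved tree bound over $\vec n$ absorbs one factor $\alpha b_{n_i}$ into each external index sum, producing
\[
|U_4^{\rho,\beta}(x,y,z,t)| \le C\,\big(B_0^{\G_N}/B_L^{\G_N}\big)^c \sum_{u\in \G_N}\prod_{i=1}^{4}\langle\tau_{x_i}\sigma_u\rangle_{\G_N,\beta}.
\]
Intra-block permutation symmetry and the non-negativity of the $b_n$'s make the bubble ratio $B_0^{\G_N}/B_L^{\G_N}$ comparable to $B_0(\rho,\beta)/B_L(\rho,\beta)$ up to constants depending only on $\langle\tau_0^2\rangle$. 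Next, I would apply the spin-dimension-balanced Simon inequality (Lemma~\ref{lem_SL}) on $\G_N$ with $\Lambda = \{v\}\times[N]$ to the factors $\langle\tau_y\sigma_u\rangle$ and $\langle\tau_t\sigma_u\rangle$ at $u = (v,k)$; each application expresses $\langle\tau_y\sigma_u\rangle$ as a sum over an inter-block edge exiting block~$v$, with structure $\sum_{v'\sim v,\,j} \alpha^2 b_j^2\,\beta J_{v,v'}\,\langle\sigma_u\sigma_{(v,j)}\rangle\,\langle\tau_y\sigma_{(v',j)}\rangle$, and symmetrically for~$t$. Summing over the central intra-block index $k$ and re-assembling $\alpha b_{j}$ factors with the remaining block-index sums produces $\tau_v$ on the inside and $\tau_{v'}, \tau_{v''}$ at the exits, leaving a single unbalanced factor $\langle\tau_v^2\rangle$ which by~\eqref{sigma^2bound} is uniformly bounded by $C/(2|J|\beta_c(\rho))$ and absorbed into the constant. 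This delivers the tree structure of the theorem with central vertex $u = v \in \Z^4$ and exiting edges $(u,u')$, $(u,u'')$ carrying the weights $\beta J_{u,u'}$, $\beta J_{u,u''}$.

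For phase three, the extension to type-(2) measures is by weak-limit continuity: the sub-Gaussian assumption~\eqref{sub_gauss} combined with MMS monotonicity gives uniform integrability of products of $\tau_x$'s, so both sides of the bound pass to the limit along approximating sequences of type-(1) measures. The main obstacle is the careful dimensional balancing in phase two, specifically tracking the $\alpha, b_n$ factors across the Simon-Lieb application and ensuring that the only unbalanced intra-block quantity that survives the central-block summation is $\langle\tau_v^2\rangle$, whose uniform boundedness is exactly what the infrared bound~\eqref{sigma^2bound} provides. A secondary technical point is the comparison of $B_L^{\G_N}$ with $B_L(\rho,\beta)$, which rests on block-index permutation symmetry together with $b_n \geq 0$.
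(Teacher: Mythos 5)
Your overall strategy (realize a type-(1) GS measure as an Ising system on the decorated graph $\G_N=\Z^4\times\calK_N$, run the Ising argument there, balance spin dimensions with Simon's inequality, and pass to weak limits) is the same as the paper's, and you correctly identify the two load-bearing points: reflection positivity of the block system and the role of $\langle\tau_0^2\rangle$ via \eqref{sigma^2bound}. However, there is a genuine gap at the hinge of Phases 1 and 2: the claim that "the improved tree diagram bound for Ising holds on $\G_N$ with constants depending only on $d$" does not follow from the block-level regularity you establish. The machinery of Section~\ref{sec:3} (MMS, sliding-scale infrared bound, gradient estimate, regular scales) controls the \emph{block} two-point function $S_{\rho,\beta}(x)=\langle\tau_0\tau_x\rangle$ on $\Z^4$; but a verbatim application of Theorem~\ref{thm:improved tree bound simple} on $\G_N$ would require uniform-in-$N$ control of \emph{microscopic} quantities — the individual-spin bubble $\sum_{(y,j)}\langle\sigma_{(0,i)}\sigma_{(y,j)}\rangle^2$ (whose diagonal, intra-block part behaves completely differently from $B_L(\rho,\beta)$ and enters the second-moment bound on $|\calM|$), microscopic three-point sums, and the annular covering Lemma~\ref{covering} applied to a set of microscopic intersection vertices for which whole blocks sit at $\Z^4$-distance zero. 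None of this is supplied by Phase 1, and the comparison $B_0^{\G_N}/B_L^{\G_N}\asymp B_0(\rho,\beta)/B_L(\rho,\beta)$, which you defer to "block-index permutation symmetry", is exactly where the $N$- and $\alpha$-dependent factors must cancel; since this ratio is raised to the power $c$ and constitutes the entire improvement, it cannot be waved through.

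The paper avoids all of this by never proving a microscopic tree bound on $\G_N$: it reworks the intersection-clustering argument directly at the block level, using (i) current measures ${\bf P}^{xy}_{\rho,\beta}$ with \emph{random sources} sampled inside the blocks with weights $Q_iQ_j\langle\sigma_{(x,i)}\sigma_{(y,j)}\rangle$, (ii) \emph{coarse intersections} — the event that both clusters hit the same block $\calB_v$, rather than that they genuinely intersect — so that the covering and clustering arguments take place on $\Z^4$, and (iii) the coarse switching lemma (Lemma~\ref{prop:coarse switching} and Proposition~\ref{prop:intersection phi4 upper}), whose output is \emph{already} the spin-dimension-balanced diagram with the $\beta J_{u,u'}$ factors, so no a posteriori Simon-inequality surgery on an unbalanced microscopic tree diagram is needed. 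Your Phase 2 site-splitting idea is sound in spirit (the paper itself notes that the unimproved tree bound extends this way), but as the paper warns, "additional consideration is needed for the factors by which the tree diagram bound is improved": the improvement factor is produced by the intersection-clustering argument, and that argument must be set up in block terms from the start for the constants to be uniform over the GS class. To repair your proof you would need to replace "hence the improved tree diagram bound holds on $\G_N$" by a block-level rerun of Propositions~\ref{prop:non isolated general}, \ref{lem:intersection general} and Theorem~\ref{thm:mixing multi} in the coarse/random-source formulation.
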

Before diving into the proof, note that the improved tree diagram bound implies, as it did for the Ising model, 
the following quantitative bound on the convergence to gaussian of the scaling limit of the $\tau$ field
in four dimensional models with variables  in the GS class. 
\begin{proposition}\label{prop:gaussian b_GS}
There exist two constants $c,C>0$ such that for every n.n.f.~model in the GS class on $\Z^4$, every $\beta\le \beta_c(\rho)$,  $L\le \xi(\rho,\beta)$, every continuous function $f:\R^4\rightarrow\R$ with bounded support and every $z\in\mathbb R$,
\be \label{eq:gauss_GS}
 \Big|\,\big\langle \exp[z\,T_{f,L}(\tau)-\tfrac{z^2}2\langle T_{f,L}(\tau)^2\rangle_{\rho,\beta} ]\big\rangle_{\rho,\beta}\,-\,1\,\Big|~\le~ \frac{C\|f\|_\infty^4r_f^{12}}{(\log L)^c}\,z^4.
\ee
\end{proposition}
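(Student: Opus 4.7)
The plan is to mirror the two-step scheme of Section~\ref{sec:4.3}: first bound the discrepancy between $\langle T_{f,L}(\tau)^{2n}\rangle_{\rho,\beta}$ and its Gaussian counterpart by means of a GS-class extension of the moment inequality \eqref{nGaussian}, then insert the new tree bound \eqref{eq:improved tree bound GS} and estimate the resulting diagrammatic sum. For the first step, since every $\rho$ satisfying Definition~\ref{def:rho} is a weak limit of block-Ising measures and since \eqref{nGaussian} is spin-dimension balanced (both sides are polynomials in the two-point function and $U_4$ with matching scaling), the inequality survives the weak limit provided the relevant moments are uniformly controlled, which is exactly what the sub-gaussian growth condition \eqref{sub_gauss} supplies. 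Multiplying by $z^{2n}/(2n)!$, summing in $n$, and using the $\tau\mapsto-\tau$ symmetry to kill odd moments, the bound \eqref{eq:gauss_GS} reduces to showing $\widetilde S(L,r_f,\rho,\beta)\le Cr_f^{12}(\log L)^{-c}$, where $\widetilde S$ denotes the averaged right-hand side of \eqref{eq:improved tree bound GS} normalized by $\Sigma_L(\rho,\beta)^2$.

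For the second step I would first dispose of the inner $\beta J_{u,u'}$ and $\beta J_{u,u''}$ summations in \eqref{eq:improved tree bound GS}: since $J$ is of finite range, the gradient estimate \eqref{eq:gradient estimate} gives $\sum_{u'}\beta J_{u,u'}\langle\tau_{u'}\tau_y\rangle_{\rho,\beta}\le C\beta|J|\langle\tau_u\tau_y\rangle_{\rho,\beta}$ once $|u-y|$ is at least a constant, so $\widetilde S$ is comparable to the GS analog of the sum $S(L,r,\beta)$ in \eqref{eq:bubble log bound}, with $\sigma$ replaced by $\tau$, $\Sigma_L(\beta)$ by $\Sigma_L(\rho,\beta)$, and an extra prefactor $(B_0(\rho,\beta)/B_L(\rho,\beta))^c$. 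Splitting into the same four regions as in Section~\ref{sec:4.3} (inner/outer center vertex $u$, close/far quadruple of endpoints) and invoking MMS monotonicity \eqref{eq:MMS3}, the sliding-scale Infrared Bound \eqref{eq:multi chi}, and the a priori bound \eqref{sigma^2bound} on $\langle\tau_0^2\rangle_{\rho,\beta}$, one arrives at an analog of \eqref{eq:oki} of the form
\ben
\widetilde S(L,r_f,\rho,\beta)\le Cr_f^{12}\Big(\frac{B_0(\rho,\beta)}{B_L(\rho,\beta)}\Big)^c+Cr_f^{12}\Big(\frac{\chi_L(\rho,\beta)^2}{L^4 B_L(\rho,\beta)}\Big)^{c}.
\een
A dichotomy on whether $B_L(\rho,\beta)/B_0(\rho,\beta)\ge (\log L)^{1/2}$ then closes the argument: if so, the first summand is $O((\log L)^{-c/2})$; if not, the GS analog of Lemma~\ref{cor:growth} combined with the Cauchy-Schwarz split used to prove \eqref{eq:bound S} yields $\chi_L(\rho,\beta)^2/(L^4 B_L(\rho,\beta))\le C\log\log L/\log L$.

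The main obstacle lies in the first step: one must produce a version of \eqref{nGaussian} that is uniformly valid over the approximating block-Ising systems and stable under the weak limit defining the GS class, with constants whose dependence on the intra-block variance $\langle\tau_{x,n}^2\rangle$ remains controlled as $N\to\infty$. Once this extension is in hand, every remaining ingredient — MMS, the sliding-scale Infrared Bound, the abundance of regular scales, and the gradient estimate used to absorb the $\beta J$ factors — has already been established directly for the GS class in Section~\ref{sec:3}, so the rest is a near-verbatim transcription of Section~\ref{sec:4.3}.
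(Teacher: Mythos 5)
Your overall architecture is the intended one (the paper itself only asserts that Proposition~\ref{prop:gaussian b_GS} follows from Theorem~\ref{thm:improved GS tree bound} ``as it did for the Ising model''), and your first step is essentially fine: for $\rho$ of type (1) in Definition~\ref{def:rho} the variable $T_{f,L}(\tau)$ is literally a linear combination of constituent Ising spins on the decorated graph, so \eqref{nGaussian} and hence the analogue of \eqref{eq:moment} require no new input, and \eqref{sub_gauss} passes the moment bounds to weak limits. Two points in your second step, however, do not work as written. The minor one: \eqref{eq:gradient estimate} is a \emph{conditional} estimate (it requires $S_{\rho,\beta}(2dn{\bf e}_1)\ge c_0S_{\rho,\beta}(\tfrac n2{\bf e}_1)$); you should instead use the unconditional Proposition~\ref{prop:regularity gradient} together with the observation $F(n)\le C\log n$, or simply translate the summation variable, since $u',u''$ range over the $2d$ neighbours of $u$ and $y,t$ are anyway summed over $\Lambda_{r_fL}$. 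Moreover the factors $\beta J_{u,u'}$ must not be discarded: they supply exactly the dimensionless normalization $(\beta|J|\langle\tau_0^2\rangle_{\rho,\beta})^2\le C$ coming from \eqref{sigma^2bound}, which cancels the two extra two-point functions in the GS tree diagram.

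The serious issue is your final step. The additive bound $\widetilde S\le Cr_f^{12}(B_0/B_L)^c+Cr_f^{12}(\chi_L^2/(L^4B_L))^c$ together with a dichotomy on $B_L/B_0\ge(\log L)^{1/2}$ does not close the argument: in the complementary case $B_L/B_0<(\log L)^{1/2}$ it is precisely the \emph{first} summand that is uncontrolled (it can be of order $1$ when $B_L\asymp B_0$, e.g.\ when $\eta>0$), and bounding the second summand there does not help. The prefactor and the $\chi$-factors have to be combined multiplicatively, as in \eqref{eq:oki}: writing
\ben
(\beta|J|)^2\frac{\chi_L^2}{L^4}\Big(\frac{B_0}{B_L}\Big)^c=\Big[(\beta|J|)^2\frac{\chi_L^2}{L^4}\Big]^{1-c}\Big[(\beta|J|)^2B_0\cdot\frac{\chi_L^2}{L^4B_L}\Big]^{c}
\een
and using $\chi_L(\rho,\beta)\le C\langle\tau_0^2\rangle_{\rho,\beta}L^2$ from \eqref{eq:IB GS} together with $\beta|J|\langle\tau_0^2\rangle_{\rho,\beta}\le C$, the dominant term is bounded by $Cr_f^{12}\big(\chi_L^2/(L^4B_L)\big)^c$ with no residual $(B_0/B_L)^c$ term. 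The unconditional Cauchy--Schwarz estimate $\chi_L^2/(L^4B_L)\le C\log\log L/\log L$ (whose GS version needs only \eqref{eq:IB GS}, the GS analogue of Lemma~\ref{cor:growth}, and the abundance of regular scales) then finishes the proof with no dichotomy at all.
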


We now return to the proof of the improved tree diagram bound, following the path outlined above.  
 The GS class of variables is naturally divided into two kinds. The core consists of those that directly fall under the Definition~\ref{def:rho}.  The rest can be obtained as weak limits of the former. Since the constants in~\eqref{eq:improved tree bound GS} are uniform, it suffices to prove the result for the former to get it for the latter. We therefore focus on site-measures $\rho$ satisfying Definition~\ref{def:rho}, which can directly be represented as Ising measures on a graph where every vertex is replaced by blocks, as explained in the previous section. In this case, we identify $\langle\cdot\rangle_{\rho,\beta}$ with the Ising measure, and $\tau_x$ with the proper average of  Ising's variables. With this identification, we can harvest all the nice inequalities that are given by Ising's theory. In particular, we can use the random current representation.

More explicitly,  to generalize the argument used in the Ising's proof, we introduce the measure ${\bf P}^{xy}$ defined on  the graph   $\Z^d\times \{1,\dots,N\}$  in two steps: 
\begin{itemize}
\item first, sample two integers $1\le i,j\le N$ with probability 
$$Q_{i}Q_{j}\langle\sigma_{x,i}\sigma_{y,j}\rangle_{\rho,\beta}/\langle\tau_x\tau_y\rangle_{\rho,\beta}\,, $$ 
\item second, sample a current according to the measure ${\bf P}^{\{(x,i),(y,j)\}}_{\rho,\beta}$ corresponding to the random current representation of the Ising model $\langle\cdot\rangle_{\rho,\beta}$. 
\end{itemize}
The interpretation of this object is that of a random current with two {\em random sources} $(x,i)\in \calB_x$ and $(y,j)\in \calB_y$. Also note that the superscript $xy$ will unequivocally denote this type of measures (we will avoid using measures with deterministic sources in this section to prevent confusion)  and ${\bf P}^\emptyset_{\rho,\beta}$ which have no sources. 

The interest in ${\bf P}^{xy}_{\rho,\beta}$ over measures with deterministic sets of sources comes from the fact that the probability that the cluster of the sources intersects a set of the form $\calB_u$ can be bounded in terms of correlations of the variables $\tau_x, x\in\Z^d$ (see Proposition~\ref{prop:intersection phi4 upper}).

\begin{proof}[Theorem~\ref{thm:improved GS tree bound}] As mentioned above, every  $\rho$ in the GS class is a weak limit of measures satisfying the first condition of Definition~\ref{def:rho}.  We therefore focus on such measures. 

Exactly like in the case of the Ising model, the core of the proof of Theorem~\ref{thm:improved GS tree bound} will be 
the proof of the intersection-clustering property that we now state and whose proof is postponed after the proof of the theorem. Define $\ell_0=0$ and $\ell_k=\ell_k(\rho,\beta)$ using the same definition (using $B_L(\rho,\beta)$ this time) as in \eqref{eq:def ell}.  Let 
$\calT_u$ be the set of vertices $v\in\Z^d$ such that $\calB_v$ is connected in $\n_1+\n_3$ to a box $\calB_{u'}$ and in $\n_2+\n_4$ to a box $\calB_{u''}$, with $u'$ and $u''$ at graph distance at most 2 of $u$. Note that $\calT_u$ is now a function of $u$ and that it is defined in terms of ``coarse intersections'', i.e.~lattice sites  $v$ such that both clusters intersect $\calB_v$ (but do not necessarily intersect each other). 
 
\begin{proposition}[intersection-clustering bound for the GS class]\label{prop:non isolated G-S}  
For $d=4$ and $D$ large enough, there exists $\delta=\delta(D)$ such that  for every model in the GS class, every $\beta\le\beta_c(\rho)$,  every $K$ such that $\ell_{K}\le \xi(\rho,\beta)$ and every $u,u',u'',x,y,z,t\in\Z^d$ with $u'$ and $u''$ neighbors of $u$ and $x,y,z,t$ at mutual distances larger than $2\ell_{K}$,
\be \label{eq_cluster_GS}
{\bf P}^{ux,uz,u'y,u''t}_{\rho,\beta}[{\bf M}_u(\calT_u;\calL,K)\le \delta K]\le 2^{-\delta K}.
\ee 
\end{proposition}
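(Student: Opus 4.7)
The plan is to mimic the Ising argument of Section~\ref{sec:4.1}, working on the decorated graph $\Z^d \times \{1,\dots,N\}$ on which the GS model is realized as a genuine nearest-neighbour ferromagnetic Ising system, and replacing every occurrence of ``exact intersection'' by the coarse intersection $\calT_u$ which records that a block $\calB_v$ is met by one copy of the cluster and a neighbouring block $\calB_{v'}$ by the other. Concretely, the argument will rest on two GS-class analogues of the results already established for the n.n.f.~Ising model:
\begin{itemize}
\item[(a)] (\emph{coarse intersection property}) for every regular scale $k$ with $\ell_k \le \xi(\rho,\beta)$, there is a constant $c>0$, independent of $k$ and of the model in the GS class, such that for any two sources $y\notin\Lambda_{2\ell_{k+1}}$ in a regular scale,
\[
{\bf P}^{0y,0y,\emptyset,\emptyset}_{\rho,\beta}[\widetilde I_k]\ge c,
\]
where $\widetilde I_k$ is the event that the coarse versions of the clusters of the origin in $\n_1+\n_3$ and $\n_2+\n_4$ uniquely cross ${\rm Ann}(\ell_k,\ell_{k+1})$ and meet on the block level;
\item[(b)] (\emph{mixing property for random-source currents}) there exist $\alpha,c>0$ such that the analogue of Theorem~\ref{thm:mixing multi} holds for the measures ${\bf P}^{\cdot\cdot}_{\rho,\beta}$ on the decorated graph.
\end{itemize}
Once (a) and (b) are in hand, the proof proceeds exactly as in Section~\ref{sec:4.1}: one restricts attention to sets $S\subset\{1,\dots,K-3\}$ of even integers, uses Theorem~\ref{thm:regular scales} to find a regular scale witness near $\ell_{K-1}$, applies (b) recursively to decouple annuli, applies (a) to bound each factor, and then invokes monotonicity (Corollary~\ref{cor:1}) to pass from sources at $\{u,u',u''\}$ to sources at the single point $y$ inside Lemma-type bounds.

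For (a), I would run the second moment computation of Lemma~\ref{lem:intersection general} on the coarse intersection $\calM^{\mathrm{coarse}}:=\{v\in {\rm Ann}(m,M) : \calB_v \cap {\bf C}_{\n_1+\n_3}(0)\neq\emptyset\text{ and }\calB_{v'}\cap {\bf C}_{\n_2+\n_4}(0)\neq\emptyset\text{ for some neighbour }v'\text{ of }v\}$. The first and second moments are computable via the switching lemma and \eqref{eq:prop3b}, together with the dimensionless regularity estimates of Section~\ref{sec:3} (in particular the sliding-scale infrared bound \eqref{eq:multi chi} and the regular-scale properties P1--P4); the bubble diagram $B_L(\rho,\beta)$ that appears after summation behaves on regular scales exactly as in the Ising case thanks to Lemma~\ref{cor:growth} (whose GS analogue follows by the same derivation). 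The uniqueness of the crossing cluster is handled as in Lemma~\ref{lem:intersection general} through a backbone argument applied to the Ising representation on the decorated graph; here the block structure is irrelevant since the backbone lives in the full decorated graph.

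For (b), since the GS model is literally an Ising model on $\Z^d \times \mathcal{K}_N$, Theorem~\ref{thm:mixing multi} applies verbatim to currents on that graph; the only work is to translate events measurable in the ``inside'' and ``outside'' on $\Z^d$ into events measurable in the analogous regions on the decorated graph (blocks $\calB_u$ with $u\in\Lambda_n$ vs.~blocks $\calB_u$ with $u\notin\Lambda_N$), and to integrate out the random choice of source block-indices $(i,j)$ built into ${\bf P}^{xy}_{\rho,\beta}$, which only contributes a convex combination that preserves the mixing bound. Note that the constants in Theorem~\ref{thm:mixing multi} depend on the dimension $d$ but not on $N$, so no deterioration occurs as we embed $\Z^d$ into the $(d{+}1)$-dimensional-like decorated graph --- this uses crucially that the spectral and Messager--Miracle--Sole inputs behind the mixing proof have been stated for real-valued spin systems with reflection-positive couplings, which covers the decorated graph.

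The main obstacle I anticipate is (b): the decorated graph is \emph{not} a simple Cartesian lattice and its reflection symmetries are limited to those of $\Z^d$ (the $\mathcal{K}_N$ factor provides only permutation symmetry). One must therefore verify that the regularity estimates of Section~\ref{sec:3}, in particular the sliding-scale infrared bound and the abundance of regular scales, hold on $\Z^d\times\mathcal{K}_N$ with constants uniform in $N$ --- this is essentially where the restriction to reflection-positive n.n.f. interactions on $\Z^d$ is used to survive the block decoration, and is the place where one exploits that $g_N(\lambda,b)\to 1$ so that the intra-block coupling does not blow up the relevant constants. Everything downstream is then a faithful rerun of the Ising argument with ``$u$'' replaced by ``blocks neighbouring $\calB_u$'', and with the coarse set $\calT_u$ in place of $\calT$.
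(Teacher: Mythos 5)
Your skeleton is the right one --- the paper likewise reduces the proposition to GS-class versions of the intersection and mixing properties and then reruns the Ising argument of Section~\ref{sec:4.1} with coarse intersections and the monotonicity of Corollary~\ref{cor:1} --- but the way you propose to obtain those two inputs contains a genuine gap. The claim that Theorem~\ref{thm:mixing multi} ``applies verbatim'' to the Ising system on the decorated graph $\Z^d\times\calK_N$ is not correct: that theorem and its proof rest on the regularity theory of Section~\ref{sec:3} (sliding-scale infrared bound, abundance of regular scales, MMS monotonicity, gradient estimates) for the two-point function \emph{on $\Z^d$}, and none of this is available for the constituent-spin correlations $\langle\sigma_{(x,i)}\sigma_{(y,j)}\rangle$ on the decorated graph, which in general depend on the block indices $i,j$ and whose Fourier/transfer-matrix analysis does not survive the $\calK_N$ factor. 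Your proposed repair --- reproving Section~\ref{sec:3} on $\Z^d\times\calK_N$ with constants uniform in $N$ --- is not the route the paper takes and is not what is needed (nor does the paper use $g_N\to1$ anywhere): Section~\ref{sec:3} is already stated and proved for the \emph{block variables} $\tau$ on $\Z^d$, and the entire point of Section~\ref{sec:6} is to convert every current estimate into a spin-dimensionally balanced diagram in the $\tau$ two-point function, to which those results apply directly.

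The missing device is the random-source measure ${\bf P}^{xy}_{\rho,\beta}$ (sources sampled inside the blocks $\calB_x,\calB_y$ with weights proportional to $Q_iQ_j\langle\sigma_{x,i}\sigma_{y,j}\rangle$) together with the $Q_j^2$-weighted observables and the coarse switching inputs (Lemma~\ref{prop:coarse switching} and Proposition~\ref{prop:intersection phi4 upper}). These are what make the first and second moments of the counting variables ($\calM$ in Lemma~\ref{lem:intersection phi4}, ${\bf N}_i$ in Theorem~\ref{thm:mixing multi G-S}) exactly expressible as $\tau$-correlation diagrams via the switching lemma. With your unweighted cardinality $\calM^{\mathrm{coarse}}$ you can upper-bound block-hitting probabilities, but you have no matching \emph{lower} bound on the first moment in terms of $\tau$-correlations, so the second-moment method does not close; the same issue blocks the concentration estimate for ${\bf N}$ in the mixing proof. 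Once your items (a) and (b) are replaced by the weighted versions, the remainder of your plan coincides with the paper's proof.
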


Postponing the proof of this estimate, we proceed  with the proof of the Theorem. 
 Express $U_4^{\rho,\beta}$ in terms of intersection properties of currents by summing \eqref{eq:inter} over vertices of $\calB_x,\dots,\calB_z$:
 \begin{align}\label{eq:inter phi4} |U_4^{\rho,\beta}(x,y,z,t)|   & \leq    2\langle \tau_x\tau_y\rangle_{\rho,\beta}\langle \tau_z\tau_t\rangle_{\rho,\beta}{\bf P}^{xy,zt,\emptyset,\emptyset}_{\rho,\beta}[ {\mathbf C}_{\n_1+\n_3}(\partial\n_1)\cap{\mathbf C}_{\n_2+\n_4}(\partial\n_2)\ne \emptyset],\end{align}
 where ${\mathbf C}_{\n_1+\n_3}(\partial\n_1)$ and ${\mathbf C}_{\n_2+\n_4}(\partial\n_2)$ refer to the clusters in $\n_1+\n_3$ and $\n_2+\n_4$ of the sources in $\partial\n_1$ and $\partial\n_2$ respectively (we introduce this notation since the sources are not deterministic anymore).
 
 Define $K\ge c\log [B_L(\rho,\beta)/B_0(\rho,\beta)]$ as in the Ising case. 
We now implement the same reasoning as for the Ising model, with the twist that we consider coarse intersections. If ${\mathbf C}_{\n_1+\n_3}(\partial\n_1)$ and ${\mathbf C}_{\n_2+\n_4}(\partial\n_2)$ intersect, then
\begin{itemize}
\item either the number of $u\in \Z^d$ such that ${\bf C}_{\n_1+\n_3}(\partial\n_1)$ and ${\bf C}_{\n_2+\n_4}(\partial\n_2)$ intersect $\calB_u$ is larger than or equal to $2^{\delta K/5}$,
\item or there exists $u\in\Z^d$ such that 
${\bf C}_{\n_1+\n_3}(\partial\n_1)$ and ${\bf C}_{\n_2+\n_4}(\partial\n_2)$ intersect $\calB_u$,
and ${\bf M}_u(\calT_u;\calL,K)<\delta K$. 
\end{itemize}
Using the Markov inequality and \eqref{eq:disconnect source phi4} on the first line, and Lemma~\ref{prop:coarse switching} in the second one, we find (drop $\rho$ and $\beta$ from notation)
\begin{align}
|U_4(&x,y,z,t)|\le 2^{-\delta K/5}\sum_{u,u',u''\in\Z^d}\langle\tau_x\tau_u\rangle\,\beta J_{u,u'}\langle\tau_{u'}\tau_y\rangle\langle\tau_z\tau_u\rangle\,\beta J_{u,u''}\,\langle\tau_{u''}\tau_t\rangle\nonumber\\
&+\sum_{u,u',u''\in\Z^d}\langle\tau_x\tau_u\rangle\,\beta J_{u,u'}\,\langle\tau_{u'}\tau_y\rangle\langle\tau_z\tau_u\rangle\,\beta J_{u,u''}\,\langle\tau_{u''}\tau_t\rangle{\bf P}^{xu,zu,u'y,u''t}[{\bf M}_{u}(\calT_u;\calL,K)<\delta K].
\end{align}

Lemma~\ref{prop:coarse switching} was invoked here since in the present context ${\bf M}_{u}(\calT_u;\calL,K)$  is defined in terms of coarse rather than true intersections. 
The intersection-clustering bound (Proposition~\ref{prop:non isolated G-S}) concludes the proof.
\end{proof}

We now need to prove Proposition~\ref{prop:non isolated G-S}. The proof  itself is exactly the same as for Proposition~\ref{prop:non isolated general} (the monotonicity property of \eqref{cor:1} is not impacted), except for the proofs of the mixing and intersection properties (i.e.~statements corresponding to Lemma~\ref{lem:intersection general} and Theorem~\ref{thm:mixing multi} respectively). Below, we briefly detail the statements and proofs of these results. Let $I_k(0)$ be the event that there exists $v\in {\rm Ann}(\ell_k,\ell_{k+1})$ such that $\calB_v$ is connected in $\n_1+\n_3$ and in $\n_2+\n_4$ to the union of the boxes $\calB_{w}$ with $w$ at a distance at most 2 of $0$.

\begin{lemma}[intersection property for the GS class]\label{lem:intersection phi4}
There exists $c>0$ such that for every $\rho$ in the GS class, every $\beta\le\beta_c(\rho)$, every $k$, every neighbour $0'$ of the origin, and every $y\notin \Lambda_{2\ell_{k+1}}$ in a regular scale,
\be
{\bf P}^{0y,0'y,\emptyset,\emptyset}_{\rho,\beta}[I_k(0)]\ge c.
\ee
\end{lemma}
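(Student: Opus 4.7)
The plan is to mimic the proof of Lemma~\ref{lem:intersection general} on the blown-up graph $\Z^d\times\{1,\dots,N\}$ provided by the Griffiths-Simon representation. Since the constants in the statement are uniform over the GS class, it will be enough to treat a measure $\rho$ falling under case (1) of Definition~\ref{def:rho} and then pass to the limit $N\to\infty$. On the blown-up graph all of the random-current machinery of Section~\ref{sec:1.5} applies verbatim, and the single-spin correlations $\langle\sigma_u\sigma_v\rangle$ on that graph translate to $\tau$-correlations $\langle\tau_u\tau_v\rangle$ once averaged over the internal indices of each block. A useful simplification over the Ising setting is that the event $I_k(0)$ does \emph{not} require the crossing cluster to be unique, so the delicate backbone analysis that ruled out the events $F_1,\dots,F_4$ in the proof of Lemma~\ref{lem:intersection} can simply be skipped here.

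Concretely, I would fix intermediary scales $\ell_k\ll n\le m\le M\le \ell_{k+1}$ as in the Ising proof, and then perform a second-moment method on the coarse intersection count
\[
\calM:=\bigl|\bigl\{v\in{\rm Ann}(m,M)\,:\,\calB_v\text{ is connected to }{\textstyle\bigcup_{|w|\le 2}\calB_w}\text{ in both }\n_1+\n_3\text{ and }\n_2+\n_4\bigr\}\bigr|.
\]
For the first moment I expect, after applying the switching lemma on the blown-up graph, summing the $\sigma$-correlations over the internal indices, and exploiting the regularity of the scale containing $y$ (Properties P1--P2), a lower bound of the form
\[
{\bf E}^{0y,0'y,\emptyset,\emptyset}_{\rho,\beta}[\calM]\ \ge\ c_1\bigl(B_M(\rho,\beta)-B_{m-1}(\rho,\beta)\bigr)\ \ge\ c_2\, B_{\ell_{k+1}}(\rho,\beta),
\]
the last step following from Lemma~\ref{cor:growth} and the definition of $(\ell_k)$ with $D$ large enough. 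For the second moment, applying Proposition~\ref{prop:3} on the blown-up graph and translating back to $\tau$-correlations through the dimensionally balanced inequality \eqref{eq:genSL} should yield
\[
{\bf E}^{0y,0'y,\emptyset,\emptyset}_{\rho,\beta}[\calM^2]\ \le\ C_3\bigl(B_M(\rho,\beta)-B_{m-1}(\rho,\beta)\bigr)B_{2M}(\rho,\beta)\ \le\ C_4\, B_{\ell_{k+1}}(\rho,\beta)^2.
\]
Paley-Zygmund then gives ${\bf P}^{0y,0'y,\emptyset,\emptyset}_{\rho,\beta}[\calM>0]\ge c>0$, and since $\{\calM>0\}$ implies $I_k(0)$, the lemma follows.

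The main technical difficulty I foresee is the bookkeeping needed to translate Ising identities on the blown-up graph into $\tau$-correlations \emph{uniformly in} $N$: once the random sources of ${\bf P}^{0y}$ and ${\bf P}^{0'y}$ and the internal indices of $\calB_v$ are averaged out, factors of the on-site variance $\langle\tau_0^2\rangle$ appear that must be absorbed into the bounds. This is where the $N$-independent estimate $\langle\tau_0^2\rangle_{\rho,\beta_c(\rho)}\le C/\beta_c(\rho)$ from \eqref{sigma^2bound} becomes crucial, since it ensures that the moment estimates survive the $N\to\infty$ limit. Once the $\tau$-version of each diagrammatic bound is set up correctly, the rest of the proof (regularity of scales, growth of $B_L(\rho,\beta)$, Cauchy-Schwarz) transfers mechanically from the Ising case.
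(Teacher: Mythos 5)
Your first half — the second-moment method on a coarse intersection count in the blown-up graph, with the first moment lower-bounded by $c(B_M(\rho,\beta)-B_{m-1}(\rho,\beta))\ge c'B_{\ell_{k+1}}(\rho,\beta)$ via Lemma~\ref{cor:growth} and the regularity of the scale containing $y$ — is essentially the paper's argument. Two remarks on the bookkeeping, though. The paper's $\calM$ is not an unweighted cardinality: it is a sum weighted by $Q_i^2Q_{i'}^2$ over the internal indices of $\calB_v$, and it records connection to the (random) sources $\partial\n_1$, $\partial\n_2$ rather than to the boxes near the origin. This weighting is exactly what makes the switching-lemma computation of the moments collapse into $\tau$-correlations, hence into $B_L(\rho,\beta)$, uniformly in $N$. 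Also, the place where the on-site variance enters is not \eqref{sigma^2bound}: it is the infrared bound \eqref{eq:IB GS}, which produces an additive error $C\,B_0(\rho,\beta)=C\langle\tau_0^2\rangle_{\rho,\beta}$ in the lower bound $B_M-B_{m-1}\ge B_{\ell_{k+1}}-B_{\ell_k}-CB_0$; this error is absorbed because $B_{\ell_{k+1}}\ge D^{k+1}B_0$ with $D$ large, not because $\langle\tau_0^2\rangle$ is bounded absolutely.

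The genuine gap is your claim that the backbone analysis of $F_1,\dots,F_4$ "can simply be skipped." The paper does carry it out in the GS setting, and it is needed. The reason is that in the application (Proposition~\ref{prop:non isolated G-S}) the events $I_k(0)$ are fed into the mixing property (Theorem~\ref{thm:mixing multi G-S}), which requires them to be determined by the restriction of the currents to the annulus ${\rm Ann}(\ell_k,\ell_{k+1})$; "being connected to $\calB_0$" is not such an event. The way this is reconciled is exactly as in the Ising case: one shows that with probability close to $1$ each of $\n_1+\n_3$ and $\n_2+\n_4$ has a \emph{unique} cluster crossing the annulus, which is then necessarily the cluster of the sources, so that the coarse intersection found by the second-moment method can be read off from the annulus alone. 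Moreover, this uniqueness step is not a verbatim transfer from Lemma~\ref{lem:intersection general}: the chain rule for backbones must be rewritten with the inter-block couplings $\beta J_{u,u'}$ inserted (to keep the bound spin-dimensionally balanced), and the sourceless-crossing estimate for $F_2,F_3$ uses Proposition~\ref{prop:intersection phi4 upper} in place of the plain switching lemma. Dropping this part leaves the lemma unusable for its intended purpose and omits a nontrivial piece of the paper's proof.
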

\begin{proof}
Reuse the notions included in the proofs of the intersection property in previous sections. Let 
\be
\calM:=\sum_{v\in {\rm Ann}(m,M)} \sum_{i,i'=1}^n Q_i^2\mathbb I[\partial\n_1\stackrel{\n_1+\n_3}{\longleftrightarrow}(v,i)]\,Q_{i'}^2\mathbb I[\partial\n_1\stackrel{\n_1+\n_3}{\longleftrightarrow}(v,i')].
\ee
A computation similar to before gives
\begin{align}
{\bf E}^{0y,0'y,\emptyset,\emptyset}_{\rho,\beta}[|\calM|]&\ge c_1(B_{M}(\rho,\beta)-B_{m-1}(\rho,\beta))\\
{\bf E}^{0y,0'y,\emptyset,\emptyset}_{\rho,\beta}[|\calM|^2]&\le C_2B_{\ell_{k+1}}(\rho,\beta)^2.
\end{align}
Now, in the first line we use the same reasoning as below \eqref{eq:BB}. We include it for completeness to see where the division by $B_0(\rho,\beta)$ enters into the game (it is the only place it does). The Infrared Bound \eqref{eq:IB GS} (note that $\langle\tau_0^2\rangle_{\rho,\beta}=B_0(\rho,\beta)$) implies that 
\be
B_{M}(\rho,\beta)-B_{m-1}(\rho,\beta)\ge B_{\ell_{k+1}}(\rho,\beta)-B_{\ell_k}(\rho,\beta)-C_3B_0(\rho,\beta)\ge (1-\tfrac{1+C_3}D)B_{\ell_{k+1}}(\rho,\beta).
\ee
Cauchy-Schwarz therefore implies the fact that $\calM\ne \emptyset$ with positive probability, which implies in particular the existence of a vertex $v\in{\rm Ann}(m,M)$ which is connected in $\n_1+\n_3$ to $\calB_0$ and in $\n_2+\n_4$ to $\calB_0'$.

The second part of the proof bounding the probabilities of $F_1,\dots,F_4$ follows by the same proof as for the Ising model. More precisely,
for $F_1$, the chain rule for backbones \cite{AizBarFer87} and a decomposition on the first edge of the backbone with one endpoint in (a block of a vertex in) $\Lambda_{n-1}$ and the other (in a block of a vertex) in $\partial\Lambda_n$, and then the first edge after this between an endpoint (in a block of a vertex) outside $\Lambda_{\ell_k}$ and one in (a block of a vertex in)  $\Lambda_{\ell_{k}}$ implies that
\begin{equation}{\bf P}^{0y,\emptyset}_{\rho,\beta}[F_1]\le \sum_{\substack{v\in \partial\Lambda_{n}\\w\in \partial\Lambda_{\ell_{k}}\\ v',w'\in\Z^d}}\frac{\langle\tau_0\tau_v\rangle_{\rho,\beta}\,\beta J_{v',v}\ \langle\tau_{v}\tau_{w'}\rangle_{\rho,\beta}\,\beta J_{w',w}\,\langle\tau_{w}\tau_y\rangle_{\rho,\beta}}{\langle\tau_0\tau_y\rangle_{\rho,\beta}}\le C_3 n^3\ell_{k}^3 n^{-4}\le C_4\ell_{k}^{-\ep}.\end{equation}
This inequality uses Property P2 of regular scales, the lower bound \eqref{eq:Simon-Lieb} on the two-point function, and the Infrared Bound~\eqref{eq:IB}.
For $F_3$, the same reasoning as for Ising, with Proposition~\ref{prop:intersection phi4 upper} replacing the switching lemma, leads to \begin{align}
{\bf P}^{0x,\emptyset}_{\rho,\beta}[F_3]&\le \sum_{\substack{v\in \partial\Lambda_{n}\\ w\in\partial\Lambda_m}}{\bf P}^{\emptyset,\emptyset}_{\rho,\beta}[\calB_v\stackrel{\n_1+\n_2}{\longleftrightarrow} \calB_w]\le \sum_{\substack{v\in \partial\Lambda_{n}\\ w\in\partial\Lambda_m\\ v',w'\in\Z^d}}\langle\tau_v\tau_w\rangle\,\beta J_{w,w'}\,\langle\tau_{w'}\tau_{v'}\rangle\,\beta J_{v',v}\le C_5\ell_{k}^{-\ep},\end{align}
where in the last line we used again the Infrared Bound \eqref{eq:IB}. \end{proof}

We now turn to the proof of the mixing property for the measures ${\bf P}^{xy}_\beta$, which is the exact replica of the Ising statement.
\begin{theorem}[mixing of random currents for the GS class]\label{thm:mixing multi G-S}
For $d\ge4$, there exist $\alpha,c>0$ such that for every $\rho$ satisfying Definition~\ref{def:rho}, every $t\le s$, every $\beta\le \beta_c(\rho)$, every $n^\alpha\le N\le\xi(\rho,\beta)$, every $x_i\in\Lambda_n$ and $y_i\notin\Lambda_N$ for every $i\le t$, and every events $E$ and $F$ depending on the restriction of $(\n_1,\dots,\n_s)$ to edges within $\Lambda_n$ and outside of $\Lambda_N$ respectively,  
\begin{align}\label{eq:mixing GS}\big|{\bf P}^{x_1y_1,\dots,x_ty_t,\emptyset,\dots,\emptyset}_{\rho,\beta}[E\cap F]-{\bf P}^{x_1y_1,\dots,x_ty_t,\emptyset,\dots,\emptyset}_{\rho,\beta}[E]{\bf P}^{x_1y_1,\dots,x_ty_t,\emptyset,\dots,\emptyset}_{\rho,\beta}[F]\big|&\le s(\log \tfrac Nn)^{-c}.\end{align}
 Furthermore, for every $x'_1,\dots,x'_t\in\Lambda_n$ and $y'_1,\dots,y'_t\notin\Lambda_N$, 
\begin{align}\label{eq:independence GS}\big|{\bf P}^{x_1y_1,\dots,x_ty_t,\emptyset,\dots,\emptyset}_{\rho,\beta}[E]-{\bf P}^{x_1y_1',\dots,x_ty_t',\emptyset,\dots,\emptyset}_{\rho,\beta}[E]\big|&\le s(\log \tfrac Nn)^{-c},\\
\big|{\bf P}^{x_1y_1,\dots,x_ty_t,\emptyset,\dots,\emptyset}_{\rho,\beta}[F]-{\bf P}^{x_1'y_1,\dots,x_t'y_t,\emptyset,\dots,\emptyset}_{\rho,\beta}[F]\big|&\le s(\log \tfrac Nn)^{-c}.\end{align}
\end{theorem}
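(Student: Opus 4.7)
\medskip
\noindent\textbf{Proof plan.}
The plan is to transpose the Ising argument of Section~\ref{sec:4.2} to the block-graph $\Z^d\times\{1,\dots,N\}$ on which a measure $\rho$ satisfying the first condition of Definition~\ref{def:rho} is realized as a ferromagnetic Ising system. Since the block interactions couple only vertices $(x,i)$ and $(y,j)$ with either $x=y$ or $\{x,y\}$ a n.n.f.~edge of $\Z^d$, the switching lemma applies verbatim at the block level, and the spin-dimension-balanced two-point function estimates of Section~\ref{sec:3} (sliding-scale infrared bound, gradient regularity, abundance of regular scales, power-law lower bound) are available uniformly in $\rho$ and $N$. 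This allows us to mimic the construction of the ``resolution of identity'' $\mathbf N=\prod_{i=1}^t \mathbf N_i$ that drove the proof of Theorem~\ref{thm:mixing multi}.

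\medskip
\noindent\textbf{Adapting $\mathbf N$ to the GS setting.} First, I would replace pointwise connections by block-connections: with $m,M$ defined from $n,N$ as in Section~\ref{sec:4.2}, a set $\calK$ of regular scales between $m$ and $M/2$ guaranteed by Theorem~\ref{thm:regular scales}, and (for $y\notin\Lambda_{2dm}$) the annular region
\be
\bbA_{y}(m):=\big\{u\in {\rm Ann}(m,2m):\ \forall x\in \Lambda_{m/d},\ \langle\tau_x\tau_y\rangle_{\rho,\beta}\le \big(1+\tfrac{C|x-u|}{|y|}\big)\langle\tau_u\tau_y\rangle_{\rho,\beta}\big\},
\ee
set
\be
\mathbf N_i:=\frac{1}{|\calK|}\sum_{k\in\calK}\frac{1}{A_{x_i,y_i}(2^k)}\sum_{u\in\bbA_{y_i}(2^k)}\mathbb I\big[\calB_u\stackrel{\n_i+\n_i'}{\longleftrightarrow}\partial\n_i\big],
\ee
with $a_{x,y}(u)$ and $A_{x,y}(m)$ defined through the $\tau$-correlations, exactly as in the Ising case but on the lattice $\Z^d$. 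Using Proposition~\ref{prop:intersection phi4 upper} (the block-level analog of the switching identities) in place of the standard switching lemma, the first moment identity $\mathbf E^{\mathbf{xy},\emptyset}_{\rho,\beta}[\mathbf N_i]=1$ and the second-moment bound $\mathbf E^{\mathbf{xy},\emptyset}_{\rho,\beta}[\mathbf N_i^2]\le 1+C/|\calK|$ follow from the case distinction $\ell=k$ vs.\ $\ell>k$ exactly as in Proposition~\ref{lem:concentration}; the only inputs used there are Property~P1--P4 for regular scales and the three-point connection bound of Proposition~\ref{prop:3}, both of which are available for the GS class through Section~\ref{sec:3} and the Ising origin of the block system.

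\medskip
\noindent\textbf{Source-switching and decoupling.} Next, for a tuple $\mathbf u=(u_1,\dots,u_t)$ with $u_i\in\bbA_{y_i}(2^{k_i})$, define $G(\mathbf u)$ as the event that each $\n_i+\n_i'$ (for $i\le t$) contains a sub-current ${\bf k}_i$ with $\partial{\bf k}_i=\{u_i,y_i\}$, vanishing on edges with both endpoints in $\Lambda_n$ and equal to $\n_i+\n_i'$ outside $\Lambda_N$ (and likewise with $\partial{\bf k}_i=\emptyset$ for $t<i\le s$). Applying the generalized switching principle of \cite[Lemma~2.1]{AizDumTasWar18} on the block graph—which permits to replace sources $\{x_i,y_i\}$ by $\{x_i,u_i\}\cup\{u_i,y_i\}$ at the cost of the factor $a_{x_i,y_i}(u_i)$—yields, as in \eqref{eq:ggf}--\eqref{eq:ggff},
\be
\mathbf E^{\mathbf{xy},\emptyset}_{\rho,\beta}\big[\mathbf N\,\mathbb I_{E\cap F}\,\mathbb I_{G(\mathbf u)}\big]=\Big(\prod_{i=1}^t\frac{a_{x_i,y_i}(u_i)}{|\calK| A_{x_i,y_i}(2^{k_i})}\Big){\bf P}^{\mathbf{xu}}_{\rho,\beta}[E]\,{\bf P}^{\mathbf{uy}}_{\rho,\beta}[F],
\ee
the right-hand side being a product because $\mathbf N$ is absorbed by the switching. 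The key decoupling comes from showing that $G(\mathbf u)^c$ has small probability. This is the GS analog of Lemma~\ref{lem:G}: one estimates the probability that the backbone of $\n_i$ crosses ${\rm Ann}(M,\sqrt{MN})$ via the chain rule for backbones \cite{AizBarFer87}, bounds the probability that the sourceless residual crosses ${\rm Ann}(\sqrt{MN},N)$ through Griffiths' inequality together with the sliding-scale Infrared Bound~\eqref{eq:multi chi}, and treats the $\n_i'$-crossings of ${\rm Ann}(n,m)$ symmetrically (here the condition $u_i\in\bbA_{y_i}(2^{k_i})$ replaces the use of Property~P2 at $y_i$, since $y_i$ is not assumed regular).

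\medskip
\noindent\textbf{Assembling the mixing bound.} Once concentration of $\mathbf N$ and control of $G(\mathbf u)^c$ are in place, a Cauchy--Schwarz argument exactly as in~\eqref{eq:aaag} yields
\be
\Big|{\bf P}^{\mathbf{xy}}_{\rho,\beta}[E\cap F]-\sum_{\mathbf u}\delta(\mathbf u,\mathbf x,\mathbf y){\bf P}^{\mathbf{xu}}_{\rho,\beta}[E]{\bf P}^{\mathbf{uy}}_{\rho,\beta}[F]\Big|\le \frac{C\,s}{\sqrt{\log(N/n)}},
\ee
and the three algebraic manipulations that conclude the proof of Theorem~\ref{thm:mixing multi} (first applying the bound with trivial $F$ to get source independence~\eqref{eq:independence GS} on regular scales, then using a cascade at scale $m=n^{1/3}N^{2/3}$ to reduce to regular sources, and finally combining the previous two applications) carry over unchanged, provided $\alpha$ is chosen $\ge 9\alpha_0$ as in Section~\ref{sec:4.2}. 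The main obstacle, and the only place where some care is needed beyond a literal copy, is ensuring that all diagrammatic estimates used along the way are spin-dimension balanced: this is why $a_{x,y}(u)$ is formulated via ratios of $\tau$-correlations rather than bare spin correlations, and why $\bbA_{y}(2^k)$ is used in lieu of the full annulus. With these dimensional precautions, all constants depend only on $d$, so the bound extends by weak convergence to all $\rho$ in the GS class in the sense of the second condition of Definition~\ref{def:rho}.
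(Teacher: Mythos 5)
Your overall architecture (concentration of a resolution of identity $\mathbf N=\prod_i\mathbf N_i$, source-switching to decouple $E$ from $F$, control of $G(\mathbf u)^c$ via the backbone chain rule and Griffiths' inequality, then the same three algebraic manipulations) is the one the paper uses. But there is a genuine gap in your definition of $\mathbf N_i$. You set
\be
\mathbf N_i:=\frac{1}{|\calK|}\sum_{k\in\calK}\frac{1}{A_{x_i,y_i}(2^k)}\sum_{u\in\bbA_{y_i}(2^k)}\mathbb I\big[\calB_u\stackrel{\n_i+\n_i'}{\longleftrightarrow}\partial\n_i\big],
\ee
i.e.\ you use the \emph{coarse} event that the whole block $\calB_u$ is touched, and you claim that Proposition~\ref{prop:intersection phi4 upper} yields the first-moment identity $\mathbf E^{\mathbf{xy},\emptyset}_{\rho,\beta}[\mathbf N_i]=1$. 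It does not: Proposition~\ref{prop:intersection phi4 upper} is only a one-sided \emph{upper} bound (it comes from the coarse switching lemma, which loses multiplicative factors according to how the cluster meets the edge-boundary of the block), and moreover its right-hand side is $\sum_{u'}\langle\tau_x\tau_u\rangle\,\beta J_{u,u'}\,\langle\tau_{u'}\tau_y\rangle/\langle\tau_x\tau_y\rangle$, which is not the normalizer $a_{x,y}(u)$ you divide by. There is no matching lower bound of that form for the coarse event, so you cannot conclude $\mathbf E[\mathbf N_i]\ge 1-C/|\calK|$, and without a two-sided first-moment estimate matching the second moment to order $1/|\calK|$ the concentration bound $\mathbf E[(\mathbf N-1)^2]\le C/\log(N/n)$ — the engine of the whole proof — collapses.

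The paper avoids this by keeping the connection event at the level of individual block-graph vertices and weighting by $Q_j^2$:
\be
{\bf N}_i:=\frac1{|\calK|}\sum_{k\in \calK}\frac{1}{A_{x_i,y_i}(k)}\sum_{u\in\bbA_k(y_i)}\sum_{j=1}^NQ_{j}^2\,\mathbb I[(u,j)\stackrel{\n_i+\n'_i}\longleftrightarrow \partial\n_i].
\ee
For a fixed vertex $(u,j)$ the \emph{exact} switching lemma of the underlying Ising block system applies (together with the random-source structure of ${\bf P}^{xy}_{\rho,\beta}$), and the weight $Q_j^2$ is precisely what makes the resulting first and second moments re-sum into correlations of the $\tau$-variables, so that the case analysis of Proposition~\ref{lem:concentration} goes through verbatim. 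The same issue propagates to your switching step: in $G(\mathbf u)$ you write $\partial{\bf k}_i=\{u_i,y_i\}$ with $u_i$ a lattice site rather than a block-graph vertex, which is only meaningful in the random-source/vertex-level formulation. The coarse connection events and Proposition~\ref{prop:intersection phi4 upper} are the right tools elsewhere in Section~\ref{sec:6} (e.g.\ for the upper bounds on $F_1,\dots,F_4$ and on $G(\mathbf u)^c$, where only one-sided estimates are needed), but not for the resolution of identity, which requires exact moment computations.
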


\begin{proof}
The beginning is the same as for the Ising model, until the definition of the variable ${\bf N}_i$ that now becomes
\begin{equation}{\bf N}_i:=\frac1{|\calK|}\sum_{k\in \calK}\frac{1}{A_{x_i,y_i}(k)}\sum_{u\in\bbA_k(y_i)}\sum_{j=1}^NQ_{j}^2\,\mathbb I[(u,j)\stackrel{\n_i+\n'_i}\longleftrightarrow \partial\n_i],
\end{equation}
where
$
a_{x,y}(u):=\langle\tau_x\tau_u\rangle\langle\tau_u\tau_y\rangle/\langle\tau_x\tau_y\rangle$ and $ A_{x,y}(k):=\sum_{u\in \bbA_k(y_i)}a_{x,y}(u)$.
The proof of the concentration inequality follows the same lines as in the Ising case. Indeed, the choice of the weight $Q_{j}^2$ enables to rewrite the moments of the random variables ${\bf N}_i$ in terms of the correlations of the random variables $(\tau_z:z\in \Z^d)$. 
The rest of the proof is exactly the same, with trivial changes. For instance, in the proof of Lemma~\ref{lem:G}, one must be careful to derive bounds on probabilities involving $\beta |J|$. This is easily doable  using Proposition~\ref{prop:intersection phi4 upper} exactly like in the previous proof. \end{proof}

\appendix
\section{Appendix}

\subsection{Random currents's partial monotonicity statements}

An inconvenient feature of the random  current representation is the  lack of an FKG-type monotonicity, as the one  valid for the Fortuin-Kasteleyn random cluster models (cf. \cite{Gri06}).     The addition of a pair of  sources may  enhance the configuration, e.g.~forcing a long line where such were rare, but in some situations it may facilitate a split in a connecting line, thereby reducing the current's connectivity properties.    
Nevertheless, some monotonicity properties can still be found, and are used in our analysis.

In this section, we set $\sigma_A$ for the product of the spins in $A$ and write ${\bf C}_\n(S)=\cup_{x\in S}{\bf C}_\n(x)$. 

\begin{lemma}\label{lem:a}  
Let $A,B,S$ be subsets of $\Lambda$ and $F$ a non-negative function defined over pairs of currents, which is determined by just the values of $(\n_1,\n_2)$ along the edges touching the connected cluster ${\bf C}_{\n_1+\n_2}(S) $ and such that $F(\n_1,\n_2)=0$ whenever that cluster intersects $B$ and $(\partial\n_1,\partial \n_2)=(A,B)$.  
Then
\begin{eqnarray}\label{eq:mm}
\,{\bf E}^{A,B}_{\Lambda,\beta}[F(\n_1,\n_2)] ={\bf E}^{A,\emptyset}_{\Lambda,\beta}[F(\n_1,\n_2)\frac{\langle\sigma_B\rangle_{\Lambda\setminus{\bf C}_{\n_1+\n_2}(S),\beta}}{\langle\sigma_B\rangle_{\Lambda,\beta}}] \leq  {\bf E}^{A,\emptyset}_{\Lambda,\beta}[F(\n_1,\n_2)].
\end{eqnarray}
\end{lemma}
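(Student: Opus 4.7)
The plan is to decompose each current according to the connected cluster $K:={\bf C}_{\n_1+\n_2}(S)$ and to recognize that the sum over the ``outside'' degrees of freedom produces precisely the Ising partition function and correlator of the complementary region $\Lambda\setminus K$. First I would observe that, by the very definition of a cluster in the multigraph of $\n_1+\n_2$, no edge $e$ joining $K$ with its complement satisfies $(\n_1+\n_2)(e)>0$, so both $\n_1$ and $\n_2$ split cleanly as $\n_i=\n_i^{\mathrm{in}}+\n_i^{\mathrm{out}}$, where $\n_i^{\mathrm{in}}$ is supported on edges of the induced subgraph on $K$ and $\n_i^{\mathrm{out}}$ on edges of the induced subgraph on $\Lambda\setminus K$. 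The parity condition defining sources is then local, hence $\partial\n_i^{\mathrm{in}}=\partial\n_i\cap K$ and $\partial\n_i^{\mathrm{out}}=\partial\n_i\setminus K$.

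The key reduction is: since $F(\n_1,\n_2)=0$ whenever $K\cap B\neq\emptyset$, and since $F$ depends only on values on edges touching $K$, the function $F$ is determined by $(\n_1^{\mathrm{in}},\n_2^{\mathrm{in}})$ alone, while the partitioning also forces $\partial\n_2^{\mathrm{out}}=B$ and $\partial\n_1^{\mathrm{out}}=A\setminus K$. Summing $w(\n_1^{\mathrm{out}})w(\n_2^{\mathrm{out}})$ over all valid outside currents then factorizes as
\be
\Bigl(\sum_{\partial\n_1^{\mathrm{out}}=A\setminus K}w_{\Lambda\setminus K}(\n_1^{\mathrm{out}})\Bigr)\Bigl(\sum_{\partial\n_2^{\mathrm{out}}=B}w_{\Lambda\setminus K}(\n_2^{\mathrm{out}})\Bigr)=\frac{\langle\sigma_{A\setminus K}\rangle_{\Lambda\setminus K}\langle\sigma_B\rangle_{\Lambda\setminus K}\,Z(\Lambda\setminus K,\beta)^{2}}{2^{2|\Lambda\setminus K|}},
\ee
using the standard random current identity $\sum_{\partial\n=A'}w(\n)=\langle\sigma_{A'}\rangle\,Z/2^{|\Lambda|}$.

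Writing both ${\bf E}^{A,B}_{\Lambda,\beta}[F\,\mathbb I[{\bf C}_{\n_1+\n_2}(S)=K]]$ and ${\bf E}^{A,\emptyset}_{\Lambda,\beta}[F\,\mathbb I[{\bf C}_{\n_1+\n_2}(S)=K]]$ via this factorization, the only factor that differs between the two expectations is $\langle\sigma_B\rangle_{\Lambda\setminus K}/\langle\sigma_B\rangle_{\Lambda}$ (the factors involving $A$, $Z$ and the inside sum are identical). Summing over all admissible $K$ yields the claimed identity
\be
{\bf E}^{A,B}_{\Lambda,\beta}[F]={\bf E}^{A,\emptyset}_{\Lambda,\beta}\Bigl[F\,\frac{\langle\sigma_B\rangle_{\Lambda\setminus{\bf C}_{\n_1+\n_2}(S),\beta}}{\langle\sigma_B\rangle_{\Lambda,\beta}}\Bigr].
\ee
Finally, the inequality follows from Griffiths' second inequality: removing the vertex set $K$ amounts to setting the couplings $J_{x,y}$ with $\{x,y\}\cap K\neq\emptyset$ to zero, a monotone operation in the ferromagnetic interaction that can only decrease $\langle\sigma_B\rangle$, so $\langle\sigma_B\rangle_{\Lambda\setminus K,\beta}\le\langle\sigma_B\rangle_{\Lambda,\beta}$ and the prefactor is at most $1$.

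The only genuinely delicate point is the cluster-decomposition step: one must check carefully that $F$ truly does not see any outside variable (it depends only on edges incident to $K$, but such edges with one endpoint outside $K$ necessarily carry zero current, so effectively $F$ reads only edges internal to $K$) and that the source-parity accounting $\partial\n_i^{\mathrm{in}}\sqcup\partial\n_i^{\mathrm{out}}=\partial\n_i$ is consistent with the constraint $K\cap B=\emptyset$ enforced by $F\neq0$. Once this bookkeeping is granted, the rest is a straightforward factorization plus Griffiths.
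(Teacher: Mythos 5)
Your proposal is correct and follows essentially the same route as the paper: condition on the realization of the cluster ${\bf C}_{\n_1+\n_2}(S)=K$, use that no current crosses its boundary to factor the weights into an inside part (common to both expectations) and an outside part summing to $\langle\sigma_B\rangle_{\Lambda\setminus K}\,Z(\Lambda\setminus K)/2^{|\Lambda\setminus K|}$, and finish with Griffiths' inequality; the paper merely organizes this by first proving the identity for elementary indicator functions of a fixed cluster $T$ and extending by linearity. The bookkeeping points you flag (boundary edges carry zero current, and $K\cap B=\emptyset$ on the support of $F$) are exactly the ones the paper handles implicitly, so no gap.
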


\begin{proof}The second inequality is a trivial application of Griffiths' inequality \cite{Gri67}. The first one is proven by a fairly straightforward manipulation involving currents that we now present.  We drop $\beta$ from the notation. Fix $T\subset\Lambda$ not intersection $B$ and choose $F$ given by
\be
F(\n_1,\n_2):=\mathbb I[{\bf C}_{\n_1+\n_2}(S)=T]\,\mathbb I[\n_1=\n]\,\mathbb I[\n_2=\m\text{ on }T]\ee for $\n$ and $\m$ currents on $\Lambda$ and $T$ respectively. For such a choice of function, we find that
\begin{align}
\langle\sigma_A\rangle_\Lambda\langle\sigma_B\rangle_\Lambda\,{\bf E}^{A,B}_\Lambda[F(\n_1,\n_2)]&=\frac{4^{|\Lambda|}}{Z(\Lambda,\beta)^2}\sum_{\n_1:\partial\n_1=A}\sum_{\n_2:\partial\n_2=B}F(\n_1,\n_2)w(\n_1)w(\n_2)\nonumber\\
&=\frac{4^{|\Lambda|}w(\n)w(\m)}{Z(\Lambda,\beta)^2}\sum_{\n_2':\partial\n_2'=B}w(\n_2)\nonumber\\
&=\frac{4^{|\Lambda|}w(\n)w(\m)}{Z(\Lambda,\beta)^2}\langle\sigma_B\rangle_{\Lambda\setminus T}\sum_{\n_2':\partial\n'_2=\emptyset}w(\n_2)\nonumber\\
&=\langle\sigma_A\rangle_\Lambda\langle\sigma_B\rangle_{\Lambda\setminus T}\,{\bf E}^{A,\emptyset}_\Lambda[F(\n_1,\n_2)],
\end{align}
where $\n'_2$ is referring to a current on $\Lambda\setminus T$. In the second line, we used that for $F(\n_1,\n_2)$ to be non-zero, $\n_1$ must be equal to $\n$ and $\n_2$ be decomposed into the current $\m$ on $T$ and a current $\n'_2$ outside $T$ (also, $\n_2(x,y)$ is equal to zero for every $x\in T$ and $y\notin T$). In the last line, we skipped the steps corresponding to going backward line to line to end up with ${\bf E}^{A,\emptyset}_\Lambda[F(\n_1,\n_2)]$.

The proof follows readily for every function $F$ satisfying the assumptions of the lemma. Also, we obtain the result on $\bbZ^d$ by letting $\Lambda$ tend to $\Z^d$.\end{proof}

An interesting application of the lemma is the following pair of \emph{disentangling bounds}. The first inequality appeared in \cite[Proposition  5.2]{Aiz82}, the second is new.
\begin{corollary}\label{cor:1} \label{prop:indep}\label{lem:Aizenman}
For every $\beta>0$, every four vertices $x,y,z,t\in\bbZ^d$ and every set $S\subset\bbZ^d$, 
\begin{align}{\bf P}^{xy,zt}_\beta[{\bf C}_{\n_1+\n_2}(x)\cap{\bf C}_{\n_1+\n_2}(z)\ne \emptyset]&\le{\bf P}^{xy,\emptyset,zt}_\beta[{\bf C}_{\n_1+\n_2}(x)\cap{\bf C}_{\n_3}(z)\ne \emptyset],\\
{\bf P}^{0x,0z,\emptyset,\emptyset}_\beta[{\bf C}_{\n_1+\n_3}(0)\cap{\bf C}_{\n_2+\n_4}(0)\cap S\ne \emptyset]&\le{\bf P}^{0x,0z,0y,0t}_\beta[{\bf C}_{\n_1+\n_3}(0)\cap{\bf C}_{\n_2+\n_4}(0)\cap S\ne \emptyset].
\end{align}
\end{corollary}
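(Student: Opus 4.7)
My plan is to derive each inequality from Lemma~\ref{lem:a}, combined with structural observations on the random current representation. The two cases differ in the type of source modification and call for different manipulations.

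For the second inequality, I would condition on the triple $(\n_1, \n_2, \n_4)$, whose joint law is the same under both measures. Setting $C := S \cap {\bf C}_{\n_2+\n_4}(0)$, the event reduces to $\{{\bf C}_{\n_1+\n_3}(0) \cap C \ne \emptyset\}$, which depends only on $(\n_1, \n_3)$. I would then apply Lemma~\ref{lem:a} to this pair with $A = \{0, x\}$, $B = \{0, y\}$, $S = C$, and the test function $F := \mathbb{I}[{\bf C}_{\n_1+\n_3}(C) \cap \{0, y\} = \emptyset]$. Since $F$ vanishes whenever the cluster intersects $B$, the Lemma applies and yields
\[
{\bf P}^{0x, 0y}\bigl[{\bf C}_{\n_1+\n_3}(C) \cap \{0, y\} \ne \emptyset\bigr] \;\ge\; {\bf P}^{0x, \emptyset}\bigl[{\bf C}_{\n_1+\n_3}(C) \cap \{0, y\} \ne \emptyset\bigr].
\]
Under ${\bf P}^{0x, 0y}$, the source-forced path $0 \leftrightarrow y$ in $\n_3$ makes $\{{\bf C}_{\n_1+\n_3}(C) \ni 0\}$ coincide with $\{{\bf C}_{\n_1+\n_3}(C) \cap \{0, y\} \ne \emptyset\}$; under ${\bf P}^{0x, \emptyset}$ the reverse containment holds trivially. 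Chaining these two remarks gives the desired inequality for the $\n_3$-transition; a symmetric argument on $\n_4$ (after re-conditioning on $(\n_1, \n_2, \n_3)$) completes the proof.

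For the first inequality, the modification is subtler, since a source pair is transferred from $\n_2$ to a new independent current $\n_3$ rather than merely added. My approach, following \cite[Proposition~5.2]{Aiz82}, is to apply Lemma~\ref{lem:a} to $(\n_1, \n_2)$ with $A = \{x, y\}$, $B = \{z, t\}$, $S = \{x\}$, and test function $F := \mathbb{I}[{\bf C}_{\n_1+\n_2}(x) \cap T_3 = \emptyset]$, where $T_3 := {\bf C}_{\n_3}(z)$ is determined by an auxiliary independent current $\n_3$ with sources $\{z, t\}$. Since $z, t \in T_3$ by the source-forced $z \leftrightarrow t$ path in $\n_3$, $F$ satisfies the Lemma's hypothesis. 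The resulting identity involves the Griffiths reweighting factor $\langle\sigma_z\sigma_t\rangle_{\Lambda \setminus {\bf C}_{\n_1+\n_2}(x)}/\langle\sigma_z\sigma_t\rangle_\Lambda$; integrating against $\n_3 \sim {\bf P}^{zt}$ and using the observation that under ${\bf P}^{xy, zt}$ the forced $z \leftrightarrow t$ path in $\n_2$ identifies the event $\{x \leftrightarrow z \text{ in } \n_1 + \n_2\}$ with $\{{\bf C}_{\n_1+\n_2}(x) \cap \{z, t\} \ne \emptyset\}$ then delivers the sought directional comparison.

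The principal obstacle lies in the first inequality: Lemma~\ref{lem:a}'s natural direction is opposite to the one sought, so one cannot simply read off the bound. The proof must use the Lemma's full equality form with the Griffiths reweighting, and the delicate step is to verify that integration against the auxiliary current $\n_3$ converts the reweighting factor into precisely the intersection probability on the right-hand side, without loss in the exchange between $\n_2$ and $\n_3$ of the $\{z,t\}$-source roles.
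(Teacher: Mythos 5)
Your treatment of the second inequality is correct and is essentially the paper's argument: condition on $(\n_2,\n_4)$ so that $C=S\cap{\bf C}_{\n_2+\n_4}(0)$ is frozen (you write ``condition on the triple $(\n_1,\n_2,\n_4)$'', but you must leave $\n_1$ free, since Lemma~\ref{lem:a} acts on the \emph{pair} $(\n_1,\n_3)$ — this is clearly what you intend), apply the lemma to $F=\mathbb I[{\bf C}_{\n_1+\n_3}(C)\cap\{0,y\}=\emptyset]$, and use the source-forced connection $0\leftrightarrow y$ to identify this event with the target one under the sourced measure. That is exactly the paper's two-step argument, modulo the order in which the sources of $\n_3$ and $\n_4$ are introduced.

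For the first inequality, however, there is a genuine gap. With your choice $F:=\mathbb I[{\bf C}_{\n_1+\n_2}(x)\cap T_3=\emptyset]$, $T_3={\bf C}_{\n_3}(z)$ an \emph{independent} cluster, Lemma~\ref{lem:a} (applied for each fixed $T_3$ and then averaged over $\n_3\sim{\bf P}^{zt}$) yields
\begin{equation*}
Q:=\big({\bf P}^{xy,zt}\otimes{\bf P}^{zt}\big)\big[{\bf C}_{\n_1+\n_2}(x)\cap T_3=\emptyset\big]\ \le\ {\bf P}^{xy,\emptyset,zt}\big[{\bf C}_{\n_1+\n_2}(x)\cap{\bf C}_{\n_3}(z)=\emptyset\big].
\end{equation*}
The right-hand side is indeed the complement of the target's right-hand side, but the left-hand side $Q$ is \emph{not} ${\bf P}^{xy,zt}[{\bf C}_{\n_1+\n_2}(x)\cap{\bf C}_{\n_1+\n_2}(z)=\emptyset]$: since $z\in T_3$, one only gets $Q\le{\bf P}^{xy,zt}[z\notin{\bf C}_{\n_1+\n_2}(x)]={\bf P}^{xy,zt}[\text{no coupled intersection}]$. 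You thus obtain two quantities each bounded \emph{below} by $Q$, which gives no comparison between them. The identification you invoke (``the forced $z\leftrightarrow t$ path in $\n_2$ identifies the events'') applies to $\mathbb I[z,t\notin{\bf C}_{\n_1+\n_2}(x)]$, not to your $F$, which involves the larger set $T_3$.

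The repair is to take $F=\mathbb I[z,t\notin{\bf C}_{\n_1+\n_2}(x)]$ (no external $T_3$), so that ${\bf E}^{xy,zt}[F]$ \emph{is} the coupled no-intersection probability, and the equality form of Lemma~\ref{lem:a} gives ${\bf E}^{xy,\emptyset}\big[F\cdot\langle\sigma_z\sigma_t\rangle_{\Lambda\setminus{\bf C}_{\n_1+\n_2}(x)}/\langle\sigma_z\sigma_t\rangle_\Lambda\big]$. The conversion of this into ${\bf P}^{xy,\emptyset,zt}[\text{no independent intersection}]$ is then \emph{not} ``precise'' or ``without loss'' as you assert: it is a one-sided bound. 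One introduces the enlarged cluster ${\bf C}:={\bf C}_{\n_3}({\bf C}_{\n_1+\n_2}(x))$ under an added sourceless $\n_3$, uses ${\bf C}\supseteq{\bf C}_{\n_1+\n_2}(x)$ together with Griffiths' inequality to bound the integrand from below pointwise, and only then applies Lemma~\ref{lem:a} a second time (to $\n_3$) to trade the reweighting factor back for the source condition $\partial\n_3=\{z,t\}$. This Griffiths/monotonicity step is precisely where the needed ``$\ge$'' enters, and it is the step your sketch leaves unverified while describing it as an exact exchange of source roles.
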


\begin{proof}
Fix $\beta>0$, $\Lambda$ finite (the claim will then follow by letting $\Lambda$ tend to $\Z^d$) and drop $\beta$ from the notation. For the first identity, introduce the random variable 
\be
{\bf C}={\bf C}(\n_1,\n_2,\n_3):={\bf C}_{\n_3}({\bf C}_{\n_1+\n_2}(x)).
\ee Lemma~\ref{lem:a} applied in the first and third lines, Griffiths' inequality \cite{Gri67}, and the trivial inclusion ${\bf C}_{\n_1+\n_2}(x)\subset{\bf C}$ in the second, give
\begin{align}{\bf P}^{xy,zt}_\Lambda[{\bf C}_{\n_1+\n_2}(x)\cap{\bf C}_{\n_1+\n_2}(z)= \emptyset]&={\bf E}^{xy,\emptyset}_\Lambda\Big[\mathbb I[z,t\notin {\bf C}_{\n_1+\n_2}(x)]\frac{\langle\sigma_z\sigma_t\rangle_{\Lambda\setminus{\bf C}_{\n_1+\n_2}(x)}}{\langle\sigma_z\sigma_t\rangle_\Lambda}\Big]\nonumber\\
&\ge{\bf E}^{xy,\emptyset,\emptyset}_\Lambda\Big[\mathbb I[z,t\notin {\bf C}]\frac{\langle\sigma_z\sigma_t\rangle_{\Lambda\setminus {\bf C}}}{\langle\sigma_z\sigma_t\rangle_\Lambda}\Big]\nonumber\\
&={\bf P}^{xy,\emptyset,zt}_\Lambda[z,t\notin {\bf C}],\end{align}
which gives the first inequality.

The second identity requires two successive applications of Lemma~\ref{lem:a}. First, conditioning on $\n_2+\n_4$, the proposition applied to $\mathbf S:={\bf C}_{\n_2+\n_4}(0)\cap S$ gives
\be
{\bf P}^{0x,0z,\emptyset,0t}_\Lambda[{\bf C}_{\n_1+\n_3}(0)\cap{\bf C}_{\n_2+\n_4}(0)\cap S\ne \emptyset]\le {\bf P}^{0x,0z,0y,0t}_\Lambda[{\bf C}_{\n_1+\n_3}(0)\cap{\bf C}_{\n_2+\n_4}(0)\cap S\ne \emptyset].\ee
Similarly, conditioning on $\n_1+\n_3$, the proposition applied to $\mathbf S':={\bf C}_{\n_1+\n_3}(0)\cap S$ gives
\be{\bf P}^{0x,0z,\emptyset,\emptyset}_\Lambda[{\bf C}_{\n_1+\n_3}(0)\cap{\bf C}_{\n_2+\n_4}(0)\cap S\ne\emptyset]\le {\bf P}^{0x,0z,\emptyset,0t}_\Lambda[{\bf C}_{\n_1+\n_3}(0)\cap{\bf C}_{\n_2+\n_4}(0)\cap S\ne\emptyset],\ee
thus concluding the proof.
\end{proof}

\subsection{Multi-point connectivity probabilities}
The following two relations facilitate the derivation of estimates guided by the random walk analogy.  

\begin{proposition}\label{prop:3}
For every $x,u,v\in \Z^d$, we have that
\begin{align}\label{eq:prop2b}
{\bf P}^{0x,\emptyset}_\beta[u\stackrel{\n_1+\n_2}\longleftrightarrow 0]&=\frac{\langle\sigma_0\sigma_u\rangle_\beta\langle\sigma_u\sigma_x\rangle_\beta}{\langle\sigma_0\sigma_x\rangle_\beta},\\
\label{eq:prop3b}{\bf P}^{0x,\emptyset}_\beta[u,v\stackrel{\n_1+\n_2}\longleftrightarrow0]&\le \frac{\langle\sigma_0\sigma_v\rangle_\beta\langle\sigma_v\sigma_u\rangle_\beta\langle\sigma_u\sigma_x\rangle_\beta}{\langle\sigma_0\sigma_x\rangle_\beta}+\frac{\langle\sigma_0\sigma_u\rangle_\beta\langle\sigma_u\sigma_v\rangle_\beta\langle\sigma_v\sigma_x\rangle_\beta}{\langle\sigma_0\sigma_x\rangle_\beta}.\end{align}
\end{proposition}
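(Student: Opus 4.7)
For \eqref{eq:prop2b}, the plan is a direct one-shot application of the switching lemma (Lemma~\ref{lem:switching}). The event $\{u\stackrel{\n_1+\n_2}\leftrightarrow 0\}$ is exactly $\{\n_1+\n_2\in\calF_{\{0,u\}}\}$, since connectivity corresponds precisely to the existence of a paired subcurrent. Taking $A=\{u,x\}$, $B=\{0,u\}$ (so $A\Delta B=\{0,x\}$) and $F\equiv 1$ in the switching lemma yields the identity
\[
\sum_{\substack{\partial\n_1=\{0,x\}\\\partial\n_2=\emptyset}} w(\n_1)w(\n_2)\,\mathbb I[\n_1+\n_2\in\calF_{\{0,u\}}] \;=\; \sum_{\substack{\partial\n_1=\{u,x\}\\\partial\n_2=\{0,u\}}} w(\n_1)w(\n_2),
\]
and dividing both sides by $\langle\sigma_0\sigma_x\rangle_\beta\,Z(\Lambda,\beta)^2/4^{|\Lambda|}$ delivers the first identity in finite volume; the infinite-volume statement then follows from the continuity results of \cite{AizDumSid15}.

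For \eqref{eq:prop3b}, the plan is to first reduce via the same switching trick to a single two-current connection probability, and then bound that probability by a sum over the two orderings of $u$ and $v$ along a path from $0$ to $x$. Rerunning the derivation above but now with the nontrivial observable $F=\mathbb I[0\stackrel{\n_1+\n_2}\leftrightarrow v]$ produces the identity
\[
{\bf P}^{0x,\emptyset}_\beta[u,v\leftrightarrow 0] \;=\; \frac{\langle\sigma_0\sigma_u\rangle_\beta\langle\sigma_u\sigma_x\rangle_\beta}{\langle\sigma_0\sigma_x\rangle_\beta}\,{\bf P}^{ux,0u}_\beta[v\stackrel{\n_1+\n_2}\leftrightarrow u],
\]
where I use that under ${\bf P}^{ux,0u}_\beta$ the sources of $\n_2$ force $0\leftrightarrow u$ in $\n_2$, collapsing $\{v\leftrightarrow 0\}$ to $\{v\leftrightarrow u\}$ inside $\n_1+\n_2$. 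The remaining task is then to establish the domination
\[
{\bf P}^{ux,0u}_\beta[v\leftrightarrow u] \;\le\; {\bf P}^{ux,\emptyset}_\beta[v\leftrightarrow u] + {\bf P}^{0u,\emptyset}_\beta[v\leftrightarrow u],
\]
because \eqref{eq:prop2b} evaluates the two right-hand terms as $\langle\sigma_u\sigma_v\rangle_\beta\langle\sigma_v\sigma_x\rangle_\beta/\langle\sigma_u\sigma_x\rangle_\beta$ and $\langle\sigma_0\sigma_v\rangle_\beta\langle\sigma_v\sigma_u\rangle_\beta/\langle\sigma_0\sigma_u\rangle_\beta$ respectively, and multiplication by the prefactor reproduces exactly the two terms of \eqref{eq:prop3b}.

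The main obstacle is precisely this last domination inequality, where a naive union bound fails because a path from $v$ to $u$ in $\n_1+\n_2$ may alternate between edges of the two currents. The plan is to split $\{v\leftrightarrow u\text{ in }\n_1+\n_2\}$ into $E_1=\{v\leftrightarrow u\text{ in }\n_1\}$ and $E_2=\{v\not\leftrightarrow u\text{ in }\n_1,\ v\leftrightarrow u\text{ in }\n_1+\n_2\}$. Event $E_1$ depends only on $\n_1\sim {\bf P}^{ux}_\beta$, so the trivial inclusion $E_1\subseteq\{v\leftrightarrow u\text{ in }\n_1+\n_3\}$ with an independent sourceless $\n_3\sim {\bf P}^{\emptyset}_\beta$ yields ${\bf P}^{ux,0u}_\beta[E_1]\le {\bf P}^{ux,\emptyset}_\beta[v\leftrightarrow u]$. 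For $E_2$, conditioning on the $\n_1$-cluster of $v$ and invoking Lemma~\ref{lem:a} lets one swap the source pair $\{0,u\}$ of $\n_2$ for $\emptyset$, paying only a depleted-volume two-point-function ratio which is $\le 1$ by Griffiths' inequality~\cite{Gri67}; the residual connection event is then captured by \eqref{eq:prop2b} applied in the single-source-pair measure, producing the second term of the sum. The delicate point is verifying that this path-decomposition is complete — i.e.~that every realization in $\{v\leftrightarrow u\}$ under ${\bf P}^{ux,0u}_\beta$ is charged to at least one of $E_1,E_2$ with the correct Griffiths comparison direction — which is the "delicate monotonicity argument" alluded to in the main text.
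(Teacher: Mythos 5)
Your handling of \eqref{eq:prop2b}, and the first switching step for \eqref{eq:prop3b} that reduces the problem to bounding ${\bf P}^{ux,0u}_\beta[v\leftrightarrow u]$ with the prefactor $\langle\sigma_0\sigma_u\rangle_\beta\langle\sigma_u\sigma_x\rangle_\beta/\langle\sigma_0\sigma_x\rangle_\beta$, coincide with the paper's argument and are correct. The gap is in your treatment of $E_2=\{v\not\leftrightarrow u\text{ in }\n_1\}\cap\{v\leftrightarrow u\text{ in }\n_1+\n_2\}$. Lemma~\ref{lem:a} deletes the source pair $B$ of one current only for functionals that are determined by the cluster ${\bf C}_{\n_1+\n_2}(S)$ of some set $S$ \emph{and vanish whenever that cluster meets $B$}. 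On $E_2$ the cluster of $v$ in $\n_1+\n_2$ contains $u$ by definition, and $u$ lies in \emph{both} source pairs $\{u,x\}$ and $\{0,u\}$, so the vanishing hypothesis fails no matter which pair you try to delete: Lemma~\ref{lem:a} simply does not apply to $\mathbb I[E_2]$. There is also an internal inconsistency: deleting the sources $\{0,u\}$ of $\n_2$, as you write, would land you in ${\bf P}^{ux,\emptyset}_\beta$ and reproduce the \emph{first} term $\langle\sigma_u\sigma_v\rangle_\beta\langle\sigma_v\sigma_x\rangle_\beta/\langle\sigma_u\sigma_x\rangle_\beta$, not the second; producing the second term would require deleting $\{u,x\}$, which runs into the same obstruction. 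Since random currents carry no FKG-type monotonicity, there is no generic principle by which removing sources decreases a connection probability, so the inequality ${\bf P}^{ux,0u}_\beta[E_2]\le{\bf P}^{0u,\emptyset}_\beta[v\leftrightarrow u]$ is left entirely unproved. (The completeness of the split $E_1\cup E_2$, which you flag as the delicate point, is in fact trivial; the missing piece is the bound on $E_2$.)

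The paper circumvents exactly this obstruction by working with the complementary event. It applies Lemma~\ref{lem:a} first to $F=\mathbb I[u\not\leftrightarrow v]$ with ${\bf C}={\bf C}_{\n_1+\n_2}(v)$ -- for which the tracked cluster avoids $u$, so the hypothesis is satisfied -- and then to the non-negative functional $\mathbb I[u\not\leftrightarrow v]\bigl(1-\langle\sigma_B\rangle_{\Lambda\setminus{\bf C}}/\langle\sigma_B\rangle_\Lambda\bigr)$, whose positivity is Griffiths' inequality. This yields the inclusion-exclusion bound \eqref{eq:imp}, which is precisely your target domination strengthened by the subtracted term $-{\bf P}^{\emptyset,\emptyset}_\beta[v\leftrightarrow u]$. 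Your bound on $E_1$ is fine but redundant once \eqref{eq:imp} is available; to complete the proof you should replace the $E_1/E_2$ split by this complementary-event argument.
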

The equality  \eqref{eq:prop2b} is a direct consequence of the switching lemma and has been used several times in the past.  The inequality \eqref{eq:prop3b} is an important new addition, which is proven below.  Its structure  suggests a more general $k$-step random walk type bound, but the present proof does not extend to $k>2$. In particular, if a $k$-step bound could be proven for every $k$, it would improve the concentration estimate for ${\bf N}$ in the proof of mixing from an inverse logarithmic bound to a small polynomial one, which would translate into a similar bound for the mixing property which may be very useful for the study of the critical regime. Note that this would not improve the log correction in our result since the intersection property also requires the $\ell_k$ to grow fast.

\begin{proof}
Fix $\beta>0$ and drop it from the notation. We work with finite $\Lambda$ and then take the limit as $\Lambda$ tends to $\Z^d$. In the whole proof, $\longleftrightarrow$ denotes the connection in $\n_1+\n_2$, and $\not\longleftrightarrow$ denotes the absence of connection.
As mentioned above, \eqref{eq:prop2b} follows readily from the switching lemma.
To prove \eqref{eq:prop3b}, use the switching lemma to find
\begin{align}{\bf P}^{0x,\emptyset}_\Lambda[u,v\longleftrightarrow0]&=\frac{\langle\sigma_0\sigma_u\rangle_\Lambda\langle\sigma_u\sigma_x\rangle_\Lambda}{\langle\sigma_0\sigma_x\rangle_\Lambda}{\bf P}^{0u,ux}_\Lambda[v\longleftrightarrow u].\end{align}
Then, our goal is to show that
\begin{align}\label{eq:imp}{\bf P}^{0u,ux}_\Lambda[v\longleftrightarrow u]&\le{\bf P}^{0u,\emptyset}_\Lambda[v\longleftrightarrow u]+{\bf P}^{\emptyset,ux}_\Lambda[v\longleftrightarrow u]-{\bf P}^{\emptyset,\emptyset}_\Lambda[v\longleftrightarrow u]
\end{align}
which implies \eqref{eq:prop3b} readily using \eqref{eq:prop2b}. 
In order to show \eqref{eq:imp}, set ${\bf C}={\bf C}_{\n_1+\n_2}(v)$ and apply Lemma~\ref{lem:a} to $F(\n_1,\n_2):=\mathbb I[u\not\longleftrightarrow v]$ to obtain 
\begin{align}\label{eq:hj}
{\bf P}^{0u,ux}_\Lambda[u\not\longleftrightarrow v]&={\bf E}^{0u,\emptyset}_\Lambda\Big[\mathbb  I[u\not\longleftrightarrow v]\frac{\langle\sigma_0\sigma_y\rangle_{\Lambda\setminus{\bf C}}}{\langle\sigma_0\sigma_y\rangle_\Lambda}\Big].
\end{align}
Next, apply Lemma~\ref{lem:a}   to 
\be
F(\n_1,\n_2):=\mathbb I[u\not\longleftrightarrow v]\Big(1-\frac{\langle\sigma_0\sigma_x\rangle_{\Lambda\setminus {\bf C}}}{\langle \sigma_0\sigma_x\rangle_\Lambda}\Big)\ge0
\ee (the inequality is due to Griffiths' inequality \cite{Gri67}) to obtain \eqref{eq:imp} thanks to the following inequalities
\begin{align}
{\bf P}^{0x,\emptyset}_\Lambda[u\not\longleftrightarrow v]-{\bf P}^{0x,0y}_\Lambda[u\not\longleftrightarrow v]&={\bf E}^{0x,\emptyset}_\Lambda[F(\n_1,\n_2)]={\bf E}^{\emptyset,\emptyset}_\Lambda\Big[F(\n_1,\n_2)\frac{\langle\sigma_0\sigma_x\rangle_{\Lambda\setminus{\bf C}}}{\langle\sigma_0\sigma_x\rangle_\Lambda}\Big]\nonumber\\
&\le {\bf E}^{\emptyset,\emptyset}_\Lambda[F(\n_1,\n_2)]={\bf P}^{\emptyset,\emptyset}_\Lambda[u\not\longleftrightarrow v]-{\bf P}^{\emptyset,0y}_\Lambda[u\not\longleftrightarrow v].
\end{align}
\end{proof}

\begin{remark}
Griffiths' inequality \cite{Gri67} plugged in \eqref{eq:hj} gives 
\be
\label{eq:imp2}{\bf P}^{0u,ux}[v\longleftrightarrow u]\ge {\bf P}^{0u,\emptyset}[v\longleftrightarrow u].
\ee
\end{remark}
\begin{remark}
The inequalities \eqref{eq:imp} and \eqref{eq:imp2} can be extended to every set $S\subset\bbZ^d$ and every two vertices $x,y\in\bbZ^d$:
\be\label{eq:ag}{\bf P}^{0x,\emptyset}_\beta[0\stackrel{\n_1+\n_2}{\longleftrightarrow}S]\le {\bf P}^{0x,0y}_\beta[0\stackrel{\n_1+\n_2}{\longleftrightarrow}S]\le {\bf P}^{0x,\emptyset}_\beta[0\stackrel{\n_1+\n_2}{\longleftrightarrow}S]+{\bf P}^{\emptyset,0y}_\beta[0\stackrel{\n_1+\n_2}{\longleftrightarrow}S]-{\bf P}^{\emptyset,\emptyset}_\beta[0\stackrel{\n_1+\n_2}{\longleftrightarrow}S].\ee
\end{remark}

\subsection{The spectral representation}

In Section \ref{sec:3.2} we make use of a spectral representation of the correlation function  $S(x):=\langle\tau_0 \tau_x\rangle$.
Though the statement is  well known,  cf.~\cite{GliJaf73} and references therein, for completeness of the presentation following is its derivation.  
For the present purpose it is convenient to present the system's Hamiltonian as the semi-definite function 
\be 
H =  - \sum_{x,y} J_{x,y} (\tau_x - \tau_y)^2 \,. 
\ee 
The difference from the expressions used elsewhere in the paper are the diagonal quadratic terms $\tau_x^2$ whose effect on the Gibbs measure can be incorporated by an adjustment in the spins'~a-priori distribution (which is doable as we assumed in the first place that the site distribution was satisfying \eqref{sub_gauss}).

To avoid burdensome notation, when the domain over which the spins are defined is clear from the context of the discussion we shall use the symbol   
$\tau= \{\tau_x\}_x $ to denote the entire collection of spins in that  region,  and by $\rho_0(d\tau)$ the corresponding product measure.  

\begin{proposition}[Spectral Representation]\label{prop:spec_rep}  Let $\rho_0$ be a single variable distribution for which  the Gibbs states on $\Z^d$ with the n.n.f.~Hamiltonian satisfies 
\be 
\langle |\tau_0|^2 \rangle_\beta  < \infty\ , \quad \forall  \beta \geq 0 \,.
\ee 
Then, for every $0 < \beta < \infty$  and  every  square-summable   $v\in \ell^2(\Z^{d-1})$, there exists a positive measure $\mu_{v,\beta}$ with a total mass satisfying 
\be \mu_{v,\beta}([0,\infty)) \ \leq \    \|v\|_2^2 \, \, \langle |\tau_0|^2 \rangle_\beta 
\ee  
such that for every $n\in \Z$,
 \begin{equation} \label{eq:spec_rep_app}
\sum_{x_\perp,y_\perp\in \Z^{d-1}}v_{x_\perp}\overline{v_{y_\perp}}S_\beta((n,x_\perp-y_\perp))=   
 \int_{0}^\infty e^{-a|n|} d\mu_{v,\beta} ( a).
 \end{equation}
For $\beta < \beta_c$ the measure' support is limited to $a\geq 1/\xi(\beta)$ (here $\xi(\beta)$ is the correlation length of the system). \end{proposition}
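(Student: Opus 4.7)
The proof follows the classical route via reflection positivity and the transfer matrix in the $x_1$ direction. First I would work in a finite volume $\Lambda_L:=\{-L,\dots,L\}\times \Lambda^{(d-1)}$ with periodic boundary conditions in the $x_1$-direction, let $\mathcal{H}_L := L^2(\rho_0^{\otimes \Lambda^{(d-1)}})$ be the single-layer Hilbert space, and factor $e^{-H_L}$ across hyperplanes to obtain a transfer matrix $T_L$ on $\mathcal{H}_L$ characterized by
\[
\bigl\langle F(\tau_{\{0\}\times\cdot})\,\overline{G(\tau_{\{n\}\times\cdot})}\bigr\rangle_{L,\beta}
\;=\;\frac{\mathrm{tr}\bigl(T_L^{\,n}\,\bar G\, T_L^{\,2L+1-n}\,F\bigr)}{\mathrm{tr}(T_L^{\,2L+1})}
\]
for single-layer functions $F,G$. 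Reflection positivity of the n.n.f.~interaction with respect to a hyperplane halfway between two adjacent layers is precisely the statement that $T_L$ admits a factorization $T_L=S_L^{\ast} S_L$, so that $T_L\geq 0$. Boundedness of $T_L$ is elementary in the Ising case and, for GS-class spin distributions, follows from the pointwise boundedness of the single-layer Boltzmann weights together with the sub-Gaussian condition~\eqref{sub_gauss}; one can normalize so that $\|T_L\|\leq 1$.

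Next I would apply the spectral theorem to write $T_L=\int_{0}^{1}\lambda\,dE_L(\lambda)$. For compactly supported $v\in\ell^2(\Z^{d-1})$, set $F_v:=\sum_{x_\perp}v_{x_\perp}\tau_{(0,x_\perp)}\in\mathcal{H}_L$. Plugging the spectral resolution into the trace formula and changing variables $\lambda=e^{-a}$ yields, for every $0\leq n\leq L$,
\[
\sum_{x_\perp,y_\perp}v_{x_\perp}\overline{v_{y_\perp}}\,S_{L,\beta}\!\bigl((n,y_\perp-x_\perp)\bigr)
\;=\;\int_{0}^{\infty}e^{-a|n|}\,d\mu_{v,L,\beta}(a),
\]
where $\mu_{v,L,\beta}$ is a positive measure whose total mass equals the left-hand side at $n=0$. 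This mass is uniformly bounded in $L$ by Cauchy--Schwarz together with the assumption $\langle \tau_0^{2}\rangle_\beta<\infty$, so Helly's selection theorem produces a weakly convergent subsequence $\mu_{v,L,\beta}\to\mu_{v,\beta}$ on the compactified half-line $[0,\infty]$; tightness near $\infty$ costs nothing since $e^{-a|n|}\to 0$ there. Combined with the convergence of finite-volume correlations to their infinite-volume counterparts (Proposition~\ref{prop:conv}, extended to $\beta=\beta_c$ via~\cite{AizDumSid15}), passing to the limit for each fixed $n$ produces the claimed identity on $\Z^{d}$. The mass bound $\|v\|_{2}^{2}\langle\tau_0^{2}\rangle_\beta$ is the $n=0$ identity combined with the operator-norm interpretation of the single-layer susceptibility matrix.

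The support statement for $\beta<\beta_c$ is then a soft consequence: the correlation length $\xi(\beta)$ is finite, so for any compactly supported $v$ the left-hand side decays as $e^{-|n|/\xi(\beta)}$ (this follows from the MMS monotonicity of Section~\ref{sec:3.1} reducing matters to the axial two-point function, together with the definition of $\xi$). If $\mu_{v,\beta}$ charged $[0,1/\xi(\beta)-\varepsilon]$ with positive mass, the integral would be at least $c\,e^{-(1/\xi(\beta)-\varepsilon)|n|}$ for large $|n|$, contradicting the decay of the left-hand side. The expected main obstacle is verifying positive self-adjointness and boundedness of the transfer matrix in the non-compact GS-class setting: one has to check that $T_L$ is well-defined as a bounded operator on $\mathcal{H}_L$ and that the RP factorization $T_L=S_L^{\ast}S_L$ continues to hold when the spins are unbounded. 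Once that hurdle is cleared, the rest reduces to a textbook application of the spectral theorem and a routine $L\to\infty$ limiting argument.
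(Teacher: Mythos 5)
Your route --- transfer matrix in the axial direction, reflection positivity to get $T\ge 0$, spectral theorem, then a limiting argument --- is exactly the paper's. There is, however, one step that fails as written: the displayed finite-volume identity. With periodic boundary conditions of finite period $2L+1$ in the $x_1$-direction, the trace formula gives
\begin{equation*}
\frac{\mathrm{tr}\bigl(T^{\,2L+1-n}\,\bar F\, T^{\,n} F\bigr)}{\mathrm{tr}\bigl(T^{\,2L+1}\bigr)}
=\frac{\sum_{\lambda_1,\lambda_2}\lambda_1^{\,2L+1-n}\lambda_2^{\,n}\,\bigl|\langle e_{\lambda_1}| F |e_{\lambda_2}\rangle\bigr|^2}{\sum_{\lambda}\lambda^{\,2L+1}},
\end{equation*}
a positive combination of terms $(\lambda_2/\lambda_1)^n$ in which the ratio may exceed $1$; after symmetrizing under $n\mapsto 2L+1-n$ one gets hyperbolic cosines centered at $n=L+\tfrac12$, not one-sided exponentials $e^{-a|n|}$ with $a\ge 0$. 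So at finite axial period there is no positive measure on $[0,\infty)$ representing the left-hand side, and the Helly argument has nothing to act on.

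The repair is precisely the paper's order of limits: at fixed transverse size $\ell$, first send the axial period $m\to\infty$. To justify this one needs the three properties the paper records and you omit: $T_\ell$ is compact (finite Hilbert--Schmidt norm of its kernel), positive (reflection positivity across mid-edge hyperplanes), and has a nondegenerate top eigenvalue with positive eigenfunction (Krein--Rutman, from pointwise positivity of the kernel). Then in the $m\to\infty$ limit only $\lambda_{\max}$ survives on one side of the trace, yielding the genuine one-sided Laplace transform $\sum_{\lambda}(\lambda/\lambda_{\max})^{|n|}\,|\langle e_{\lambda_{\max}}|\tau[v]|e_{\lambda}\rangle|^2$, i.e.\ a positive measure supported on $a=-\log(\lambda/\lambda_{\max})\ge 0$. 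After that, your $\ell\to\infty$ step is correct and is what the paper does (the moment convergence criterion on $[0,1]$ in the variable $e^{-a}$ is your Helly argument), and your derivation of the support restriction at $\beta<\beta_c$ from the exponential decay of correlations is a legitimate, slightly more explicit version of the paper's parenthetical remark.
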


In particular, with  $v=\delta_\perp$ the Kronecker function (at the origin) on $\mathbb Z^{d-1}$,  this yields the following spectral representation for the correlation function along a principal axis 
\be \label{eq:S_spec}
 S_{\beta}((n,0_\perp))  =   
 \int_{1/\xi(\beta)}^\infty e^{-a|n|} d\mu_{\delta_\perp,\beta} (a),
 \ee
with a measure whose total mass is   $\mu_{\delta_\perp,\beta}([0,\infty))= \langle |\tau_0|^2 \rangle_\beta$. 

\begin{proof}  Throughout the proof $\beta$ is held constant, and  to a large extent will be omitted from the notation. 
It is convenient to first derive the corresponding statements for finite volume versions of the model,  in tubular domains with periodic boundary conditions $\bbT(m,\ell):=(\bbZ/m\bbZ)\times (\bbZ/\ell\bbZ)^{d-1}$ (with the notational convention
$\bbZ/\infty\bbZ=\bbZ$).   The corresponding finite volume correlation function is naturally denoted 
$S_{m,\ell;\beta}(x):=\langle\tau_0 \tau_x\rangle_{\bbT(m,\ell)}$.

Let $\calV_\ell$ be the $\bbC$-vector space   of 
$L^2(\otimes_{x\in (\bbZ/\ell\bbZ)^{d-1}} \rho(d\tau_x))$ 
 of functions supported on the transversal hyperplane $(\bbZ/\ell\bbZ)^{d-1}$, over the product measure $\otimes \rho(d\tau_x)$. 
 On $\calV_\ell$, let $T_\ell$  be the  self adjoint operator whose  kernel is given by
\be 
T_\ell(\tau,{\tau'}):=\exp\Big\{-\frac{\beta J}{4} \sum_{\substack{\{x,y\}\subset(\bbZ/\ell\bbZ)^{d-1}\\ \{x,y\}\text{ edge}}} [ (\tau_x-\tau_y)^2 +(\tau'_x-\tau'_y)^2] \, -\, \frac{\beta J}{2}\sum_{x\in(\bbZ/\ell\bbZ)^{d-1}}(\tau_x -\tau'_x)^2\Big\} . 
\ee
This operator serves as the ``transfer matrix'' in terms of which the partition function can be presented as a trace: 
\be 
Z_{m,\ell}:= {\rm Tr}(T_\ell^{m})   \,.
\ee 
To express the correlations functions, let us consider the multiplication operators 
\be 
\tau[v]     :=   \sum_{x\in (\Z/\ell\bbZ)^{d-1}}v(x) \tau_x  \, 
\ee
 associated with square summable functions $ v:  (\Z/\ell\bbZ)^{d-1} \longmapsto \bbC$.

In this notation, the correlation function   
of spins 
 at sites (which we write as $(n,x_\perp)\in \bbT_{m,\ell}$) satisfy
\begin{align} \label{eq:212}
\sum_{x_\perp,y_\perp\in (\bbZ/\ell\bbZ)^{d-1}}\overline{v_{y_\perp}}v_{x_\perp} S_{m,\ell;\beta}((n,y_\perp-x_\perp))
&=\frac{{\rm Tr}(T_\ell^{m-n}\, \overline \tau[v]  \, T_\ell^n\, \tau[v])}{{\rm Tr}(T_\ell^{m})}.
\end{align}
We next claim  that for any $\ell<\infty$ the operator $T_\ell $ is: \begin{itemize}[noitemsep,nolistsep]
\item[(i)] self adjoint and compact  (and thus with spectrum which is discrete, except for possible accumulation at $0$);
\item[(ii)] positive definite;
\item[(iii)] non-degenerate at the top of its spectrum, with a strictly positive eigenfunction. 
 \end{itemize} 
Item (i) is implied by the kernel's symmetry and the finiteness of its Hilbert-Schmidt norm:
\be 
{\rm Tr} \,  T_\ell ^*T_\ell \, =\,  \int  \int \rho(d\tau) \rho(d\tau')\,  |T_\ell(\tau,{\tau'})|^2 \, \leq  \, 1  .
\ee   
Positivity (ii) can be deduced by the criteria of \cite{FroSimSpe76} (see also \cite{Bis09}) applied to the reflection symmetry with respect to the hyperplanes passing through mid-edges.  The last assertion (iii) is implied by (i)  combined with the kernel's pointwise positivity  (cf.~Krein-Rutman theorem  \cite{KrRu}).   
 
Rewritten  in terms of the spectral representation of $T_\ell$, \eqref{eq:212} takes the form: 
\be \label{eq:N_inf} 
\sum_{x_\perp,y_\perp\in \Z^{d-1}}v_{x_\perp}\overline{v_{y_\perp}}S_{m,\ell;\beta}((n,x_\perp-y_\perp))
=\frac{\sum_{\lambda_1, \lambda_2\in {\rm Spec}(T_\ell)} {\lambda_1}^{m-n} {\lambda_2}^n\, \bra{e_{\lambda_1}} \overline \tau[v]   \ket{e_{\lambda_2}}\bra{e_{\lambda_2}}   \tau[v]   \ket{e_{\lambda_1}}}{   
\sum_{\lambda\in {\rm Spec}(T_\ell)}
 \lambda^m}, 
\ee 
where $\{\ket{e_\lambda}\}$ is an orthonormal basis of eigenvectors  of $T_\ell $.  
By the structure of the spectrum described above, in the limit $m\to \infty$  only the terms with $\lambda_1= \lambda_{\rm max}$ and  $\lambda= \lambda_{\rm max}$ are of relevance,  and one is left with the single sum:
\begin{eqnarray} \label{101}
\sum_{x_\perp,y_\perp\in \Z^{d-1}}v_{x_\perp}\overline{v_{y_\perp}}S_{\infty,\ell;\beta}((n,x_\perp-y_\perp))
&=&\sum_{\lambda \in {\rm Spec}(P)} (\tfrac{\lambda}{\lambda_{\rm max}})^n  \bra{e_{\lambda_{\rm max}}} \overline T  \ket{e_{\lambda}}\bra{e_{\lambda}}  T  \ket{e_{\lambda_{\rm max}}}  \notag \\[2ex]    
&=: & \int_{0}^\infty e^{-an} d\mu_{v,\beta,\ell} (a),
\end{eqnarray}   
with $ e^{-a} = \lambda/\lambda_{\rm max} $  
and $ \mu_{v,\beta,\ell}$ the above discrete spectral measure (whose support starts at $\xi(\ell,\beta)$, the inverse rate of decay in $x$ of $S_{\infty,\ell}(x)$).   
 
Next we consider the limit $\ell \to \infty$ at fixed $x_\perp-y_\perp$.  It is  known, through the FKG inequality, that the correlation function converges pointwise, i.e.  for any  $(n,x_\perp-y_\perp)$ and 
$\beta$,
\be 
S_{\beta}((n,x_\perp-y_\perp)) \  = \  \lim_{\ell \to \infty} S_{\infty,\ell; \beta}((n,x_\perp-y_\perp)) 
\ee  
Through \eqref{101} this translates into  convergence of the moments of $e^{-a}$ under the measures $\mu_{v,\beta,\ell} $.   The  moment criterion for the convergence of positive measures over  bounded intervals (here  $[0,1]$)  allows to conclude existence of the (weak) limit 
$\lim_{\ell\to \infty} \mu_{v,\beta,\ell} = \mu_{v,\beta}$  
(which need not be a point measure) with which  
the claimed relation  \eqref{eq:spec_rep_app} holds. \end{proof} 

One may observe that the above result applies to all $\beta$.  It may be added that for $\beta <\beta_c$  the  spectral measure's support is bounded away from $0$.  In contrast, for $\beta >\beta_c$ the measures associated with $v$ of non-zero sum  include a point mass there, i.e.~$\mu_{v,\beta} (\{0\}) \neq 0$, which is the spectral representation of the long range order.

\subsection{Intersection properties for random current representation of models in the GS class}

We start with the version of the switching lemma that we will use. Below, $\delta_{uv}$ denotes the current equal to 1 on the edge $uv$ and 0 otherwise.

\begin{lemma}[Coarse switching]\label{prop:coarse switching}
Let $S,T$ be two disjoint sets of vertices. For every event $E$ depending on the sum of two currents and every $x\ne y$, 
\begin{align}\label{eq:ap}{\bf P}^{xy,\emptyset}_\beta[x\stackrel{\n_1+\n_2}{\longleftrightarrow}S,\n_1+\n_2\in E]&\le \beta\sum_{a\in S,b\notin S} J_{a,b}\frac{\langle\sigma_x\sigma_a\rangle\langle\sigma_b\sigma_y\rangle}{\langle\sigma_x\sigma_y\rangle}{\bf P}^{xa,by}_\beta[\n_1+\n_2+\delta_{ab}\in E],\\
\label{eq:app}{\bf P}^{\emptyset,\emptyset}_\beta[S\stackrel{\n_1+\n_2}{\longleftrightarrow}T,\n_1+\n_2\in E]&\le \beta^2\sum_{\substack{a\in S,b\notin S\\ s\in T,t\notin T}}J_{a,b}J_{s,t}\langle\sigma_a\sigma_s\rangle\langle\sigma_t\sigma_b\rangle{\bf P}^{as,tb}_\beta[\n_1+\n_2+\delta_{ab}+\delta_{st}\in E].
\end{align}
\end{lemma}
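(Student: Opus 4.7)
Both bounds should follow from the same two-step template: an edge-level decomposition of the connection event, followed by a direct application of the switching lemma. I will describe the first inequality in detail; the second is obtained by iterating the argument on the second boundary $\partial T$.

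For the first inequality, I would expand the probability as
\[
\mathbf P^{xy,\emptyset}_\beta[x \leftrightarrow S,\, \mathbf n \in E]
\;=\; \frac{4^{|\Lambda|}}{\langle \sigma_x \sigma_y\rangle_\beta Z^2}\,
\sum_{\substack{\partial \n_1 = \{x,y\}\\ \partial \n_2 = \emptyset}}
\mathbb{1}[x \leftrightarrow S \text{ in } \mathbf n,\, \mathbf n \in E]\, w(\n_1) w(\n_2).
\]
Since $\partial \mathbf n = \{x,y\}$ forces $x\leftrightarrow y$ in $\mathbf n$, the event $\{x\leftrightarrow S\}$ (assuming $x\notin S$, which is the only case of interest) implies that on any path from $x$ to $S$ within the common cluster of $x$ and $y$ there is at least one edge $\{a,b\}$ with $a\in S$, $b\notin S$, and $\mathbf n(a,b)\geq 1$. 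Choosing such an edge canonically (for instance, the lex-smallest crossing edge used by a lex-smallest $x$-to-$S$ path in $\mathbf n$), I obtain a pointwise bound
\[
\mathbb{1}[x \leftrightarrow S \text{ in } \mathbf n]
\;\leq\; \sum_{a \in S,\, b \notin S}\bigl[\n_1(a,b)+\n_2(a,b)\bigr]\, \mathbb{1}[\mathcal{E}_{a,b}(\mathbf n)],
\]
where $\mathcal{E}_{a,b}$ is an event guaranteeing that after decrementing $\mathbf n(a,b)$ by one, $y$ is still connected to $b$ (which is the content of the $\mathcal F_{\{y,b\}}$ constraint appearing in the switching lemma below).

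Next, I would treat the $\n_1$ contribution for fixed $(a,b)$ by the standard identity $\n_1(a,b)\,w(\n_1) = \beta J_{a,b}\, w(\n_1 - \delta_{ab})$. Setting $\mathbf m_1 = \n_1 - \delta_{ab}$, the source of $\mathbf m_1$ becomes $\{x,y\}\Delta\{a,b\} = \{x,y,a,b\}$, while $\partial \n_2 = \emptyset$ is unchanged. Applying Lemma~\ref{lem:switching} (the switching lemma) to the pair of source sets $(A,B) = (\{x,a\},\{y,b\})$—which has symmetric difference $A \Delta B = \{x,y,a,b\}$—one rewrites the sum over $(\mathbf m_1,\n_2)$ as a sum over pairs $(\mathbf m_1',\mathbf m_2')$ with $\partial \mathbf m_1' = \{x,a\}$ and $\partial \mathbf m_2' = \{y,b\}$, precisely at the cost of the indicator $\mathbb{1}[\mathbf n - \delta_{ab}\in \mathcal F_{\{y,b\}}]$ which is exactly the event $\mathcal{E}_{a,b}$ needed above. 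Renormalising by $\langle \sigma_x\sigma_a\rangle_\beta \langle \sigma_b\sigma_y\rangle_\beta$ turns the resulting sum into $\mathbf P^{xa,by}_\beta[\,\mathbf m_1' + \mathbf m_2' + \delta_{ab}\in E\,]$, producing exactly the ratio $\langle \sigma_x \sigma_a\rangle\langle \sigma_b\sigma_y\rangle/\langle \sigma_x\sigma_y\rangle$ in front. The $\n_2$ contribution is handled symmetrically (applying the switching lemma first with $\m_1,\m_2$ interchanged), and both contributions give the same bound.

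The main obstacle is justifying the pointwise inequality so that it matches the $\mathcal F_{\{y,b\}}$ indicator produced by the switching lemma: using the crude bound $\mathbb{1}[x\leftrightarrow S]\leq \sum_{a,b}\mathbf n(a,b)$ would overcount by the number of boundary crossings in the cluster and, more fatally, would not align with the direction in which the switching lemma must be applied. The canonical-edge decomposition sketched above is what makes the two sides of the switching lemma line up exactly.

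For the second inequality, the same template is applied in two stages: the first boundary crossing edge $(a,b)$ between $S$ and $S^c$ is extracted exactly as above, yielding a source set $\{a\}$ on one side; then a second canonical boundary crossing edge $(s,t)$ is extracted between $T$ and $T^c$ on the other side, producing the source set $\{t\}$. Two applications of $\m(e)w(\m) = \beta J_e w(\m-\delta_e)$ give the two factors $\beta^2 J_{a,b}J_{s,t}$, and the switching lemma then delivers the measure $\mathbf P^{as,tb}_\beta$ and the two-point-function factor $\langle \sigma_a\sigma_s\rangle\langle \sigma_t\sigma_b\rangle$.
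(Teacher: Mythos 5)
Your two-step template (extract a crossing edge via $\n(a,b)\,w(\n)=\beta J_{a,b}\,w(\n-\delta_{ab})$, then apply the switching lemma) is the right skeleton, but there are two genuine gaps. First, the canonical edge you propose does not satisfy the condition you need. Taking the crossing edge of a (lex-smallest) $x$-to-$S$ path means taking the \emph{first entry} into $S$, and for that edge the constraint $\calF_{\{y,b\}}$ demanded by the switching lemma can fail: if $\n=\n_1+\n_2$ is a simple path $x\to b\to a\to b'\to y$ with $a\in S$ and all other vertices outside $S$, then removing one unit of $\{a,b\}$ leaves the components $\{x,b\}$ and $\{a,b',y\}$, so $y$ is \emph{not} connected to $b$ and that term of the switched sum vanishes. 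What is true --- and what you must actually prove --- is that \emph{some} crossing edge works: since $x,y$ are the only odd-degree vertices of the cluster, it carries an Eulerian trail from $x$ to $y$, and the edge at which this trail \emph{last exits} $S$ has the property that the terminal segment from $b$ to $y$ never revisits $S$, hence never traverses $\{a,b\}$, so $y\leftrightarrow b$ survives the decrement. You correctly identify the pointwise inequality as "the main obstacle", but the resolution you offer for it is false as stated; moreover for \eqref{eq:app} there are no sources to anchor such a trail, so the sourceless case needs its own (Eulerian-circuit) argument, which you do not supply.

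Second, the constant. Your pointwise bound carries the factor $\n_1(a,b)+\n_2(a,b)$, and each of the two resulting terms, after the change of variables and the switching lemma, yields the full quantity $\beta J_{a,b}\langle\sigma_x\sigma_a\rangle\langle\sigma_b\sigma_y\rangle\langle\sigma_x\sigma_y\rangle^{-1}{\bf P}^{xa,by}_\beta[\cdots]$; you therefore prove \eqref{eq:ap} with an extra factor $2$ and \eqref{eq:app} with an extra factor $4$. You cannot simply keep only the $\n_1$-term, because "the good edge lies in $\n_1$" is not a function of $\n_1+\n_2$ and is thus inadmissible in the switching lemma. This is precisely the difficulty the paper's proof is organized around: it first collapses $(\n_1,\n_2)$ into the single current $\m=\n_1+\n_2$, whose marginal weight is $w_{2\beta}(\m)2^{k(\m)}$ (counting even subgraphs), and extracts the edge there via a three-case analysis (multiplicity $\ge 2$; two crossing edges on a loop; a bridge) in which the jump of $2^{k(\m)}$ upon removing a bridge absorbs the factor $2$ in $w_{2\beta}(\m)=\tfrac{2\beta J_{a,b}}{\m_{a,b}}w_{2\beta}(\m-\delta_{ab})$. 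The lost factor of $2$ is harmless for the applications, but as written your argument does not give the stated inequality, and recovering the constant essentially forces the even-subgraph bookkeeping.
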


\begin{proof}
We start with the first inequality. Fix $\Lambda$ finite. By multiplying by the quantity $\langle\sigma_x\sigma_y\rangle_{\beta}4^{-|\Lambda|} Z(\Lambda,J,\beta)^2$, and then making the change of variable $\m=\n_1+\n_2$, $\n_2=\n$, we find that 
\begin{align}(1):&=\sum_{\substack{\partial\n_1=\{x,y\}\\ \partial\n_2=\emptyset}}w_\beta(\n_1)w_\beta(\n_2)\mathbb I[\n_1+\n_2\in E] \mathbb I[x\stackrel{\n_1+\n_2}{\longleftrightarrow}S]\nonumber\\
&=\sum_{\partial\m=\{x,y\}}w_\beta(\m) \mathbb I[\m\in E] \mathbb I[x\stackrel{\m}{\longleftrightarrow}S]\sum_{\substack{\n\le\m\\\partial\n=\emptyset}}\binom{\m}{\n}\nonumber\\
&=2^{-|\Lambda|}\sum_{\partial\m=\{x,y\}}w_{2\beta}(\m)\mathbb I[\m\in E] \mathbb I[x\stackrel{\m}{\longleftrightarrow}S]2^{k(\m)},\end{align}
where in the last line we used that the number of even subgraphs of the multi-graph $\calM$ (see for instance definition in \cite{AizDumTasWar18}) associated with $\m$ is given by $2^{|\m|+k(\m)-|\Lambda|}$, where $|\m|$ means the total sum of $\m$, and $k(\m)$ is the number of connected components. Now, observe that \be
w_{2\beta}(\m)\mathbb I[x\stackrel{\m}{\longleftrightarrow}S]2^{k(\m)}\le \sum_{a\notin S,b\in S}\beta J_{a,b}w_{2\beta}(\m-\delta_{ab})\mathbb I[x\stackrel{\m-\delta_{ab}}{\longleftrightarrow}b]\mathbb I[\m_{ab}\ge1]2^{k(\m-\delta_{ab})}.
\ee
Indeed, we are necessarily in one of the following cases: consider the edges $ab$ with $a\in S$, $b\notin S$, and $a$ connected to $y$ in $\m-\delta_{ab}$. Assume that 
\begin{itemize}[noitemsep]
\item there is an edge $ab$ as above with $\m_{ab}\ge 2$, in such case $k(\m-\delta_{ab})=k(\m)$ and $w_{2\beta}(\m)=\frac{2\beta J_{ab}}{\m_{ab}}w_{2\beta}(\m-\delta_{ab})\le \beta J_{a,b} w_{2\beta}(\m-\delta_{ab})$;
\item there is a loop in the cluster of $x$ in $\m$ which is intersecting the edge-boundary of $S$, in such case there are two edges $ab$ satisfying the property above, with $k(\m-\delta_{ab})=k(\m)$ and $w_{2\beta}(\m)\le 2\beta J_{a,b} w_{2\beta}(\m-\delta_{ab})$;
\item otherwise, there is only one edge $ab$ with $\m_{ab}=1$, in such case $k(\m-\delta_{ab})=k(\m)+1$ and $w_{2\beta}(\m)\le 2\beta J_{a,b}w_{2\beta}(\m-\delta_{ab})$.
\end{itemize}
Injecting the last displayed inequality in the first one, and then making the change of variable $\m'=\m-\delta_{uv}$, we find that 
\begin{align}2^{|\Lambda|}\times(1)&\le\sum_{a\notin S,b\in S}\beta J_{a,b}\sum_{\partial\m'=\{x,y,a,b\}}w_{2\beta}(\m')2^{k(\m')}\mathbb I[x\stackrel{\m'}\longleftrightarrow a]\mathbb  I[\m'+\delta_{ab}\in E]\nonumber\\
&=\sum_{a\notin S,b\in S}\beta J_{a,b}\sum_{\substack{\partial\n_1=\{x,a,b,y\}\\\partial\n_2=\emptyset}}w_{\beta}(\n_1)w_\beta(\n_2)\mathbb I[x\stackrel{\n_1+\n_2}\longleftrightarrow a]\mathbb  I[\n_1+\n_2+\delta_{ab}\in E]\nonumber\\
&=\sum_{a\notin S,b\in S}\beta J_{a,b}\sum_{\substack{\partial\n_1=\{b,y\}\\\partial\n_2=\{x,a\}}}w_{\beta}(\n_1)w_\beta(\n_2)\mathbb  I[\n_1+\n_2+\delta_{ab}\in E],
\end{align}
where in the last line we used the switching lemma. Dividing 
this relation by the factor
$\langle\sigma_x\sigma_y\rangle_{\Lambda,\beta} 4^{-|\Lambda|}Z(\Lambda,J,\beta)^2$ and letting $\Lambda$ tend to the full lattice implies the first claim.

The second claim follows from the same reasoning using pairs of edges $(ab,st)$ with $a\in S$, $b\notin S$, $s\in T$ and $t\notin T$ such that $a$ is connected to $s$ in $\n_1+\n_2$.
\end{proof}

We deduce the following pair of diagrammatic bounds on the connectivity probabilities. 

\begin{proposition}\label{prop:intersection phi4 upper}
For every distinct $x,y,u,v\in\Z^d$ 
\begin{align}
 {\bf P}^{\emptyset,\emptyset}_{\rho,\beta}[\calB_x\stackrel{\n_1+\n_2}{\longleftrightarrow} \calB_y]&\le\sum_{x',y'\in\Z^d}\langle\tau_x\tau_y\rangle\,\beta J_{y,y'}\,\langle\tau_{y'}\tau_{x'}\rangle\,\beta J_{x',x}\,
\label{eq:disconnect source phi4 sourceless},\\
{\bf P}^{xy,\emptyset}_{\rho,\beta}[\partial\n_1\stackrel{\n_1+\n_2}{\longleftrightarrow} \calB_u]&\le\sum_{u'\in\Z^d}\frac{\langle\tau_x\tau_u\rangle\,\beta J_{u,u'}\,\langle\tau_{u'}\tau_x\rangle}{\langle\tau_x\tau_y\rangle}\label{eq:disconnect source phi4}.
\end{align} 
\end{proposition}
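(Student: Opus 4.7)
}
Both estimates will follow from applying the coarse switching principle (Lemma~\ref{prop:coarse switching}) at the level of the Ising \emph{constituent} spins $\sigma_{x,n}$ on $\Z^d\times\{1,\dots,N\}$, and then collapsing the sums over the internal block indices into $\tau$-correlations. Since Definition~\ref{def:rho}(2) presents a general GS measure as a weak limit of measures of type~(1), any uniform diagrammatic bound at finite $N$ passes to the limit by continuity of both sides, so it is enough to treat the case where $\rho$ comes directly from blocks of Ising spins. In that setting $\tau_x=\alpha_N\sum_n Q_n\sigma_{x,n}$, the inter-block couplings factor as $\tilde J_{(x,n),(x',n')}=\alpha_N^2 Q_n Q_{n'} J_{x,x'}$ for $x\neq x'$, and
\[
\langle\tau_x\tau_y\rangle_{\rho,\beta}\;=\;\alpha_N^2\sum_{n,m}Q_n Q_m\,\langle\sigma_{x,n}\sigma_{y,m}\rangle_\beta,
\]
which is the identity we will exploit repeatedly.

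\paragraph{Sourceless bound~\eqref{eq:disconnect source phi4 sourceless}.}
Apply the second inequality of Lemma~\ref{prop:coarse switching} with $S=\calB_x$, $T=\calB_y$, and $E$ the full event. Any edge $(a,b)$ with $a\in\calB_x$, $b\notin\calB_x$ must be an \emph{inter-block} edge, so it is of the form $((x,n),(x',n'))$ with $x'\sim x$, $x'\neq x$ in $\Z^d$; intra-block (mean-field) edges stay inside $\calB_x$ and never cross its boundary. The same applies to the boundary of $\calB_y$. Bounding the probability on the right of the coarse switching trivially by $1$ yields
\[
{\bf P}^{\emptyset,\emptyset}_{\rho,\beta}[\calB_x\leftrightarrow\calB_y]\,\le\,\sum_{x'\sim x,\,y'\sim y}\beta^2 J_{x,x'}J_{y,y'}\,\alpha_N^4\!\!\sum_{n,n',m,m'}\!Q_n Q_{n'}Q_m Q_{m'}\langle\sigma_{x,n}\sigma_{y,m}\rangle_\beta\langle\sigma_{y',m'}\sigma_{x',n'}\rangle_\beta,
\]
and the quadruple inner sum collapses to $\langle\tau_x\tau_y\rangle_{\rho,\beta}\langle\tau_{y'}\tau_{x'}\rangle_{\rho,\beta}$ by the identity above, giving~\eqref{eq:disconnect source phi4 sourceless}.

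\paragraph{Bound with random sources~\eqref{eq:disconnect source phi4}.}
Condition on the realization of the two random source indices $(x,i)$ and $(y,j)$, which occur with joint weight $Q_iQ_j\langle\sigma_{x,i}\sigma_{y,j}\rangle_\beta/\langle\tau_x\tau_y\rangle_{\rho,\beta}$ under $\mathbf P^{xy,\emptyset}_{\rho,\beta}$. Since $\partial\n_1=\{(x,i),(y,j)\}$ always lies in a single cluster of $\n_1$, hence of $\n_1+\n_2$, the event $\{\partial\n_1\leftrightarrow\calB_u\}$ coincides with $\{(x,i)\leftrightarrow\calB_u\}$, so the first inequality of Lemma~\ref{prop:coarse switching} applied with $S=\calB_u$ gives
\[
{\bf P}^{(x,i)(y,j),\emptyset}_\beta[(x,i)\leftrightarrow\calB_u]\,\le\,\beta\!\!\sum_{(u,k)\in\calB_u,\,(u',k')\notin\calB_u}\!\!\tilde J_{(u,k),(u',k')}\,\frac{\langle\sigma_{x,i}\sigma_{u,k}\rangle_\beta\,\langle\sigma_{u',k'}\sigma_{y,j}\rangle_\beta}{\langle\sigma_{x,i}\sigma_{y,j}\rangle_\beta}.
\]
Averaging against the source distribution, the factor $\langle\sigma_{x,i}\sigma_{y,j}\rangle_\beta$ cancels, and summing over $i,j,k,k'$ reassembles the four internal sums into $\langle\tau_x\tau_u\rangle_{\rho,\beta}\langle\tau_{u'}\tau_y\rangle_{\rho,\beta}$ via the identity above. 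Restricting the remaining edge sum to $u'\sim u$ yields the stated diagrammatic bound (with $\langle\tau_{u'}\tau_y\rangle_{\rho,\beta}$ in place of $\langle\tau_{u'}\tau_x\rangle_{\rho,\beta}$; the symmetric version is obtained by applying coarse switching to the source at $(y,j)$ instead).

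\paragraph{Main obstacles.}
The derivation is essentially mechanical once the coarse switching is in hand; the only subtleties are the bookkeeping of which Ising edges cross a block boundary (only inter-block edges do, so the mean-field intra-block couplings $K_{n,m}$ never appear in the bounds), and the cancellation of the random-source weight $\langle\sigma_{x,i}\sigma_{y,j}\rangle_\beta$ in Step~2 which is what ultimately produces the $1/\langle\tau_x\tau_y\rangle_{\rho,\beta}$ factor. Once these are checked, the extension from Definition~\ref{def:rho}(1) to the full GS class is a routine weak-limit argument.
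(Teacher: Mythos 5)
Your proof is correct and follows essentially the same route as the paper's: apply the coarse switching bounds \eqref{eq:ap}--\eqref{eq:app} of Lemma~\ref{prop:coarse switching} with the sets taken to be blocks and $E$ the full event, then resum the internal block indices (and the random-source weights) into $\tau$-correlations. Your observation that the argument naturally produces $\langle\tau_{u'}\tau_y\rangle$ rather than the printed $\langle\tau_{u'}\tau_x\rangle$ in \eqref{eq:disconnect source phi4} is also right --- that is the form in which the inequality is actually invoked in the proof of Theorem~\ref{thm:improved GS tree bound}, so the printed $\tau_x$ is a typo.
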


\begin{proof}
For the first one, sum \eqref{eq:app} for $E$ being the full event and vertices in $\calB_x$ and $\calB_y$, and use \eqref{eq:prop2b}. For the second one, do the same with \eqref{eq:ap} instead.
\end{proof}

 \paragraph{Acknowledgments}  The work of M.~Aizenman on this project was supported in part by the NSF grant DMS-1613296, and that of H.~Duminil-Copin by the NCCR SwissMAP, the Swiss NSF and an IDEX Chair from Paris-Saclay. This project has received funding from the European Research Council (ERC) under the European Union's Horizon 2020 research and innovation programme (grant agreement No.~757296).  The joint work was advanced through mutual visits to Princeton and Geneva University,  sponsored by a  Princeton-Unige partnership grant.  We thank S.~Goswami, A.~Raoufi, P.-F.~Rodriguez, and F.~Severo for stimulating discussions, and M.~Oulamara, R.~Panis, P.~Wildemann, and an anonymous referee for careful reading of the paper.

\providecommand{\bysame}{\leavevmode\hbox to3em{\hrulefill}\thinspace}
\providecommand{\MR}{\relax\ifhmode\unskip\space\fi MR }
\providecommand{\MRhref}[2]{
  \href{http://www.ams.org/mathscinet-getitem?mr=#1}{#2}
}
\providecommand{\href}[2]{#2}

\end{document}